\documentclass[11pt]{article}
\usepackage{amsmath,amsthm,amsfonts,amssymb}
\usepackage{cancel}
\usepackage{fullpage}
\usepackage{liyang}
\usepackage{framed}
\usepackage{verbatim}
\usepackage{enumitem}
\usepackage{array}
\usepackage{multirow}
\usepackage{afterpage}

\usepackage{caption} 

\usepackage[usenames,dvipsnames]{xcolor}

\usepackage{todonotes}

\makeatletter
\newtheorem*{rep@theorem}{\rep@title}
\newcommand{\newreptheorem}[2]{
\newenvironment{rep#1}[1]{
 \def\rep@title{#2 \ref{##1}}
 \begin{rep@theorem}\itshape}
 {\end{rep@theorem}}}
\makeatother
\theoremstyle{plain}

\makeatletter

\makeatother

\def\colorful{1}

\ifnum\colorful=1

\newcommand{\gray}[1]{{\color{gray}{#1}}}

\fi
\ifnum\colorful=0

\newcommand{\gray}[1]{{{#1}}}

\fi

\usepackage{boxedminipage}

\newcommand{\ignore}[1]{}
\newcommand{\lnote}[1]{\footnote{{\bf \color{blue}Li-Yang}: {#1}}}
\newcommand{\rnote}[1]{\footnote{{\bf \color{orange}Rocco:} {#1}}}

\newreptheorem{theorem}{Theorem}
\newtheorem*{theorem*}{Theorem}
\newreptheorem{lemma}{Lemma}
\newreptheorem{proposition}{Proposition}
\newtheorem*{noclaim*}{Claim}

\newcommand{\uhr}{\upharpoonright}

\newcommand{\Tribes}{{\sf{Tribes}}}

\newcommand{\encode}{\mathrm{encode}}

\newcommand{\proj}{\mathrm{proj}}

\newcommand{\Tree}{\mathsf{ProjDT}}

\newcommand{\Parity}{\mathsf{Parity}}
\newcommand{\gradual}{\beta}

\newcommand{\BalancedSipser}{\mathsf{Sipser}}

\newcommand{\init}{\mathrm{init}}

\newcommand{\acz}{\mathsf{AC^0}}

\newcommand{\ds}{\displaystyle}

\title{An average-case depth hierarchy theorem for Boolean circuits}

\author{Benjamin Rossman \\ NII, Simons Institute \\ {\tt rossman@nii.ac.jp} \and Rocco A.~Servedio\thanks{Supported by NSF grants CCF-1319788 and CCF-1420349.}\\ 
Columbia University \\{\tt rocco@cs.columbia.edu} \and Li-Yang Tan\thanks{Part of this research was done while visiting Columbia University.} \\ Simons Institute \\ {\tt liyang@cs.columbia.edu}}
\begin{document}

 \maketitle

\begin{abstract}
We prove an average-case depth hierarchy theorem for Boolean circuits over the standard basis of $\AND$, $\OR$, and $\mathsf{NOT}$ gates.  Our hierarchy theorem says that for every $d \geq 2$, there is an explicit $n$-variable Boolean function $f$, computed by a linear-size depth-$d$ formula, which is such that any depth-$(d-1)$ circuit that agrees with $f$ on $(1/2 + o_n(1))$ fraction of all inputs must have size $\exp({n^{\Omega(1/d)}}).$  This answers an open question posed by H\aa stad in his Ph.D.~thesis \cite{Hastad:86}.

Our average-case depth hierarchy theorem implies that the polynomial hierarchy is infinite relative to a random oracle with probability 1, confirming a conjecture of H{\aa}stad~\cite{Hastad86}, Cai~\cite{Cai86}, and Babai~\cite{babai1987}.  We also use our result to show that there is no ``approximate converse'' to the results of Linial, Mansour, Nisan~\cite{linmannis93} and Boppana~\cite{Boppana97} on the total influence of small-depth circuits, thus answering a question posed by O'Donnell~\cite{ODonnell07open}, Kalai~\cite{Kalaioverflow12}, and Hatami~\cite{Hatami14}.

A key ingredient in our proof is a notion of \emph{random projections} which generalize random restrictions.

\end{abstract}
%

 \thispagestyle{empty}

\newpage

 \thispagestyle{empty}

\hypersetup{linkcolor=magenta}
\hypersetup{linktocpage}
\setcounter{tocdepth}{2}

\tableofcontents
 \thispagestyle{empty}

\newpage 
\setcounter{page}{1}

\section{Introduction} \label{sec:intro}

The study of small-depth Boolean circuits is one of the great success stories of complexity theory.  The exponential lower bounds against constant-depth {\sf AND}-{\sf OR}-{\sf NOT} circuits~\cite{yao1985,Hastad86,Raz87,Smol87} remain among our strongest unconditional lower bounds against concrete models of computation, and the techniques developed to prove these results have led to significant advances in computational learning theory~\cite{linmannis93,Mansour:95}, pseudorandomness~\cite{Nisan91,Baz09,Raz09,Bra10}, proof complexity~\cite{PBI93,Ajt94,KPW94}, structural complexity~\cite{yao1985,Hastad86,Cai86}, and even algorithm design~\cite{Wil14,Williams14,AWY15}.

In addition to \emph{worst-case} lower bounds against small-depth circuits, \emph{average-case} lower bounds, or \emph{correlation bounds}, have also received significant attention\ignore{ over the past thirty years}. As one recent example, Impagliazzo, Matthews, Paturi~\cite{IMP12} and H{\aa}stad~\cite{Has14} independently obtained optimal bounds on the correlation of the parity function with small-depth circuits, capping off a long line of work on the problem~\cite{ajtai1983,yao1985,Hastad86,Cai86,babai1987,BeameIS12}. These results establish strong limits on the computational power of constant-depth circuits, showing that their agreement with the parity function can only be an exponentially small fraction better than that of a constant function.

In this paper we will be concerned with average-case complexity \emph{within} the class of small-depth circuits: our goal is to understand the computational power of depth-$d$ circuits relative to those of strictly smaller depth.  Our main result is an \emph{average-case depth hierarchy theorem} for small-depth circuits:

\begin{theorem}
\label{main-theorem}
Let $2 \leq d \leq {\frac  {c\sqrt{\log n}} {\log \log n}}$, where $c>0$ is an absolute constant, and $\BalancedSipser_d$ be the explicit $n$-variable read-once monotone depth-$d$
formula described in Section \ref{sec:sipser}.  Then any circuit $C$ of depth at most $d-1$ and size at most $S = 2^{n^{{\frac 1 {6(d-1)}}}}$ over $\{0,1\}^n$ agrees with $\BalancedSipser_d$ on at most $({\frac 1 2} + n^{-\Omega(1/d)})\cdot 2^n$ inputs.
\end{theorem}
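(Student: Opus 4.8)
The plan is to follow the template of H\aa stad's worst-case depth-hierarchy theorem — to kill the depth-$(d-1)$ circuit $C$ faster than the target formula under a sequence of random simplifications — but with random \emph{restrictions} replaced by random \emph{projections}, i.e.\ maps that send each input coordinate either to a constant in $\{0,1\}$ or to a member of a pool of fresh ``surviving'' variables, possibly identifying several coordinates with the same fresh variable. Write $\BalancedSipser_d$ for the balanced read-once monotone alternating $\AND/\OR$ formula of depth $d$, whose fan-ins are $n^{\Theta(1/d)}$ at each level, with a logarithmic adjustment that keeps $\BalancedSipser_k$ nearly balanced under the uniform distribution for all $k \le d$. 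The whole argument rests on a dichotomy under a single well-chosen random projection $\rho = \rho_{d-2}\circ\cdots\circ\rho_1$, assembled from $d-2$ independent stages: with probability $1-o(1)$ over $\rho$, on one hand $C\uhr\rho$ collapses to a decision tree of depth at most $t = n^{O(1/(d-1))}$, while on the other $\BalancedSipser_d\uhr\rho$ is a balanced read-once depth-$2$ formula $g$ — a $\Tribes$-like $\AND$-of-$\OR$s (or $\OR$-of-$\AND$s) — on $m \geq n^{\Omega(1/d)}$ fresh variables with $A = n^{\Omega(1/d)}$ clauses. Such a balanced $g$ cannot be $n^{-\Omega(1/d)}$-approximated by a depth-$t$ decision tree when $t \ll A$, since fixing any $t$ of $g$'s variables perturbs its acceptance probability by only $n^{-\Omega(1/d)}$ on average over the fixed values; hence $C$ and $\BalancedSipser_d$ agree on at most a $\tfrac12 + n^{-\Omega(1/d)}$ fraction of the inputs consistent with such a $\rho$. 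A final design feature of the projection distribution — that drawing $\rho$ and then the surviving variables uniformly reproduces the uniform distribution on $\{0,1\}^n$ — then lets one average this over $\rho$ to get the stated bound.

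For the $\BalancedSipser_d$ side I would design the projection distribution hand-in-hand with the formula so that the two mesh: each stage $\rho_i$ acts independently on each bottom gate of the current formula, sending most of that gate's inputs to its identity constant, keeping a few, and identifying those few with a single fresh variable, so that $\BalancedSipser_d\uhr(\rho_i\circ\cdots\circ\rho_1)$ is (with high probability) the balanced formula $\BalancedSipser_{d-i}$ on the surviving variables. The logarithmic adjustment in the Sipser fan-ins is exactly what guarantees that, except with probability $n^{-\omega(1)}$ per gate, no bottom gate is accidentally forced to a constant — which would propagate up the monotone formula and destroy its structure — small enough to union-bound over all $\leq n$ gates and all $d-2$ stages. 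After the $d-2$ stages one is left with $\BalancedSipser_2 = g$, balanced with $A = n^{\Omega(1/d)}$ clauses.

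For the $C$ side I would prove and apply a \emph{projection switching lemma}: if $F$ is a width-$w$ DNF or CNF and $\rho$ is a single-stage random projection with free-ness parameter $\gradual$, then $F\uhr\rho$ is computable by a (projection) decision tree of depth at most $t$ except with probability at most $(C_0\gradual w)^t$, for an absolute constant $C_0$. This follows H\aa stad's canonical-decision-tree template: define a canonical decision tree $\DTc(F\uhr\rho)$, argue that a depth-$\ge t$ branch in it yields an $\encode$ of $\rho$ that is $t$ fresh-variable choices cheaper than $\rho$ itself, and bound the bad event by the corresponding ratio of $\weight$s — but all of this must be re-derived over the projection probability space, where coordinates sharing a fresh variable are correlated; this re-derivation is the real home of the ``random projections'' idea. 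Granting the lemma, and a standard width-reduction step folded into $\rho_1$ that first brings the bottom fan-in of $C$ from $\le S$ down to $\le O(\log S)$, apply $\rho_1$: by a union bound over the $\le S$ depth-$2$ subformulas at the bottom two levels of $C$, each switches to a depth-$t$ decision tree, which — rewritten in whichever of CNF/DNF form matches the gate type one level up — merges with that level and drops the depth of $C$ by one. Iterating $\rho_2,\dots,\rho_{d-2}$ peels off one more level each time, the last stage turning the residual depth-$2$ circuit directly into a decision tree of depth $\le t$. The subtlety is that each merge can blow the bottom-level size up by a factor $2^{\Theta(t)} = S^{\Theta(1)}$, so after $d-2$ stages the union bound runs over $S^{\Theta(d)}$ subformulas; to beat this one must take $\gradual$ smaller by a factor $2^{\Theta(d)}$, whence $\gradual^{\,d-2}\cdot n$ — the eventual count of surviving variables — is suppressed by $2^{\Theta(d^2)}$, and requiring this to stay $\ge m$ forces $d^2 \lesssim \log n$, which is where the hypothesis $d \lesssim \sqrt{\log n}/\log\log n$ originates (the $\log\log n$ absorbing polylogarithmic slack). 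Taking $t = \Theta(\log S) = \Theta(n^{1/(6(d-1))})$ makes the total failure probability $o(1)$.

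Combining the two sides, with probability $1-o(1)$ over $\rho$ both events hold, so $C\uhr\rho$ is a depth-$t$ decision tree and $\BalancedSipser_d\uhr\rho = g$ is balanced with $A \gg t$ clauses — here $t = n^{O(1/(d-1))} \ll A = n^{\Omega(1/d)}$ precisely because the constant $6$ in the exponent of $S$ opens a gap between $\tfrac1{6(d-1)}$ and $\tfrac1{d-1}$ — so their agreement over uniform surviving variables is at most $\tfrac12 + n^{-\Omega(1/d)}$; averaging over $\rho$ via measure-consistency gives the theorem. I expect the main obstacle to be the three-way tension the projection distribution must resolve at once: random enough that the projection switching lemma holds with parameters surviving the $S^{\Theta(d)}$-size union bound over all stages, structured enough that $\BalancedSipser_d$ degrades into a \emph{still-hard} balanced depth-$2$ formula rather than into a constant or something shallow decision trees compute, and measure-consistent so the correlation bound transfers to the uniform distribution. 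The second main difficulty is the projection switching lemma itself: without coordinate-wise independence, H\aa stad's encoding and counting argument has to be rebuilt essentially from scratch.
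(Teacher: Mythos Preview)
Your high-level plan matches the paper's: random projections, a projection switching lemma to collapse $C$, a structure-preservation argument for $\BalancedSipser_d$, and measure-consistency to transfer the correlation bound to the uniform distribution. Two points deserve correction.

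First and most importantly, you describe the stages as \emph{independent}; in the paper they are \emph{adaptive}. The distribution of $\brho^{(k)}$ depends on the outcome of $\brho^{(k+1)}$ through a per-block parameter $q_a$ that is tuned to the number of $\ast$'s that survived in that block from the previous stage (Definition~\ref{def:main-projection}). This adaptivity is precisely what makes the composition complete to the uniform distribution (Lemmas~\ref{lem:OW-trick-initial}--\ref{lem:OW-trick-subsequent}): the equation $(1-t_k)^{|S_a|}=\lambda+q_a t_{k-1}$ cannot hold with a fixed $q_a$ once $|S_a|$ is random. The three-way tension you name at the end---switching lemma strong enough, target survives, measure-consistent---is exactly the crux, and the paper resolves it by making the projections adaptive; with truly independent stages you should not expect to get all three simultaneously.

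Second, your accounting for the constraint $d\lesssim\sqrt{\log n}/\log\log n$ via an $S^{\Theta(d)}$ size blowup is incorrect. The union bound at each stage is over the at-most-$S$ gates at distance $\ge 2$ from the inputs, and this count does not grow across stages (Lemma~\ref{lem:combine}); only the bottom fan-in is replaced, and it stays $\le w^{1/5}$ throughout. The $\sqrt{\log n}$ ceiling instead comes from controlling the drift of the recursively defined bias parameters $t_{d-1},\dots,t_1$ over $d$ iterations (Lemma~\ref{lemma:tk-bound}) together with the ``typicality'' bookkeeping (Definition~\ref{def:typical}) that tracks how many $\ast$'s survive per block. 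As a smaller point, the paper runs $d-1$ stages rather than $d-2$, reducing $\BalancedSipser_d$ all the way to a single $\OR$ (or $\AND$) gate with bias near $\tfrac12$; the endgame (Proposition~\ref{prop:base-case}) is then a direct comparison of this gate against a width-$r$ CNF, rather than your Tribes-versus-decision-tree argument.
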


(We actually prove two incomparable lower bounds, each of which implies Theorem~\ref{main-theorem} as a special case.  Roughly speaking, the first of these says that $\BalancedSipser_d$ cannot be approximated by size-$S$, depth-$d$ circuits  which have significantly smaller bottom fan-in than $\BalancedSipser_d$, and the second of these says that $\BalancedSipser_d$ cannot be approximated by size-$S$, depth-$d$ circuits with a different top-level output gate than $\BalancedSipser_d$.)
\ignore{\lnote{This used to say `` --- which has $\AND$ gates at the bottom layer --- cannot be approximated by size-$S$, depth-$d$ circuits which have $\OR$ gates at the bottoms layer.'')}}

Theorem~\ref{main-theorem} is an average-case extension of the worst-case depth hierarchy theorems of Sipser, Yao, and H{\aa}stad~\cite{sipser1983,yao1985,Hastad86}, and answers an open problem of H{\aa}stad~\cite{Hastad86} (which also appears in~\cite{Hastad:86,Hastad:89}). We discuss the background and context for Theorem~\ref{main-theorem} in Section~\ref{sec:previous-work}, and state our two main lower bounds more precisely in Section~\ref{sec:precise}. 

\paragraph{Applications.} We give two applications of our main result, one in structural complexity and the other in the analysis of Boolean functions. First, via a classical connection between small-depth computation and the polynomial hierarchy~\cite{FSS81,sipser1983}, Theorem~\ref{main-theorem} implies that the polynomial hierarchy is infinite relative to a random oracle:

\begin{theorem}
\label{thm:random-oracle}
With probability $1$, a random oracle $A$ satisfies $\Sigma_{d}^{\mathrm{P}, A} \subsetneq \Sigma_{d+1}^{\mathrm{P}, A}$ for all $d\in \N$.
\end{theorem}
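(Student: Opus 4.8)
The plan is to run the standard reduction from circuit depth hierarchies to oracle separations of the polynomial hierarchy, now fed with our \emph{average-case} lower bound rather than a worst-case one, which is exactly what is needed to upgrade ``there exists an oracle'' to ``a random oracle works with probability $1$.'' First I would recall the relativized characterization going back to Furst--Saxe--Sipser and Sipser: for an oracle $A \subseteq \{0,1\}^*$, a language $L$ lies in $\Sigma_k^{\mathrm{P},A}$ iff the slices $L \cap \{0,1\}^m$ are computed by quas-polynomial-size, depth-$k$ $\acz$ circuits whose inputs are the oracle bits $A(y)$ for $|y| \le \mathrm{poly}(m)$ (the alternation structure of the $\Sigma_k$ machine becomes the $\OR/\AND$ layering of the circuit, and the polynomial running time becomes quasipolynomial size after brute-forcing the machine's own moves). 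Dually, membership in $\Pi_k^{\mathrm{P},A}$ corresponds to the same with the top gate flipped. So to separate $\Sigma_d^{\mathrm{P},A}$ from $\Sigma_{d+1}^{\mathrm{P},A}$ it suffices to exhibit, for infinitely many input lengths $m$, a function of the oracle bits that is computable by a small depth-$(d+1)$ circuit but not by any quasipolynomial-size depth-$d$ circuit — and to make this hold for \emph{random} $A$ we need the depth-$d$ circuit to fail not just in the worst case but on a $1/2 + o(1)$ fraction of oracle settings, which is precisely what Theorem~\ref{main-theorem} supplies via $\BalancedSipser_{d+1}$.

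The key steps, in order. (1) Fix a fast-growing sequence of block lengths $n_1 \ll n_2 \ll \cdots$ and, on the $i$-th block of oracle strings, plant a copy of $\BalancedSipser_{d+1}$ on $n_i$ variables (the $j$-th variable reading the oracle bit $A(y)$ for the $j$-th string $y$ in that block); let $L_A$ be the language deciding, on input $1^{m_i}$, the value of this $i$-th copy. Choosing $n_i$ growing fast enough (e.g. $n_{i+1} > 2^{n_i}$) makes the blocks disjoint and lets us treat them as independent. (2) $L_A \in \Sigma_{d+1}^{\mathrm{P},A}$ (in fact $\Pi$ or $\Sigma$ depending on the top gate of $\BalancedSipser_{d+1}$) for every $A$, since $\BalancedSipser_{d+1}$ is a read-once monotone depth-$(d+1)$ formula of linear size, hence certainly a quasipolynomial-size depth-$(d+1)$ $\acz$ circuit. (3) For the lower bound, suppose $L_A \in \Sigma_d^{\mathrm{P},A}$; then for all large $i$ there is a depth-$d$ circuit $C_i$ of size $2^{\mathrm{polylog}(n_i)}$ over the $n_i$ block-bits computing $\BalancedSipser_{d+1}$ on the random assignment induced by $A$. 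For $n_i$ large, $2^{\mathrm{polylog}(n_i)} \ll 2^{n_i^{1/(6d)}} = S$, so Theorem~\ref{main-theorem} says $C_i$ agrees with $\BalancedSipser_{d+1}$ on at most $(1/2 + n_i^{-\Omega(1/d)}) 2^{n_i}$ of the block-assignments. (4) Now run a Borel--Cantelli argument over the random oracle: for each fixed candidate $\Sigma_d$-machine $M$ (there are countably many) and each error/size bound, the event ``$M^A$ correctly decides $L_A$ on block $i$'' has probability at most $1/2 + n_i^{-\Omega(1/d)} \to 1/2$ over the bits of block $i$; since the blocks are disjoint these events are mutually independent across $i$, so with probability $1$ infinitely many of them fail, hence $M^A$ does not decide $L_A$. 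A union bound over the countably many machines $M$ shows that with probability $1$, $L_A \notin \Sigma_d^{\mathrm{P},A}$, yet $L_A \in \Sigma_{d+1}^{\mathrm{P},A}$. Intersecting these probability-$1$ events over all $d \in \N$ gives the theorem.

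The main obstacle — and the only place real care is needed — is step (3)--(4): making the translation between a $\Sigma_d^{\mathrm{P},A}$ machine and a bona fide depth-$d$ $\acz$ circuit on \emph{just the relevant block of oracle bits} quantitatively honest, and ensuring the resulting ``$C_i$ errs'' events are genuinely independent so Borel--Cantelli applies. One has to be slightly careful that the $\Sigma_d$ machine may query oracle strings outside block $i$; but those bits are simply additional independent fair coins, and for any fixing of them the machine still induces a depth-$d$ circuit on block $i$ of size $2^{\mathrm{polylog}(n_i)}$, to which Theorem~\ref{main-theorem} applies — so averaging over the outside bits as well only helps, and the per-block failure probability bound $\ge 1/2 - n_i^{-\Omega(1/d)}$ survives. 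The rest — disjointness of blocks, countability of machines, the union and intersection of probability-$1$ events over $M$ and over $d$ — is routine. I expect roughly that the worst-case-to-average-case strengthening is doing all the work: the classical Furst--Saxe--Sipser/Sipser argument already gives a single oracle relative to which PH is infinite, and replacing the worst-case separation by Theorem~\ref{main-theorem}'s $1/2 + o(1)$-agreement bound is exactly the ingredient that promotes this to a statement about a random oracle.
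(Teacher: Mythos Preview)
Your proposal is correct and matches the paper's approach: the paper does not give its own proof of Theorem~\ref{thm:random-oracle} but simply observes that it follows from Theorem~\ref{main-theorem} via the standard FSS/Sipser/H\aa stad reduction, referring the reader to Chapter~\S 7 of~\cite{Hastad:86} for exactly the argument you sketch (planting $\BalancedSipser$ instances on disjoint blocks of oracle bits, the $\Sigma_d^{\mathrm{P},A}\!\rightarrow$ small-depth-circuit translation, and the Borel--Cantelli plus union bound over countably many machines and over $d$). The one subtlety you flag---that the $\Sigma_d$ machine may query oracle bits outside block $i$, so one conditions on those bits before invoking the average-case lower bound to get a per-block error probability bounded away from $0$---is indeed the point where care is required, and your handling of it is the standard correct one.
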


This resolves a well-known conjecture in structural complexity, which first appeared in~\cite{Hastad86,Cai86,babai1987} and has subsequently been discussed in a wide range of surveys~\cite{Johnson86,Hem94,ST95,HRZ95,VollmerWagner97,Zoo}, textbooks~\cite{DK00,HO02}, and research papers \cite{Hastad:86,Hastad:89,Tar89,Fortnow99,Aaronson10b}.  (Indeed, the results of~\cite{Hastad86,Cai86,babai1987}, along with much of the pioneering work on lower bounds against small-depth circuits in the 1980's, were largely motivated by the aforementioned connection to the  polynomial hierarchy.) See Section~\ref{sec:application-oracles} for details.

Our second application is a strong negative answer to questions of Kalai, Hatami, and O'Donnell in the analysis of Boolean functions.  Seeking an \emph{approximate converse} to the fundamental results of Linial, Mansour, Nisan~\cite{linmannis93} and Boppana~\cite{Boppana97} on the total influence of small-depth circuits, Kalai asked
whether every Boolean function with total influence $\polylog(n)$ can be approximated by a constant-depth circuit of quasipolynomial size~\cite{KalaiBlog10,Kalaioverflow12,Hatami14}. O'Donnell posed a variant of the same question with a more specific quantitative bound on how the size of the approximating circuit depends on its influence and depth~\cite{ODonnell07open}. As a consequence of Theorem \ref{main-theorem} we obtain the following:

\begin{theorem}
\label{thm:killBKS}
There are functions $d(n)=\omega_n(1)$ and $S(n) = \exp((\log n)^{\omega_n(1)})$ such that there is a monotone $f: \{0,1\}^n \to \{0,1\}$ with total influence $\Inf(f) = O(\log n)$, but any circuit $C$ that
has depth $d(n)$ and agrees with $f$ on at least $({\frac 1 2} + o_n(1)) \cdot 2^n$ inputs in $\{0,1\}^n$ must have size greater than $S(n)$.
\end{theorem}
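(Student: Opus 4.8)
\medskip
\noindent\textbf{Proof plan for Theorem~\ref{thm:killBKS}.}
The plan is to read off Theorem~\ref{thm:killBKS} from Theorem~\ref{main-theorem}, taking the hard function to be $f = \BalancedSipser_{d+1}$ for a slowly growing choice of $d = d(n) = \omega_n(1)$. Besides Theorem~\ref{main-theorem} itself, the one extra ingredient is an upper bound on the total influence of the Sipser formula, so the first step is to establish that $\Inf(\BalancedSipser_D) = O(\log n)$ for every $D$ in the admissible range $2 \le D \le c\sqrt{\log n}/\log\log n$. This is a direct computation from the definition of $\BalancedSipser_D$ in Section~\ref{sec:sipser}: as $\BalancedSipser_D$ is a monotone read-once formula, the subformulas feeding any one gate depend on pairwise disjoint variable sets and are thus independent under the uniform distribution, so $\Inf_i(\BalancedSipser_D) = \Pr[x_i\text{ is pivotal}]$ factors as the product, over the gates $g$ on the unique root-to-$x_i$ path, of the probability that at $g$ every input other than the one on the path takes its non-controlling value; plugging in the balanced fan-ins makes this product equal to $\Theta((\log n)/n)$ for every $i$, and summing over the $n$ variables gives $\Inf(\BalancedSipser_D) = O(\log n)$.

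\medskip
With that in hand, fix any $d(n) = \omega_n(1)$ growing slowly enough that $d(n)+1 \le c\sqrt{\log n}/\log\log n$ for all large $n$ --- e.g.\ $d(n) = \lfloor\log\log n\rfloor$, for which $(d(n)+1)\log\log n = (1+o_n(1))(\log\log n)^2 = o(\sqrt{\log n})$ --- and set $f = \BalancedSipser_{d(n)+1}$ on $\{0,1\}^n$. It is monotone and, by the previous step, has $\Inf(f) = O(\log n)$. Now apply Theorem~\ref{main-theorem} with its parameter ``$d$'' set to $d(n)+1$: every circuit of depth at most $d(n)$ and size at most $S(n) := 2^{n^{1/(6 d(n))}}$ agrees with $f$ on at most $(\tfrac12 + n^{-\Omega(1/(d(n)+1))}) \cdot 2^n$ inputs; contrapositively, any depth-$d(n)$ circuit agreeing with $f$ on at least that many inputs has size greater than $S(n)$.

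\medskip
It remains to verify the three asymptotic requirements in the statement. (i) Since $d(n) = o(\log n)$, we have $n^{-\Omega(1/(d(n)+1))} = 2^{-\Omega((\log n)/d(n))} = o_n(1)$, so the agreement threshold is of the advertised form $(\tfrac12 + o_n(1))\cdot 2^n$. (ii) $\log S(n) = n^{1/(6 d(n))} = 2^{(\log n)/(6 d(n))}$, and since $d(n) = o(\log n/\log\log n)$ we have $(\log n)/(6 d(n)) = \omega_n(\log\log n)$; hence $S(n) = \exp((\log n)^{\omega_n(1)})$. (iii) $d(n) = \omega_n(1)$ by construction. This proves Theorem~\ref{thm:killBKS}; in particular, since every constant-depth circuit of quasipolynomial size is, for large $n$, a circuit of depth at most $d(n)$ and size at most $S(n)$, it refutes the stated conjectures of Kalai, O'Donnell, and Hatami.

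\medskip
\noindent\textbf{Expected main difficulty.} Once the parameters are lined up, the deduction from Theorem~\ref{main-theorem} is essentially bookkeeping; all the real content beyond Theorem~\ref{main-theorem} is the influence estimate $\Inf(\BalancedSipser_D) = O(\log n)$, whose delicate point is that the balanced construction must be arranged so that this bound holds \emph{uniformly over the admissible range of $D$} (rather than, say, $(\log n)^{\Theta(D)}$, which for growing $D$ would be useless here). Should only a weaker, $D$-dependent bound of the form $\Inf(\BalancedSipser_D) \le g(D)\cdot\polylog(n)$ be available, one can instead let $f$ be $\BalancedSipser_{d(n)+1}$ acting on a sub-junta of $m = m(n) \le n$ variables and choose $d(n) = \omega_n(1)$ and $m(n)$ simultaneously in the sweet spot where $g(d(n))\cdot\polylog(m) = O(\log n)$ while $\log m \gg d(n)\log\log n$ keeps $S(n) = 2^{m^{1/(6 d(n))}}$ at $\exp((\log n)^{\omega_n(1)})$; such a sweet spot exists provided $g$ does not grow too rapidly. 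A secondary point to track is the off-by-one in depth: Theorem~\ref{main-theorem} gives hardness of $\BalancedSipser_D$ against depth $D-1$, so hardness against circuits of depth exactly $d(n)$ requires the depth-$(d(n)+1)$ formula and substituting $d(n)+1$ for $d$ everywhere in Theorem~\ref{main-theorem}'s bounds.
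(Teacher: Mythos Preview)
Your primary approach has a genuine gap: the claim that $\Inf(\BalancedSipser_D) = O(\log n)$ for the $n$-variable formula, uniformly over $D$ in the admissible range, is false. Carrying out the very calculation you describe (multiply, along the root-to-leaf path, the probability that all sibling subformulas take their non-controlling value) gives a per-variable influence of $\Theta(p^{D-1})$ with $p = 2^{-m}$ the bias of a bottom $\AND$ gate; since $n = \Theta((m2^m)^{D-1})$, summing yields $\Inf(\BalancedSipser_D) = \Theta(m)^{D-1} = \Theta\!\big((\log n)/D\big)^{D-1}$, which matches (and is essentially tight for) Boppana's bound $O(\log n)^{D-1}$. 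This is $O(\log n)$ only when $D = O(1)$; for $D = \omega_n(1)$ it is super-polylogarithmic. So ``plugging in the balanced fan-ins'' does \emph{not} give $\Theta((\log n)/n)$ per variable.

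Your fallback is the right idea and is exactly what the paper does, but two adjustments are needed. First, the paper invokes Boppana's theorem (Theorem~\ref{thm:LMN}) to bound the influence of the sub-junta, rather than a direct computation. Second, and more consequentially, the parameter $d(n) = \lfloor\log\log n\rfloor$ you propose is too large even for the sub-junta variant: if $f = \BalancedSipser_d$ on $M$ variables and we want $(O(\log M))^{d-1} \le O(\log n)$, then $d \approx \log\log n$ forces $\log M = O(1)$. The paper instead takes $M = 2^{2^{\lfloor\sqrt{\log\log n}\rfloor}}$ and $d = \lfloor\sqrt{\log\log n}\rfloor + 1$, so that $(O(\log M))^{d-1} = O(2^{\sqrt{\log\log n}})^{\sqrt{\log\log n}} = O(\log n)$, while $S(n) = 2^{M^{\Omega(1/d)}} = 2^{2^{\tilde{\Omega}(2^{\sqrt{\log\log n}})}} = \exp((\log n)^{\omega_n(1)})$ and $d(n) = \sqrt{\log\log n} = \omega_n(1)$.
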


Theorem~\ref{thm:killBKS} significantly strengthens O'Donnell and Wimmer's counterexample~\cite{OW07} to a conjecture of Benjamini, Kalai, and Schramm~\cite{BKS:99}, and shows that the total influence bound of~\cite{linmannis93,Boppana97} does not admit even a very weak approximate converse. See Section~\ref{sec:killBKS} for details.

\subsection{Previous work}
\label{sec:previous-work}
In this subsection we discuss previous work related to our average-case depth hierarchy theorem. We discuss the background and context for our applications, Theorems~\ref{thm:random-oracle} and~\ref{thm:killBKS}, in Sections~\ref{sec:application-oracles} and~\ref{sec:killBKS} respectively.

Sipser was the first to prove a worst-case depth hierarchy theorem for small-depth circuits~\cite{sipser1983}. He showed that for every $d\in \N$, there exists a Boolean function $F_d : \zo^n\to\zo$ such that $F_d$ is computed by a linear-size depth-$d$ circuit, but any depth-$(d-1)$ circuit computing $F_d$ has size $\Omega(n^{\log^{(3d)}n})$, where $\log^{(i)}n$ denotes the $i$-th iterated logarithm.  The family of functions $\{F_d\}_{d\in \N}$ witnessing this  separation are depth-$d$ read-once monotone formulas with alternating layers of $\AND$  and $\OR$ gates with fan-in $n^{1/d}$ --- these came to be known as the \emph{Sipser functions}.  Following Sipser's work, Yao claimed an improvement of Sipser's lower bound to $\exp(n^{c_d})$ for some constant $c_d > 0$~\cite{yao1985}. Shortly thereafter H{\aa}stad proved a near-optimal separation for (a slight variant of) the Sipser functions:

\begin{theorem} [Depth hierarchy of small-depth circuits~\cite{Hastad86}; see also~\cite{Hastad:86,Hastad:89}]
\label{thm:worst-case-hierarchy}
For every $d \in \N$, there exists a Boolean function $F_d : \zo^n\to\zo$ such that $F_d$ is computed by a linear-size depth-$d$ circuit, but any depth-$(d-1)$ circuit computing $F_d$ has
size $\exp(n^{\Omega(1/d)})$.
\end{theorem}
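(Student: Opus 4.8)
\medskip
\noindent\textbf{Proof plan for Theorem~\ref{thm:worst-case-hierarchy}.}
The plan is to take $F_d=\Sipser_d$ to be the depth-$d$ read-once monotone formula with alternating layers of $\AND$ and $\OR$ gates, all fan-ins of order $n^{1/d}$, with the precise values — in particular a slightly enlarged bottom fan-in of order $\log S$ — chosen as in H\aa stad's construction. The upper bound is immediate: being read-once, $F_d$ is computed as written by a depth-$d$ formula with $O(n)$ gates. For the lower bound I would argue by contradiction. Suppose $C$ is a circuit of depth $d-1$ and size $S=\exp(n^{\delta/d})$ computing $F_d$, where $\delta>0$ is a small absolute constant; pushing negations to the inputs, we may assume $C$ is a layered circuit with alternating $\AND$ and $\OR$ layers and negations only at the leaves.

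The heart of the argument is to hit both $C$ and $\Sipser_d$ with a sequence $\rho_1,\dots,\rho_{d-1}$ of sufficiently sparse random restrictions, where each $\rho_i$ is \emph{adapted to the $i$-th layer} of the read-once Sipser formula: it leaves one layer's worth of structure free and fixes the remaining variables according to a suitable Bernoulli-type pattern. I would track two effects in parallel. \emph{On the circuit side}: the first restriction $\rho_1$ is sparse enough that, by a direct union bound over the $\le S$ gates, every bottom gate of $C$ of fan-in $\gtrsim\log S$ is forced to a constant, so $C\uhr{\rho_1}$ has bottom fan-in $O(\log S)$; and then each $\rho_i$ for $i\ge 2$, via H\aa stad's switching lemma, turns each of the $\le S$ depth-$2$ subformulas sitting just above the bottom (CNFs or DNFs of width $O(\log S)$) into a decision tree of small depth, which — being the opposite type of depth-$2$ formula — merges with the layer above it, so the depth drops by one while the bottom fan-in stays small. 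Bounding the switching-lemma failure probability $(5p_iw_{i-1})^{t_i}$ and taking a union bound over all $\le S$ gates and all $\le d$ stages, one checks it is $o(1)$ precisely as long as $S\le\exp(n^{\delta/d})$; hence with probability $1-o(1)$ one has $\DT(C\uhr{\rho_1\cdots\rho_{d-1}})=o(n^{1/d})$.

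\emph{On the formula side}: because each $\rho_i$ respects the read-once Sipser structure, a Chernoff plus union-bound argument shows that with probability $1-o(1)$ the function $\Sipser_d\uhr{\rho_1\cdots\rho_i}$ still contains, on a disjoint set of live variables, a copy of $\Sipser_{d-i}$ — at each layer it suffices that every gate keeps at least one surviving child, and the fan-ins (especially the enlarged bottom fan-in) are chosen so this fails with probability only $\exp(-n^{\Omega(1/d)})$, which survives a union bound over the $\le n$ gates. Taking $i=d-1$, with probability $1-o(1)$ the function $\Sipser_d\uhr{\rho_1\cdots\rho_{d-1}}$ equals $\Sipser_1$ on $m=n^{\Omega(1/d)}$ live variables — a single $\AND$ or $\OR$ of $m$ fresh literals, which has decision-tree depth $m$. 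Fixing an outcome $\rho$ of the restrictions for which both $1-o(1)$ events hold, $C\uhr\rho$ is a decision tree of depth $o(n^{1/d})<m$ that nonetheless must compute $\Sipser_d\uhr\rho$, a function of decision-tree depth $m$ — a contradiction. Hence no such $C$ exists, i.e.\ $S>\exp(n^{\Omega(1/d)})$.

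The step I expect to be the main obstacle is the quantitative compatibility of the two sides: the restriction parameters $p_i$ must be small enough for the switching lemma to apply to $C$ at each of the $d-1$ stages, yet large enough — and the Sipser fan-ins, in particular the bottom fan-in, tuned precisely enough — that the read-once formula survives all $d-1$ restrictions with probability close to $1$. Making these two windows overlap is exactly what pins the separation at $\exp(n^{\Omega(1/d)})$ rather than at a weaker bound such as $\exp(n^{\Omega(1/d^2)})$, and it is where most of the work goes: the choice of fan-ins, the handling of the a priori unbounded bottom fan-in of $C$ by the first restriction, and the use of a sufficiently sharp form of the switching lemma.
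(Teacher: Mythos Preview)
Your proof plan is essentially correct and follows H\aa stad's original argument closely: blockwise random restrictions adapted to the Sipser formula, a switching lemma to collapse the approximator one layer at a time, and a survival argument showing the Sipser formula reduces to a nontrivial depth-one gate. The identification of the main obstacle --- making the restriction parameters simultaneously small enough for the switching lemma and large enough for the formula to survive --- is exactly right.

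That said, note that the paper does not itself prove Theorem~\ref{thm:worst-case-hierarchy}; it is stated in the ``Previous work'' subsection as H\aa stad's result and is cited rather than re-proved. The paper's own contribution is the \emph{average-case} strengthening (Theorem~\ref{main-theorem}), and its proof departs from the outline you give in a crucial way: H\aa stad's blockwise restrictions satisfy your Properties~1 and~2 (circuit simplifies, target survives) but not Property~3 (the composition of restrictions does not complete to the uniform distribution), which is why his argument gives only a worst-case separation. To get the average-case bound the paper replaces restrictions by \emph{random projections} --- maps that identify blocks of variables rather than just fixing them --- and designs them so that all three properties hold simultaneously. So your sketch is a faithful account of the worst-case proof the paper cites, but it is not the argument the paper itself carries out.
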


The parameters of H{\aa}stad's theorem were subsequently refined by Cai, Chen, and H{\aa}stad~\cite{CCH98}, and Segerlind, Buss, and Impagliazzo~\cite{SBI04}. Prior to the work of Yao and H{\aa}stad, Klawe, Paul, Pippenger, and Yannakakis~\cite{KPNY84} proved a depth hierarchy theorem for small-depth \emph{monotone} circuits, showing that for every $d\in \N$, depth-$(d-1)$ \emph{monotone} circuits require size $\exp(\Omega(n^{1/(d-1)}))$ to compute the depth-$d$ Sipser function.  Klawe et al.~also gave an upper bound, showing that every linear-size monotone formula --- in particular, the depth-$d$ Sipser function for all $d\in \N$ --- can be computed by a depth-$k$ monotone formula of size $\exp(O(k\,n^{1/(k-1)}))$ for all $k\in \N$. 

To the best of our knowledge, the first progress towards an \emph{average-case} depth hierarchy theorem for small-depth circuits was made by O'Donnell and Wimmer~\cite{OW07}.  They constructed a linear-size depth-$3$ circuit $F$ and proved that any depth-$2$ circuit that approximates $F$ must have size $2^{\Omega(n/\log n)}$:
\begin{theorem}[Theorem 1.9 of~\cite{OW07}]
\label{thm:OW}
For $w \in \N$ and $n := w2^w$, let $\Tribes : \zo^n\to\zo$ be the function computed by a $2^w$-term read-once monotone DNF formula where every term has width exactly $w$. Let $\Tribes^\dagger$ denote its Boolean dual, the function computed by a $2^w$-clause read-once monotone CNF formula where every clause has width exactly $w$, and define
the $2n$-variable function $F: \{0,1\}^{2n} \to \{0,1\}$ as
\[ F(x) = \Tribes(x_1,\ldots,x_n) \vee \Tribes^\dagger(x_{n+1},\ldots,x_{2n}). \]
Then any depth-$2$ circuit $C$ on $2n$ variables that has size $2^{O(n/\log n)}$ agrees with $F$ on at most a $0.99$-fraction of the $2^{2n}$ inputs.  (Note that $F$ is computed by a
linear-size depth-3 circuit.)
 \end{theorem}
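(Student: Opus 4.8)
The plan is to reduce the statement, by pruning wide terms, restricting one block of variables, and invoking the $\Tribes$/$\Tribes^\dagger$ duality, to a single combinatorial lemma: \emph{no DNF all of whose terms are ``narrow'' can approximate the read-once CNF $\Tribes^\dagger$ better than some absolute constant.} Throughout write $m := 2^w$ for the number of tribes, so that $n = wm$ and $m = \Theta(n/\log n)$, and call a term \emph{narrow} if it has width $< m/4$. Note $\Pr[\Tribes(x)=0] = (1-2^{-w})^m$ and $\Pr[\Tribes^\dagger(y)=1] = (1-2^{-w})^m$ both tend to $1/e$.

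\textbf{Core Lemma (the heart of the argument).} I would first prove: there is an absolute constant $c>0$ such that every DNF $D$ on $\{0,1\}^n$ whose terms are all narrow disagrees with $\Tribes^\dagger$ on at least a $c$ fraction of inputs. If $D$ agreed on more than a $1-c$ fraction, then $\Pr[D=1] \ge \Pr[\Tribes^\dagger=1] - c$, so it would suffice to lower bound $\Pr[D(y)=1,\ \Tribes^\dagger(y)=0]$ by a constant exceeding $c$. The idea: for each $y$ with $D(y)=1$, fix the lexicographically first satisfied term $T$; since $T$ is narrow it mentions variables from fewer than $m/4$ of the $m$ groups, leaving more than $3m/4$ groups untouched, and zeroing out any single untouched group produces a point $y'$ with $D(y')=1$ (the term $T$ is intact, hence still satisfied) and $\Tribes^\dagger(y')=0$ (that group is now all-zero). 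This maps each $1$-input of $D$ to more than $3m/4$ distinct ``bad'' points; the only subtlety is the multiplicity of this correspondence, which at a bad point $y'$ is at most $2^w$ times the number of all-zero groups of $y'$. Since the number of all-zero groups of a uniform string is $\mathrm{Binomial}(m,1/m)$-distributed — hence essentially $\mathrm{Poisson}(1)$ — all but a negligible fraction of bad points have only a constant number of all-zero groups, so restricting to those and double counting gives $\Pr[D=1,\Tribes^\dagger=0] \ge \Omega(1)\cdot\Pr[D=1] - (\text{small constant})$, which for $c$ small enough is the desired contradiction.

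\textbf{Reduction to the lemma.} A depth-$2$ circuit of size $S$ is a DNF or a CNF with $S$ terms/clauses; fix the constant hidden in the size bound $2^{O(n/\log n)}$ small enough that $\log S + 10 < m/4$. Suppose first $C$ is such a DNF and agrees with $F=\Tribes\vee\Tribes^\dagger$ on more than a $0.99$ fraction of inputs. Delete from $C$ every term of width $\ge \log S + 10$; each fires with probability $\le 2^{-\log S - 10}$, so by a union bound the pruned DNF $C'$, now with all terms narrow, still agrees with $F$ on more than $0.989$ of inputs. Now fix the first block $x$: with probability $\Pr[\Tribes(x)=0]\to 1/e$ we have $F(x,\cdot)=\Tribes^\dagger$, while $C'(x,\cdot)$ is a DNF in the second block all of whose terms are still narrow; by the Core Lemma it disagrees with $\Tribes^\dagger$ on at least a $c$ fraction of second-block inputs, \emph{for every} such $x$. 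Hence $C'$, and therefore $C$, disagrees with $F$ on at least $(1/e-o(1))\cdot c > 0.011$ of inputs — contradiction. If instead $C$ is a CNF for $F$, then negating all inputs of $\overline C$ yields a DNF that approximates $\Tribes^\dagger(x)\wedge\Tribes(y)$ exactly as well as $C$ approximates $F$ (using $\overline{\Tribes(x)}=\Tribes^\dagger(\overline x)$ and $\overline{\Tribes^\dagger(y)}=\Tribes(\overline y)$); pruning wide terms and conditioning this time on $\Tribes(y)=1$ (probability $\to 1-1/e$) again reduces to the Core Lemma, applied to the other block.

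\textbf{Main obstacle.} The delicate part is the Core Lemma, and within it the constant-chasing needed to beat agreement exactly $0.99$ rather than merely $1-\Omega(1)$: the ``zero out an untouched group'' move is trivial, but the lower bound on $\Pr[D=1,\Tribes^\dagger=0]$ it yields has thin margins, so one must control the multiplicity of the induced map fairly tightly via the concentration of the number of all-zero groups, and the resulting constant must still survive the factor-$\Theta(1/e)$ loss incurred when conditioning on $\Tribes=0$ (resp. $\Tribes=1$). Everything else — wide-term deletion, restricting a block, and the self-duality-type relationship between $\Tribes$ and $\Tribes^\dagger$ — is routine.
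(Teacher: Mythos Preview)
The paper does not prove this theorem. It is quoted verbatim as ``Theorem~1.9 of~\cite{OW07}'' in the previous-work section (Section~\ref{sec:previous-work}) and is used only as context; no proof or proof sketch appears anywhere in the paper. So there is nothing here to compare your proposal against.

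For what it is worth, your reduction strategy---prune wide terms, restrict one half of the input to fix the value of $\Tribes$ (respectively $\Tribes^\dagger$), and reduce to a statement about narrow DNFs versus the read-once CNF $\Tribes^\dagger$---is the natural shape of the argument. The one place I would urge care is exactly the place you already flagged: the constants. Your double-counting in the Core Lemma gives roughly $\Pr[D=1,\Tribes^\dagger=0] \gtrsim \Pr[D=1]/K$ for a truncation parameter $K$ controlling the tail of the number of all-zero groups, and after the conditioning step you lose an additional factor of $\Pr[\Tribes=0]\to 1/e$. Squeezing the final bound above $0.01$ is feasible but genuinely tight; if you want the specific constant $0.99$ rather than ``some absolute constant $<1$'', you should do the arithmetic carefully rather than leave it qualitative. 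Also double-check the CNF case: your duality move is correct ($\overline{\Tribes(x)}=\Tribes^\dagger(\overline{x})$), but make sure to track that after pruning and restricting the second block you are again left with a DNF whose terms are narrow \emph{in the remaining variables}, which is immediate since restriction can only shrink terms.
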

Our Theorem~\ref{main-theorem} gives an analogous separation between depth-$d$ and depth-$(d+1)$ for all $d \geq 2$, with $(1/2 - o_n(1))$-inapproximability rather than $0.01$-inapproximability.  The~\cite{OW07} size lower bound of $2^{\Omega(n/\log n)}$ is much larger, in the case $d=2$, than our $\exp({n^{\Omega(1/d)}})$ size bound. However, we recall that achieving a $\exp({\omega(n^{1/(d-1)})})$ lower bound against depth-$d$ circuits for an explicit function, even for worst-case computation, is a well-known and major open problem in complexity theory~(see e.g.~Chapter \S 11 of~\cite{Jukna:12} and \cite{Val83,GW13,viola13}). In particular, an extension of the $2^{\Omega(n/\polylog(n))}$-type lower bound of~\cite{OW07} to depth $3$, even for worst-case computation, would constitute a significant breakthrough.

\subsection{Our main lower bounds}
\label{sec:precise} 
We close this section with precise statements of our two main lower bound results, a discussion of the (near)-optimality of our correlation bounds, and a very high-level overview of our techniques.

\begin{theorem}[First main lower bound] \label{thm:smallbottomfanin}
For $2 \leq d \leq  {\frac {c \sqrt{\log n}}{\log \log n}}$, the $n$-variable $\BalancedSipser_d$ function has the following property: Any depth-$d$ circuit $C:\zo^n\to\zo$ of size at most $S = 2^{n^{{\frac 1 {6(d-1)}}}}$ and bottom fan-in ${\frac {\log n}{10(d-1)}}$ agrees with $\BalancedSipser_d$ on at most $(\frac1{2} + n^{-\Omega(1/d)})\cdot 2^n$  inputs. 
\end{theorem}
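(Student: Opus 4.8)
The plan is to prove Theorem~\ref{thm:smallbottomfanin} by a depth-reduction induction on $d$, driven by a distribution of \emph{random projections} designed specifically for $\BalancedSipser_d$, keeping track of the circuit's agreement with the target at every stage. Recall from Section~\ref{sec:sipser} that $\BalancedSipser_d$ is a read-once alternating $\AND/\OR$ formula whose fan-ins are tuned to make it (nearly) unbiased, with bottom gates of fan-in $\Theta((\log n)/d)$ and $d-1$ layers above them of fan-in roughly $n^{1/(d-1)}$. The first step is to construct, for each $d$, a distribution $\mathcal{D}_d$ over projections --- maps that send each input coordinate either to a constant or to one of a pool of fresh variables --- with two properties. (i) \emph{Structure preservation}: for every $\rho$ in the support of $\mathcal{D}_d$, the function $\BalancedSipser_d\uhr\rho$ is a relabelling of $\BalancedSipser_{d-1}$. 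This is arranged by having $\rho$ ``eat the bottom layer'': inside each next-to-bottom gate, exactly the number of its bottom subgates that balance demands are forced to survive as fresh variables and the remaining subgates are forced to the neutral constant, so that the next-to-bottom layer becomes the new bottom layer of a genuine $\BalancedSipser_{d-1}$ while the top $d-2$ layers are left intact. (ii) \emph{Statistics transport}: there is a distribution on the fresh variables under which $\BalancedSipser_{d-1}$ is again balanced, and under which the agreement of $C\uhr\rho$ with $\BalancedSipser_{d-1}$ is, in expectation over $\rho\sim\mathcal{D}_d$, at least the agreement of $C$ with $\BalancedSipser_d$. Achieving (i) and (ii) simultaneously is exactly what ordinary random restrictions cannot do, and is the reason for passing to projections.

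The engine for simplifying the circuit is a \emph{projection switching lemma}: for $\rho\sim\mathcal{D}_d$, any width-$w$ DNF or CNF fails to be computable by a projected decision tree ($\Tree$) of depth at most $t$ with probability at most $(O(p_d w))^{t}$, where $p_d=\Theta((\log n)/n^{1/(d-1)})$ is the (small, and forced) survival fraction of $\mathcal{D}_d$. Since a depth-$t$ $\Tree$ can be written both as an $\AND$ of $\OR$s and as an $\OR$ of $\AND$s of bottom fan-in $t$, it can be merged into the gate feeding it, lowering the circuit's depth by one. Starting from $C$ of depth $d-1$, size $S=2^{n^{1/(6(d-1))}}$, and bottom fan-in $r'=\tfrac{\log n}{10(d-1)}$, I would apply $\rho_1\sim\mathcal{D}_d$: the $\le S$ bottom depth-$2$ subcircuits of $C$ are width-$\le r'$ DNFs/CNFs, so by the switching lemma and a union bound, with probability $\ge 1-S\cdot(O(p_d r'))^{t_1}$ every one of them collapses to a depth-$t_1$ $\Tree$; choosing $t_1=\Theta((d-1)n^{1/(6(d-1))}/\log n)$ makes this failure probability at most $n^{-\Omega(1/d)}/d$ while keeping $t_1<n^{1/(6(d-1))}$. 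Merging these $\Tree$s upward yields a depth-$(d-2)$ circuit of size $\le S\cdot 2^{O(t_1)}$ and bottom fan-in $\le t_1$, while simultaneously $\BalancedSipser_d\uhr\rho_1=\BalancedSipser_{d-1}$ on $n_1\approx n^{(d-2)/(d-1)}$ fresh variables. Iterating $d-2$ times --- re-verifying each round that the decision-tree depth $t_k$, the accumulated size $S\cdot 2^{O(\sum_j t_j)}=S^{O(1)}$ (this is where the hypothesis $d=o(\sqrt{\log n})$ enters, to keep the geometric-type blow-up under control), and the variable count $n_k\approx n^{(d-1-k)/(d-1)}$ all remain in the required ranges --- drives $C$ down to a single $\AND$ or $\OR$ gate, while $\BalancedSipser_d$ has become $\BalancedSipser_2$ on $n_{d-2}\approx n^{1/(d-1)}$ variables.

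The base case is that a single $\AND$ or $\OR$ of literals cannot agree with $\BalancedSipser_2$ on more than a $\tfrac12+n^{-\Omega(1/d)}$ fraction of inputs, under the distribution with respect to which $\BalancedSipser_2$ is balanced. This is immediate from the parameters of $\BalancedSipser_2$: it has bias $O(1/(\text{top fan-in}))=n^{-\Omega(1/d)}$ and each variable has influence $n^{-\Omega(1/d)}$, so a short Fourier estimate --- or an explicit case analysis according to whether the gate's literals all lie inside one clause/term of $\BalancedSipser_2$ --- bounds the agreement by $\tfrac12+n^{-\Omega(1/d)}$. Combined with the at most $d\cdot(n^{-\Omega(1/d)}/d)=n^{-\Omega(1/d)}$ total loss incurred by switching failures across the $d-2$ rounds, this yields the claimed bound $\tfrac12+n^{-\Omega(1/d)}$ for $C$ against $\BalancedSipser_d$, proving Theorem~\ref{thm:smallbottomfanin}.

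I expect the crux to be the projection switching lemma together with the rigidity of (i)--(ii). Because $\mathcal{D}_d$ must carry $\BalancedSipser_d$ exactly onto $\BalancedSipser_{d-1}$ and transport the agreement statistics correctly, its survival fraction $p_d$ is not a free parameter, so one cannot simply invoke H\aa{}stad's switching lemma; instead one has to develop the $\Tree$ and $\DTc$ (canonical decision tree) machinery and run a Razborov-style encoding argument --- recovering a ``bad'' projection from a short advice string --- with the counting carried out relative to the \emph{projected} coordinate structure, and in particular coping with the fact that a projection may glue several old coordinates onto one new variable. A second, more clerical, difficulty is the joint bookkeeping over the $d-2$ rounds: verifying that the per-round decision-tree depth, the cumulative size blow-up from rewriting $\Tree$s as depth-$2$ formulas, and the shrinking number of surviving variables all stay inside the windows set by $S=2^{n^{1/(6(d-1))}}$ and by $d\leq c\sqrt{\log n}/\log\log n$ --- which is precisely what pins down the constant $6$ in the size exponent.
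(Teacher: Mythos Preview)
Your strategy has the right architecture --- random projections, a switching lemma, and structure preservation --- but the specific combination you propose in (i) and (ii) cannot be achieved, and this is precisely the obstacle the paper is built around (see Section~\ref{sec:our-techniques}). If (ii) is to hold for \emph{every} circuit $C$, then the pushforward of $(\brho,\bY)$ to an input of $\BalancedSipser_d$ must be exactly the uniform distribution (otherwise some $C$ witnesses the reverse inequality), i.e.\ your projections must complete to uniform. But any distribution with that property necessarily assigns positive probability to outcomes that wreck the Sipser structure: under a given depth-$(d-2)$ $\OR$ gate, all of its bottom $\AND$ children may be set to $0$, or one of them to $1$, and in either case that $\OR$ becomes a constant rather than a fresh variable. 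So your (i) --- exact preservation for \emph{every} $\rho$ in the support --- is incompatible with (ii). H{\aa}stad's blockwise restrictions achieve your (i) and admit a switching lemma, but fail (ii); this is exactly why his argument yields only a worst-case hierarchy theorem. Passing to projections does not by itself dissolve this tension.

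The paper's resolution is to \emph{abandon} your (i) and replace it with a high-probability, approximate version. Its projections complete to uniform exactly (Proposition~\ref{prop:complete-to-uniform}), but the projected target is not $\BalancedSipser_{d-1}$: it is $\BalancedSipser_d^{(k-1)}\uhr\hat{\brho^{(k)}}$ (Fact~\ref{fact:lift}), whose bottom fan-ins are random and only concentrated near the intended values. Controlling this drift is the role of ``typical'' restrictions (Definition~\ref{def:typical}) and the bootstrapping Propositions~\ref{prop:initial-typical} and~\ref{prop:typical-yields-typical}; it is also why the projections must be \emph{adaptive}, with $\calR(\hat{\brho^{(k+1)}})$ depending on the previous outcome --- a feature absent from your sketch. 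The endpoint is then not $\BalancedSipser_2$ versus a single gate but a randomly restricted $\BalancedSipser_d^{(1)}$ versus a shallow decision tree, handled by Proposition~\ref{prop:base-case} together with a separate argument (Proposition~\ref{prop:boundexp}) that the restricted $\BalancedSipser_d^{(1)}$ has bias close to $1/2$ with high probability. Your proposal does not account for this machinery, and without it the argument does not go through.
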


\begin{theorem}[Second main lower bound]
\label{thm:alternationpattern}
For $2 \leq d \leq  {\frac {c\sqrt{\log n}}{\log \log n}}$, the $n$-variable $\BalancedSipser_d$ function has the following property:  Any depth-$d$ circuit $C:\zo^n\to\zo$ of size at most $S = 2^{n^{{\frac 1 {6(d-1)}}}}$ and the opposite alternation pattern to $\BalancedSipser_d$ (i.e.~its top-level output gate is $\OR$ if $\BalancedSipser_d$'s is $\AND$ and vice versa) agrees with $\BalancedSipser_d$ on at most $(\frac1{2} + n^{-\Omega(1/d)})\cdot 2^n$  inputs. 
\end{theorem}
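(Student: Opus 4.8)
Proof proposal for Theorem~\ref{thm:alternationpattern}.

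\medskip

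We argue by contrapositive: suppose some depth-$d$, size-$\le S$ circuit $C$ with the opposite alternation pattern agrees with $\BalancedSipser_d$ on more than $(\tfrac12+n^{-\Omega(1/d)})\cdot 2^n$ inputs, equivalently $C$ has correlation $\Omega(n^{-\Omega(1/d)})$ with $\BalancedSipser_d$ under the uniform distribution --- a meaningful quantity because the per-level fan-ins of $\BalancedSipser_d$ are chosen in Section~\ref{sec:sipser} to make it approximately unbiased. Dualizing by De Morgan if necessary, we may assume $\BalancedSipser_d$ has $\AND$ at the top, so $C$ has $\OR$ at the top; since both are layered and strictly alternating, $C$'s bottom layer consists of the gate type \emph{opposite} to that of $\BalancedSipser_d$'s bottom layer, and this mismatch is the only structural feature of $C$ we will use.

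The plan is an induction on $d$ driven by \emph{random projections}, reducing an instance at depth $d$ to an instance at depth $d-1$ of the same form (opposite alternation pattern, smaller balanced Sipser target), with the base case $d=2$ handled by hand. For the inductive step we apply one round of a random projection $\rho$ coordinated with the read-once block structure of $\BalancedSipser_d$, designed to have, except with small failure probability, three properties. First, $\BalancedSipser_d\uhr\rho$ is again a $\BalancedSipser_{d-1}$ formula on a polynomially smaller set of variables, still approximately unbiased: $\rho$ fixes many variables within each bottom block of $\BalancedSipser_d$ in the direction that collapses that block to a single fresh variable (this is the content of the $\BalancedSipser$ construction in Section~\ref{sec:sipser}). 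Second, $C\uhr\rho$ is equivalent to a depth-$(d-1)$ circuit of size $\le S$ that still has the opposite alternation pattern relative to $\BalancedSipser_{d-1}$: here is where the hypothesis is used --- because $C$'s bottom gates are of the type \emph{opposite} to $\BalancedSipser_d$'s, the same fixings that reduce $\BalancedSipser_d$'s bottom gates to single variables instead force almost every bottom gate of $C$ to a constant, deleting it, so the surviving bottom gates of $C$ become narrow, and one further application of a switching lemma for random projections then merges $C$'s bottom two layers, dropping its depth to $d-1$ (had $C$ had the correct alternation pattern its bottom gates would survive these fixings and this collapse would fail). Third, correlation is essentially preserved: the projections are constructed so that a uniform completion of a random projection is a uniformly random original input, which makes the correlation telescope,
\[\Pr[C \neq \BalancedSipser_d] \;=\; \mathbb{E}_{\rho}\,\Pr[C\uhr\rho \neq \BalancedSipser_d\uhr\rho],\]
so if the exceptional events above have total probability $\le\delta$ then for a typical $\rho$ the circuit $C\uhr\rho$ still has correlation $\Omega(n^{-\Omega(1/d)})-O(\delta)$ with $\BalancedSipser_{d-1}$. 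Choosing parameters so that the per-round loss times the $d-2$ rounds is $\ll n^{-\Omega(1/d)}$, we arrive at a depth-$2$ circuit with the opposite alternation pattern that has correlation $\Omega(n^{-\Omega(1/d)})$ with a balanced $\BalancedSipser_2$-type function on $m=n^{\Theta(1)}$ variables.

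For the base case $d=2$, $\BalancedSipser_2$ is a balanced read-once $\Tribes$-type function and $C$ is a $\mathrm{DNF}$ (or $\mathrm{CNF}$) of size $\le S$. One more random restriction, via the ordinary switching lemma, turns $C$ into a decision tree of depth $t\ll\log m$ and hence a $2^t$-junta; since a balanced $\Tribes$ cannot be $(\tfrac12+m^{-\Omega(1)})$-approximated by any such small junta --- e.g.\ because its Fourier mass up to level $t$ is negligible, equivalently because of its noise-sensitivity profile --- this contradicts the surviving correlation, completing the induction. (This base case subsumes and quantitatively sharpens the depth-$2$ inapproximability implicit in Theorem~\ref{thm:OW}.)

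The main obstacle is making the quantitative coupling in the inductive step close simultaneously, and it rests on two pillars. One is a switching lemma for random projections sharp enough that a single layer of $C$ --- a $\mathrm{DNF}$ or $\mathrm{CNF}$ of size up to $S=2^{n^{1/(6(d-1))}}$, hence of essentially unbounded width --- collapses to a shallow decision tree with failure probability small enough to union-bound over all $\le S$ gates while keeping the cumulative correlation loss below $n^{-\Omega(1/d)}$; crucially its hypotheses must be checked for the \emph{coordinated} projection, not a generic restriction, since it is precisely the alignment against $C$'s bottom gate type that produces the narrowing in property two. The other is the design of $\BalancedSipser_d$ in Section~\ref{sec:sipser}: the level fan-ins must be tuned so that $\BalancedSipser_d$ is approximately unbiased under the uniform distribution, a coordinated random projection carries it to $\BalancedSipser_{d-1}$ (on $n^{\Theta(1)}$ variables, still approximately unbiased) with overwhelming probability, and after $d-2$ such steps one is left with a balanced $\Tribes$-type function on enough variables for the junta lower bound to bite. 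Threading a single parameter regime (the restriction/projection "liveness'' probabilities, the surviving variable counts, the decision-tree depth, and all the failure probabilities) through all $d-2$ rounds so that every inequality holds at once is the delicate heart of the argument.
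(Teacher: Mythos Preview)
Your high-level picture (random projections that complete to uniform, switching lemma collapsing $C$, $\BalancedSipser_d$ retaining structure) matches the paper, but you have misidentified \emph{where} and \emph{how} the ``opposite alternation pattern'' hypothesis is used, and this is a genuine gap.

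You claim the hypothesis enters at the \emph{bottom}: because $C$'s bottom gates are of the opposite type to $\BalancedSipser_d$'s, the projection that collapses Sipser's bottom gates ``forces almost every bottom gate of $C$ to a constant.'' This is not the mechanism, and it would not work as stated: the bottom gates of $C$ need have nothing to do with the block structure of $\BalancedSipser_d$, and the projection switching lemma is symmetric in $\AND$/$\OR$ --- it applies to narrow DNFs and CNFs alike, so the gate type at the bottom of $C$ is irrelevant to whether the switching succeeds. The actual way the paper handles $C$'s unbounded bottom fan-in is a simple \emph{trimming} step (Lemma~\ref{lem:truncate}): under the uniform distribution, any bottom gate of fan-in exceeding $\log(S/\eps)$ takes its minority value on at most an $\eps/S$ fraction of inputs, so replacing all such gates by constants changes $C$ on at most an $\eps$ fraction of inputs. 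This has nothing to do with alternation pattern. Crucially, after trimming, the \emph{first} projection does \emph{not} reduce $C$'s depth (its bottom fan-in $\approx w^{1/5}$ is still too large for Proposition~\ref{prop:psl-initial} to be useful); only the subsequent $d-2$ projections each shave off a layer, leaving $\mathbf{\Psi}(C)$ a depth-$2$ circuit.

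The opposite-alternation hypothesis is actually used at the \emph{top}. After all $d-1$ projections, $\mathbf{\Psi}(\BalancedSipser_d)$ is (a restriction of) $\BalancedSipser_d^{(1)}$, i.e.\ the single top gate --- say $\OR_{w_0}$ --- while $\mathbf{\Psi}(C)$ is $(1/S)$-close to a narrow depth-$2$ circuit whose top gate matches $C$'s original top gate, i.e.\ an $\AND$, hence a narrow CNF. The endgame (Proposition~\ref{prop:base-case}) is then a direct correlation bound between a nearly-balanced single $\OR$ and a narrow CNF under a product distribution --- not Tribes-versus-junta. Your proposed base case (``one more restriction turns $C$ into a shallow decision tree, then Tribes cannot be approximated by juntas'') would require that extra restriction to preserve the structure of the depth-$2$ target, which is exactly the delicate issue the whole paper is built to address; you cannot simply invoke the ordinary switching lemma there. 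Relatedly, your induction ``$\BalancedSipser_d \to \BalancedSipser_{d-1}$'' is not literally available: one projection turns $\BalancedSipser_d$ into $\BalancedSipser_d^{(d-1)}\uhr\hat\rho$, which has different fan-ins than $\BalancedSipser_{d-1}$; the paper tracks this via the notion of ``typical'' restrictions rather than an inductive hypothesis on $d$.
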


Clearly both these results imply Theorem~\ref{main-theorem} as a special case, since any size-$S$ depth-$(d-1)$ circuit may be viewed as a size-$S$ depth-$d$ circuit satisfying the assumptions of Theorems~\ref{thm:smallbottomfanin} and~\ref{thm:alternationpattern}.

\paragraph{(Near)-optimality of our correlation bounds.}
For constant $d$, our main result shows that the depth-$d$ $\BalancedSipser_d$ function has correlation at most $(1/2 + n^{-\Omega(1)})$ with any subexponential-size circuit of depth $d-1$.  Since $\BalancedSipser_d$ is a monotone function, well-known results \cite{BshoutyTamon:96} imply that its correlation with some input variable $x_i$ or one of the constant functions 0,1 (trivial approximators of depth at most one) must be at least $(1/2 + \Omega(1/n))$; thus significant improvements on our correlation bound cannot be achieved for this (or for any monotone) function.

What about non-monotone functions?  If $\{f_d\}_{d \geq 2}$ is any family of $n$-variable functions computed by poly$(n)$-size, depth-$d$ circuits, the ``discriminator lemma'' of Hajnal et al.~\cite{HMP+:87} implies that $f_d$ must have correlation at least $(1/2 + n^{-O(1)})$ with one of the depth-$(d-1)$ circuits feeding into its topmost gate. Therefore a ``$d$ versus $d-1$'' depth hierarchy theorem for correlation $(1/2 + n^{-\omega(1)})$ does not hold.

\paragraph{Our techniques.}  Our approach is based on \emph{random projections}, a generalization of random restrictions.  At a high level, we design a carefully chosen (adaptively chosen) sequence of random projections, and argue that with high probability under this sequence of random projections, (i) any circuit $C$ of the type specified in Theorem~\ref{thm:smallbottomfanin} or Theorem~\ref{thm:alternationpattern} ``collapses,'' while (ii) the $\BalancedSipser_d$ function ``retains structure,''  and (iii) moreover this happens in such a way as to imply that the circuit $C$ must have originally been a very poor approximator for $\BalancedSipser_d$   (before the random projections).  Each of (i)--(iii) above requires significant work; see Section \ref{sec:techniques} for a much more detailed explanation of our techniques (and of why previous approaches were unable to successfully establish the result).

\section{Application \#1: Random oracles separate the polynomial hierarchy}
\label{sec:application-oracles}

\subsection{Background: $\mathsf{PSPACE} \neq \mathsf{PH}$ relative to a
random oracle}

The pioneering work on lower bounds against small-depth circuits in the 1980's was largely motivated by a connection between small-depth computation and the polynomial hierarchy shown by Furst, Saxe, and Sipser~\cite{FSS81}. They gave a super-polynomial size lower bound for constant-depth circuits, proving that depth-$d$ circuits computing the $n$-variable parity function must have size $\Omega(n^{\log^{(3d-6)}n})$, where $\log^{(i)}n$ denotes the $i$-th iterated logarithm.  They also showed that an improvement of this lower bound to super-quasipolynomial for constant-depth circuits (i.e.~$\Omega_d\big(2^{(\log n)^k}\big)$ for all constants $k$) would yield an oracle $A$ such that $\mathsf{PSPACE}^A \ne \mathsf{PH}^A$.  Ajtai independently proved a stronger lower bound of $n^{\Omega_d(\log n)}$~\cite{ajtai1983}; his motivation came from finite model theory.  Yao gave the first  super-quasipolynomial lower bounds on the size of constant-depth circuits computing the parity function~\cite{yao1985}, and shortly after H{\aa}stad proved the optimal lower bound of $\exp(\Omega(n^{1/(d-1)}))$ via his influential Switching Lemma~\cite{Hastad86}.

Yao's relativized separation of {\sf PSPACE} from {\sf PH} was improved qualitatively by Cai, who showed that the separation holds even relative to a \emph{random} oracle~\cite{Cai86}. Leveraging the connection made by~\cite{FSS81}, Cai accomplished this by proving \emph{correlation bounds} against constant-depth circuits, showing that constant-depth circuits of sub-exponential size agree with the parity function only on a $(1/2+ o_n(1))$ fraction of inputs. (Independent work of Babai~\cite{babai1987} gave a simpler proof of the same relativized separation.)

\subsection{Background: The polynomial hierarchy is infinite relative to some oracle}

Together, these results paint a fairly complete picture of the status of the $\mathsf{PSPACE}$ versus $\mathsf{PH}$ question in relativized worlds: not only does there exist an oracle $A$ such that $\mathsf{PSPACE}^A \ne \mathsf{PH}^A$, this separation holds relative to almost all oracles. A natural next step is to seek analogous results showing that the relativized polynomial hierarchy is infinite; we recall that the polynomial hierarchy being infinite implies $\mathsf{PSPACE} \ne \mathsf{PH}$, and furthermore, this implication relativizes. We begin with the following question, attributed to Albert Meyer in~\cite{BGS75}:

\newtheorem*{MQ}{Meyer's Question}
\begin{MQ}
Is there a relativized world within which the polynomial hierarchy is infinite?  Equivalently, does there exist an oracle $A$ such that $\Sigma_{d}^{\mathrm{P}, A} \subsetneq \Sigma_{d+1}^{\mathrm{P}, A}$ for all $d\in \N$?
\end{MQ}

Early work on Meyer's question predates~\cite{FSS81}. It was first considered by Baker, Gill, and Solovay in their paper introducing the notion of relativization~\cite{BGS75}, in which they prove the existence of an oracle $A$ such that ${\sf P}^A \ne {\sf NP}^A \ne {\sf coNP}^A$, answering Meyer's question in the affirmative for $d \in \{0,1\}$. Subsequent work of Baker and Selman proved the $d=2$ case~\cite{BS79}.  Following~\cite{FSS81}, Sipser noted the analogous connection between Meyer's question and circuit lower bounds~\cite{sipser1983}: to answer Meyer's question in the affirmative, it suffices to exhibit, for every constant $d\in \N$, a Boolean function $F_d$ computable by a depth-$d$ $\acz$ circuit such that any depth-$(d-1)$ circuit computing $F_d$ requires super-quasipolynomial size. (This is a significantly more delicate task than proving super-quasipolynomial size lower bounds for the parity function; see Section~\ref{sec:techniques} for a detailed discussion.)  Sipser also constructed a family of Boolean functions for which he proved an $n$ versus $\Omega(n^{\log^{(3d)}n})$ separation --- these came to be known as the \emph{Sipser functions}, and they play the same central role in Meyer's question as the parity function does in the relativized {\sf PSPACE} versus {\sf PH} problem.

As discussed in the introduction (see Theorem~\ref{thm:worst-case-hierarchy}), H{\aa}stad gave the first proof of a near-optimal $n$ versus $\exp(n^{\Omega(1/d)})$ separation for the Sipser functions~\cite{Hastad86}, obtaining a strong depth hierarchy theorem for small-depth circuits and answering Meyer's question in the affirmative for all $d\in \N$.

\subsection{This work: The polynomial hierarchy is infinite relative to a random oracle}

Given H{\aa}stad's result, a natural goal is to complete our understanding of Meyer's question by showing that the polynomial hierarchy is not just infinite with respect to \emph{some} oracle, but in fact with respect to \emph{almost all} oracles. Indeed, in \cite{Hastad86,Hastad:86,Hastad:89}, H{\aa}stad poses the problem of extending his result
to show this as an open question:

\begin{question}[Meyer's Question for Random Oracles~\cite{Hastad86,Hastad:86,Hastad:89}]
\label{question:random-oracle}
Is the polynomial hierarchy infinite relative to a random oracle?  Equivalently, does a random oracle $A$ satisfy $\Sigma_{d}^{\mathrm{P}, A} \subsetneq \Sigma_{d+1}^{\mathrm{P}, A}$ for all $d\in \N$?\end{question}

Question~\ref{question:random-oracle} also appears as the main open problem in~\cite{Cai86,babai1987}; as mentioned above, an affirmative answer to Question~\ref{question:random-oracle} would imply Cai and Babai's result showing that $\mathsf{PSPACE}^A\ne \mathsf{PH}^A$ relative to a random oracle $A$.  Further motivation for studying Question~\ref{question:random-oracle} comes from a surprising result of Book, who proved that the \emph{unrelativized} polynomial hierarchy collapses if it collapses relative to a random oracle~\cite{book1994}.  Over the years Question~\ref{question:random-oracle} has been discussed in a wide range of surveys~\cite{Johnson86,Hem94,ST95,HRZ95,VollmerWagner97,Zoo}, textbooks~\cite{DK00,HO02}, and research papers \cite{Hastad:86,Hastad:89,Tar89,Fortnow99,Aaronson10b}.

\paragraph{Our work.} As a corollary of our main result (Theorem~\ref{main-theorem}) --- an \emph{average-case} depth hierarchy theorem for small-depth circuits --- we answer Question~\ref{question:random-oracle} in the affirmative for all $d\in \N$:

\begin{reptheorem}{thm:random-oracle}
The polynomial hierarchy is infinite relative to a random oracle: with probability $1$, a random oracle $A$ satisfies $\Sigma_{d}^{\mathrm{P}, A} \subsetneq \Sigma_{d+1}^{\mathrm{P}, A}$ for all $d\in \N$.
\end{reptheorem}

Prior to our work, the $d \in \{0,1\}$ cases were proved by Bennett and Gill in their paper initiating the study of random oracles~\cite{bengil81}. Motivated by the problem of obtaining relativized separations in quantum structural complexity, Aaronson recently showed that a random oracle $A$ separates $\Pi^{\mathrm{P}}_2$ from $\mathsf{P^{NP}}$~\cite{Aaronson10,Aaronson10b}; he conjectures in~\cite{Aaronson10b} that his techniques can be extended to resolve the $d=2$ case of Theorem~\ref{thm:random-oracle}.
We observe that O'Donnell and Wimmer's techniques (Theorem~\ref{thm:OW} in our introduction) can be used to prove the $d=2$ case~\cite{OW07}, though the authors of~\cite{OW07} do not discuss this connection to the relativized polynomial hierarchy in their paper.

\begin{table}[h]
\renewcommand{\arraystretch}{1.6}
\centering
\begin{tabular}{| >{\centering\arraybackslash}p{5cm} |c|c|}
\hline
 &\ \ $\mathsf{PSPACE}^A \ne \mathsf{PH}^A$\ \ & $\Sigma^{\mathrm{P},A}_{d} \subsetneq \Sigma^{\mathrm{P},A}_{d+1}$ for all $d \in \N$ \\ \hline
 Connection to lower bounds for constant-depth circuits  & \cite{FSS81} & \cite{sipser1983} \\ \hline
 Hard function(s) & Parity & Sipser functions \\ \hline
 Relative to \emph{some} oracle $A$ &  \cite{yao1985,Hastad86} &
  \cite{yao1985,Hastad86} \\ \hline
  Relative to \emph{random} oracle $A$ & \cite{Cai86,babai1987} & {\bf This work}  \\ \hline
\end{tabular}
\caption{Previous work and our result on the relativized polynomial hierarchy}
\label{orale-table}
\end{table}

We refer the reader to Chapter \S7 of H{\aa}stad's thesis~\cite{Hastad:86} for a detailed exposition (and complete proofs) of the aforementioned connections between small-depth circuits and the polynomial hierarchy (in particular, for the proof of how Theorem \ref{thm:random-oracle}
follows from Theorem~\ref{main-theorem}).

\section{Application \#2: No approximate converse to Boppana--Linial--Mansour--Nisan}
\label{sec:killBKS}

The famous result of Linial, Mansour, and Nisan gives strong bounds on Fourier concentration of small-depth circuits~\cite{linmannis93}. As a corollary, they derive an upper bound on the total influence of small-depth circuits, showing that depth-$d$ size-$S$ circuits have total influence $(O(\log S))^d$.  (We remind the reader that the total influence of an $n$-variable Boolean function $f$ is $\Inf(f) := \sum_{i=1}^n \Inf_i(f)$, where $\Inf_i(f)$ is the probability that flipping coordinate $i\in [n]$ of a uniform random input from $\{0,1\}^n$ causes the value of $f$ to change.) This was subsequently sharpened by Boppana via a simpler and more direct proof~\cite{Boppana97}:

\begin{theorem}[Boppana, Linial--Mansour--Nisan]
\label{thm:LMN}
Let $f:\zo^n\to\zo$ be a computed by a size-$S$ depth-$d$ circuit. Then $\Inf(f) = (O(\log S))^{d-1}$.
\end{theorem}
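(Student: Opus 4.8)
The plan is to obtain Theorem~\ref{thm:LMN} from H\aa stad's Switching Lemma~\cite{Hastad86} by the random-restriction method, following Boppana~\cite{Boppana97}. Write $\mathcal R_p$ for the distribution over restrictions that independently leaves each coordinate free with probability $p$ and otherwise fixes it to a uniform bit. Two elementary facts drive the argument. \emph{(i)} Total influence commutes with random restrictions up to a factor $p$: $\mathbb E_{\rho\sim\mathcal R_p}[\Inf(f\uhr\rho)] = p\cdot\Inf(f)$. This follows coordinate by coordinate from linearity, since conditioned on $i$ being left free, $\rho$ together with a uniform assignment to the free coordinates is a uniform input, so $\mathbb E_{\rho\,:\,i\text{ free}}[\Inf_i(f\uhr\rho)] = \Inf_i(f)$; the identity iterates under composition of restrictions. \emph{(ii)} A width-$w$ DNF or CNF $g$ has $\Inf(g) = O(w)$: indeed $\Inf(g)\le\DT(g)$ always (a coordinate not queried on the decision-tree branch of $x$ cannot be pivotal at $x$), and applying the Switching Lemma to $g$ with $\mathcal R_{1/(10w)}$ gives $\mathbb E_\rho[\DT(g\uhr\rho)] = O(1)$, so by \emph{(i)}, $\Inf(g) = O(w)\cdot\mathbb E_\rho[\Inf(g\uhr\rho)] = O(w)$.

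Granting these, it suffices to exhibit a composition $\rho$ of random restrictions with total free-probability $p_\ast = \Theta\big((\log S)^{-(d-2)}\big)$ under which, with high probability, $C$ collapses to a width-$O(\log S)$ DNF/CNF: for then \emph{(i)} and \emph{(ii)} give $\Inf(f) = p_\ast^{-1}\cdot\mathbb E_\rho[\Inf(f\uhr\rho)] = \Theta\big((\log S)^{d-2}\big)\cdot O(\log S) = (O(\log S))^{d-1}$. (We may assume $d\ge 2$; the case $d=1$ is trivial.)

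The collapse is the standard iterated Switching-Lemma argument. Assume $C$ is leveled and strictly alternating. First apply a preliminary restriction from $\mathcal R_{1/2}$: any bottom gate of fan-in at least $C_0\log S$ becomes constant except with probability $S^{-\Omega(1)}$, so by a union bound over the $\le S$ bottom gates, with high probability the bottom two levels of $C$ now form width-$O(\log S)$ DNFs/CNFs. Then run $d-2$ further rounds, each from $\mathcal R_p$ with $p = \Theta(1/\log S)$; in each round the Switching Lemma re-expresses the width-$O(\log S)$ DNFs (resp.\ CNFs) at the bottom of the current circuit as equivalent width-$O(\log S)$ CNFs (resp.\ DNFs), which merge with the layer above and lower the depth by one. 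A union bound over the $\le S$ gates involved, with decision-tree-depth cutoff $\Theta(\log S)$, keeps each round's failure probability at $S^{-\Omega(1)}$; carried through all rounds, this yields the tail bound $\Pr[\text{width of }f\uhr\rho > s] \le \mathrm{poly}(S)\cdot 2^{-\Omega(s)}$, so that, via \emph{(ii)}, $\mathbb E_\rho[\Inf(f\uhr\rho)] = O(\log S)$ with no dependence on $n$.

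The step I expect to be the genuine obstacle — as opposed to routine restriction bookkeeping — is bringing the exponent down to $d-1$ rather than $d$. Restricting all the way to a decision tree (or a constant) takes $d-1$ Switching-Lemma rounds of free-probability $\Theta(1/\log S)$, hence total free-probability $\Theta\big((\log S)^{-(d-1)}\big)$, and then loses a further $\Theta(\log S)$ factor from $\mathbb E_\rho[\Inf(f\uhr\rho)] = O(\log S)$, yielding only the weaker Linial--Mansour--Nisan bound $(O(\log S))^d$~\cite{linmannis93}. Boppana's saving is to stop one level short, at a width-$O(\log S)$ DNF/CNF, and to finish with the sharper estimate $\Inf(\text{width-}w\text{ DNF}) = O(w)$, which folds the would-be last round and its factor $\log S$ into a single $O(\log S)$. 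The remaining care — secondary to this — goes into handling unbounded bottom fan-in with the preliminary restriction, propagating Switching-Lemma failure probabilities through $\Theta(d)$ rounds, and phrasing the last step via the width tail bound so that the unlikely large-width event contributes only $O(1)$ and no $n$-dependence enters.
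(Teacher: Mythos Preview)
The paper does not prove Theorem~\ref{thm:LMN}; it is quoted as a known result of Boppana~\cite{Boppana97} (sharpening Linial--Mansour--Nisan~\cite{linmannis93}) and used as a black box in the derivation of Theorem~\ref{thm:killBKS}. Your proposal is a faithful reconstruction of Boppana's argument, including the key point you highlight: stopping the switching-lemma iteration at depth~$2$ and invoking $\Inf(\text{width-}w\text{ DNF})=O(w)$ is exactly what brings the exponent down from $d$ to $d-1$.
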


(We note that Boppana's bound is asymptotically tight by considering the parity function.)  Several researchers have asked whether an \emph{approximate converse} of some sort holds for Theorem~\ref{thm:LMN}:
\begin{quote}
 \emph{If $f: \{0,1\}^n \to \{0,1\}$ has low total influence, is it the case that $f$
can be approximated to high accuracy by a small constant-depth circuit?}
\end{quote}
A result of this flavor, taken together with Theorem~\ref{thm:LMN}, would yield an elegant characterization of  Boolean functions with low total influence.
In this section we
formulate a very weak approximate converse to Theorem~\ref{thm:LMN} and show, as a
consequence of our main result (Theorem~\ref{main-theorem}), that even this weak converse does not hold.

\subsection{Background: BKS conjecture and O'Donnell--Wimmer's counterexample}

An approximate converse to Theorem
\ref{thm:LMN} was first conjectured by Benjamini, Kalai, and Schramm, with a very specific quantitative
bound on how the size of the approximating circuit depends on its influence and depth~\cite{BKS:99} (the conjecture also appears in the surveys~\cite{Kal00,KalaiSafra:05}). They posed the following:

\newtheorem*{BKS}{Benjamini--Kalai--Schramm (BKS) Conjecture}
\begin{BKS}
For every $\eps>0$ there is a constant $K=K(\eps)$ such that the following holds:  Every monotone $f: \{0,1\}^n \to \{0,1\}$ can be $\eps$-approximated by a depth-$d$ circuit of size at most
\[ \exp\big((K \cdot \Inf(f))^{1/(d-1)}\big) \]
for some $d \ge 2$.
\end{BKS}

(We associate a circuit with the Boolean function that it computes, and we say that a circuit \emph{$\eps$-approximates} a Boolean function $f$ if it agrees with $f$ on all but an $\eps$-fraction of all inputs.)
If true, the BKS conjecture would give a quantitatively strong converse to Theorem~\ref{thm:LMN} for
monotone functions.\footnote{We remark that although the BKS conjecture was stated for monotone Boolean functions, it seems that (a priori) it could have been true for all Boolean functions: prior to~\cite{OW07}, we are not aware of any counterexample to the BKS conjecture even if $f$ is allowed to be non-monotone.}  In addition, it would have important implications for the study of threshold phenomena in Erd\"os--R\'enyi random graphs, which is the context in which Benjamini, Kalai, and Schramm made their conjecture; we refer the reader to~\cite{BKS:99} and Section 1.4 of~\cite{OW07} for a detailed discussion of this connection.  However, the BKS conjecture was disproved by O'Donnell and Wimmer~\cite{OW07}.  Their result (Theorem~\ref{thm:OW} in our introduction) disproves the case $d=2$ of the BKS conjecture, and the case $d>2$ is disproved by an easy argument which~\cite{OW07} give.

\subsection{This work: Disproving a weak variant of the BKS conjecture}

A significantly weaker variant of the BKS conjecture is the following:

\begin{conjecture}
\label{conj:weak-BKS}
For every $\eps > 0$ there is a $d=d(\eps)$ and $K_1=K_1(\eps),K_2=K_2(\eps)$ such that the following holds:
Every monotone $f: \{0,1\}^n \to \{0,1\}$ can be $\eps$-approximated by a depth-$d$ circuit
of size at most \[
\exp\left( (K_1 \cdot \Inf(f))^{K_2}\right).
\]
\end{conjecture}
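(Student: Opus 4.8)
The plan is to try to establish Conjecture~\ref{conj:weak-BKS} along the only route known to yield any converse to Theorem~\ref{thm:LMN}: a ``structure-then-simulate'' argument. First I would invoke Friedgut's junta theorem: any Boolean $f$ with $\Inf(f) = I$ is $\eps$-close to a function $g$ depending on only $m := 2^{O(I/\eps)}$ coordinates, and when $f$ is monotone one may take $g$ monotone as well. Second, since any function on $m$ coordinates is computed by a DNF (hence a depth-$2$ circuit) of size at most $m 2^m$, composing the two steps shows that every monotone $f$ is $\eps$-approximated by a depth-$2$ circuit of size $\exp(\exp(O(I/\eps)))$. This already gives \emph{a} converse to Theorem~\ref{thm:LMN}, so the entire content of Conjecture~\ref{conj:weak-BKS} is to replace this \emph{doubly} exponential bound by the \emph{singly} exponential bound $\exp((K_1 I)^{K_2})$, at the cost of raising the depth from $2$ to some constant $d(\eps)$.

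To close that gap one would need a genuinely stronger structural statement for monotone functions of low influence: that such an $f$ is $\eps$-close not merely to an $m$-junta with $m = 2^{O(I)}$, but to a function computed by a depth-$d(\eps)$ circuit of size $\exp(\mathrm{poly}(I))$. The most natural attempt is to recompress Friedgut's junta $g$ --- since $g$ is monotone with influence $O(I)$, one might hope to trade depth for size in the spirit of the monotone depth-reduction of Klawe--Paul--Pippenger--Yannakakis (every linear-size monotone formula admits depth-$k$ circuits of size $\exp(O(k\,n^{1/(k-1)}))$ on $n$ variables), applied with arity $m = 2^{O(I)}$ and $k = d(\eps)$; unfortunately this only yields size $\exp(2^{\Omega(I/(d-1))})$, still doubly exponential for constant $d$. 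An alternative is an iterative switching argument: restrict the $\mathrm{poly}(I)$ most influential variables, argue that a monotone low-influence function then collapses in a way capturable by one extra layer of bounded-fan-in gates, and recurse on the residual (still low-influence) part, hoping that after $d(\eps)$ rounds the total size stays $\exp(\mathrm{poly}(I))$.

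I expect this compression step to be the \textbf{main obstacle}, and in fact I do not believe it can be carried out. Already the $\Tribes$ function of Theorem~\ref{thm:OW} is a monotone function with $\Inf = \Theta(\log n)$ whose best depth-$2$ approximators have size $\exp(\Omega(n/\log n))$, vastly larger than $\exp(\mathrm{poly}(\log n))$, so any successful proof must gain by passing to larger constant depth. But the $\BalancedSipser_d$ family of Section~\ref{sec:sipser}, which (as will be shown later in this section) also has $\Inf = O(\log n)$, resists approximation by circuits of \emph{every} fixed depth $d(\eps)$ up to size $\exp((\log n)^{\omega(1)})$, by the correlation bounds of Theorems~\ref{thm:smallbottomfanin} and~\ref{thm:alternationpattern} --- which is precisely the phenomenon recorded in Theorem~\ref{thm:killBKS}. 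My honest assessment, then, is that the strategy above breaks down irreparably at the compression step: Conjecture~\ref{conj:weak-BKS} is false, and the role of the remainder of Section~\ref{sec:killBKS} is to turn this observation into a theorem.
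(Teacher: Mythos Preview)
The statement you were asked to address is a \emph{conjecture}, and the paper does not prove it --- it \emph{disproves} it (Theorem~\ref{thm:killBKS}). Your write-up arrives at exactly this conclusion: after sketching the only plausible route (Friedgut's junta theorem followed by some depth-for-size compression) and correctly noting that every known compression step leaves a doubly-exponential dependence on $\Inf(f)$, you identify the $\BalancedSipser_d$ family as the obstruction and assert that Conjecture~\ref{conj:weak-BKS} is false. That is precisely the paper's position, and the counterexample you point to is the one the paper uses.

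So your ``proof proposal'' is not a proof of the conjecture at all, but rather an informal argument for why it should fail, together with the correct candidate counterexample. This is the right outcome. The paper's actual contribution beyond what you wrote is to \emph{rigorously} establish the inapproximability of $\BalancedSipser_d$ (via the random-projection machinery of Sections~\ref{sec:complete-to-uniform}--\ref{sec:combo}) and then to instantiate parameters (take $m = 2^{2^{\lfloor\sqrt{\log\log n}\rfloor}}$ variables and depth $d = \lfloor\sqrt{\log\log n}\rfloor + 1$) so that $\Inf(f) = O(\log n)$ by Boppana's bound while any depth-$\sqrt{\log\log n}$ approximator needs size $2^{2^{\tilde\Omega(2^{\sqrt{\log\log n}})}}$. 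Your exploratory discussion of Friedgut plus Klawe--Paul--Pippenger--Yannakakis is a reasonable sanity check that no off-the-shelf positive result rescues the conjecture, but it is not part of the paper's argument and is not needed once the lower bound for $\BalancedSipser_d$ is in hand.
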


The~\cite{OW07} counterexample to the BKS conjecture does not disprove Conjecture~\ref{conj:weak-BKS}; indeed, the function $f$
that~\cite{OW07} construct and analyze is computed by a depth-3 circuit of size $O(n)$.\footnote{As with the BKS conjecture, prior to our work we are not aware of any counterexample to Conjecture~\ref{conj:weak-BKS} even if $f$ is allowed to be non-monotone.}  Observe that
Conjecture \ref{conj:weak-BKS}, if true, would yield the following rather appealing consequence:  every monotone $f: \{0,1\}^n \to \{0,1\}$ with total influence at most $\polylog(n)$ can be approximated to any
constant accuracy by a quasipolynomial-size, constant-depth circuit (where both the
constant in the quasipolynomial size bound and the constant depth of the circuit may depend
on the desired accuracy).

Following O'Donnell and Wimmer's disproof of the BKS conjecture, several researchers have posed questions similar in spirit to Conjecture~\ref{conj:weak-BKS}. O'Donnell asked if the BKS conjecture is true if the bound on the size of the approximating circuit is allowed to be $\exp\big((K\cdot \Inf(f))^{1/d}\big)$ instead of
 $\exp\big((K\cdot \Inf(f))^{1/{(d-1)}}\big)$~\cite{ODonnell07open}. This is a weaker statement than the original BKS conjecture (in particular, it is not ruled out by the counterexample of~\cite{OW07}), but still significantly stronger than Conjecture~\ref{conj:weak-BKS}.  Subsequently Kalai asked if Boolean functions with total influence $\polylog(n)$ (resp.~$O(\log n)$) can be approximated by constant-depth circuits of quasipolynomial size (resp.~$\acz$)~\cite{Kalaioverflow12} (see also~\cite{KalaiBlog10} where he states a qualitative version).  Kalai's question is a variant of Conjecture~\ref{conj:weak-BKS} in which $f$ is allowed to be non-monotone, but $\Inf(f)$ is only allowed to be $\polylog(n)$; furthermore, $K_2(\eps)$ is only allowed to be $1$ if $\Inf(f) = O(\log n)$.  Finally, H.~Hatami recently restated the $\Inf(f) = O(\log n)$ case of Kalai's question:

\newtheorem*{Hatami}{Problem 4.6.3 of~\cite{Hatami14}}
\begin{Hatami}
Is it the case that for every $\eps,C>0$, there are
constants $d,k$ such that for every $f: \{0,1\}^n \to \{0,1\}$ with $\Inf(f) \leq C \log n$, there is
a size-$n^k$, depth-$d$ circuit which $\eps$-approximates $f$?
\end{Hatami}

\paragraph{Our work.} As a corollary of our main result (Theorem~\ref{main-theorem}), we show that Conjecture~\ref{conj:weak-BKS} is false even for (suitable choices of) $\eps = {\frac 1 2} - o_n(1).$ Our counterexample also provides a strong negative answer to O'Donnell's and Kalai--Hatami's versions of Conjecture~\ref{conj:weak-BKS}.  We prove the following:

\begin{reptheorem}{thm:killBKS}
Conjecture \ref{conj:weak-BKS} is false.  More precisely, there is a
monotone $f: \{0,1\}^n \to \{0,1\}$ and a $\delta(n)=o_n(1)$ such that $\Inf(f) =O(\log n)$ but any circuit of
depth $d(n)=\sqrt{\log \log n}$ that agrees with $f$ on $({\frac 1 2} + \delta(n))$ fraction of all inputs must have size at least
$S(n) = 2^{2^{\tilde{\Omega}\big(2^{\sqrt{\log\log n}}\big)}}$.\end{reptheorem}

\begin{proof}[Proof of Theorem~\ref{thm:killBKS} assuming Theorem~\ref{main-theorem}]
Consider the monotone Boolean function $f:\zo^n\to\zo$ corresponding to $\BalancedSipser_d$ of Theorem~\ref{main-theorem} defined over the first $m = 2^{{2^{\lfloor \sqrt{\log \log n}\rfloor}}}$ variables, and of depth $d = \lfloor \log\log m \rfloor + 1 = \lfloor \sqrt{\log\log n}\rfloor + 1$. By Boppana's theorem (Theorem~\ref{thm:LMN}), we have that
\[ \Inf(f) = O(\log m)^{d-1} = O\left(2^{\lfloor \sqrt{\log\log n}\rfloor}\right)^{\lfloor \sqrt{\log\log n}\rfloor } = O(\log n).\]
On the other hand, our main theorem (Theorem~\ref{main-theorem}) implies that even circuits of depth $d-1 = \lfloor\sqrt{\log\log n} \rfloor$ which agree with $f$ on $({\frac 1 2} + \delta(n))$ fraction of all inputs, where $\delta(n) = 2^{-\Omega(2^{\lfloor \sqrt{\log \log n} \rfloor} / \lfloor \sqrt{\log \log n} \rfloor)}$, must have size at least
\begin{equation*} S(n) = 2^{m^{\Omega(1/d)}} = 2^{\big( 2^{2^{\sqrt{\log\log n}}}\big)^{\Omega(1/\sqrt{\log\log n})}} = 2^{2^{\tilde{\Omega}\big(2^{\sqrt{\log\log n}}\big)}}. \qedhere
\end{equation*}
\end{proof}

\section{Our techniques}
\label{sec:techniques}

The method of random restrictions dates back to Subbotovskaya~\cite{Sub61} and continues to be an indispensable technique in circuit complexity. Focusing only on small-depth circuits, we mention that the random restriction method is the common essential ingredient underlying the landmark lower bounds discussed in the previous sections~\cite{FSS81,ajtai1983,sipser1983,yao1985,Hastad86,Cai86,babai1987,IMP12,Has14}.

We begin in Section~\ref{sec:lbrr} by describing the general framework for proving worst- and average-case lower bounds against small-depth circuits via the random restriction method.  Within this framework, we sketch the now-standard proof of correlation bounds for the parity function based on H{\aa}stad's Switching Lemma.  We also recall why the lemma is not well-suited for proving a depth hierarchy theorem for small-depth circuits, hence necessitating the ``blockwise variant'' of the lemma that H{\aa}stad developed and applied to prove his (worst-case) depth hierarchy theorem.  In Section~\ref{sec:our-techniques} we highlight the  difficulties that arise in extending H{\aa}stad's depth hierarchy theorem to the average-case, and how our techniques --- specifically, the notion of random \emph{projections} --- allow us to overcome these difficulties. 

\subsection{Background: Lower bounds via random restrictions} \label{sec:lbrr}

Suppose we would like to show that a \emph{target function} $f : \zo^n\to\zo$ has small correlation with any size-$S$ depth-$d$ \emph{approximating circuit} $C$ under the uniform distribution $\calU$ over $\zo^n$.  A standard approach is to construct a series of random restrictions $\{\calR_k\}_{k \in \{2,\ldots, d\}}$ satisfying three properties:

\begin{itemize}[leftmargin = 0.5cm]
\item {\bf Property 1: Approximator $C$ simplifies.}  The randomly-restricted circuit $C \uhr \brho^{(d)} \cdots \brho^{(2)}$, where $\brho^{(k)} \leftarrow \calR_k$ for $2\le k \le d$, should ``collapse to a simple function'' with high probability. This is typically shown via iterative applications of an appropriate ``Switching Lemma for the $\calR_k$'s\,'', which shows that each random restriction $\brho^{(k)}$ decreases the depth of the circuit $C \uhr \brho^{(d)} \cdots \brho^{(k-1)}$ by one with high probability.  The upshot is that while $C$ is a depth-$d$ size-$S$ circuit, $C \uhr \brho^{(d)} \cdots \brho^{(2)}$ will be a small-depth decision tree, a ``simple function", with high probability.

\item {\bf Property 2: Target $f$ retains structure.}  In contrast with the approximating circuit, the target function $f$ should (roughly speaking) be resilient against the random restrictions $\brho^{(k)} \leftarrow \calR_k$.  While the precise meaning of ``resilient'' depends on the specific application, the key property we need is that $f \uhr \brho^{(d)} \cdots \brho^{(2)}$ will with high probability be a ``well-structured'' function that is uncorrelated with any small-depth decision tree.
\end{itemize}

 Together, these two properties imply that random restrictions of $f$ and $C$ are uncorrelated with high probability.
Note that this already yields \emph{worst-case} lower bounds, showing that $f :\zo^n \to\zo$ cannot be computed exactly by $C$. To obtain correlation bounds, we need to translate such a statement into the fact that $f$ and $C$ \emph{themselves} are uncorrelated.  For this we need the third key property of the random restrictions:

\begin{itemize}[leftmargin=0.5cm]
\item  {\bf Property 3: Composition of $\calR_k$'s completes to $\calU$.}  Evaluating a Boolean function $h : \zo^n \to\zo$ on a random input $\bX \leftarrow \calU$ is equivalent to first applying random restrictions $\brho^{(d)}, \ldots, \brho^{(2)}$ to $h$, and then evaluating the randomly-restricted function $h \uhr \brho^{(d)} \cdots\brho^{(2)}$ on $\bX' \leftarrow \calU$.
\end{itemize}

\paragraph{Correlation bounds for parity.} For uniform-distribution correlation bounds against constant-depth circuits computing the parity function, the random restrictions are all drawn from $\calR(p)$, the ``standard'' random restriction which independently sets each free variable to $0$ with probability $\frac1{2}(1-p)$, to $1$ with probability $\frac1{2}(1-p)$, and keeps it free with probability $p$.  The main technical challenge arises in proving that Property 1 holds --- this is precisely H{\aa}stad's Switching Lemma --- whereas Properties 2 and 3 are straightforward to show. For the second property, we note that
\[ \Parity_n \uhr \rho \equiv \pm\, \Parity(\rho^{-1}(\ast)) \quad \text{for all restrictions $\rho \in \{0,1,\ast\}^n$}, \]
and so $\Parity_n \uhr \brho^{(d)} \cdots \brho^{(2)}$ computes the parity of a random subset $\bS\sse [n]$ of coordinates (or its negation). With an appropriate choice of the $\ast$-probability $p$ we have that $|\bS|$ is large with high probability; recall that $\pm\,\Parity_k$ (the $k$-variable parity function or its negation) has zero correlation with any decision tree of depth at most $k-1$.  For the third property, we note that for all values of $p\in (0,1)$, a random restriction $\brho \leftarrow \calR(p)$ specifies a uniform random subcube of $\zo^n$ (of dimension $|\brho^{-1}(\ast)|$). Therefore, the third property is a consequence of the simple fact that a uniform random point within a uniform random subcube is itself a uniform random point from $\{0,1\}^n$.

\paragraph{H{\aa}stad's blockwise random restrictions.}  With the above framework in mind, we notice a conceptual challenge in proving $\acz$ depth hierarchy theorems via the random restriction method: even focusing only on the worst-case (i.e.~ignoring Property 3), the random restrictions $\calR_k$ will have to satisfy Properties 1 and 2 with the target function $f$ being \emph{computable in $\acz$}.  This is a significantly more delicate task than (say) proving $\Parity \notin \acz$ since, roughly speaking, in the latter case the target function $f \equiv \Parity$ is ``much more complex'' than the circuit $C \in \acz$ to begin with.  In an $\acz$ depth hierarchy theorem, \emph{both} the target $f$ and the approximating circuit $C$ are constant-depth circuits; the target $f$ is ``more complex'' than $C$ in the sense that it has larger circuit depth, but this is offset by the fact that the circuit size of $C$ is allowed to be exponentially larger than that of $f$ (as is the case in both H{\aa}stad's and our theorem).  We refer the reader to Chapter \S6.2 of Hastad's thesis~\cite{Hastad:86} which contains a discussion of this very issue.

H{\aa}stad overcomes this difficulty by replacing the ``standard'' random restrictions $\calR(p)$ with random restrictions \emph{specifically suited to Sipser functions being the target}: his ``blockwise'' random restrictions are designed so that (1) they reduce the depth of the formula computing the Sipser function by one, but otherwise essentially preserve the rest of its structure, and yet (2) a switching lemma still holds for any circuit with sufficiently small bottom fan-in. These correspond to Properties 2 and 1 respectively.  However, unlike $\calR(p)$, H{\aa}stad's blockwise random restrictions are not independent across coordinates and do not satisfy Property 3: their composition does not complete to the uniform distribution $\calU$ (and indeed it does not complete to any product distribution). This is why  H{\aa}stad's construction establishes a worst-case rather than average-case depth hierarchy theorem.
\subsection{Our main technique: Random projections}
\label{sec:our-techniques}

The crux of the difficulty in proving an average-case $\acz$ depth hierarchy theorem therefore lies in designing random restrictions that satisfy Properties 1, 2, and 3 simultaneously, for a target $f$ in $\acz$ and an arbitrary approximating circuit $C$ of smaller depth but possibly exponentially larger size.  To recall, the ``standard'' random restrictions $\calR(p)$ satisfy Properties 1 and 3 but not 2, and H{\aa}stad's blockwise variant satisfies Properties 1 and 2 but not 3.

In this paper we overcome this difficulty with \emph{projections}, a generalization of restrictions.  Given a set of formal variables $\calX = \{ x_1,\ldots,x_n\}$, a restriction $\rho$ either fixes a variable $x_i$  (i.e.~$\rho(x_i) \in \{0,1\}$) or keeps it alive (i.e.~$\rho(x_i) = x_i$, often denoted by $\ast$).  A \emph{projection}, on the other hand, either fixes $x_i$ or maps it to a variable $y_j$ from a possibly different space of formal variables $\calY = \{ y_1,\ldots,y_{n'}\}$.  Restrictions are therefore a special case of projections where $\calY \equiv \calX$, and each $x_i$ can only be fixed or mapped to itself. (See Definition~\ref{def:projection} for precise definitions.)  Our arguments crucially employ projections in which $\calY$ is smaller than $\calX$, and where moreover each $x_i$
is only mapped to a specific element $y_j$ where $j$ depends on $i$ in a carefully designed way that depends on the structure of the formula computing the Sipser function. Such ``collisions'', where blocks of distinct formal variables in $\calX$ are mapped to the same new formal variable $y_i \in \calY$, play a crucial role in our approach.
(We remark that ours is not the first work to consider such a generalization of restrictions. Random projections are also used in the work of Impagliazzo and Segerlind, which establishes lower bounds against constant-depth Frege systems with counting axioms  in proof complexity~\cite{IS01}.) 

At a high level, our overall approach is structured around a sequence $\mathbf{\Psi}$ of (\emph{adaptively chosen}) random projections satisfying Properties 1, 2, and 3 simultaneously, with the target $f$ being $\BalancedSipser$, a slight variant of the Sipser function which we define in Section~\ref{sec:sipser}.   We briefly outline how we establish each of the three properties (it will be more natural for us to prove them in a slightly different order from the way they are listed in Section~\ref{sec:lbrr}):

\begin{itemize}[leftmargin = 0.5cm]

\item {\bf Property 3: $\mathbf{\Psi}$ completes to the uniform distribution.}  Like H{\aa}stad's blockwise random restrictions (and unlike the ``standard'' random restrictions $\calR(p)$), the distributions of our random projections are not independent across coordinates: they are carefully correlated in a way that depends on the structure of the formula computing $\BalancedSipser$.  As discussed above, there is an inherent tension between the need for such correlations on one hand (to ensure that $\BalancedSipser$ ``retains structure''), and the requirement that their composition completes to the uniform distribution on the other hand (to yield uniform-distribution correlation bounds). We overcome this difficulty with our notion of projections: in Section~\ref{sec:complete-to-uniform} we prove that the composition $\mathbf{\Psi}$ of our sequence of random projections completes to the uniform distribution (despite the fact that every one of the individual random projections comprising $\mathbf{\Psi}$ is highly-correlated among coordinates.)

\item {\bf Property 1: Approximator $C$ simplifies.}  Next we prove that approximating circuits $C$ of the types specified in our main lower bounds (Theorems~\ref{thm:smallbottomfanin} and~\ref{thm:alternationpattern}) ``collapse to a simple function'' with high probability under our sequence $\mathbf{\Psi}$ of random projections.   Following the standard ``bottom-up'' approach to proving lower bounds against small-depth circuits, we establish this by arguing that each of the individual random projections comprising $\mathbf{\Psi}$ ``contributes to the simplification'' of $C$ by reducing its depth by (at least) one.  

More precisely, in Section~\ref{sec:approximator-simplifies} we prove a \emph{projection switching lemma}, showing that a small-width DNF or CNF ``switches'' to a small-depth decision tree with high probability under our random projections. (The depth reduction of $C$ follows by applying this lemma to every one of its bottom-level depth-$2$ subcircuits.)  Recall that the random projection of a depth-$2$ circuit over a set of formal variables $\calX$ yields a function over a new set of formal variables $\calY$, and in our case $\calY$ is significantly smaller than $\calX$.  In addition to the structural simplification that results from setting variables to constants (as in H{\aa}stad's Switching Lemma for random \emph{restrictions}), the proof of our projection switching lemma also crucially exploits the additional structural simplification that results from distinct variables in $\calX$ being mapped to the same variable in $\calY$.

\item {\bf Property 2: Target $\BalancedSipser$ retains structure.}  Like H{\aa}stad's blockwise random restrictions, our random projections are defined with the target function $\BalancedSipser$ in mind; in particular, they are carefully designed so as to ensure that $\BalancedSipser$ ``retains structure'' with high probability under their composition $\mathbf{\Psi}$.  

In Section~\ref{sec:typical-stuff} we define the notion of a ``typical'' outcome of our random projections, and prove that with high probability \emph{all} the individual projections comprising $\mathbf{\Psi}$ are typical. (Since our sequence of random projections is chosen adaptively, this requires a careful definition of typicality to facilitate an inductive argument showing that our definition ``bootstraps'' itself.) Next, in Section~\ref{sec:sipser-survives} we show that typical projections have a ``very limited and well-controlled'' effect on the structure of $\BalancedSipser$; equivalently, $\BalancedSipser$ is resilient against typical projections. Together, the results of Section~\ref{sec:typical-stuff} and~\ref{sec:sipser-survives} show that with high probability, $\BalancedSipser$ reduces under $\mathbf{\Psi}$ to a ``well-structured'' formula, in sharp contrast with our results from Section~\ref{sec:approximator-simplifies} showing that the approximator ``collapses to a simple function'' with high probability under $\mathbf{\Psi}$.

\end{itemize}

We remark that the notion of random projections plays a key role in ensuring all three properties above. (We give a more detailed overview of our proof in Section~\ref{sec:second-outline} after setting up the necessary terminology and definitions in the next two sections.)

\section{Preliminaries}

\subsection{Basic mathematical tools}

\begin{fact}[Chernoff bounds]
\label{fact:chernoff}
Let $\bZ_1,\ldots,\bZ_n$ be independent random variables satisfying $0 \le \bZ_i \le 1$ for all $i\in [n]$.  Let $\bS = \bZ_1 + \cdots + \bZ_n$, and $\mu = \E[\bS]$.  Then for all $\gamma\ge 0$,
\begin{align*}
\Pr[\bS \ge (1+\gamma)\mu] &\le \exp\left(-\frac{\gamma^2}{2+\gamma}\cdot \mu\right)  \\
\Pr[\bS \le (1-\gamma)\mu] & \le \exp\left(-\frac{\gamma^2}{2}\cdot \mu\right).
\end{align*}
\end{fact}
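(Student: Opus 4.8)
The plan is to prove both tail bounds by the standard exponential-moment (Bernstein) method. For the upper tail, fix a parameter $t > 0$ to be optimized later and apply Markov's inequality to the nonnegative random variable $e^{t\bS}$:
\[ \Pr[\bS \ge (1+\gamma)\mu] = \Pr\big[e^{t\bS} \ge e^{t(1+\gamma)\mu}\big] \le e^{-t(1+\gamma)\mu}\cdot \E\big[e^{t\bS}\big]. \]
By independence of the $\bZ_i$, $\E[e^{t\bS}] = \prod_{i=1}^n \E[e^{t\bZ_i}]$. The one substantive step is to bound each factor using $0 \le \bZ_i \le 1$: since $x \mapsto e^{tx}$ is convex on $[0,1]$, we have the pointwise inequality $e^{t\bZ_i} \le 1 + (e^t - 1)\bZ_i$, whence $\E[e^{t\bZ_i}] \le 1 + (e^t-1)\E[\bZ_i] \le \exp\big((e^t-1)\E[\bZ_i]\big)$ using $1 + u \le e^u$. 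Multiplying over $i$ gives $\E[e^{t\bS}] \le \exp\big((e^t-1)\mu\big)$, and therefore $\Pr[\bS \ge (1+\gamma)\mu] \le \exp\big((e^t - 1 - t(1+\gamma))\,\mu\big)$.

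The rest is calculus: I would minimize the exponent over $t > 0$, which is attained at $t = \ln(1+\gamma)$, giving $\Pr[\bS \ge (1+\gamma)\mu] \le \exp\big((\gamma - (1+\gamma)\ln(1+\gamma))\,\mu\big)$. It then suffices to establish the elementary inequality $(1+\gamma)\ln(1+\gamma) - \gamma \ge \frac{\gamma^2}{2+\gamma}$ for all $\gamma \ge 0$, which is a standard fact verified by a short single-variable computation (both sides agree to first order at $\gamma = 0$, and the difference is nondecreasing for $\gamma \ge 0$).

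For the lower tail, run the symmetric argument with $e^{-t\bS}$ for $t > 0$, using the analogous pointwise bound $e^{-t\bZ_i} \le 1 + (e^{-t}-1)\bZ_i$ on $[0,1]$, which yields
\[ \Pr[\bS \le (1-\gamma)\mu] \le e^{t(1-\gamma)\mu}\cdot \E\big[e^{-t\bS}\big] \le \exp\big((e^{-t} - 1 + t(1-\gamma))\,\mu\big). \]
For $0 \le \gamma < 1$ (the case $\gamma \ge 1$ is trivial, since then the event either is impossible or forces $\bS = 0$, and the claimed bound still holds), the exponent is minimized at $t = -\ln(1-\gamma) > 0$, giving $\exp\big((-\gamma - (1-\gamma)\ln(1-\gamma))\,\mu\big)$; one concludes via the elementary inequality $\gamma + (1-\gamma)\ln(1-\gamma) \ge \gamma^2/2$, which follows since the difference of the two sides and its first derivative vanish at $\gamma = 0$ and its second derivative is $\gamma/(1-\gamma) \ge 0$ on $[0,1)$. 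The proof presents no real obstacle; the only mild subtlety is arranging the two logarithmic inequalities to come out with exactly the stated constants $\frac{\gamma^2}{2+\gamma}$ and $\frac{\gamma^2}{2}$ rather than slightly weaker ones, and both are routine.
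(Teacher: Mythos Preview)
The paper states this as a background Fact without proof, so there is no proof in the paper to compare against. Your argument is the standard exponential-moment derivation and is correct; in particular, the convexity bound $e^{t\bZ_i} \le 1 + (e^t-1)\bZ_i$ on $[0,1]$ is exactly what is needed to handle general $[0,1]$-valued (not just Bernoulli) summands, and the two logarithmic inequalities you invoke are correct with the stated constants.
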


We will use the following fact implicitly in many of our calculations:

\begin{fact} \label{fact:totally-standard}
Let $\delta = \delta(n) > 0$ and $n\in \N$, and suppose $\delta n = o_n(1)$. The following inequalities hold for sufficiently large $n$:
\[ 1-\delta n \le (1-\delta)^n \le 1-\lfrac1{2}\delta n.\]
\end{fact}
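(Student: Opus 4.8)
The plan is to reduce each of the two inequalities to a standard one-line estimate, and to note that the hypothesis $\delta n = o_n(1)$ supplies the (mild) thresholds on $n$ that each step needs. Throughout I would fix $n$ large enough that $\delta n \le 1$ --- possible since $\delta n \to 0$ --- so that in particular $0 < \delta \le 1/n \le 1$ for all such $n$; everything below is for such $n$.

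For the lower bound $1-\delta n \le (1-\delta)^n$ I would invoke Bernoulli's inequality: since $-\delta \ge -1$, we have $(1-\delta)^n \ge 1 + n(-\delta) = 1 - \delta n$. (Alternatively one can pair consecutive terms $\binom{n}{2k}\delta^{2k} - \binom{n}{2k+1}\delta^{2k+1} \ge 0$ in the binomial expansion, using $\delta \le 1$, but Bernoulli is the cleanest route.) For the upper bound $(1-\delta)^n \le 1 - \tfrac12\delta n$ I would use $1-\delta \le e^{-\delta}$ (valid for all real $\delta$) to get $(1-\delta)^n \le e^{-\delta n}$, and then it suffices to check $e^{-t} \le 1 - \tfrac12 t$ for $t := \delta n \in (0,1]$. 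This follows from the elementary bound $e^{-t} \le 1 - t + \tfrac12 t^2$ for $t \ge 0$ (e.g.\ from the alternating Taylor series of $e^{-t}$) together with $\tfrac12 t^2 \le \tfrac12 t$ when $t \le 1$, which combine to give $e^{-t} \le 1 - t + \tfrac12 t = 1 - \tfrac12 t$; substituting $t = \delta n$ back in completes the argument.

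There is essentially no obstacle here: the statement is a routine calculus estimate, and the only point worth flagging is that both halves genuinely use $\delta n \le 1$ (equivalently $\delta \le 1/n$), which is exactly what $\delta n = o_n(1)$ delivers for large $n$. The constant $\tfrac12$ in the upper bound is a convenient round choice rather than the optimal one --- any constant strictly less than $1$ works once $\delta n$ is small enough --- but it cannot be replaced by $1$, since $e^{-t} > 1-t$ for every $t > 0$.
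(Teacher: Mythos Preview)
Your proof is correct. The paper itself does not supply a proof of this fact (it is stated without justification as a standard estimate), so there is nothing to compare against; your argument via Bernoulli for the lower bound and $1-\delta \le e^{-\delta}$ together with $e^{-t} \le 1 - t + \tfrac12 t^2 \le 1 - \tfrac12 t$ for $t \in (0,1]$ for the upper bound is exactly the kind of routine verification the authors had in mind.
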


Finally, the following standard approximations will be useful:

\begin{fact} \label{fact:approx}
For $x \geq 2$, we have
\[
e^{-1} \left(1 - {\frac 1 x} \right) \leq \left(1 - {\frac 1 x} \right)^x \leq e^{-1},
\quad \quad
\text{or equivalently,}
\quad \quad
\left(1 - {\frac 1 x} \right)^x \leq e^{-1} \leq  \left(1 - {\frac 1 x} \right)^{x-1},
\]
and for $0 \leq x \leq 1$, we have $1+x \leq e^x \leq 1+2x.$
\end{fact}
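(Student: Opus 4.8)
The plan is to derive every inequality from the single master estimate $1+t\le e^t$, which holds for all real $t$ (by convexity of $\exp$, or because $t\mapsto e^t-1-t$ is nonnegative with a unique minimum at $t=0$), together with one elementary one-variable argument for the last bound.

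First consider the claim for $x\ge 2$; note that then $1-\tfrac1x\ge\tfrac12>0$, so raising this quantity to real powers causes no trouble. Applying the master estimate with $t=-\tfrac1x$ gives $1-\tfrac1x\le e^{-1/x}$, and raising both (positive) sides to the $x$-th power yields the upper bound $(1-\tfrac1x)^x\le e^{-1}$. For the matching lower bound I would prove the equivalent form $e^{-1}\le(1-\tfrac1x)^{x-1}$: taking reciprocals this reads $\bigl(1+\tfrac1{x-1}\bigr)^{x-1}\le e$, which follows from the master estimate with $t=\tfrac1{x-1}$ after raising to the $(x-1)$-th power. The two displayed forms in the statement are genuinely interchangeable: multiplying $e^{-1}\le(1-\tfrac1x)^{x-1}$ through by the positive quantity $1-\tfrac1x$ gives the left inequality $e^{-1}(1-\tfrac1x)\le(1-\tfrac1x)^x$, and conversely dividing by $1-\tfrac1x$ recovers it, while the inequality $(1-\tfrac1x)^x\le e^{-1}$ is common to both forms.

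For the final claim, $1+x\le e^x$ is again the master estimate. The upper bound $e^x\le 1+2x$ on $[0,1]$ does not follow from $1+t\le e^t$ and needs its own (still routine) argument. I would bound the tail of the exponential series: for $0\le x\le 1$ and $n\ge 2$ one has $x^n\le x^2$, so
\[
e^x \;=\; 1+x+\sum_{n\ge 2}\frac{x^n}{n!}\;\le\;1+x+x^2\sum_{n\ge 2}\frac1{n!}\;=\;1+x+(e-2)x^2\;\le\;1+x+x\;=\;1+2x,
\]
using $e-2<1$ and $x^2\le x$. (Alternatively, set $g(x)=1+2x-e^x$; then $g(0)=0$, $g(1)=3-e>0$, and $g'(x)=2-e^x$ is strictly decreasing and hence changes sign exactly once on $[0,1]$, so $g$ increases then decreases and stays nonnegative throughout $[0,1]$.)

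There is no genuine obstacle here: Fact~\ref{fact:approx} is a collection of standard calculus estimates gathered for repeated later use, and the only mild point worth flagging is that $e^x\le 1+2x$ really does require the restriction $x\le 1$ and a sign-of-derivative (or series-tail) argument rather than a one-line invocation of $1+t\le e^t$.
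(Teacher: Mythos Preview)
Your proof is correct. The paper states Fact~\ref{fact:approx} without proof (it is listed among the ``basic mathematical tools'' in the preliminaries as a standard approximation to be used later), so there is nothing to compare against; your derivation from the master estimate $1+t\le e^t$ together with a separate series-tail or sign-change argument for $e^x\le 1+2x$ is exactly the kind of routine verification the authors are implicitly relying on.
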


We write $\log$ to denote logarithm base 2 and $\ln$ to denote natural log.

\subsection{Notation}

A DNF is an $\OR$ of $\AND$s (terms) and a CNF is an $\AND$ of $\OR$s (clauses).  The \emph{width} of a DNF (respectively, CNF) is the maximum number of variables that occur in any one of its terms (respectively, clauses). We will assume throughout that our circuits are \emph{alternating}, meaning that every root-to-leaf path alternates between $\AND$ gates and $\OR$ gates, and \emph{layered}, meaning that for every gate $\mathsf{G}$, every root-to-{\sf G} path has the same length.  By a standard conversion, every depth-$d$ circuit is equivalent to a depth-$d$ alternating layered circuit with only a modest increase in size (which is negligible given the slack on our analysis). The size of a circuit is its number of gates, and the depth of a circuit is the length of its longest root-to-leaf path.

For $p \in [0,1]$ and symbols $\bullet,\circ$, we write ``$\{\bullet_p,\circ_{1-p}\}$'' to denote the distribution over $\{\bullet,\circ\}$ which outputs $\bullet$ with probability $p$ and $\circ$ with probability $1-p.$  We write ``$\,\{\bullet_p,\circ_{1-p}\}^k\,$'' to denote the product distribution over $\{\bullet,\circ\}^k$ in which each coordinate is distributed independently according to $\{\bullet_p,\circ_{1-p}\}$.  We write
``\,$\{\bullet_p,\circ_{1-p}\}^k \setminus \{\bullet\}^k$\,'' to denote the product distribution conditioned on not outputting $\{\bullet\}^k$.  

Given $\tau \in \{0,1,\ast\}^{A \times [\ell]}$ and $a \in A$, we write
$\tau_a$ to denote the $\ell$-character string $(\tau_{a,i})_{i \in [\ell]} \in \{0,1,\ast\}^{[\ell]}$, and we sometimes refer to this as the ``$a$-th block of $\tau$.''

Throughout the paper we use boldfaced characters such as $\brho$, $\bX$, etc.\ to denote random variables.  We write ``$a = b \pm c$'' as shorthand to denote that $a \in [b-c,b+c]$, and similarly $a \ne b \pm c$ to denote that $a \notin [b-c,b+c]$.  For a positive integer $k$ we write ``$[k]$'' to denote the set $\{1,\dots,k\}.$

\newcommand{\bias}{{\mathrm{bias}}}

The \emph{bias} of a Boolean function $f$ under an input distribution $\bZ$ is defined as
\[
\bias(f,\bZ) := \min \left\{\Prx_{\bZ}[f(\bZ)=0], \Prx_{\bZ}[f(\bZ)=1] \right\}.
\]

\subsection{Restrictions and random restrictions}

\begin{definition}[Restriction]
\label{def:restriction}
A \emph{restriction} $\rho$ of a finite base set $\{x_\alpha\}_{\alpha\in\Omega}$ of Boolean variables is a string $\rho \in \{0,1,\ast\}^{\Omega}$.  (We sometimes equivalently view a restriction $\rho$ as a function $\rho: \Omega \to \{0,1,\ast\}.$)  Given a function $f : \zo^{\Omega} \to \zo$ and restriction $\rho \in \{0,1,\ast\}^\Omega$, the \emph{$\rho$-restriction} of $f$ is the function $(f \uhr \rho) : \zo^{\Omega}\to\zo$ where
\[ (f\uhr \rho)(x) =  f(x\uhr \rho), \quad  \text{and } (x\uhr \rho)_\alpha := \left\{
\begin{array}{cl}
x_\alpha & \text{if $\rho_\alpha = \ast$} \\
\rho_{\alpha} & \text{otherwise}
\end{array}
\right. \ \  \text{for all $\alpha\in \Omega$}.
\]
Given a distribution $\calR$ over restrictions $\{0,1,\ast\}^{\Omega}$ the \emph{$\calR$-random restriction} of $f$ is the random function $f \uhr \brho$ where $\brho\leftarrow \calR$.
\end{definition}

\begin{definition}[Refinement]
Let $\rho,\tau \in \{0,1,\ast\}^{\Omega}$ be two restrictions.  We say that $\tau$ is a
\emph{refinement} of $\rho$ if $\rho^{-1}(1) \subseteq \tau^{-1}(1)$ and $\rho^{-1}(0) \subseteq \tau^{-1}(0)$, i.e. every variable $x_\alpha$ that is set to 0 or 1 by $\rho$ is set in the same way by $\tau$ (and $\tau$ may set additional variables to 0 or 1 that $\rho$ does not set).
\end{definition}

\begin{definition}[Composition]
Let $\rho,\rho'\in \{0,1,\ast\}^{\Omega}$ be two restrictions. Their \emph{composition}, denoted $\rho\rho'\in \{0,1,\ast\}^{\Omega}$, is the restriction defined by
\[ (\rho\rho')_\alpha = \left\{
\begin{array}{cl}
\rho_\alpha & \text{if $\rho_\alpha \in \{0,1\}$} \\
\rho'_\alpha & \text{otherwise.}
\end{array}
\right.\]
Note that $\rho \rho'$ is a refinement of $\rho$.
\end{definition}

\subsection{Projections and random projections}

A key ingredient in this work is the notion of \emph{random projections} which generalize random restrictions. Throughout the paper we will be working with functions over spaces of formal variables that are partitioned into disjoint blocks of some length $\ell$ (see Section~\ref{sec:sipser} for a precise description of these spaces). In other words, our functions will be over spaces of formal variables that can be described as $\calX = \{ x_{a,i} \colon a \in A, i \in [\ell]\}$, where we refer to $x_{a,i}$ as the $i$-th variable in the $a$-th block.  We associate with each such space $\calX$ a smaller space $\calY = \{y_a \colon a \in A \}$ containing a new formal variable for each block of $\calX$.  Given a function $f$ over $\calX$, the \emph{projection} of $f$ yields a function over $\calY$, and the \emph{random projection} of $f$ is the projection of a random restriction of $f$ (which again is a function over $\calY$). Formally, we have the following definition:

\begin{definition}[Projection]
\label{def:projection}
The \emph{projection operator} $\proj$ acts on functions $f : \zo^{A\times [\ell]} \to \zo$ as follows. The \emph{projection of $f$} is the function $(\proj\,f) : \zo^A \to \zo$ defined by
\[
(\proj\,f)(y) = f(x) \quad \text{where $x_{a,i} = y_a$ for all $a\in A$ and $i\in [\ell]$.}
\]
Given a restriction $\rho \in \{0,1,\ast\}^{A\times [\ell]}$, the \emph{$\rho$-projection of $f$} is the function $(\proj_\rho\,f) : \zo^A \to\zo$ defined by
\[
(\proj_\rho\,f)(y) = f(x) \quad \text{where $x_{a,i} =$}
\left\{
\begin{array}{cl}
y_a & \text{if $\rho_{a,i} = \ast$} \\
\rho_{a,i} & \text{otherwise}
\end{array}
\right. \quad \text{for all $a\in A$ and $i \in [\ell]$.}
\]
Equivalently, $(\proj_\rho\,f) \equiv (\proj\,(f\uhr\rho))$. Given a distribution $\calR$ over restrictions in $\{0,1,\ast\}^{A\times [\ell]}$, the associated random projection operator is $\proj_\brho$ where $\brho \leftarrow \calR$, and for $f : \zo^{A\times [\ell]}\to\zo$ we call $\proj_\brho\,f$ its \emph{$\calR$-random projection}. \end{definition}

Note that when $\ell = 1$, the spaces $\calX$ and $\calY$ are identical and our definitions of a $\rho$-projection and $\calR$-random projection coincide exactly with that of a $\rho$-restriction and $\calR$-random restriction in Definition~\ref{def:restriction} (in this case the projection operator $\proj$ is simply the identity operator).

\begin{remark}
\label{rem:equiv-projection}
The following interpretation of the projection operator will be useful for us. Let $f$ be a function over $\calX$, and consider its representation as a circuit $C$ (or decision tree) accessing the formal variables $x_{a,i}$ in $\calX$.  The projection of $f$ is the function computed by the circuit $C'$, where $C'$ is obtained from $C$ by replacing every occurrence of $x_{a,i}$ in $C$ by $y_a$ for all $a\in A$ and $i\in [\ell]$. Note that this may result in a significant simplification of the circuit: for example, an $\AND$ gate ($\OR$ gate, respectively) in $C$ that access both $x_{a,i}$ and $\overline{x}_{a,j}$ for some $a\in A$ and $i,j \in [\ell]$ will access both $y_a$ and $\overline{y}_a$ in $C'$, and therefore can be simplified and replaced by the constant $0$ ($1$, respectively). This is a fact we will exploit in the proof of our projection switching lemma in Section~\ref{sec:psl}.
\end{remark}

\section{The $\BalancedSipser$ function and its basic properties}
\label{sec:sipser}

For $2 \leq d \in \N$, in this subsection we define the depth-$d$ monotone $n$-variable read-once Boolean formula $\BalancedSipser_d$ and establish some of its basic properties.  The $\BalancedSipser_d$ function is very similar to the depth-$d$ formula considered by H\aa stad \cite{Hastad:86}; the only difference is that the fan-ins of the gates in the top and bottom layers have been slightly adjusted, essentially so as to ensure that the formula is very close to balanced between the two output values 0 and 1 (note that such balancedness is a prerequisite for any $(1/2 - o_n(1))$-inapproximability result.)  The $\BalancedSipser_d$ formula is defined in terms of an integer parameter $m$; in all our results this is an asymptotic parameter that approaches $+\infty$, so $m$ should be thought of as ``sufficiently large'' throughout the paper.

 Every leaf of $\BalancedSipser_d$ occurs at the same depth (distance from the root) $d$; there are exactly $n$ leaves ($n$ will be defined below) and each variable occurs at precisely one leaf.  The formula is \emph{alternating}, meaning that every root-to-leaf path alternates between $\AND$ gates and $\OR$ gates; all of the gates that are adjacent to input variables (i.e. the depth-$(d-1)$ gates) are $\AND$ gates, so the root is an $\OR$ gate if $d$ is even and is an $\AND$ gate if $d$ is odd.  The formula is also \emph{depth-regular}, meaning that for each depth (distance from the root) $0 \leq k \leq d-1$, all of the depth-$k$ gates have the same fan-in.  Hence to completely specify the $\BalancedSipser_d$ formula it remains only to specify the fan-in sequence $w_0,\dots,w_{d-1}$, where $w_k$ is the fan-in of every gate at depth $k$.  These fan-ins are as follows:
\begin{itemize}

\item The bottommost fan-in is
\begin{equation}
\label{eq:def-of-wd-1}
w_{d-1}:=m.
\end{equation}
We define
\begin{equation}
\label{eq:def-of-p}
p := 2^{-w_{d-1}}=2^{-m},
\end{equation} and we observe that $p$ is the probability that a depth-$(d-1)$ $\AND$ gate is satisfied by a uniform random choice of $\bX \leftarrow \{0_{1/2},1_{1/2}\}^n$.

\item For each value $1 \leq k \leq d-2$, the value of $w_k$ is $w_k=w$ where
\begin{equation}
\label{eq:def-of-w}
w:= \lfloor m2^m/\log(e) \rfloor.
\end{equation}

\item The value $w_0$ is defined to be
\begin{equation}
\label{eq:def-of-w0}
w_0 := \text{the smallest integer such that~}(1-t_1)^{qw_0} \text{~is at most~}{\frac 1 2},
\end{equation}
where $t_1$ and $q$ will be defined in Section \ref{sec:key-params}, see specifically Equations (\ref{eq:def-of-tk}) and (\ref{eq:def-of-lambda-and-q}). Roughly speaking, $w_0$ is chosen so that the overall formula is essentially balanced under the uniform distribution (i.e.~$\BalancedSipser_d$ satisfies (\ref{eq:sipser-is-balanced}) below); see (\ref{eq:upper lower}) and the discussion thereafter.

\end{itemize}

The number of input variables $n$ for $\BalancedSipser_d$ is $n=\prod_{k=0}^{d-1} w_k = w^{d-2}w_{d-1}w_0$.
The estimates for $t_1$ and $q$ given in (\ref{eq:estimates}) imply that $w_0=2^m \ln(2) \cdot (1 \pm o_m(1))$, so we have that

\begin{equation}
\label{eq:nversusd}
n = {\frac {1 \pm o_m(1)} {\log e}} \cdot \left({\frac {m2^m}{\log e}}\right)^{d-1}  \ignore{\left(\Theta(m2^m)\right)^{d-1}}.\ignore{ \quad \quad \text{~and hence~}\quad \quad (d-1) \cdot \Theta(m) = \log n.}
\end{equation}

We note that for the range of values $2 \leq d \leq {\frac {c \sqrt{\log n}}{\log \log n}}$ that we consider in this paper, a direct (but somewhat tedious) analysis implies that the $\BalancedSipser_d$ function is indeed essentially balanced, or more precisely, that it satisfies
\begin{equation}
 \Prx_{\bX \leftarrow \{0_{1,2},1_{1,2}\}^n}[\BalancedSipser_d(\bX)=1] =
{\frac 1 2} \pm o_n(1).\label{eq:sipser-is-balanced}
\end{equation}
However, since this fact is a direct byproduct of our main theorem (which shows that $\BalancedSipser_d$ cannot be $(1/2 - o_n(1))$-approximated  by any depth-$(d-1)$ formula, let alone by a constant function), we omit the tedious direct analysis here.

We specify an addressing scheme for the gates and input variables of
our $\BalancedSipser_d$ formula which will be heavily used throughout the paper.   Let $A_0 = \{\mathsf{output}\}$, and for $1 \le k \le d$, let $A_{k} = A_{k-1} \times [w_{k-1}]$.  An element of $A_k$ specifies the address of a gate at depth (distance from the output node) $k$ in $\BalancedSipser_d$ in the obvious way; so $A_{d} = \{{\mathsf{output}}\} \times [w_0] \times
 \cdots \times [w_{d-1}]$ is the set of addresses of the input variables and $|A_d| = n$.

 We close this section by introducing notation for the following family of formulas related to $\BalancedSipser_d$:

\begin{definition}
\label{def:truncate-sipser}
For $1 \le k \le d$, we write $\BalancedSipser_d^{(k)} : \zo^{A_k} \to \zo$ to denote the depth-$k$ formula obtained from $\BalancedSipser_d$ by discarding all gates at depths $k+1$ through $d-1$, and replacing every depth-$k$ gate at address $a\in A_k$ with a fresh formal variable $y_a$.
\end{definition}

Note that $\BalancedSipser^{(1)}_d$ is the top gate of $\BalancedSipser_d$; in particular, $\BalancedSipser^{(1)}_d$ is  an $w_0$-way $\OR$ if $d$ is even, and an $w_0$-way $\AND$ if $d$ is odd.  Note also that $\BalancedSipser_d^{(d)}$ is simply $\BalancedSipser_d$ itself, although we stress that $\BalancedSipser^{(k)}_d$  is not the same as $\BalancedSipser_k$ for $1\le k \le d-1$.

\section{Setup for and overview of our proof}
\label{sec:setup} 

\subsection{Key parameter settings} \label{sec:key-params}

The starting point for our parameter settings is the pair of fixed values
\begin{equation} \lambda := \frac{(\log w)^{3/2}}{w^{5/4}}  \quad \text{and}\quad q := \sqrt{p} = 2^{-m/2} .\label{eq:def-of-lambda-and-q}
\end{equation}
Given these fixed values of $\lambda$ and $q$, we define a sequence of parameters $t_{d-1},\dots,t_1$ as
\begin{equation} \label{eq:def-of-tk}
t_{d-1} := {\frac {p-\lambda} q}, \quad \quad \quad
t_{k-1} := {\frac {(1-t_k)^{q w} - \lambda} q} \quad \text{for~}k=d-1,\dots,2.
\end{equation}

Each of our $d-1$ random projections will be defined with respect to an underlying product distribution. Our first random projection $\proj_{\brho^{(d)}}$ will be associated with the uniform distribution over $\zo^n$; this is because our ultimate goal is to establish uniform-distribution correlation bounds. For $k \in \{2,\ldots,d-1\}$ the subsequent random projections $\proj_{\brho^{(k)}}$ will be associated with either the $t_k$-biased or $(1-t_k)$-biased product distribution (depending on whether $d-k$ is even or odd).  Recalling our discussion in Section~\ref{sec:techniques} of the framework for proving correlation bounds --- in particular, the three key properties our random projections have to satisfy --- the values for $t_1,\ldots,t_{d-1}$ are chosen carefully so that the compositions of our $d-1$ random projections complete to the uniform distribution, satisfying Property 3 (we prove this in Section~\ref{sec:complete-to-uniform}).

The next lemma gives bounds on $t_{d-1},\ldots,t_1$ which show that these values ``stay under control''. By our definitions of $\lambda,p$ and $q$ in (\ref{eq:def-of-lambda-and-q}), we have that $t_{d-1} = q - o(q)$, and we will need the fact that the values of $t_k$ for $k = d-1,\ldots,2$ remain in the range $q \pm o(q)$. Roughly speaking, since each $t_{k-1}$ is defined inductively in terms of $t_k$ from $k=d-1$ down to $1$, we have to argue that these values do not ``drift'' significantly from the initial value of $t_{d-1} = q - o(q)$.   We need to keep these values under control for two reasons:  first, the magnitude of these values directly affects the strength of our Projection Switching Lemma --- as we will see in Section~\ref{sec:psl}, our error bounds depend on the magnitude of these $t_k$'s. Second, since the top fan-in $w_0$ of our $\BalancedSipser_d$ function is directly determined by $t_1$ (recall (\ref{eq:def-of-w0})), we need a bound on $t_1$ to control the structure of this function.

\begin{lemma} \label{lemma:tk-bound}
There is a universal constant $c>0$ such that for $2 \leq d \leq {\frac {cm}{\log m}}$, we have that $t_k = q \pm q^{1.1}$ for all $k \in [d-1]$.
\end{lemma}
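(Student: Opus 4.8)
The plan is to prove $t_k = q \pm q^{1.1}$ for all $k \in [d-1]$ by downward induction on $k$, starting from $k = d-1$ and working down to $k = 1$. Throughout, the key auxiliary quantity to track will be the "drift'' incurred at each step, i.e., the difference $t_{k-1} - t_k$, and the goal is to show that after summing up to $d-1$ such drifts (which is at most $\frac{cm}{\log m}$ of them) the total deviation from $q$ stays below $q^{1.1}$.

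\emph{Base case.} By the definitions in (\ref{eq:def-of-lambda-and-q}) and (\ref{eq:def-of-tk}), $t_{d-1} = \frac{p - \lambda}{q} = \frac{q^2 - \lambda}{q} = q - \frac{\lambda}{q}$. Since $\lambda = \frac{(\log w)^{3/2}}{w^{5/4}}$ and $w = \lfloor m 2^m / \log e\rfloor = \Theta(m 2^m)$ while $q = 2^{-m/2}$, a direct estimate shows $\lambda/q = o(q^{1.1})$ (indeed $\lambda/q$ is exponentially smaller than any fixed power of $q$), so $t_{d-1} = q \pm o(q^{1.1})$, comfortably within the claimed band.

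\emph{Inductive step.} Assume $t_k = q \pm q^{1.1}$; we analyze $t_{k-1} = \frac{(1-t_k)^{qw} - \lambda}{q}$. The heart of the matter is to estimate $(1 - t_k)^{qw}$. Using $qw = \Theta(q \cdot m 2^m) = \Theta(m 2^{m/2})$ and $t_k = q \pm q^{1.1} = \Theta(2^{-m/2})$, the product $t_k \cdot qw = \Theta(m)$, so $(1-t_k)^{qw} = \exp\big(qw \ln(1-t_k)\big) = \exp\big(-qw(t_k + O(t_k^2))\big)$, which is a quantity of order $2^{-\Theta(m)}$ — the same order as $q^{\Theta(1)}$ but generally \emph{not} within $q^{1.1}$ of $q$ by itself. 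The point of the definition of $w$ in (\ref{eq:def-of-w}), namely $w = \lfloor m 2^m/\log e\rfloor$, is precisely to calibrate things so that when $t_k = q$ exactly one gets $(1-q)^{qw} \approx e^{-qw q \log e \cdot (\ln 2 / \log e)} $... more carefully, $qw \approx 2^{-m/2} \cdot m 2^m / \log e = m 2^{m/2}/\log e$, and $(1-q)^{qw} \approx e^{-q \cdot qw} = e^{-q^2 w} = e^{-2^{-m} \cdot m 2^m/\log e} = e^{-m/\log e} = 2^{-m} = p = q^2$. So to first order $t_{k-1} \approx \frac{q^2 - \lambda}{q} = q - \lambda/q$, matching $t_{d-1}$; the real work is bounding the \emph{second-order} error. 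Writing $t_k = q + \delta_k$ with $|\delta_k| \le q^{1.1}$, expand $(1-t_k)^{qw}$ around $t_k = q$: the derivative of $(1-t)^{qw}$ in $t$ has magnitude $\le qw (1-t)^{qw - 1} \le qw \cdot O(q^2)$ near $t \approx q$, so the perturbation contributes at most $qw \cdot O(q^2) \cdot |\delta_k| = O(q^2 w) \cdot q^2 \cdot q^{1.1}$; since $q^2 w = \Theta(m)$, this is $O(m q^{3.1})$. Dividing by $q$, this propagates to a change of $O(m q^{2.1})$ in $t_{k-1}$ relative to $q - \lambda/q$. Summing over at most $d - 1 \le \frac{cm}{\log m}$ steps, the accumulated deviation is $O\big(\frac{cm}{\log m} \cdot m q^{2.1}\big) + o(q^{1.1}) = O\big(\frac{c m^2}{\log m} q^{2.1}\big)$, and since $q^{2.1} = q^{1.1} \cdot q = q^{1.1} \cdot 2^{-m/2}$ decays faster than any polynomial in $m$, this total is $\ll q^{1.1}$ once $m$ is large, provided $c$ is an absolute constant. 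This fixes the universal $c$ and closes the induction.

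\emph{Main obstacle.} I expect the delicate point to be making the second-order / propagation estimate rigorous: one must confirm that the map $t \mapsto \frac{(1-t)^{qw} - \lambda}{q}$ is a contraction (or at least not an expansion by more than a $1 + o(1/d)$ factor) on the interval $q \pm q^{1.1}$, so that the errors genuinely sum geometrically rather than compound multiplicatively. Concretely, the Lipschitz constant of this map on that interval is $\frac{qw}{q}(1 - t)^{qw-1} = w \cdot (1-t)^{qw - 1} \approx w \cdot q^2 = \Theta(m)$, which is \emph{larger} than $1$ — so naively the iteration could blow up over $\Theta(m/\log m)$ steps. The resolution is that the map sends the whole interval $q \pm q^{1.1}$ into a \emph{much narrower} interval around $q - \lambda/q$ of radius only $O(m q^{2.1}) \ll q^{1.1}$, i.e. it is strongly contracting in absolute (not just relative) terms because the base $(1-t)$ is exponentially small; so the "fixed-point band'' is stable and the induction hypothesis is reproduced with room to spare at every step. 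Verifying this — essentially that $\sup_{t \in q \pm q^{1.1}} |{\textstyle\frac{(1-t)^{qw}-\lambda}{q}} - q| \le q^{1.1}$ outright, \emph{independent} of the previous error — is what actually makes the argument go through, and it hinges entirely on the calibration of $w$ in (\ref{eq:def-of-w}) together with the bounds in Fact~\ref{fact:approx}.
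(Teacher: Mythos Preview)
Your diagnosis in the ``Main obstacle'' paragraph is exactly right, but your proposed resolution rests on an arithmetic slip and does not go through. You compute the Lipschitz constant of $\Phi(t):=\big((1-t)^{qw}-\lambda\big)/q$ correctly as $w(1-t)^{qw-1}\approx wq^2=\Theta(m)$. However, earlier you write ``$qw\cdot O(q^2)\cdot|\delta_k|=O(q^2w)\cdot q^2\cdot q^{1.1}$'' and conclude the numerator perturbation is $O(mq^{3.1})$; in fact $qw\cdot q^2\cdot q^{1.1}=q^{4.1}w=(q^2w)\cdot q^{2.1}=\Theta(mq^{2.1})$, so after dividing by $q$ the perturbation in $t_{k-1}$ is $\Theta(mq^{1.1})$, not $O(mq^{2.1})$. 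Concretely, at the endpoint $t=q+q^{1.1}$ one gets $(1-t)^{qw}\approx p\,e^{-q^{2.1}w}=p\big(1-\Theta(mq^{0.1})\big)$ and hence $\Phi(t)=q-\Theta(mq^{1.1})$, which lies \emph{outside} $q\pm q^{1.1}$. So the map does not send the band $q\pm q^{1.1}$ into itself, and the inductive hypothesis as you stated it is not self-reproducing.

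The paper's fix is to use a stronger, stage-dependent hypothesis that accommodates the genuine $\Theta(m)$-per-step amplification: it proves by downward induction that $|t_kq-p|\le(2m)^{d-1-k}\lambda$. The base case is the equality $|t_{d-1}q-p|=\lambda$, and each step indeed loses at most a factor $2m$ (coming from the Lipschitz constant $\approx wp=\Theta(m)$ plus the second-order error $qw\,t_k^2=\Theta(mq)$ from $\ln(1-t)=-t-t^2/2-\cdots$ and the $\pm1$ slack in $w=\lfloor m2^m/\log e\rfloor$). After $d-1\le cm/\log m$ steps the bound is $(2m)^{d-2}\lambda/q\le 2^{O(cm)}\cdot\tilde\Theta(q^{1.5})$, which is $\le q^{1.1}$ for a sufficiently small absolute constant $c$. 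The point you were missing is that the initial error $\lambda/q=\tilde\Theta(q^{1.5})$ is so small that it can absorb $\Theta(m/\log m)$ rounds of $\Theta(m)$-fold amplification; the way to exploit this is to track the tighter geometric bound rather than the uniform band $q\pm q^{1.1}$.
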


We defer the proof of Lemma~\ref{lemma:tk-bound} to Appendix~\ref{ap:tk-bound}.  The $k=1$ case of Lemma~\ref{lemma:tk-bound} along with our definition of $w_0$ (recall (\ref{eq:def-of-w0})) give us the bounds
\begin{equation} \frac1{2} \ge    (1-t_1)^{qw_0} \ge \frac1{2} \left(1- tq\right) = \frac1{2} \left(1-\frac{\Theta(\log w)}{w}\right) = \frac1{2} \left(1-\Theta(2^{-m})\right). \label{eq:upper lower}
\end{equation}
 These bounds (showing that $(1-t_1)^{qw_0}$ is very close to $1/2$) will be useful for our proof in Section~\ref{sec:sipser-survives} that $\BalancedSipser_d$ remains essentially unbiased (i.e.~it remains ``structured'') under our random projections, which in turn implies our claim (\ref{eq:sipser-is-balanced}) that $\BalancedSipser_d$ is essentially balanced (see Remark~\ref{rem:sipser-is-balanced}).

We close this subsection with the following estimates of our key parameters in terms of $w$ for later reference:

\begin{equation} p = \Theta\left(\frac{\log w}{w}\right), \quad q = \Theta\left(\sqrt{\frac{\log w}{w}}\right), \quad t_k = \Theta\left(\sqrt{\frac{\log w}{w}}\right) \quad \text{for all $k\in [d-1]$.} \label{eq:estimates}
\end{equation}

\subsection{The initial and subsequent random projections}
\label{sec:initial-projection}

As described in Section~\ref{sec:techniques}, our overall approach is structured around a sequence of random projections which we will apply to both the target function $\BalancedSipser_d$ and the approximating circuit~$C$.  Both are functions over $\zo^n \equiv \zo^{A_d}$, and  our $d-1$ random projections will sequentially transform them from being over $\zo^{A_k}$ to being over $\zo^{A_{k-1}}$ for $k = d$ down to $k=1$. Thus, at the end of the overall process both the randomly projected target and the randomly projected approximator are functions over $\zo^{A_1} \equiv \zo^{w_0}$.

We now formally define this sequence of random projections; recalling Definition~\ref{def:projection}, to define a random projection operator it suffices to specify a distribution over random restrictions, and this is what we will do.  We begin with the initial random projection:

\begin{definition}[Initial random projection]
\label{def:calR-init}
The distribution $\calR_{\init}$ over restrictions $\rho$ in $\{0,1,\ast\}^{A_{d-1}\times [m]} \equiv \{0,1,\ast\}^{n}$ (recall that $w_{d-1} = m$) is defined as follows: independently for each $a \in A_{d-1}$,
\begin{equation}
\brho_b \leftarrow \left\{
\begin{array}{ll}
\{1\}^{m} & \text{with probability $\lambda$} \\
\{\ast_{1/2}, 1_{1/2} \}^{m} \setminus \{ 1\}^{m}  & \text{with probability $q$} \\
\{ 0_{1/2}, 1_{1/2} \}^{m} \setminus \{ 1\}^{m} & \text{with probability $1-\lambda-q$}.
\end{array}
\right.
\label{eq:initial}
\end{equation}
\end{definition}

\begin{remark}
The description of $\calR_\init$ given in Definition~\ref{def:calR-init} will be most convenient for our arguments, but we note here the following equivalent view of an $\calR_\init$-random projection.  Let $\calR'_\init$ be the distribution over restrictions $\rho'$ in $\{0,1,\ast\}^{A_{d-1} \times [m]} \equiv \{0,1,\ast\}^n$ where
\[ \brho'_a \leftarrow \{ \ast_{1/2}, 1_{1/2}\}^m \setminus \{1\}^m  \quad \text{independently for each $a \in A_{d-1}$,}  \]
and $\calR''_\init$ be the distribution of restrictions $\rho''$ in $\{0,1,\ast\}^{A_{d-1}}$ where
\[ \brho''_a \leftarrow \left\{
\begin{array}{cl}
1 & \text{with probability $\lambda$} \\
\ast & \text{with probability $q$} \\
0 & \text{with probability $1-\lambda-q$}
\end{array}
\right. \quad
\text{independently for each $a \in A_{d-1}$}.
\]
Then for all $f : \zo^n \to\zo$ we have that $\proj_\brho\, f$, where $\brho\leftarrow\calR_\init$, is distributed identically to
\[ (\proj_{\brho'}\,f) \uhr \brho'' \quad \text{where $\brho' \leftarrow\calR'_\init$ and $\brho''\leftarrow\calR''_\init$}. \]
\end{remark}

\subsubsection{Subsequent random projections}
\label{sec:subsequent-projections}
Our subsequent random projections will alternate between two types, depending on whether $d-k$ is even or odd. These types are dual to each other in the sense that their distributions are completely identical, except with the roles of $1$ and $0$ swapped; in other words, the bitwise complement of a draw from the first type yields a draw from the second type. To avoid redundancy in our definitions we introduce the notation in Table~\ref{table:circ-bullet}: we represent $\zo^{A_k}$ as $\{\bullet,\circ\}^{A_k}$, where a $\circ$-value corresponds to either $1$ or $0$ depending on whether $d-k$ is even or odd, and the $\bullet$-value is simply the complement of the $\circ$-value. For example, the string $(\circ,\circ,\bullet,\circ)$ translates to $(1,1,0,1)$ if $d-k$ is even, and $(0,0,1,0)$ if $d-k$ is odd.

\begin{table}[h]
\renewcommand{\arraystretch}{1.6}
\centering
\begin{tabular}{|c|c| >{\centering\arraybackslash}m{0.5in} | >{\centering\arraybackslash}m{.5in} |}
\hline
& \text{Gates of $\BalancedSipser_d$ at depth $k-1$}& $\circ$ & $\bullet$ \\
\hline
$d-k \equiv 0 \mod 2$ &  $\AND$  &
$1$  & $0$ \\\hline
 $d-k \equiv 1 \mod 2$ & $\OR$ & $0$ & $1$ \\ \hline
\end{tabular}
\caption{Conversion table for $\tau \in \{\bullet,\circ,\ast\}^{A_k}$ where $1\le k \le d$.}
\label{table:circ-bullet}
\end{table}

In an interesting contrast with H{\aa}stad's proofs of the worst-case depth hierarchy theorem (Theorem~\ref{thm:worst-case-hierarchy}) and of $\Parity\notin \acz$, our stage-wise random projection process is \emph{adaptive}: apart from the initial $\calR_\init$-random projection, the distribution of each random projection depends on the outcome of the previous.  We will need the following notion of the ``lift'' of a restriction to describe this dependence:

\begin{definition}[Lift]
\label{def:lift}
Let $2\le k \le d$ and $\tau \in \{\bullet,\circ,\ast\}^{A_{k-1} \times [w_{k-1}]} \equiv \{\bullet,\circ,\ast\}^{A_k}$.  The \emph{lift of $\tau$} is the string $\hat{\tau}\in \{\bullet,\circ,\ast\}^{A_{k-1}}$ defined as follows:  for each $a \in A_{k-1}$,
the coordinate $\hat{\tau}_a$ of $\hat{\tau}$ is
\[
\hat{\tau}_a =
\begin{cases}
\circ & \text{if~} \tau_{a,i}=\bullet \text{~for any~}i \in [w_{k-1}]\\
\bullet & \text{if~} \tau_a = \{\circ\}^{w_{k-1}}\\
\ast & \text{if~} \tau_a \in \{\ast,\circ\}^{w_{k-1}} \setminus \{\circ\}^{w_{k-1}}.
\end{cases}
\]
We remind the reader that $\tau \in \{\bullet, \circ,\ast\}^{A_k}$ and $\hat\tau \in \{\bullet,\circ,\ast\}^{A_{k-1}}$ belong to adjacent levels (i.e.~they fall under different rows in Table~\ref{table:circ-bullet}).  Consequently, for example, if $1$ corresponds to $\bullet$ as a symbol in $\tau$ then it corresponds to $\circ$ as a symbol in $\hat\tau$, and vice versa.
\end{definition}

Later this notion of the ``lift'' of a restriction will also be handy when we describe the effect of our random projections on the target function $\BalancedSipser_d$. The high-level rationale behind it is that $\hat\tau \in \{ \bullet,\circ,\ast\}^{A_{k-1}}$ denotes the values that the bottom-layer gates of $\BalancedSipser^{(k)}_d$ take on when its input variables are set according to $\tau \in \{ \bullet,\circ,\ast\}^{A_k}$.  As a concrete example, suppose $d - k \equiv 0 \mod 2$ and let $\tau \in \{0,1,\ast\}^{A_k}$ be a restriction. Since $d-k \equiv 0 \mod 2$, recalling Table~\ref{table:circ-bullet} we have that the bottom-layer gates of $\BalancedSipser^{(k)}_d$ (or equivalently, the gates of $\BalancedSipser_d$ at depth $k-1$) are $\AND$ gates. For every block $a\in A_{k-1}$,
\begin{itemize}
\item If $\tau_{a,i} = 0$ for some $i\in [w_{k-1}]$, the $\AND$ gate at address $a$ is falsified and has value $0$.
\item If $\tau_{a,i} = \{1\}^{w_{k-1}}$, the $\AND$ gate at address $a$ is satisfied and has value $1$.
\item If $\tau_a \in \{\ast, 1\} \setminus \{1\}^{w_{k-1}}$, the value of the $\AND$ gate at address $a$ remains undetermined (which we denote as having value $\ast$).
\end{itemize}
These three cases correspond exactly to the three branches in Definition~\ref{def:lift}, and so indeed $\hat\tau_a \in \{0,1,\ast\}$ represents the value that the $\AND$ gate at address $a$ takes  when its input variables are set according to $\tau_a \in \{0,1,\ast\}^{w_{k-1}}$.

We shall require the following technical definition:

\begin{definition}[$k$-acceptable]
\label{def:acceptable}
For $2\le k \le d-1$ and a set $S\sse [w_{k-1}]$, we say that $S$ is \emph{$k$-acceptable} if
\[ |S| = qw \pm w^{\gradual(k,d)}, \quad \text{where $\gradual(k,d) := \frac1{3} + \frac{d-k-1}{12d}$}. \]
Note that $\frac1{3} \le \gradual(k,d) \le \frac{5}{12} < \frac1{2}$ for all $d\in \N$ and $2\le k \le d-1$.
\end{definition}

For intuition, in the above definition $S$ should be thought of as specifying those children of a particular depth-$(k-1)$ gate of $\BalancedSipser_d$ that take the value $\ast$ under certain restrictions (defined below).  We want the size of this set to be essentially $qw$, and as $k$ gets smaller (closer to the root), for technical reasons we allow more and more --- but never too much ---  deviation from this desired value. See Section~\ref{sec:typical-stuff} for a detailed discussion.

We are now ready to give the key definition for our subsequent random projections:

\begin{definition}[Subsequent random projections]
\label{def:main-projection}
Let $\tau \in \{\bullet,\circ,\ast\}^{A_k}$ where $2\le k \le d-1$. We define a distribution $\calR(\tau)$ over refinements $\brho \in \{\bullet,\circ,\ast\}^{A_k}$ of $\tau$ as follows. Independently for each $a \in A_{k-1}$, writing $S_a = S_a(\tau)$ to denote $\tau_a^{-1}(\ast) = \{ i \in [w_{k-1}] \colon \tau_{a,i} = \ast\}$ and $\brho(S_a)$ to denote the substring of $\brho_a$ with coordinates in $S_a$,
\begin{itemize}
\item If $\hat{\tau}_a = \circ$ (i.e.~if $\tau_{a,i} = \bullet$ for some $i\in [w_{k-1}]$) or if $S_a$ is not $k$-acceptable\ignore{\lnote{Check that this change does not affect typicality implies typicality (it shouldn't)}}, then
\[  \brho(S_a) \leftarrow \{ \bullet_{t_k}, \circ_{1-t_k}\}^{S_a}.\]
\item If $\hat{\tau}_a = \ast$ (i.e.~if $\tau_{a,i} \in \{\ast,\circ\}^{w_{k-1}} \setminus \{\circ\}^{w_{k-1}}$) and $S_a$ is $k$-acceptable, then
\begin{equation} \brho(S_a) \leftarrow \left\{
\begin{array}{ll}
\circ^{S_a} & \text{with probability $\lambda$} \\
\{ \ast_{t_k}, \circ_{1-t_k}\}^{S_a} \setminus \{ \circ\}^{S_a} & \text{with probability $q_a$}   \\
\{ \bullet_{t_k}, \circ_{1-t_k}\}^{S_a} \setminus \{ \circ\}^{S_a} & \text{with probability $1-\lambda - q_a$},
\end{array}
\right.\label{eq:OW-restriction}
\end{equation}
where
\begin{equation} q_a := \frac{(1-t_k)^{|S_a|} -\lambda}{t_{k-1}} \ \text{\ is chosen to satisfy } (1-t_k)^{|S_a|} = \lambda + q_a t_{k-1}. \label{eq:def-of-qa}
\end{equation}
\end{itemize}
(Note that if $\hat{\tau}_a = \bullet$ then $\tau_{a,i} = \circ$ for all $i\in [w_{k-1}]$, and so $\tau_a$ cannot be refined further.)

For all $a\in A_{k-1}$ and $i\in [w_{k-1}]$ such that $\tau_{a,i} \in \{\bullet,\circ\}$, we set $\brho_{a,i} = \tau_{a,i}$ and so $\brho$ is indeed a refinement of~$\tau$.
\end{definition}

\begin{remark}
We remark that $q_a$ as defined in (\ref{eq:def-of-qa}) is indeed a well-defined quantity in $[0,1]$ if $S_a$ is $k$-acceptable. We omit the straightforward verification here since our analysis in Section~\ref{sec:typical-stuff} will in fact establish a stronger statement showing that $q_a = q \pm o(q)$; see Lemma~\ref{lem:bound-on-qa}. \end{remark}

\begin{remark}
\label{rem:complete-dead-blocks}
By inspecting Definition~\ref{def:calR-init}, we see that for all $\rho\in \supp(\calR_\init)$ and blocks $a\in A_{d-1}$
\begin{align*} \rho_{a,i} &= \ast \text{ for some $i\in [m]$} \quad \text{iff} \quad \rho_a \in \{\ast,1 \}^{m} \setminus \{1\}^m, \quad \text{or equivalently,} \\
\rho_{a,i} &= \ast \text{ for some $i\in [m]$} \quad \text{iff} \quad \hat\rho_a = \ast,
\end{align*}
and hence for all $h : \zo^n\to\zo$ the projection $\proj_\rho\,h : \zo^{A_{d-1}}\to \zo$ depends only on the coordinates in $(\hat\rho)^{-1}(\ast)\sse A_{d-1}$.  Likewise, by inspecting Definition~\ref{def:main-projection} we have that for all $\tau\in \{\bullet,\circ,\ast\}^{A_k}, \rho\in \supp(\calR(\tau))$, and blocks $a\in A_{k-1}$,
\begin{align*}
\rho_{a,i} &= \ast \text{ for some $i\in [w_{k-1}]$} \quad \text{iff} \quad \rho_a \in \{\ast,\circ \}^{w_{k-1}} \setminus \{\circ\}^{w_{k-1}}, \quad \text{or equivalently,} \\
\rho_{a,i} &= \ast \text{ for some $i\in [w_{k-1}]$} \quad \text{iff} \quad \hat\rho_a = \ast,
\end{align*}
and hence for all $h : \zo^{A_k} \to \zo$ the projection $\proj_\rho h : \zo^{A_{k-1}} \to \zo$ depends only on the coordinates in $(\hat\rho)^{-1}(\ast)\sse A_{k-1}$. Our proof that our sequence of random projections (based on Definitions~\ref{def:calR-init} and~\ref{def:main-projection} as described in Definition \ref{def:projection}) completes to the uniform distribution will rely on these properties; see Section~\ref{sec:complete-to-uniform}.
\end{remark}

\subsection{Overview of our proof}
\label{sec:second-outline}

With the definitions from Section \ref{sec:initial-projection} in hand, we are (finally) in a position to give a detailed overview of our proof.  Let $C$ be a depth-$d$ approximating circuit for $\BalancedSipser_d$, where $C$ either has significantly smaller bottom fan-in than $\BalancedSipser_d$ (in the case of Theorem~\ref{thm:smallbottomfanin}) or the opposite alternation pattern to $\BalancedSipser_d$ (in the case of Theorem~\ref{thm:alternationpattern}), and $C$ satisfies the size bounds given in the respective theorem statements.  In both cases our goal is to show that $C$ has small correlation with $\BalancedSipser_d$, i.e. to  prove that
\begin{equation} \Pr[\BalancedSipser_d(\bX) \ne C(\bX)] \geq \frac1{2} - o_n(1) \label{eq:goal-correlation}
\end{equation}
for a uniform random input $\bX \leftarrow \{0_{1/2},1_{1/2}\}^n$.  At a high level, we do this by analyzing the effect of  $d-1$ random projections on the target and the approximator: we begin with an $\calR_\init$-random projection $\proj_{\brho^{(d)}}$ where $\brho^{(d)}\leftarrow\calR_\init$, followed by $\proj_{\brho^{(d-1)}}$ where $\brho^{(d-1)}\leftarrow\calR(\hat{\brho^{(d)}})$, and then $\proj_{\brho^{(d-2)}}$ where $\brho^{(d-2)}\leftarrow \calR(\hat{\brho^{(d-1)}})$, and so on. It is interesting to note that unlike H{\aa}stad's proofs of the worst-case depth hierarchy theorem (Theorem~\ref{thm:worst-case-hierarchy}) and of $\Parity\notin \acz$, the distribution of our $k$-th random projection is defined \emph{adaptively} depending on the outcome of the $(k-1)$-st. For notational concision we introduce the following definition for this overall $(d-1)$-stage projection:

\begin{definition} \label{def:boldpsi}
Given a function $f : \zo^{n}\to\zo$, we write $\mathbf{\Psi}(f) : \zo^{w_0} \to \zo$ to denote the following random projection of $f$:
\[ \mathbf{\Psi}(f) \equiv \proj_{\brho^{(2)}}\,\proj_{\brho^{(3)}}\cdots \proj_{\brho^{(d-1)}}\,\proj_{\brho^{(d)}}\, f, \]
where   $\brho^{(d)}\leftarrow\calR_{\init}$ and $\brho^{(k)}\leftarrow\calR(\hat{\brho^{(k+1)}})$ for all $2 \le k\le d-1$. We will sometimes refer to the overall process as a $\mathbf{\Psi}$-random projection, and $\mathbf{\Psi}(f)$ as the $\mathbf{\Psi}$-random projection of $f$. (We remind the reader that the projection of a function over $\zo^{A_k}$ yields a function over $\zo^{A_{k-1}}$ for all $2\le k\le d$, and in particular $\mathbf{\Psi}(f)$ is indeed a function over $\zo^{A_1} \equiv \zo^{w_0}$.)
\end{definition}

Recalling the framework for proving correlation bounds discussed in Section~\ref{sec:techniques}, the rest of the paper is structured around showing that a $\mathbf{\Psi}$-random projection satisfies the three key properties outlined in Section~\ref{sec:techniques}:
\begin{description}
\item[Property 1.] The approximating circuit $C$ simplifies under a $\mathbf{\Psi}$-random projection.
\item[Property 2.] The target $\BalancedSipser_d$ remains structured under a $\mathbf{\Psi}$-random projection.
\item[Property 3.] $\mathbf{\Psi}$ completes to the uniform distribution.
\end{description}

\paragraph{Section~\ref{sec:complete-to-uniform}.}
We begin in Section~\ref{sec:complete-to-uniform} with Property 3. We show that
\begin{equation} \Pr[\BalancedSipser_d(\bX) \ne C(\bX)] = \Pr[(\mathbf{\Psi}(\BalancedSipser_d))(\bY) \ne (\mathbf{\Psi}(C))(\bY)] \label{eq:reduction}
\end{equation}
where $\bY$ is drawn from an appropriate product distribution $\calD$ over $\zo^{w_0}$ ($\calD$ is the $t_{1}$-biased product distribution if $d$ is even, and $(1-t_{1})$-biased product distribution if $d$ is odd).  This reduces our goal of bounding the correlation between $\BalancedSipser_d$ and $C$ (i.e.~(\ref{eq:goal-correlation})) under the uniform distribution, to the task of bounding the correlation between their $\mathbf{\Psi}$-random projections $\mathbf{\Psi}(\BalancedSipser_d)$ and $\mathbf{\Psi}(C)$ with respect to $\calD$.

\paragraph{Section~\ref{sec:approximator-simplifies}.} With the reduction (\ref{eq:reduction}) in hand, we turn our attention to Property 1, showing that the approximating circuit $C$ of the type specified in either Theorems~\ref{thm:smallbottomfanin} or~\ref{thm:alternationpattern} ``collapses to a simple function'' under a $\mathbf{\Psi}$-random projection. More precisely, for the case that the depth-$d$ circuit $C$ has significantly smaller bottom fan-in than $\BalancedSipser_d$ we show that $C$ collapses to a shallow decision tree, and for the case that $C$ has the opposite alternation pattern to $\BalancedSipser_d$ we show that $C$ collapses to a small-width depth-two circuit with top gate opposite to that of $\mathbf{\Psi}(\BalancedSipser_d)$. (In both cases these statements are with high probability under a $\mathbf{\Psi}$-random projection.)

In close parallel with H{\aa}stad's ``bottom-up'' proof of $\Parity \notin \acz$, the main technical ingredient in this section is a \emph{projection switching lemma} showing that the random projection $\proj_{\brho^{(k)}}$ of a small-width DNF or CNF ``switches'' to a small-depth decision tree with high probability. Applying this lemma to every bottom-level depth-$2$ subcircuit of $C$, we are able to argue that each of the $d-1$ random projections comprising $\mathbf{\Psi}$ reduces the depth of $C$ by one with high probability, and thus $\mathbf{\Psi}(C)$ collapses to a small-depth decision tree or small-width depth-two circuit as claimed.

\paragraph{Section~\ref{sec:combo}.}

It remains to argue that the target $\BalancedSipser_d$ --- in contrast with the approximating circuit $C$ ---  ``retains structure'' with high probability under a $\mathbf{\Psi}$-random restriction.  This is a high-probability statement because there is a nonzero failure probability introduced by each of the $d-1$ individual random projections $\proj_{\brho^{(k)}}$ that comprise $\mathbf{\Psi} \equiv \{ \brho^{(k)}\}_{k \in \{2,\ldots,d\}}$ (see Footnote \ref{footnote:sec9} for an example of a possible ``failure event'' for one of these restrictions).  To reason about and bound these failure probabilities, in Section~\ref{sec:typical-stuff} we introduce the notion of a ``typical'' restriction.  The parameters of our definition of typicality are chosen carefully to ensure that
\begin{enumerate}
\itemsep -.5pt
\item[(i)] $\brho^{(d)}\leftarrow\calR_\init$ is typical with  high probability, and
\item[(ii)] if $\rho^{(k+1)}$ is typical, then $\brho^{(k)} \leftarrow\calR(\hat{\rho^{(k+1)}})$ is also typical with  high probability.
\end{enumerate}
We establish (i) and (ii) in Section \ref{sec:typical-stuff}.  Together, (i) and (ii) imply that with high probability $\mathbf{\Psi} \equiv \{ \brho^{(k)}\}_{k\in \{2,\ldots,d\}}$ is such that $\brho^{(d)},\ldots,\brho^{(2)}$ are \emph{all} typical; we use this in Section \ref{sec:sipser-survives}.

With the notion of typical restrictions in hand, in Section \ref{sec:sipser-survives} we establish Property 2 showing that $\BalancedSipser_d$ ``survives'' a $\mathbf{\Psi}$-random projection (i.e.~it ``retains structure'') with high probability.  More formally, for outcomes $\Psi \equiv \{ \rho^{(k)}\}_{k\in \{2,\ldots,d\}}$ of $\mathbf{\Psi}$ such that $\rho^{(d)},\ldots,\rho^{(2)}$ are all typical, we prove that the $\Psi$-projected target $\Psi(\BalancedSipser_d)$ is ``well-structured'' in the following sense: 
\begin{enumerate}
\item[(i)]
$\Psi(\BalancedSipser_d)$ is a depth-one formula: an $\OR$ if $d$ is even, an $\AND$ if $d$ is odd. 
 \item[(ii)] The bias of $\Psi(\BalancedSipser_d)$ under $\calD$ is close to $1/2$; that is,
\[ \bias(\Psi(\BalancedSipser_d),\bY) = \frac1{2} - o_n(1). \] 
\end{enumerate} 
Recall that we have shown in Subsection~\ref{sec:typical-stuff} that with high probability $\mathbf{\Psi} \equiv \{\brho^{(k)}\}_{k\in \{2,\ldots,d\}}$ is such that $\brho^{(d)},\ldots,\brho^{(2)}$ are all typical. Therefore, the results of these two subsections together imply that the randomly projected target $\mathbf{\Psi}(\BalancedSipser_d)$ satisfies both (i) and (ii) with high probability. 

\paragraph{Section~\ref{sec:puttogether}.}
Having established Properties 1, 2, and 3, it remains to bound the correlation between a depth-one formula  with bias essentially $1/2$ and a small-width CNF formula of opposite alternation with respect to the product distribution $\calD$ over $\zo^{w_0}$. (Recall that our results from Section~\ref{sec:sipser-survives} show that $\mathbf{\Psi}(\BalancedSipser_d)$ collapses to the former with high probability, and our results from Section~\ref{sec:approximator-simplifies} shows that $\mathbf{\Psi}(C)$ collapses to the latter with high probability --- this holds in both cases since a shallow decision tree is a small-width CNF.)  We prove this correlation bound using a slight extension of an argument in~\cite{OW07}, and with this final piece in hand our main theorems follow from straightforward arguments putting the pieces together.


\section{Composition of projections complete to uniform} \label{sec:complete-to-uniform}

Our goal in this section is to establish the following lemma:

\begin{proposition}
\label{prop:complete-to-uniform}
Consider $f,g : \zo^n \to \zo$.  Let $\bX \leftarrow \{ 0_{1/2}, 1_{1/2}\}^{n}$. Let $\bY \leftarrow \{0_{1-t_1},1_{t_1}\}^{w_0}$ if $d$ is even, and $\bY\leftarrow \{0_{t_1}, 1_{1-t_1}\}^{w_0}$ if $d$ is odd.  Then
\[ \Pr[f(\bX) \ne g(\bX)] = \Pr[(\mathbf{\Psi}(f))(\bY) \ne (\mathbf{\Psi}(g))(\bY)]. \]
\end{proposition}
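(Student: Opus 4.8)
The plan is to prove the statement by induction on the number of random projections, peeling them off one at a time and showing that each individual random projection $\proj_{\brho^{(k)}}$, when ``fed'' with the correct product distribution on the smaller variable space $\zo^{A_{k-1}}$, pulls back to the correct product distribution on $\zo^{A_k}$. More precisely, for $1 \le k \le d$ let $\calD_k$ denote the product distribution over $\zo^{A_k}$ that is $\frac1{2}$-biased when $k=d$, and for $k < d$ is the $t_k$-biased (resp.\ $(1-t_k)$-biased) product distribution according to whether $d-k$ is even or odd --- i.e.\ the product distribution in which every coordinate independently takes the value ``$\circ$'' with probability $1-t_k$ and ``$\bullet$'' with probability $t_k$ in the notation of Table~\ref{table:circ-bullet}. (For $k=d$, ``$\circ$'' and ``$\bullet$'' are both realized with probability $\frac1{2}$, so there is no issue.) The key claim I would isolate is the following \emph{one-step} statement: for any $h : \zo^{A_k} \to \zo$ and any $2 \le k \le d$, if $\bZ \leftarrow \calD_{k-1}$ and $\brho^{(k)}$ is drawn from $\calR_\init$ (if $k=d$) or from $\calR(\hat{\brho^{(k+1)}})$-style distribution with a generic input restriction (if $k<d$), then $(\proj_{\brho^{(k)}} h)(\bZ)$ has exactly the same distribution as $h(\bW)$ where $\bW \leftarrow \calD_k$. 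Chaining this from $k=d$ down to $k=2$ and noting that $\mathbf{\Psi}(f)$ and $\mathbf{\Psi}(g)$ are evaluated on the same $\bY \leftarrow \calD_1$, while the intermediate randomness $\brho^{(d)},\ldots,\brho^{(2)}$ is shared between $f$ and $g$, gives $\Pr[f(\bX)\ne g(\bX)] = \Pr[(\mathbf{\Psi}(f))(\bY) \ne (\mathbf{\Psi}(g))(\bY)]$, since the event ``$f \ne g$'' on a uniform point is reproduced on both sides jointly.

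To prove the one-step claim, I would unwind the definition of $\proj_{\brho^{(k)}}$: evaluating $\proj_\rho h$ at a point $z$ means evaluating $h$ at the point $x$ with $x_{a,i} = z_a$ if $\rho_{a,i} = \ast$ and $x_{a,i} = \rho_{a,i}$ otherwise. So the composite random input to $h$ is obtained, block by block, by: drawing the block-restriction $\brho_a$ from the appropriate marginal (the three-way choice in Definition~\ref{def:calR-init} or Definition~\ref{def:main-projection}), then filling the $\ast$-coordinates of that block with independent copies of the single value $\bZ_a \leftarrow \calD_{k-1}$ for that block. Since everything is independent across blocks $a \in A_{k-1}$, it suffices to check, for a single block of length $\ell = w_{k-1}$, that this two-stage process produces exactly the product distribution $\{\circ_{1-t_k}, \bullet_{t_k}\}^{\ell}$ on that block (or, for $k=d$, the uniform distribution $\{0_{1/2},1_{1/2}\}^m$). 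This is a finite calculation: the three branches of the block distribution are (a) all-$\circ$ with probability $\lambda$ ($k<d$; the analogue for $k=d$ is all-$1$ with probability $\lambda$), (b) a nonempty $\ast/\circ$ pattern, conditioned to be nonempty, with probability $q_a$ (resp.\ $q$), whose $\ast$'s then get filled by the common bit $\bZ_a$ which is $\bullet$ w.p.\ $t_k$, and (c) a nonempty $\bullet/\circ$ pattern with the complementary probability. I would verify that the resulting mixture assigns to the all-$\circ$ string exactly $(1-t_k)^\ell$ --- this is precisely what the defining identity $(1-t_k)^{|S_a|} = \lambda + q_a t_{k-1}$ in~\eqref{eq:def-of-qa} is engineered to guarantee, once one observes that conditioning ``$\{\ast_{t_k},\circ_{1-t_k}\}^\ell \setminus \{\circ\}^\ell$'' and then filling $\ast$'s with a $t_k$-biased bit is the same as a $t_k$-biased product string conditioned on being nonzero --- and more generally assigns to any string with exactly $j \ge 1$ many $\bullet$'s the correct probability $t_k^j (1-t_k)^{\ell - j}$, by symmetry among the two ``nonempty'' branches and the fact that the common-bit fill is consistent with a product distribution. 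One subtlety to handle cleanly: in Definition~\ref{def:main-projection} the block distribution depends on whether $S_a = \tau_a^{-1}(\ast)$ is $k$-acceptable and on $\hat\tau_a$, but by Remark~\ref{rem:complete-dead-blocks} a projected function only depends on coordinates $a$ with $\hat\rho_a = \ast$, i.e.\ on blocks that are ``still alive,'' so the blocks where the degenerate branch ($\hat\tau_a = \bullet$, or the non-$k$-acceptable case) is used do not affect $\proj_\rho h$; and on the alive blocks the relevant normalization $q_a$ from~\eqref{eq:def-of-qa} is exactly the one that makes the marginal come out right. I would phrase the induction hypothesis as a statement about the \emph{joint} distribution of (the surviving restriction data, the filled-in point) so that adaptivity of $\calR(\hat{\brho^{(k+1)}})$ on the previous stage causes no trouble: conditioned on any fixed outcome $\rho^{(k+1)} = \tau$, the next-stage composite input is exactly $\calD_{k+1}$-distributed by the one-step claim applied with that $\tau$, and this holds for every $\tau$, hence unconditionally.

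The main obstacle I anticipate is bookkeeping rather than conceptual: getting the block-level calculation exactly right in the presence of (i) the $\bullet/\circ$ relabeling that flips between adjacent levels (Table~\ref{table:circ-bullet} and the ``lift'' conventions), (ii) the conditioning ``$\setminus\{\circ\}^\ell$'' in two of the three branches, and (iii) the extra degrees of freedom ($k$-acceptability, the $\hat\tau_a = \circ$ branch) that are present in the definition but irrelevant to $\proj_\rho h$. Once the identity ``filling the $\ast$'s of a nonzero-conditioned $\{\ast_{t_k},\circ\}^\ell$ pattern with a $t_k$-biased bit $=$ a $t_k$-biased product string conditioned on being nonzero'' is stated and checked, and the normalization~\eqref{eq:def-of-qa} is invoked to merge the all-$\circ$ branch back in, the rest is routine. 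I would also make sure the $k=d$ base case (where $\calR_\init$ uses the biases $\lambda, q, 1-\lambda-q$ and the filled bit is uniform) is treated in exactly the same template, with $t_d := \frac12$ and $t_{d-1}$ playing the role of the ``previous'' parameter via $t_{d-1} = (p-\lambda)/q$ from~\eqref{eq:def-of-tk}, so that the single lemma covers all stages uniformly.
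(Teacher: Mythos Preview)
Your plan is correct and matches the paper's argument essentially line for line: the one-step block calculation you describe is exactly Lemmas~\ref{lem:OW-trick-initial} and~\ref{lem:OW-trick-subsequent}, and your chaining is Proposition~\ref{prop:preserve-dist} together with the short derivation at the end of Section~\ref{sec:complete-to-uniform}. The one slip to fix when you write it out is the ``filling identity'': the fill bit $\bZ_a$ has bias $t_{k-1}$ (not $t_k$), and filling a nonempty $\{\ast_{t_k},\circ_{1-t_k}\}^\ell$ pattern with a \emph{random} bit does not give a conditioned product string; rather, when the fill value equals $\bullet$ (in the level-$k$ sense) the second branch of~\eqref{eq:OW-restriction} becomes distributionally identical to the third, and when it equals $\circ$ it contributes to the all-$\circ$ mass, after which the normalization~\eqref{eq:def-of-qa} closes the books --- this is precisely the computation in~(\ref{eq:OW-trick-subsequent-1})--(\ref{eq:OW-trick-subsequent-2}).
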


As discussed in Section~\ref{sec:second-outline} we will ultimately apply Proposition~\ref{prop:complete-to-uniform} with $f$ being our target function $\BalancedSipser_d$ and $g$ being the approximating circuit $C$. This allows us to translate the inapproximability of $\mathbf{\Psi}(\BalancedSipser_d)$ by $\mathbf{\Psi}(C)$ (either with respect to the $t_1$-biased or $(1-t_1)$-based product distribution, depending on whether $d$ is even or odd) into the uniform-distribution inapproximability of $\BalancedSipser_d$ by $C$.

\paragraph{Overview of proof.} We will actually derive Proposition~\ref{prop:complete-to-uniform} as a consequence of a stronger claim, which, roughly speaking, states that we can generate a uniformly random input $\bX \leftarrow\{0_{1/2},1_{1/2}\}^n$ via $\mathbf{\Psi}$ and $\bY$ in a stage-wise manner.   In more detail, given $\mathbf{\Psi} \equiv \{\brho^{(k)}\}_{k\in \{2,\ldots,d\}}$ and $\bY$ we consider the following random $\{0,1,\ast\}$-valued labeling $\bell$ of the leaves and non-root nodes of the depth-$d$ depth-regular tree corresponding to the depth-$d$ formula computing $\BalancedSipser_d$:
\begin{itemize}
\item The $|A_d| = n$ leaves of the tree are each labeled $\{0,1,\ast\}$ according to $\brho^{(d)} \leftarrow \calR_\init$.
\item For $2\le k \le d-1$, the $|A_k|$  nodes at depth $k$ are each labeled $\{0,1,\ast\}$ according to $\brho^{(k)}\leftarrow \calR(\hat{\brho^{(k+1)}})$.
\item Finally, for each $i\in [w_0] = [|A_1|]$, if $\hat{\brho^{(2)}}_i = \ast$ then the $i$-th node at depth $1$ is labeled $\bY_i \in \zo$, and otherwise it is labeled $\hat{\brho^{(2)}}_i \in \zo$. (The root of the tree is left unlabeled.)
\ignore{
\gray{\item Finally, the $|A_1| = w_0$  nodes at depth $1$\lnote{Actually, just the $\ast$'s of $\hat\brho^{(2)}$} are each labeled $\{0,1\}$ according to $\bY$; that is, each of them is independently labeled $1$ with probability $t_1$, and $0$ otherwise. (The root of the tree is left unlabeled.)}
}
\end{itemize}

Next, we let the $\zo$-valued labels of $\bell$ ``percolate down the tree'' as follows: every node or leaf that is labeled $\ast$ by $\bell$ inherits the ($\{0,1\}$-valued) label from its closest ancestor that is not labeled $\ast$.  Note that this ``percolation step'' ensures that every leaf and non-root node of the tree is labeled either $0$ or $1$, since every depth-$1$ node is assigned a $\zo$-valued label by $\bell$.

Let $\bell^{\downarrow}$  denote this $\zo$-valued random labeling of the leaves and non-root nodes.  Our main result in this section, Proposition~\ref{prop:preserve-dist}, can be viewed as stating that the random string $\bX \in \zo^n$ defined by $\bell^\downarrow$'s labeling of the $n$ leaves is distributed uniformly at random; Proposition~\ref{prop:complete-to-uniform} follows as a straightforward consequence of this claim along with our definition of projections.\bigskip

We begin with the following lemma, which explains our choice of $t_{d-1}$ in (\ref{eq:def-of-tk}) in the definition of $\calR_\init$ (Definition~\ref{def:calR-init}).  (Note that in the lemma each coordinate of $\bY$ is distributed as $\{0_{1-t_{d-1}},1_{t_{d-1}}\}$ regardless of whether $d$ is even or odd; this is because of our convention that the bottom-layer gates of $\BalancedSipser_d$ are always $\AND$ gates.)

\begin{lemma}
\label{lem:OW-trick-initial}
Let $\brho\leftarrow\calR_\init$ and $\bY \leftarrow \{ 0_{1-{t_{d-1}}}, 1_{t_{d-1}} \}^{(\hat{\brho})^{-1}(\ast)}$ \ignore{\lnote{This used to say $\bY \leftarrow \{0_{1-t},1_t\}^{A_{d-1}}$ but that is misleading (though the lemma is still formally correct). We will only be hitting the $\ast$'s of $\hat\rho$ with the $t$-biased distribution.}}, and consider the string $\bX \in \{0,1\}^{n} \equiv \zo^{A_{d-1} \times [m]}$ defined as follows:
\[ \bX_{a,i} = \left\{
\begin{array}{cl}
\bY_a & \text{if $\brho_{a,i} = \ast$} \\
\brho_{a,i} & \text{otherwise}
\end{array}
\quad \text{for all $a\in A_{d-1}$ and $i\in [m]$.}
\right.\]
The string $\bX$ is distributed according to the uniform distribution $\{ 0_{1/2},1_{1/2}\}^{n}$. (Recalling Remark~\ref{rem:complete-dead-blocks} we have that $\brho_{a,i} = \ast$ if and only if $\hat\brho_a =\ast$, and so $\bY_a$ in the equation above is indeed well-defined.)
\end{lemma}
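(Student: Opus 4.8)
The claim is that the two-stage process—first draw $\brho \leftarrow \calR_\init$, then fill each surviving $\ast$-block with a single $t_{d-1}$-biased bit—produces a uniform string on $\{0,1\}^n$. Since the blocks $a \in A_{d-1}$ are handled independently both in $\calR_\init$ (Definition~\ref{def:calR-init}) and in the assignment of $\bY_a$, it suffices to prove that each block $\bX_a \in \{0,1\}^m$ is distributed as the uniform distribution $\{0_{1/2},1_{1/2}\}^m$, independently across blocks. So the whole proposition reduces to a single-block computation.

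\textbf{Single-block analysis.} Fix a block $a$ and a target string $z \in \{0,1\}^m$. I would split into two cases according to whether $z = \{1\}^m$ or not.
\begin{itemize}
\item If $z = \{1\}^m$: this outcome arises either when $\brho_a = \{1\}^m$ (probability $\lambda$, contributing directly), or when $\brho_a \in \{\ast_{1/2},1_{1/2}\}^m \setminus \{1\}^m$ (probability $q$) and then every surviving $\ast$ gets filled with $1$ via $\bY_a$ (which happens with probability $t_{d-1}$, since exactly one biased bit $\bY_a$ controls the whole block and it must equal $1$)—wait, more carefully: conditioned on $\brho_a \in \{\ast_{1/2},1_{1/2}\}^m \setminus \{1\}^m$, the block equals $\{1\}^m$ after filling iff $\bY_a = 1$, which has probability $t_{d-1}$. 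The third branch ($\{0_{1/2},1_{1/2}\}^m \setminus \{1\}^m$) can never produce $\{1\}^m$. So the total probability of $\bX_a = \{1\}^m$ is $\lambda + q\, t_{d-1}$. By the definition $t_{d-1} = (p-\lambda)/q$ in~(\ref{eq:def-of-tk}), this equals $\lambda + (p-\lambda) = p = 2^{-m}$, exactly the uniform probability of $\{1\}^m$.
\item If $z \neq \{1\}^m$: the first branch contributes nothing. Let $k = k(z)$ be the number of coordinates of $z$ equal to $1$ (so $0 \le k \le m-1$). In the second branch, $\brho_a \leftarrow \{\ast_{1/2},1_{1/2}\}^m \setminus \{1\}^m$: to end up at $z$, every $0$-coordinate of $z$ must be an $\ast$ in $\brho_a$ (then filled by $\bY_a = 0$), and every $1$-coordinate of $z$ may be either $1$ or $\ast$ in $\brho_a$—but if any $\ast$ survives among the $1$-coordinates, then $\bY_a$ must equal $1$, contradicting $\bY_a = 0$ which the $0$-coordinates force (here $z \ne \{1\}^m$ guarantees at least one $0$-coordinate, so $\bY_a = 0$ is forced). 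Hence in the second branch we need exactly: $\ast$ on all $m-k$ zero-coordinates, $1$ on all $k$ one-coordinates, and $\bY_a = 0$. Accounting for the conditioning $\setminus\{1\}^m$ (probability $1 - 2^{-m}$ of the unconditioned product), the second branch contributes $q \cdot \frac{2^{-m}}{1-2^{-m}} \cdot (1 - t_{d-1})$. In the third branch, $\brho_a \leftarrow \{0_{1/2},1_{1/2}\}^m \setminus \{1\}^m$: here there are no $\ast$'s, so $\brho_a$ must equal $z$ exactly (which is allowed since $z \ne \{1\}^m$), contributing $(1-\lambda-q)\cdot \frac{2^{-m}}{1-2^{-m}}$. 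Summing, the probability of $\bX_a = z$ is $\frac{2^{-m}}{1-2^{-m}}\big( q(1-t_{d-1}) + 1 - \lambda - q \big)$. Now $q(1-t_{d-1}) = q - q t_{d-1} = q - (p - \lambda) = q - p + \lambda$ using~(\ref{eq:def-of-tk}) again, so the bracket is $q - p + \lambda + 1 - \lambda - q = 1 - p = 1 - 2^{-m}$, and the whole expression collapses to $2^{-m}$, again matching the uniform probability.
\end{itemize}

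\textbf{Conclusion and the main obstacle.} Since every $z \in \{0,1\}^m$ receives probability exactly $2^{-m}$, $\bX_a$ is uniform on $\{0,1\}^m$; independence across blocks is immediate from the block-wise independent structure of $\calR_\init$ and of $(\bY_a)_{a}$; therefore $\bX$ is uniform on $\{0,1\}^n$, which is the assertion. The argument is essentially a bookkeeping exercise, so there is no deep obstacle; the one place to be careful is the role of the single biased bit $\bY_a$ governing an entire block: one must correctly track that a surviving $\ast$ in a $1$-coordinate forces $\bY_a = 1$, which is incompatible with the $0$-coordinates (present precisely because $z \ne \{1\}^m$) that force $\bY_a = 0$—this is exactly why, in the second branch, the $1$-coordinates of $z$ must be set to the literal $1$ rather than to $\ast$, and getting this case split right (together with the conditioning factor $\frac{1}{1-2^{-m}}$) is the crux of the computation. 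The role of $\lambda$ also deserves a sanity check: it appears in the $\{1\}^m$ case to boost that probability up from $q t_{d-1}$ to $p$, and it cancels in the $z \ne \{1\}^m$ case, which is precisely the design intent behind choosing $t_{d-1} = (p-\lambda)/q$.
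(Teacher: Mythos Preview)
Your proof is correct and follows essentially the same approach as the paper: reduce to a single block by independence, then compute $\Pr[\bX_a = z]$ separately for $z = \{1\}^m$ and $z \neq \{1\}^m$, using $t_{d-1} = (p-\lambda)/q$ to collapse both cases to $2^{-m}$. Your treatment of the second branch (arguing that the $1$-coordinates of $z$ must be literal $1$'s rather than $\ast$'s because $\bY_a$ is forced to $0$) is in fact more explicit than the paper's, which simply asserts that the second-branch contribution is $q\,\Pr[\bY_a=0]\cdot \tfrac{2^{-m}}{1-2^{-m}}$ without unpacking this reasoning.
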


\begin{proof}
Since the blocks of $\brho$ are independent across $a\in A_{d-1}$ and the coordinates of $\bY$ are independent across $a \in (\hat\brho)^{-1}(\ast) \sse A_{d-1}$, it suffices to prove that $\bX_a$ is distributed according to $\{ 0_{1/2},1_{1/2}\}^{m}$ for a fixed $a\in A_{d-1}$.  We first observe that
\[
\Pr[\bX_a = 1^{m}] = \lambda + q\Pr[\bY_a = 1]
= \lambda + qt_{d-1} = p = 2^{-m},
\]
\ignore{\begin{align*}
\Pr[\bX_a = 1^{m}] &= \lambda + q\Pr[\bY_a = 1]  \\
&= \lambda + qt_{d-1}  \\
&= p = 2^{-m},
\end{align*}}
where the $\lambda$ is from the first line of (\ref{eq:initial}), the $q\Pr[\bY_a = 1]$ is from the second line of (\ref{eq:initial}), and the penultimate equality is by our choice of $t_{d-1}$ in (\ref{eq:def-of-tk}).  Next, for any string $\mathrm{Z} \in \zo^{m} \setminus \{1\}^{m}$, we have that
\begin{align}
\Pr[\bX_a = \mathrm{Z}] &= \left(1-\lambda-q\right) \cdot {\frac {2^{-m}} {1 - 2^{-m}}}
+ q \Pr[\bY_a= 0] \cdot  {\frac {2^{-m}} {1 - 2^{-m}}} \label{eq:OW-trick-initial-1}\\
&=
\frac{p}{1-p}  \cdot \left(
 \left(1-\lambda-q\right)  + q(1-t_{d-1}) \right) \nonumber \\
 &=
 \frac{p} {1-p}  \cdot
 \left(1-\lambda-qt_{d-1} \right) = \frac{p}{1-p} \cdot (1-p) = p = 2^{-m},\label{eq:OW-trick-initial-2}
\end{align}
where the first summand on the RHS of (\ref{eq:OW-trick-initial-1}) is by the third line of (\ref{eq:initial}), the second summand is by the second line of (\ref{eq:initial}), and (\ref{eq:OW-trick-initial-2}) again uses our choice of $t_{d-1}$ in (\ref{eq:def-of-tk}). Since this is exactly the probability mass function of the uniform distribution $\{0_{1/2},1_{1/2}\}^{m}$, the proof is complete. \end{proof}

The following lemma, the analogue of Lemma~\ref{lem:OW-trick-initial} for $\calR(\tau)$, explains our choice of $q_a$ in terms of $t_k$ and $t_{k-1}$ in (\ref{eq:def-of-qa}):

\begin{lemma}
\label{lem:OW-trick-subsequent}
For $2\le k \le d-1$ let $\tau \in \{0,1,\ast\}^{A_k}$, $\brho\leftarrow\calR(\tau)$, and
\[
\begin{cases}
\bY \leftarrow \{0_{1-t_{k-1}},1_{t_{k-1}}\}^{(\hat\brho)^{-1}(\ast)} & \text{~if~}d - k \equiv 0 \mod 2\\
\bY \leftarrow \{0_{t_{k-1}},1_{1-t_{k-1}}\}^{(\hat\brho)^{-1}(\ast)} & \text{~if~}d - k \equiv 1 \mod 2.
\end{cases}
\]
For each $a \in A_{k-1}$, writing $S_a = S_a(\tau)$ to denote $\tau_a^{-1}(\ast) = \{ i \in [w_{k-1}] \colon \tau_{a,i} = \ast\}$ and $\brho(S_a)$ to denote the substring of $\brho_a$ with coordinates in $S_a$, we consider the string $\bZ_a \in \{0,1\}^{S_a}$ defined as follows:
\[ \bZ_{a,i} = \left\{
\begin{array}{cl}
\bY_a & \text{if $\brho_{a,i} = \ast$} \\
\brho_{a,i} & \text{otherwise}
\end{array}
\quad \text{for all $i\in S_a$.}
\right.\]
The string $\bZ_a$ is distributed according to
\[
\begin{cases}
\{ 0_{t_k},1_{1-{t_k}}\}^{S_a} & \text{~if~}d - k \equiv 0 \mod 2\\
\{ 0_{1-t_k},1_{t_k}\}^{S_a} & \text{~if~}d - k \equiv 1 \mod 2,
\end{cases}
\]
and furthermore, $\bZ_a$ and $\bZ_{a'}$ are independent for any two distinct $a,a' \in A_{k-1}$.  (Again, recalling Remark~\ref{rem:complete-dead-blocks} we have  that $\brho_{a,i} = \ast$ if and only if $\hat\brho_a =\ast$, and so $\bY_a$ in the equation above is indeed well-defined.)
\end{lemma}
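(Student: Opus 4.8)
The plan is to prove Lemma~\ref{lem:OW-trick-subsequent} by the same strategy used for its ``initial'' analogue, Lemma~\ref{lem:OW-trick-initial}: work one block $a\in A_{k-1}$ at a time and carry out a direct case analysis over the three branches of~(\ref{eq:OW-restriction}) crossed with the value of the single fill-in bit $\bY_a$. The algebraic engine will be the identity $(1-t_k)^{|S_a|}=\lambda+q_a t_{k-1}$ from~(\ref{eq:def-of-qa}), playing exactly the role that $t_{d-1}=(p-\lambda)/q$ played in Lemma~\ref{lem:OW-trick-initial}.

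First I would reduce to a single block and dispatch the independence claim. By Definition~\ref{def:main-projection} the substrings $\brho(S_a)$ are mutually independent over $a\in A_{k-1}$; the coordinates of $\bY$ are mutually independent and, conditioned on the index set $(\hat{\brho})^{-1}(\ast)$, independent of $\brho$; and by Remark~\ref{rem:complete-dead-blocks} the event $\hat{\brho}_a=\ast$ (equivalently, that $\bY_a$ is present) depends only on $\brho(S_a)$. Since $\bZ_a$ is a deterministic function of the pair $(\brho(S_a),\bY_a)$, the pairs $(\brho(S_a),\bY_a)$ — and hence the strings $\bZ_a$ — are mutually independent across $a$, which is the ``furthermore'' clause. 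It remains to pin down the law of $\bZ_a$ for a fixed $a$. By the duality noted in Section~\ref{sec:subsequent-projections} (complementing every bit swaps the $\bullet\leftrightarrow\circ$ conventions of Table~\ref{table:circ-bullet}, hence interchanges the two parities of $d-k$), it suffices to treat $d-k\equiv 0\bmod 2$; there ``$\circ$'' reads as $1$ and ``$\bullet$'' as $0$ on $A_k$, whereas on $A_{k-1}$ the convention flips, so $\bY_a$ equals $1$ with probability $t_{k-1}$, and copying $\bY_a$ into the block fills its $\ast$-coordinates with $1$'s with probability $t_{k-1}$. In this reading the target assertion is $\bZ_a\sim\{0_{t_k},1_{1-t_k}\}^{S_a}$.

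Next I would clear the degenerate cases. If $\hat{\tau}_a=\bullet$ then $\tau_a=\circ^{w_{k-1}}$, so $S_a=\emptyset$ and $\bZ_a$ is the empty string. If $\hat{\tau}_a=\circ$, or $\hat{\tau}_a=\ast$ but $S_a$ is not $k$-acceptable (Definition~\ref{def:acceptable}), then $\brho(S_a)\leftarrow\{\bullet_{t_k},\circ_{1-t_k}\}^{S_a}$ contains no $\ast$, so $\hat{\brho}_a\ne\ast$, no fill-in occurs, $\bZ_a=\brho(S_a)$, and this already equals $\{0_{t_k},1_{1-t_k}\}^{S_a}$ in the $d-k$ even reading. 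So the only substantive case is $\hat{\tau}_a=\ast$ with $S_a$ $k$-acceptable, where $q_a\in[0,1]$ is well-defined (the remark following Definition~\ref{def:main-projection}, or Lemma~\ref{lem:bound-on-qa}). Here I would compute $\Pr[\bZ_a=W]$ for every $W\in\{0,1\}^{S_a}$, letting $j$ denote the number of $0$'s in $W$. For $W=1^{S_a}$: branch~1 of~(\ref{eq:OW-restriction}) always produces $1^{S_a}$ (weight $\lambda$); branch~2 produces $1^{S_a}$ exactly when $\bY_a=1$ (weight $q_a t_{k-1}$); branch~3 never does; summing gives $\lambda+q_a t_{k-1}=(1-t_k)^{|S_a|}$ by~(\ref{eq:def-of-qa}). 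For $W\ne 1^{S_a}$ (so $j\ge 1$): branch~1 contributes nothing; in branch~3 we have $\bZ_a=\brho(S_a)$ and $\Pr[\brho(S_a)=W]=t_k^{j}(1-t_k)^{|S_a|-j}/(1-(1-t_k)^{|S_a|})$; and branch~2 reaches $W$ only via $\bY_a=0$ together with $\brho(S_a)$ having $\ast$ exactly on the $0$-coordinates of $W$, contributing $q_a(1-t_{k-1})\cdot t_k^{j}(1-t_k)^{|S_a|-j}/(1-(1-t_k)^{|S_a|})$. The combined weight of branches~2 and~3 is $(1-\lambda-q_a)+q_a(1-t_{k-1})=1-\lambda-q_a t_{k-1}=1-(1-t_k)^{|S_a|}$, again by~(\ref{eq:def-of-qa}), which cancels the denominator and leaves $\Pr[\bZ_a=W]=t_k^{j}(1-t_k)^{|S_a|-j}$. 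In both subcases this is exactly the mass that $\{0_{t_k},1_{1-t_k}\}^{S_a}$ assigns to $W$, so $\bZ_a$ has the claimed law, which together with the first step completes the lemma.

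The one genuinely error-prone point — the step I would be most careful about — is keeping the $\{\bullet,\circ\}\leftrightarrow\{0,1\}$ dictionary of Table~\ref{table:circ-bullet} straight across the flip between $A_k$ and $A_{k-1}$: the fill-in bit $\bY_a$ is ``$\bullet$-biased with probability $t_{k-1}$'' on $A_{k-1}$, but once copied into a coordinate of $A_k$ it becomes the symbol ``$\circ$'' there, which is precisely why $\Pr[\bZ_a=\circ^{S_a}]$ comes out to $\lambda+q_a t_{k-1}$ rather than $\lambda+q_a(1-t_{k-1})$. Committing to the $d-k\equiv 0$ case with every symbol written out in $\{0,1\}$, and recovering the odd case for free by global complementation, is the cleanest way to avoid this pitfall; everything else is routine bookkeeping with the ``$\setminus\{\circ\}^{S_a}$'' conditioning and the single identity~(\ref{eq:def-of-qa}).
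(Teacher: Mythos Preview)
Your proposal is correct and follows essentially the same approach as the paper's own proof: reduce by duality to the $d-k\equiv 0\pmod 2$ case, dispose of the degenerate blocks where $\brho(S_a)$ is drawn from $\{\bullet_{t_k},\circ_{1-t_k}\}^{S_a}$ with no $\ast$'s, and in the remaining case compute $\Pr[\bZ_a=1^{S_a}]=\lambda+q_a t_{k-1}=(1-t_k)^{|S_a|}$ and $\Pr[\bZ_a=W]=t_k^{j}(1-t_k)^{|S_a|-j}$ for $W\ne 1^{S_a}$ via the identity~(\ref{eq:def-of-qa}), with independence across blocks following from the product structure of $\calR(\tau)$ and of $\bY$. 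The paper's write-up is slightly terser (it does not separately spell out the $\hat\tau_a=\bullet$ case or the $\{\bullet,\circ\}$ bookkeeping you flag), but the computation is line-for-line the same.
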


\ignore{ OLD VERSION OF LEMMA FOLLOWS:

\begin{lemma}
\label{lem:OW-trick-subsequent}
For $2\le k \le d-1$ where $d-k \equiv 0 \mod 2$ let $\tau \in \{0,1,\ast\}^{A_k}$, $\brho\leftarrow\calR(\tau)$, and $\bY \leftarrow \{0_{1-t_{k-1}},1_{t_{k-1}}\}^{(\hat\brho)^{-1}(\ast)}$\ignore{\lnote{Again the exponent here used to be $A_{k-1}$}}.  For each $a \in A_{k-1}$, writing $S_a = S_a(\tau)$ to denote $\tau_a^{-1}(\ast) = \{ i \in [w_{k-1}] \colon \tau_{a,i} = \ast\}$ and $\brho(S_a)$ to denote the substring of $\brho_a$ with coordinates in $S_a$, we consider the string $\bX_a \in \{0,1\}^{S_a}$ defined as follows:
\[ \bX_{a,i} = \left\{
\begin{array}{cl}
\bY_a & \text{if $\brho_{a,i} = \ast$} \\
\brho_{a,i} & \text{otherwise}
\end{array}
\quad \text{for all $i\in S_a$.}
\right.\]
The string $\bX_a$ is distributed according to $\{ 0_{t_k},1_{1-{t_k}}\}^{S_a}$, and furthermore, $\bX_a$ and $\bX_{a'}$ are independent for any two distinct $a,a' \in A_{k-1}$.  (Again, recalling Remark~\ref{rem:complete-dead-blocks} we have  that $\brho_{a,i} = \ast$ if and only if $\hat\brho_a =\ast$, and so $\bY_a$ in the equation above is indeed well-defined.)
\end{lemma}
}

\begin{proof}
We prove the $d-k \equiv 0 \mod 2$ case (the other case follows by a symmetric argument). If $\hat\tau_a$ falls in the first case of Definition~\ref{def:main-projection} (i.e.~if $\hat\tau_a = 0$ or if $S_a$ is not $k$-acceptable)  then the claim is   true since $\bZ_a \equiv \brho(S_a) \leftarrow \{0_{t_k},1_{1-{t_k}}\}^{S_a}$. Otherwise, if $\hat\tau_a$ falls in the second case of Definition~\ref{def:main-projection} (i.e.~if $\hat\tau_a = \ast$ and $S_a$ is $k$-acceptable) we first observe that
\begin{align*}
\Pr[\bZ_a = 1^{S_a}] &= \lambda + q_a\Pr[\bY_a = 1] \\
&=  \lambda + q_a t_{k-1} \\
&= (1-t_k)^{|S_a|},
\end{align*}
where as before the $\lambda$ is from the first line of (\ref{eq:OW-restriction}), the $q_a\Pr[\bY_a = 1]$ is from the second line of (\ref{eq:OW-restriction}), and the final equality is by our definition of $q_a$ in (\ref{eq:def-of-qa}). Next, for any string $\mathrm{Z} \in \{ 0, 1\}^{S_a} \setminus \{1\}^{S_a}$ and $u := | \mathrm{Z}^{-1}(0)| \in \{ 1, \ldots, |S_a|\}$, we have that
\begin{align}
\Pr[\bZ_a = \mathrm{Z}]
&= \left(1-\lambda-q_a\right) \cdot {\frac {t_k^u (1-t_k)^{|S_a|-u}} {1 - (1-t_k)^{|S_a|}}}
+ q_a \Pr[\bY_a = 0]\cdot  {\frac {t_k^u (1-t_k)^{|S_a|-u}} {1 - (1-t_k)^{|S_a|}}} \label{eq:OW-trick-subsequent-1} \\
&=
{\frac {t_k^u (1-t_k)^{|S_a|-u}} {1 - (1-t_k)^{|S_a|}}} \cdot
\left(1-\lambda-q_a +q_a(1-t_{k-1})\right) \nonumber \\
 &=
{\frac {t_k^u (1-t_k)^{|S_a|-u}} {1 - (1-t_k)^{|S_a|}}} \cdot
\left(1-\lambda-q_a t_{k-1}\right) \nonumber \\
&=
{\frac {t_k^u (1-t_k)^{|S_a|-u}} {1 - (1-t_k)^{|S_a|}}} \cdot \big(1-(1-t_k)^{|S_a|}\big) =
t_k^u (1-t_k)^{|S_a|-u}, \label{eq:OW-trick-subsequent-2}
\end{align}
where as before the first summand on the RHS of (\ref{eq:OW-trick-subsequent-1}) is by the third line of (\ref{eq:OW-restriction}), the second summand is by the second line of (\ref{eq:OW-restriction}), and (\ref{eq:OW-trick-subsequent-2}) again uses our definition of $q_a$. Therefore indeed, the resulting string is distributed according to $\{0_{t_k},1_{1-t_k}\}^{S_a}$. Finally, since the blocks of $\brho$ are independent across $a\in A_{k-1}$ and the coordinates of $\bY$ are independent across $a \in (\hat\brho)^{-1}(\ast) \sse A_{k-1}$, we have that $\bZ_a$ and $\bZ_{a'}$ are independent for any two distinct $a,a'\in A_{k-1}$.
\end{proof}

\ignore{ OLD REMARK FOLLOWS:

\begin{remark}
\label{rem:OW-trick-subsequent}
A symmetric argument establishes the following dual to Lemma~\ref{lem:OW-trick-subsequent}: for $2\le k \le d-1$ where $d-k \equiv 1 \mod 2$ let $\tau \in \{0,1,\ast\}^{A_k}$, $\brho\leftarrow\calR(\tau)$, and $\bY \leftarrow \{0_{t_{k-1}},1_{1-t_{k-1}}\}^{(\hat\brho)^{-1}(\ast)}$.  For each $a \in A_{k-1}$, writing $S_a = S_a(\tau)$ to denote $\tau_a^{-1}(\ast) = \{ i \in [w_{k-1}] \colon \tau_{a,i} = \ast\}$ and $\brho(S_a)$ to denote the substring of $\brho_a$ with coordinates in $S_a$, we consider the string $\bX_a \in \{0,1\}^{S_a}$ defined as follows:
\[ \bX_{a,i} = \left\{
\begin{array}{cl}
\bY_a & \text{if $\brho_{a,i} = \ast$} \\
\brho_{a,i} & \text{otherwise}
\end{array}
\quad \text{for all $i\in S_a$.}
\right.\]
The string $\bX_a$ is distributed according to $\{ 0_{1-t_k},1_{t_k}\}^{S_a}$, and furthermore, $\bX_a$ and $\bX_{a'}$ are independent for any two distinct $a,a' \in A_{k-1}$.
\end{remark}
}

Together Lemmas~\ref{lem:OW-trick-initial} and~\ref{lem:OW-trick-subsequent} give us the following proposition, which in turn yields Proposition~\ref{prop:complete-to-uniform}, our main result in this section.

\begin{proposition}
\label{prop:preserve-dist}
Let $\brho^{(d)} \leftarrow\calR_\init$ and $\brho^{(k)}\leftarrow\calR(\hat{\brho^{(k+1)}})$ for $2\le k \le d-1$.  Let
\[  \bY^{(1)} \leftarrow
\begin{cases}
 \{ 0_{1-t_1}, 1_{t_1}\}^{(\hat{\brho^{(2)}})^{-1}(\ast)} & \text{~if $d$ is even}\\
 \{ 0_{t_1}, 1_{1-t_1}\}^{(\hat{\brho^{(2)}})^{-1}(\ast)} & \text{~if $d$ is odd},\\
\end{cases}
\]
and for $2 \le k \le d-1$ consider random strings $\bY^{(k)} \in \zo^{(\hat{\brho^{(k+1)}})^{-1}(\ast)}$ defined inductively from $k=2$ up to $d-1$ as follows:
\begin{equation}
\bY^{(k)}_{a,i} = \left\{
\begin{array}{cl}
\bY^{(k-1)}_a & \text{if $\brho^{(k)}_{a,i} = \ast$}  \\
\brho^{(k)}_{a,i} & \text{otherwise}
\end{array}
\right.\quad \text{for all $a\in A_{k-1}$ and $i \in [w_{k-1}]$ s.t.~$\hat{\brho^{(k+1)}}_{a,i} = \ast$}.
\label{eq:def-of-Y}
\end{equation}
Then the string $\bX \in \{0,1\}^n \equiv \zo^{A_{d-1}\times [m]}$ defined by
\[ \bX_{a,i} = \left\{
\begin{array}{cl}
\bY^{(d-1)}_a & \text{if $\brho_{a,i} = \ast$} \\
\brho^{(d)}_{a,i} & \text{otherwise}
\end{array}
\quad \text{for all $a\in A_{d-1}$ and $i\in [m]$}
\right.\]
is distributed according to the uniform distribution $\{0_{1/2},1_{1/2}\}^n$.
\end{proposition}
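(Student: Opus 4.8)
The plan is to derive Proposition~\ref{prop:preserve-dist} by an inductive assembly of Lemmas~\ref{lem:OW-trick-initial} and~\ref{lem:OW-trick-subsequent}, processing the levels of the $\BalancedSipser_d$ tree from the top (level $1$) down to the leaves (level $d$). The quantity I would track is the conditional law of $\bY^{(j)}$, via the following induction hypothesis: for each $1\le j\le d-1$, conditioned on any fixed outcome $\rho^{(d)},\dots,\rho^{(j+1)}$ of $\brho^{(d)},\dots,\brho^{(j+1)}$, the string $\bY^{(j)}$ is distributed as the $t_j$-biased product distribution over the now-determined coordinate set $(\hat{\rho^{(j+1)}})^{-1}(\ast)\subseteq A_j$, where ``$t_j$-biased'' means $\{0_{1-t_j},1_{t_j}\}$ if $d-j$ is even and $\{0_{t_j},1_{1-t_j}\}$ if $d-j$ is odd. (I am tacitly using the $\bullet/\circ\leftrightarrow 0/1$ conversion of Table~\ref{table:circ-bullet}; checking that these conventions line up with the two lemmas across adjacent levels is routine, and is precisely what the alternating direction of the biases in those lemmas is designed to guarantee.)

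The base case $j=1$ is immediate from the definition of $\bY^{(1)}$. For the inductive step from $j=k-1$ to $j=k$ (with $2\le k\le d-1$), fix an outcome $\rho^{(d)},\dots,\rho^{(k+1)}$ and set $\tau:=\hat{\rho^{(k+1)}}\in\{0,1,\ast\}^{A_k}$. Conditioned on this outcome, $\brho^{(k)}\leftarrow\calR(\tau)$, and for each value $\rho^{(k)}$ in its support the induction hypothesis at level $k-1$ (applied with the extended conditioning $\rho^{(d)},\dots,\rho^{(k+1)},\rho^{(k)}$, which involves the same set of coordinates as $\rho^{(d)},\dots,\rho^{(k)}$) says that $\bY^{(k-1)}$ is the appropriate $t_{k-1}$-biased product over $(\hat{\rho^{(k)}})^{-1}(\ast)$, a distribution depending only on $\rho^{(k)}$. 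Hence the pair $(\brho^{(k)},\bY^{(k-1)})$, conditioned on $\rho^{(d)},\dots,\rho^{(k+1)}$, has exactly the joint law of the pair $(\brho,\bY)$ in the hypothesis of Lemma~\ref{lem:OW-trick-subsequent} for this $\tau$ ($\brho$ drawn from $\calR(\tau)$, then $\bY$ biased over $(\hat\brho)^{-1}(\ast)$). The string $\bY^{(k)}$ defined by~(\ref{eq:def-of-Y}) is precisely the string $\bZ$ of that lemma, so the lemma yields that $\bY^{(k)}$ is $t_k$-biased (in the correct direction) over $\tau^{-1}(\ast)=(\hat{\rho^{(k+1)}})^{-1}(\ast)$, which is the hypothesis at level $k$.

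Running the induction up to $j=d-1$ gives that, conditioned on any outcome $\rho^{(d)}$ of $\brho^{(d)}\leftarrow\calR_\init$, the string $\bY^{(d-1)}$ is distributed as $\{0_{1-t_{d-1}},1_{t_{d-1}}\}$ over $(\hat{\rho^{(d)}})^{-1}(\ast)$ (the direction being forced since $d-(d-1)=1$, consistent with the bottom gates of $\BalancedSipser_d$ always being $\AND$s). Now Lemma~\ref{lem:OW-trick-initial}, applied with $\brho=\brho^{(d)}$ and $\bY=\bY^{(d-1)}$, says exactly that the string $\bX$ built from $\brho^{(d)}$ and $\bY^{(d-1)}$ by the substitution rule in the proposition is uniform on $\{0_{1/2},1_{1/2}\}^n$ conditioned on $\rho^{(d)}$; as this holds for every $\rho^{(d)}$, $\bX$ is unconditionally uniform. (When $d=2$ the subsequent-projection loop is empty, $\bY^{(d-1)}=\bY^{(1)}$, and the statement reduces directly to Lemma~\ref{lem:OW-trick-initial}.) I do not expect any genuine obstacle: Lemmas~\ref{lem:OW-trick-initial} and~\ref{lem:OW-trick-subsequent} already perform all the real probability computations, so the only things to get right are the layered conditioning bookkeeping (ensuring, at each level, that $\bY^{(k-1)}$ and the freshly drawn $\brho^{(k)}$ jointly match the input law required by the lemma) and the consistent use of the $\bullet/\circ\leftrightarrow 0/1$ conventions across levels. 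Finally, Proposition~\ref{prop:complete-to-uniform} follows by unwinding Definition~\ref{def:projection}: for any $h:\zo^n\to\zo$ and $\bY$ drawn from $\calD$, the value $(\mathbf{\Psi}(h))(\bY)$ equals $h(\bX)$ for the $\bX$ constructed above, so $\Pr[f(\bX)\ne g(\bX)]=\Pr[(\mathbf{\Psi}(f))(\bY)\ne(\mathbf{\Psi}(g))(\bY)]$ with $\bX$ uniform.
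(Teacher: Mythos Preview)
Your approach is correct and is essentially identical to the paper's: an upward induction on $j$ using Lemma~\ref{lem:OW-trick-subsequent} for each step $j-1\to j$, followed by Lemma~\ref{lem:OW-trick-initial} at the leaves. One bookkeeping slip: your stated parity convention for the bias direction is inverted (e.g.\ at $j=1$ with $d$ even your rule gives $\{0_{t_1},1_{1-t_1}\}$, whereas the definition of $\bY^{(1)}$ gives $\{0_{1-t_1},1_{t_1}\}$; similarly at $j=d-1$ the correct law is $\{0_{1-t_{d-1}},1_{t_{d-1}}\}$ regardless of parity), but this is exactly the ``routine'' check you flagged and does not affect the argument.
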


\ignore{ OLD VERSION OF PROPOSITION:

\begin{proposition}
\label{prop:preserve-dist}
Let $d \ge 2$ be even. (A symmetric argument establishes the dual claim for odd values of $d$.) Let $\brho^{(d)} \leftarrow\calR_\init$ and $\brho^{(k)}\leftarrow\calR(\hat{\brho^{(k+1)}})$ for $2\le k \le d-1$.  Let
\[ \bY^{(1)} \leftarrow \{ 0_{1-t_1}, 1_{t_1}\}^{(\hat{\brho^{(2)}})^{-1}(\ast)},\]
and for $2 \le k \le d-1$ consider random strings $\bY^{(k)} \in \zo^{(\hat{\brho^{(k+1)}})^{-1}(\ast)}$ defined inductively from $k=2$ up to $d-1$ as follows:
\begin{equation}
\bY^{(k)}_{a,i} = \left\{
\begin{array}{cl}
\bY^{(k-1)}_a & \text{if $\brho^{(k)}_{a,i} = \ast$}  \\
\brho^{(k)}_{a,i} & \text{otherwise}
\end{array}
\right.\quad \text{for all $a\in A_{k-1}$ and $i \in [w_{k-1}]$ s.t.~$\hat{\brho^{(k+1)}}_{a,i} = \ast$}.
\label{eq:def-of-Y}
\end{equation}
Then the string $\bX \in \{0,1\}^n \equiv \zo^{A_{d-1}\times [m]}$ defined by
\[ \bX_{a,i} = \left\{
\begin{array}{cl}
\bY^{(d-1)}_a & \text{if $\brho_{a,i} = \ast$} \\
\brho^{(d)}_{a,i} & \text{otherwise}
\end{array}
\quad \text{for all $a\in A_{d-1}$ and $i\in [m]$}
\right.\]
is distributed according to the uniform distribution $\{0_{1/2},1_{1/2}\}^n$.
\end{proposition}

}

\begin{proof}
By the $k=2$ case of Lemma~\ref{lem:OW-trick-subsequent}, for all possible outcomes $\rho^{(k)}$ of $\brho^{(k)}$ for $3\le k \le d$, conditioned on such an outcome the random string $\bY^{(2)}$ is distributed according to $\{0_{t_2},1_{1-t_2}\}^{\hat{\rho^{(3)}}}$ if $d$ is even and according to
$\{0_{1-t_2},1_{t_2}\}^{\hat{\rho^{(3)}}}$ if $d$ is odd.  Applying this argument repeatedly and arguing inductively from $k=2$ up to $k=d-1$, we have that conditioned on any outcome $\rho^{(d)}$ of $\brho^{(d)}\leftarrow \calR_\init$, the random string $\bY^{(d-1)}$ is distributed according to $\{ 0_{1-t_{d-1}}, 1_{t_{d-1}}\}^{\hat{\rho^{(d)}}}$.  The claim then follows by Lemma~\ref{lem:OW-trick-initial}.
\end{proof}

\ignore{
\gray{
\begin{proposition}
Let $d$ be even and consider $\BalancedSipser_d$. (A symmetric argument establishes the dual claim for odd values of $d$.) Let $\brho^{(d)} \leftarrow\calR_\init$ and $\brho^{(k)}\leftarrow\calR(\hat{\brho^{(k+1)}})$ for $2\le k \le d-1$.  Let $\bY^{(1)} \leftarrow \{ 0_{1-t_1}, 1_{t_1}\}^{w_0}$, and consider random strings $\bY^{(k)} \in \zo^{A_k}$ defined inductively from $k=2$ up to $d$ as follows:
\begin{equation}
\bY^{(k)}_{a,i} = \left\{
\begin{array}{cl}
\bY^{(k-1)}_a & \text{if $\brho^{(k)}_{a,i} = \ast$}  \\
\brho^{(k)}_{a,i} & \text{otherwise}
\end{array}
\right.\quad \text{for all $a\in A_{k-1}$ and $i \in [w_{k-1}]$}.
\label{eq:def-of-Y}
\end{equation}
Then the string $\bY^{(d)} \in \zo^{n}$ is distributed according to the uniform distribution $\{ 0_{1/2},1_{1/2}\}^{n}$.
\end{proposition}

\begin{proof}
Applying Lemma~\ref{lem:OW-trick-subsequent} (and Remark~\ref{rem:OW-trick-subsequent}) repeatedly and arguing inductively from $k=2$ up to $d-1$, we have that $\bY^{(k)}$ is distributed according to $\{0_t, 1_{1-t}\}^{A_k}$ if $k$ is even, and $\{0_{1-t},1_t\}^{A_k}$ if $k$ is odd.  The claim then follows by Lemma~\ref{lem:OW-trick-initial}.
\end{proof}
}
}

\begin{proof}[Proof of Proposition~\ref{prop:complete-to-uniform}]
Recall that $\bX \leftarrow \{0_{1/2},1_{1/2}\}^n$ and $\bY \leftarrow \{0_{1-t_1},1_{t_1}\}^{w_0}$ if $d$ is even,  $\bY\leftarrow \{0_{t_1}, 1_{1-t_1}\}^{w_0}$ if $d$ is odd. Let $\brho^{(d)}\leftarrow \calR_\init$ and $\brho^{(k)}\leftarrow \calR(\hat{\brho^{(k+1)}})$ for $2 \le k \le d-1$.  For $1 \le k \le d-1$ let $\bY^{(k)}\in \zo^{(\hat{\brho^{(k+1)}})^{-1}(\ast)}$ be defined as in Proposition~\ref{prop:preserve-dist}.  Recalling Remark~\ref{rem:complete-dead-blocks}, for all functions $h :\zo^n\to\zo$ and $1 \le k \le d-1$, the random projection
\[ (\proj_{\brho^{(k+1)}} \cdots \proj_{\brho^{(d)}}\, h) : \zo^{A_{k}} \to \zo \]
depends only on the coordinates in $(\hat{\brho^{(k+1)}})^{-1}(\ast) \sse A_{k}$, and so we may equivalently view it as a function $\zo^{(\hat{\brho^{(k+1)}})^{-1}(\ast)} \to \zo$.  By Proposition~\ref{prop:preserve-dist}, the definition of the $\bY^{(k)}$'s, and the definition of projections, we see that
\begin{align*}
\Pr[f(\bX) \ne g(\bX)]
&= \Pr[(\proj_{\brho^{(d)}}\,f)(\bY^{(d-1)}) \ne (\proj_{\brho^{(d)}}\,g)(\bY^{(d-1)})] \\
&= \Pr[(\proj_{\brho^{(d-1)}}\,\proj_{\brho^{(d)}}\,f)(\bY^{(d-2)}) \ne (\proj_{\brho^{(d-1)}}\,\proj_{\brho^{(d)}}\,g)(\bY^{(d-2)})] \\
&= \ \cdots \\
&= \Pr[(\proj_{\brho^{(2)}} \cdots \proj_{\brho^{(d)}}\, f)(\bY^{(1)}) \ne (\proj_{\brho^{(2)}} \cdots \proj_{\brho^{(d)}}\, g)(\bY^{(1)})] \\
&= \Pr[(\proj_{\brho^{(2)}} \cdots \proj_{\brho^{(d)}}\, f)(\bY) \ne (\proj_{\brho^{(2)}} \cdots \proj_{\brho^{(d)}}\, g)(\bY)] \\
&=\Pr[(\mathbf{\Psi}(f))(\bY) \ne (\mathbf{\Psi}(g))(\bY)]
\end{align*}
where the final inequality is by the definition of $\mathbf{\Psi}$ (Definition~\ref{def:boldpsi}).
\end{proof}

\section{Approximator simplifies under random projections}
\label{sec:approximator-simplifies}

With Proposition~\ref{prop:complete-to-uniform} in hand we next prove that the approximating circuit $C$ of the type specified in either Theorems~\ref{thm:smallbottomfanin} or~\ref{thm:alternationpattern} ``collapses to a simple function'' with high probability under a $\mathbf{\Psi}$-random restriction.  For the case that the depth-$d$ circuit $C$ has significantly smaller bottom fan-in than $\BalancedSipser_d$ we show that $C$ collapses to a shallow decision tree with high probability, and for the case that $C$ has the opposite alternation pattern to $\BalancedSipser_d$ we show that $C$ collapses to a small-width depth-two circuit with top gate opposite to that of $\mathbf{\Psi}(\BalancedSipser_d)$ with high probability.

We do so via a \emph{projection switching lemma}, showing that each of the $d-1$ individual random projections $\proj_{\brho^{(k)}}$ comprising $\mathbf{\Psi}$ ``contribute to the simplification'' of $C$ with high probability.  We state and prove our projection switching lemma in Sections~\ref{sec:psl} through~\ref{sec:weight-diff}, and in Section~\ref{sec:two} we show how the lemma can be applied iteratively to prove our structural claims about $\mathbf{\Psi}(C)$.

\subsection{The projection switching lemma and its proof}
\label{sec:psl}

\begin{proposition}[Projection switching lemma for $\calR_{\init}$]
\label{prop:psl-initial}
Let $F : \zo^{n} \to \zo$ be a depth-$2$ circuit with bottom fan-in $r$.   Then for all $s\ge 1$,
\[ \Prx_{\brho \leftarrow \calR_{\init}}[\proj_{\brho}\, F \text{~is not a depth-$s$ decision tree}]
=\left(O\Big(r2^r \cdot w^{-1/4}\Big)\right)^s.
\]
\end{proposition}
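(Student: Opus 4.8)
The plan is to adapt the standard encoding argument for H\aa stad's switching lemma to the projection setting, using the two-stage reformulation of an $\calR_\init$-random projection from the Remark following Definition~\ref{def:calR-init}. First I would reduce to the case that $F$ is a width-$r$ DNF; if $F$ is a CNF, replace it by $\neg F$, since ``$\proj_\brho F$ is a depth-$s$ decision tree'' is equivalent to the same statement for $\proj_\brho(\neg F)$. Then, using the Remark, write $\proj_\brho F \equiv (\proj_{\brho'} F)\uhr\brho''$, where $\brho'$ independently assigns each block $a$ a uniformly random \emph{nonempty} set $S_a\subseteq[m]$ of live coordinates (these collapse to $y_a$, the rest being fixed to $1$), and $\brho''\in\{0,1,\ast\}^{A_{d-1}}$ keeps each $y_a$ free with probability $q$, fixes it to $1$ with probability $\lambda$, and to $0$ with probability $1-\lambda-q$. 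By Remark~\ref{rem:equiv-projection}, for every fixed outcome of $\brho'$ the function $G:=\proj_{\brho'}F$ is a \emph{width-$\le r$} DNF over $\{y_a\}_{a\in A_{d-1}}$: each term of $F$ touches at most $r$ blocks, a block contributing two opposite-sign surviving coordinates of a term kills the term, and same-sign surviving coordinates of a block merge into a single literal. So it suffices to bound, uniformly over outcomes of $\brho'$, the probability over $\brho''$ that $\DTc(G\uhr\brho'')$ has depth $\ge s$, and then average over $\brho'$.

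The core is then a switching lemma for the restriction $\brho''$ applied to the width-$\le r$ DNF $G$. Suppose $\DTc(G\uhr\brho'')$ has a root-to-leaf path of length $\ge s$; truncate it to a path $\pi$ of length exactly $s$ querying distinct variables $y_{a_1},\dots,y_{a_s}$ with outcomes $b_1,\dots,b_s\in\{0,1\}$, and encode $\brho''$ as $(\sigma,\alpha)$, where $\sigma$ refines $\brho''$ by additionally fixing $y_{a_1},\dots,y_{a_s}$ and $\alpha$ is $O(rs)$ bits of auxiliary information used to invert the map. The one place the skew of $\calR_\init$ enters is the choice of values $\sigma$ assigns to the queried variables: I would set every $\sigma(y_{a_j})=0$, irrespective of $b_j$ and of the sign in the term being processed. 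Then $\Pr[\brho''=\rho'']$ and $\Pr[\brho''=\sigma]$ differ only on $a_1,\dots,a_s$, where the ratio is $q/(1-\lambda-q)=\Theta(q)$ per block; extending instead by the value $1$ on some $y_{a_j}$ (the natural move to satisfy a positive literal, as in the textbook proof for ordinary restrictions) would cost $q/\lambda=\Theta(w^{3/4}/\log w)$ per block, which is ruinous. The price of fixing everything to $0$ is that the usual pointer-free ``term-walking'' recovery no longer works cleanly — $\sigma$ falsifies $T^{(i)}$ whenever the current term has a positive literal among its queried variables — so $\alpha$ must record, per level, which of the $\le r$ literals of the current term lie at their falsifying value under $\sigma$ (at most $2^r$ possibilities), together with which variable is queried next ($\le r$ possibilities), the outcome $b_j$, and $O(1)$ bookkeeping bits. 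This suffices to replay $\DTc$ and recover $\brho''$ and $\pi$ uniquely. Since for a fixed $\alpha$ the map $\rho''\mapsto\sigma$ is injective on the bad inputs, summing over the $\le\big(O(r 2^r)\big)^s$ values of $\alpha$ gives
\[
\Prx_{\brho''}\big[\DTc(G\uhr\brho'')\text{ has depth}\ge s\big]\;\le\;\big(O(r 2^r)\cdot\Theta(q)\big)^s ,
\]
and averaging over $\brho'$ and inserting $q=\Theta(\sqrt{\log w/w})\le O(w^{-1/4})$ from~(\ref{eq:estimates}) yields $\big(O(r 2^r w^{-1/4})\big)^s$.

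I expect the main obstacle to be precisely this tension between invertibility of the encoding and the probability bookkeeping under the built-in skew of $\calR_\init$: the distribution deliberately gives the all-$1$ block pattern the tiny weight $\lambda\ll 2^{-m}$ (this is what later keeps $\BalancedSipser_d$ structured), so the ``extend to satisfy the current term'' choice that makes the textbook recovery painless produces a catastrophic probability ratio for positive literals, whereas committing to fix path variables to $0$ cures the ratio but forces the extra $2^r$-per-level bookkeeping to repair the recovery. Making these two demands coexist — and double-checking that the final bound really is $\big(O(r 2^r w^{-1/4})\big)^s$ rather than something like $\big(O(r 2^r q/\lambda)\big)^s$ — is the delicate part. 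A secondary, routine point is verifying that the stage-one collapse $\brho'$ genuinely turns $F$ into a width-$\le r$ DNF and commutes with forming the canonical decision tree, so the two-stage reformulation is legitimate.
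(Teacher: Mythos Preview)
Your two-stage decomposition is a natural first move, but the encoding you describe is \emph{not injective}, so the weight-transfer argument breaks. Consider $G = y_1 \vee y_2$ (two single-literal terms) and $s=1$. Under $\rho''_1:(y_1,y_2)=(\ast,0)$ the canonical tree queries $y_1$; under $\rho''_2:(y_1,y_2)=(0,\ast)$ it queries $y_2$. Your rule sets the queried variable to $0$, so both map to $\sigma:(y_1,y_2)=(0,0)$. The auxiliary ``which of the $\le r$ literals of the current term is falsified by $\sigma$'' is $\{1\}$ in both cases, and the ``position of the queried variable within its term'' is also $1$ in both cases. The decoder cannot tell whether $T_1$ was the first or second term, hence cannot recover $\rho''$. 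More generally, once $\sigma$ falsifies $T_\ell$, the decoder has no way to locate $T_\ell$ among the (possibly exponentially many) terms of $G$: knowing \emph{which positions inside $T_\ell$} were flipped is useless without first knowing which term $T_\ell$ is. And the obvious repair --- set each queried $y_a$ to its satisfying value so that $T_\ell$ stays alive --- runs into exactly the $q/\lambda$ blowup you flagged; stratifying by the number of positive literals gives $\sum_k \binom{s}{k}(q/\lambda)^k(q/(1-\lambda-q))^{s-k}\ge (q/\lambda)^s\gg 1$.

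The underlying problem is that collapsing to the $y_a$-variables via $\brho'$ throws away the very structure of $\calR_\init$ that makes the argument work. The paper runs the encoding directly on $\brho\in\{0,1,\ast\}^{A_{d-1}\times[m]}$: for each block $a$ touched by the path, $\sigma$ sets the $\ast$'s of $\rho_a$ so as to satisfy the literals of the current term and fills the remaining $\ast$'s with $0$, taking the whole block from $\{\ast,1\}^m\setminus\{1\}^m$ (weight $q\cdot\frac{p}{1-p}$) to $\{0,1\}^m$ (weight $(1-\lambda-q)\cdot\frac{p}{1-p}$ if not all-$1$, or $\lambda$ if all-$1$). Because the conditional distribution inside each of these shells is \emph{uniform}, the block ratio $\zeta((\rho\sigma)_a)/\zeta(\rho_a)$ is either $(1-\lambda-q)/q$ or $\lambda(1-p)/(qp)$, both $\Omega(w^{1/4})$, and crucially \emph{independent of how many $1$'s $\sigma$ introduced}. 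So one gets decodability (terms stay satisfied, Lemma~\ref{lem:does-not-falsify}) \emph{and} a uniform $\Omega(w^{1/4})$ gain per block simultaneously; the final union bound over $2^s\cdot(2r)^s\cdot 2^{rs}$ auxiliary strings then yields $(O(r2^r w^{-1/4}))^s$. Your worry that ``extending to $1$ costs $q/\lambda$'' is an artifact of having already marginalized out $\brho'$; at the level of the original $x_{a,i}$ variables it simply does not arise.
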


\begin{proposition}[Projection switching lemma for $\calR(\tau)$] \label{prop:psl-typical-tau}
Let $2 \leq k \leq d-1$ and $F: \{0,1\}^{A_k} \to \{0,1\}$ be a depth-$2$ circuit with bottom fan-in $r$.  Then for all $\tau \in \{0,1,\ast\}^{A_{k}}$ and $s\ge 1$,
\[
\Prx_{\brho \leftarrow \calR(\tau)}[\proj_\brho\, F \text{~is not a depth-$s$ decision tree}]
= \left(O\Big(re^{rt_k/(1-t_k)} \cdot w^{-1/4}\Big)\right)^s.\]
\end{proposition}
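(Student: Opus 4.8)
The goal is a Håstad-style switching lemma, but for \emph{projections} rather than restrictions. The plan is to follow the canonical-decision-tree / labeled-path argument, encoding the bad event ``$\proj_\brho F$ has no depth-$s$ decision tree'' in a way that can be undone by a small amount of side information, so that a union bound over that side information beats the probability mass of the bad event. Concretely, I would set up a \emph{canonical decision tree} $\DTc(\proj_\brho F)$ for the projected function $\proj_\brho F$ over the new variable space $\calY = \{y_a : a \in A_{k-1}\}$: process the surviving terms of $F$ in a fixed order, and at each term query all the $\calY$-variables it mentions. As usual, if this canonical tree has a branch of length $\geq s$, then along that branch we have a sequence of terms $T_1, T_2, \dots$ that got ``hit'' (not yet falsified, not yet satisfied) by the partial assignment built so far. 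The heart of the proof is to show that each such hit term costs a factor of roughly $r e^{rt_k/(1-t_k)} \cdot w^{-1/4}$ in probability, using an encoding argument: from $\brho$ restricted to the relevant blocks plus a short ``advice string,'' one can reconstruct both $\brho$ on those blocks and the identities of the stars, and the advice string has only $\approx 2^{O(r)}$ possibilities per hit term while the gain in restriction-probability is a factor of $\approx (t_k/(1-t_k))^{(\text{number of new stars})}$ times combinatorial factors.

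**Key steps, in order.** (1) Reduce to the per-block picture: since $\calR(\tau)$ is a product distribution over blocks $a \in A_{k-1}$, and the projection $\proj_\brho$ replaces $x_{a,i}$ by $y_a$ uniformly over $i \in S_a$, a term $T$ of $F$ becomes, after projection, a conjunction of literals in the $y_a$'s (one literal $y_a$ or $\bar y_a$ for each block $a$ that $T$ touches — and here is where Remark~\ref{rem:equiv-projection} bites: if $T$ touches a block $a$ in two conflicting ways, $\proj_\brho T$ is killed outright). (2) Analyze what $\calR(\tau)$ does to a single block restricted to a single term: within block $a$, the coordinates of $T$ in $S_a$ each independently become $\circ$ with probability $1-t_k$ (killing the projected term if it wanted $\bullet$ there, or contributing nothing if the block is already dead via $\hat\tau_a = \circ$), become $\ast$ (contributing the literal $y_a$) with the remaining weight, etc.; the key quantitative fact is that conditioned on the block not being killed, the number of $\ast$'s it contributes is geometric-ish with the ``alive'' probability controlled by $t_k/(1-t_k)$, and crucially $(1-t_k)^{r}\ge e^{-rt_k/(1-t_k)}$ by Fact~\ref{fact:approx}, which is exactly where the $e^{rt_k/(1-t_k)}$ in the bound comes from. (3) Run the standard inductive/encoding argument over the sequence of $\geq s$ hit terms along a long branch: condition on the branch prefix, bound the probability that the next term is ``hit but not decided'' by $\le$ (number of ways the stars could sit inside an $r$-element term) $\times$ (probability-ratio penalty) $\le 2^{O(r)} \cdot (O(t_k))$, and — because we are projecting down, collisions make the surviving function over $\calY$ genuinely simpler — absorb the remaining factor into the $w^{-1/4}$: note $t_k = \Theta(\sqrt{(\log w)/w})$ by (\ref{eq:estimates}), so $r \cdot t_k = O(r w^{-1/4})$ for the ranges of $r$ we care about, up to $\log$ factors that get swallowed. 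Multiplying $s$ such penalties gives $\big(O(r e^{rt_k/(1-t_k)} w^{-1/4})\big)^s$. (4) Finally, handle the two structural subtleties peculiar to this setting: the ``$k$-acceptable vs.\ not'' dichotomy in Definition~\ref{def:main-projection} (if $S_a$ is not $k$-acceptable the block just gets a plain $\{\bullet_{t_k},\circ_{1-t_k}\}$ restriction, which is actually the easier case since no variable survives projection unless $\ast$ occurs, and no $\ast$ can occur), and the fact that $\brho$ is a \emph{refinement} of $\tau$, so coordinates already fixed by $\tau$ are simply inherited and play no role in the randomness — the whole argument lives on the coordinates in the sets $S_a$.

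**Main obstacle.** The delicate point — and where I expect to spend the most care — is the encoding/decoding step in the presence of projection collisions. In Håstad's classical lemma each surviving variable is a distinct formal variable, and the ``advice'' that lets you recover the restriction from the decision-tree path is clean; here, many $x_{a,i}$ within one block collapse to the single $y_a$, so a single queried $\calY$-variable can correspond to a whole block of stars in $\brho$, and one has to be careful that the advice string records \emph{which} coordinates in $S_a$ were starred (not just that $y_a$ was queried) in order to fully reconstruct $\brho$ — yet one must \emph{not} pay for all $2^{|S_a|}$ possibilities, only for the $\le 2^{r}$-ish possibilities consistent with the term $T$. The resolution is that for the purpose of the canonical decision tree and the union bound, we only ever need to reconstruct $\brho$ on the at most $r$ coordinates of $S_a$ that lie in the current term $T$; the remaining coordinates of the block are irrelevant to whether $T$ is hit/decided, so we can integrate them out first and only encode the $\le r$ relevant ones. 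Getting this bookkeeping exactly right — so that the advice is $2^{O(r)}$ per hit term and the probability ratio is genuinely $(1-t_k)^{-O(r)}\le e^{O(rt_k/(1-t_k))}$ times a $w^{-1/4}$ per hit term — is the technical crux; once it is in place, Proposition~\ref{prop:psl-initial} follows as the special case where $\tau \equiv \ast^n$, $\ell = m$, $t_k$ is replaced by the effective per-coordinate star-rate $\tfrac12 q$ (so that $e^{rt_k/(1-t_k)} \to 2^{O(r)}$) and the bound reads $(O(r2^r w^{-1/4}))^s$ as stated.
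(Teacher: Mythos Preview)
Your high-level plan is right and matches the paper's: build a canonical projection decision tree for $\proj_\brho F$ over $\calY$, walk a long path, and run a Razborov-style encoding $\rho \mapsto (\rho',\text{advice})$ with $\rho'$ of much larger $\calR(\tau)$-weight. Where the proposal breaks is precisely at your ``main obstacle'' and its proposed resolution.

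You suggest ``integrating out'' the coordinates of a block $a$ that do not appear in the current term and encoding only the $\le r$ relevant ones. This does not work, for a structural reason you have not used: for an alive block (one with $\hat\tau_a=\ast$ and $S_a$ $k$-acceptable), the support of $\calR(\tau)$ on $S_a$ is
\[
\{\circ\}^{S_a}\ \cup\ \big(\{\ast,\circ\}^{S_a}\setminus\{\circ\}^{S_a}\big)\ \cup\ \big(\{\bullet,\circ\}^{S_a}\setminus\{\circ\}^{S_a}\big),
\]
with weights $\lambda$, $q_a$, $1-\lambda-q_a$ respectively. A block can never mix $\ast$'s with $\bullet$'s. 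So if you flip only the starred positions of $T_\ell$ to their satisfying values and leave the remaining stars in the block untouched, any negated literal $\overline{x}_{a,i}$ forces $\rho'_{a,i}=\bullet$ while other coordinates of block $a$ remain $\ast$, and $\rho'$ drops out of $\supp(\calR(\tau))$; the weight-ratio step collapses. If instead you flip only to $\circ$ to stay in support, the term is not satisfied and decoding fails.

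The paper's resolution is to exploit this support structure rather than fight it. For each $y_a\in\eta_\ell$, the encoding $\sigma^\ell$ sets \emph{every} star in block $a$ (not just those in $T_\ell$) to a constant: positions occurring positively in $T_\ell$ go to $\circ$, everything else---including stars of block $a$ not in $T_\ell$ at all---goes to $\bullet$. This pushes the entire block from the ``alive'' branch (weight $\approx q_a$) to the ``dead'' branch (weight $\approx 1-\lambda-q_a$, or $\lambda$ if the block becomes $\{\circ\}^{S_a}$), and \emph{that} transition is the source of the per-block factor $\Omega(w^{1/4})$: the bottleneck is the ratio $\lambda/q^3=\Theta(w^{1/4})$, not $t_k$ (indeed $t_k=\tilde\Theta(w^{-1/2})$, so your ``$r\cdot t_k=O(rw^{-1/4})$'' is off). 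Decoding the extra stars costs nothing: since an alive block contains only $\ast$'s and $\circ$'s, any $\bullet$ seen in $(\rho\sigma)(S_a)$ must have been a $\ast$ in $\rho$. Only the at most $r$ positions that were stars set to $\circ$ (the set $\gamma_\ell$) need advice; stratifying the union bound over that advice by Hamming weight is what produces the $e^{rt_k/(1-t_k)}$ factor.
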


The proofs of Propositions~\ref{prop:psl-initial} and~\ref{prop:psl-typical-tau} have the same overall structure, and they share many of the same ingredients. We will only prove (the slightly more involved) Proposition~\ref{prop:psl-typical-tau}, and at the end of this section we point out the essential differences in the proof of Proposition~\ref{prop:psl-initial}.

Furthermore, we will prove Proposition~\ref{prop:psl-typical-tau} assuming that $F$ is a DNF and $d-k \equiv 0 \mod 2$. Both assumptions are without loss of generality. (For the first, we recall that $F$ is a width-$r$ DNF if and only if its Boolean dual $F^\dagger$ is a width-$r$ CNF, and that a Boolean function is computed by a depth-$s$ decision tree if and only if its Boolean dual is as well, and we observe that $(\proj_\rho\, F)^\dagger = \proj_\rho\, (F^\dagger)$ for all $\rho$ and all $F$.  For the second we note that the definition of $\calR(\tau)$ when $d-k\equiv 0 \mod 2$ is dual to that of $\calR(\tau)$ when $d-k\equiv 1\mod 2$, and so applying the former to $F(x)$ is equivalent to applying the latter to $F(\overline{x})$.)

\paragraph{Overview of proof.} At a high level, we adopt Razborov's strategy in his alternative proof~\cite{Raz95} of H{\aa}stad's Switching Lemma. We briefly recall the overall structure of Razborov's argument.  Given a DNF $F : \zo^n\to\zo$ and a distribution $\calR$ over restrictions in $\{0,1,\ast\}^n$, we let $\calB \sse \{0,1,\ast\}^n$ denote the set of all \emph{bad} restrictions, namely the ones such that $F \uhr \rho$ is not computed by a small-depth decision tree.  Our goal in a switching lemma is to bound $\Pr_{\brho\leftarrow\calR}[\brho\in \calB]$, the weight of $\calB$ under $\calR$.  To do so, we define an \emph{encoding} of each bad restriction $\rho \in \calB$ as a different restriction $\rho' \in \{0,1,\ast\}^n$ and a small amount (say at most $\ell$ bits) of ``auxiliary information'':
\begin{gather*}
\encode : \calB \rightarrow \{0,1,\ast\}^n \times \zo^{\ell} \\
 \encode(\rho) = (\rho', \text{auxiliary information}).
 \end{gather*}
This encoding should satisfy two key properties. First, it should be uniquely decodable, meaning that one is always able to recover $\rho$ given $\rho'$ and the auxiliary information; equivalently, the function $\encode(\cdot)$ is an injection.  Second, the weight $\Pr_{\brho\leftarrow\calR}[\brho = \rho']$ of $\rho'$ under $\calR$ should be larger than that of $\rho$ by a significant multiplicative factor (say by a factor of $\Gamma$).  It is not hard to see that together, these two properties imply that total weight of all bad restrictions with the \emph{same} auxiliary information is at most $1/\Gamma$.  To complete the proof of the switching lemma, we then bound the overall weight of $\calB$ via a union bound over all $2^\ell$ possible strings of auxiliary information. (For a detailed exposition of Razborov's proof technique see~\cite{Beame:94,Tha09} and Chapter \S14 of~\cite{AB09}.)

The proof of our projection switching lemma follows this high-level strategy quite closely; specifically, we build off of a reformulation (due to Thapen~\cite{Tha09}) of H{\aa}stad's proof of the blockwise variant of his Switching Lemma in Razborov's framework.  In Section~\ref{sec:encode} we define our encoding, specifying the restriction $\rho'$ and auxiliary information that is associated with every bad restriction $\rho$; in Section~\ref{sec:decode} we prove that our encoding is an injection by describing a procedure for unique decoding; in Section~\ref{sec:weight-diff}  we verify that every bad $\rho$ is indeed paired with a $\rho'$ whose weight under $\calR(\tau)$ is much larger, and show how this completes the proof of our projection switching lemma.

 One important aspect in which we differ from H{\aa}stad's and Razborov--Thapen's proof --- and indeed, this is the key distinction between our projection switching lemma and previous switching lemmas --- is that we will be concerned with the complexity of the randomly \emph{projected} DNF $\proj_{\brho}\,F \equiv \proj\,(F \uhr \brho)$, rather than the randomly \emph{restricted} DNF $F\uhr \brho$.  Recalling our definition of projections (Definition~\ref{def:projection}) and Remark~\ref{rem:equiv-projection} in particular, we see that the decision tree depth of $\proj\,(F\uhr \brho)$ can in general be significantly smaller than that of $F\uhr \brho$, since groups of distinct formal variables $\{ x_{a,i} \colon i\in [w]\}$  of $F \uhr \brho$ get mapped to the same formal variable $y_a$ under the projection operator.  As we will see, the proof of our projection switching lemma crucially exploits this fact.

\subsection{Canonical projection decision tree}

To emphasize the fact that the DNF $F$ and its random projection $\proj_\brho\,F$ are over two different spaces of formal variables, we will let $\calX = \{ x_{a,i} \colon a \in A_{k-1}, i \in [w_{k-1}]\} $ denote the formal variables of $F$, and $\calY = \{ y_a \colon a \in A_{k-1}\}$ denote the formal variables of $\proj_\brho\,F$.  For notational clarity, from this section through Section~\ref{sec:decode} we omit the subscripts on $A_{k-1}$ and $w_{k-1}$ and simply write $A$ and~$w$.

\begin{definition}Let $G : \zo^{A\times [w]}\to\zo$ be a DNF over $\calX$ and $T$ be a term in $G$. We say that a variable $x_{a,i}$ \emph{occurs positively in $T$} if $T$ contains the unnegated literal $x_{a,i}$, and that it \emph{occurs negatively in $T$} if $T$ contains the negated literal $\overline{x}_{a,i}$.  We say that $x_{a,i}$ \emph{occurs in $T$} if it either occurs positively or negatively in $T$.
\end{definition}

\begin{definition}
For any $\eta\sse \calY$ and assignment $\pi \in \zo^{\eta}$, the restriction $(\eta \mapsto\pi) \in \{0,1,\ast\}^{A\times [w]}$ to the variables in $\calX$ is defined as follows: for all $a \in A$ and $i\in [w]$,
\[ (\eta\mapsto\pi)_{a,i} = \left\{
\begin{array}{cl}
\pi(y_a) & \text{if $y_a \in \eta$} \\
\ast & \text{otherwise.}
\end{array}
\right.\]
\end{definition}

We stress that for a given $a$, the value of $(\eta \mapsto \pi)_{a,i}$ is independent of the value of
$i \in [w].$

Next, we define a procedure which, given any DNF $G$ over $\calX$, returns a ``canonical'' decision tree $\Tree(G)$ over $\calY$ computing its projection $\proj\, G$. The proof of our switching lemma will establish that the depth of $\Tree(F\uhr\brho)$ is small with high probability; this clearly implies that the decision tree depth of $\proj_\brho\,F \equiv \proj\, (F\uhr \brho)$ is small with high probability. (We remark that both H{\aa}stad's and Razborov's proofs of H{\aa}stad's Switching Lemma consider an analogous notion of a canonical decision tree whose depth they bound; in their context, however, the canonical decision tree computes the DNF itself, whereas the canonical decision tree we now define computes the \emph{projection} of the DNF.)
\begin{definition}[Canonical projection decision tree]
\label{def:proj-tree}
Let $G : \zo^{A\times [w]}\to\zo$ be a DNF over $\calX$,  where we assume a fixed but arbitrary ordering on its terms, and likewise on the literals within each term. The \emph{canonical projection decision tree} $\Tree(G) : \zo^A\to\zo$ associated with $G$ is defined recursively as follows:
\begin{enumerate}
\item If $G \equiv 1$ (i.e.~if $G(X) = 1$ for all $X\in\zo^{A \times [w]}$) output the trivial decision tree $\Tree(G) \equiv 1$, and likewise, if $G \equiv 0$ output $\Tree(G) \equiv 0$.
\item Otherwise, let $T$ be the first term in $G$ such that $T \not\equiv 0$, and let
\[ \eta = \big\{ y_a \colon x_{a,i} \text{ occurs in $T$ for some $i\in [w]$}\big\} \sse \calY \]
\item $\Tree(G)$ queries all the variables in $\eta$ in its first $|\eta|$ levels.
\item For each path $\pi \in \zo^{\eta}$, recurse on $G \uhr (\eta\mapsto\pi)$.
\end{enumerate}
\end{definition}

We stress that while $G$ is a DNF over the variables in $\calX$, the canonical projection decision tree $\Tree(G)$ queries variables in $\calY$. The following fact is a straightforward consequence of Definition~\ref{def:proj-tree}:

\begin{fact}
\label{fact:tree-computes-projection}
$\Tree(G)$ computes $\proj\, G$.
\end{fact}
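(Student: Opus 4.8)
To prove Fact~\ref{fact:tree-computes-projection}, the plan is to establish the pointwise identity $\Tree(G)(y) = (\proj\,G)(y)$ for every DNF $G$ over $\calX$ and every $y \in \zo^A$, by induction on the number of \emph{alive} terms of $G$, where a term $T$ is alive if $T \not\equiv 0$. For the base case, if $G \equiv 1$ or $G \equiv 0$ then $\proj\,G$ is the same constant function (substituting $x_{a,i} := y_a$ into a constant function returns that constant), and by step~1 of Definition~\ref{def:proj-tree} the tree $\Tree(G)$ is exactly that constant, so the identity holds.

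For the inductive step, assume $G \not\equiv 0$ and $G \not\equiv 1$, and let $T$ and $\eta = \{\,y_a : x_{a,i} \text{ occurs in } T \text{ for some } i\,\}$ be as in Definition~\ref{def:proj-tree}. Fix $y \in \zo^A$ and let $\pi \in \zo^\eta$ be the assignment with $\pi(y_a) = y_a$ for each $y_a \in \eta$ (using the $a$-th coordinate of the point $y$). Since $\Tree(G)$ queries precisely the variables of $\eta$ in its top $|\eta|$ levels and then recurses on $G \uhr (\eta \mapsto \pi)$, we have $\Tree(G)(y) = \Tree\bigl(G \uhr (\eta \mapsto \pi)\bigr)(y)$. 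The key observation is that $(\eta \mapsto \pi)$ fixes $x_{a,i}$ to $\pi(y_a) = y_a$ whenever $y_a \in \eta$ and leaves $x_{a,i}$ free otherwise, so unwinding the definitions of $\proj$ and $\uhr$ gives
\[ \bigl(\proj\,(G \uhr (\eta \mapsto \pi))\bigr)(y) = (\proj\,G)(y), \]
since evaluating the projection at $y$ substitutes $x_{a,i} := y_a$ on every coordinate anyway. Hence it suffices to apply the induction hypothesis to $G \uhr (\eta \mapsto \pi)$, which requires only that this restricted DNF have strictly fewer alive terms than $G$ (or be a constant, reducing to the base case).

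The one point needing care is exactly that last reduction of the complexity measure, and here is where the collisions among formal variables matter: because $T$ only involves variables in blocks indexed by $\eta$, and $(\eta \mapsto \pi)$ assigns a constant to every such block, every literal of $T$ becomes a constant under $(\eta \mapsto \pi)$. I would check that $T \uhr (\eta \mapsto \pi)$ is therefore genuinely a Boolean constant — note that a term containing $x_{a,i}$ and $\overline{x_{a,j}}$ already collapses to $0$ once both are set to $\pi(y_a)$. If this constant is $1$ then $G \uhr (\eta \mapsto \pi) \equiv 1$ and we are in the base case; if it is $0$ then the alive-term count drops by at least one, since $T$ is killed while no term that was $\equiv 0$ can be resurrected by a restriction. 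This makes the measure well-founded (and in particular shows the recursion in Definition~\ref{def:proj-tree} terminates). Everything else is a mechanical unwinding of the definitions of $\proj$, $\uhr$, and $\Tree$.
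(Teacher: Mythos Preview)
Your argument is correct; the induction on the number of non-falsified terms, together with the observation that $(\proj\,(G\uhr(\eta\mapsto\pi)))(y)=(\proj\,G)(y)$ when $\pi$ agrees with $y$ on $\eta$, is exactly the routine verification one would expect. The paper in fact gives no proof at all---it simply declares the fact ``a straightforward consequence of Definition~\ref{def:proj-tree}''---so your write-up is already more detailed than what appears there.
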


\subsection{Encoding bad restrictions}
\label{sec:encode}

Fix $\tau \in \{0,1,\ast\}^{A\times [w]}$, and consider
\[ \calB = \big\{ \rho \in \{0,1,\ast\}^{A\times [w]}\colon \text{$\rho$ refines $\tau$ and }\proj_\rho\,F \text{ is not a depth-$s$ decision tree}\big\}, \]
We call these restrictions $\rho\in \calB$ \emph{bad}, and recall that our goal is to bound $\Pr[\brho\in\calB]$ for $\brho\leftarrow\calR(\tau)$.  Fix a bad restriction $\rho \in \calB$. It will be convenient for us to adopt the equivalent view of $\proj_\rho\,F$ as $\proj\,(F\uhr \rho)$ in this section. Since $\proj\,(F\uhr\rho)$ is not computed by a depth-$s$ DT over $\calY$, this in particular implies that the canonical projection decision tree $\Tree(F\uhr \rho)$ has depth at least $s$ (recall by Fact~\ref{fact:tree-computes-projection} that $\Tree({F\uhr \rho})$ computes $\proj\,(F\uhr \rho))$, and so we may let $\pi \in \zo^{\ge s}$ be the leftmost root-to-leaf path of length at least $s$ in $\Tree({F\uhr \rho})$. \medskip

We now define a few objects associated with $\rho$ and $\pi$: for some $1\le j\le s$, we define
\begin{itemize}
\item A collection of terms $T_1,\ldots,T_j$ in $F$.
\item Disjoint sets of variables $\eta_1,\ldots,\eta_j \sse \calY$, and for each such $\eta_\ell$, a bit string $\encode(\eta_\ell) \in \zo^{|\eta_\ell|(\log r+1)}$.
\item A restriction $\sigma = \sigma^1 \sigma^2 \cdots \sigma^j \in \{0,1,\ast\}^{A\times [w]}$ such that $\sigma^{-1}(\{0,1\}) \sse \rho^{-1}(\ast)$ (i.e.~$\sigma$ only sets to constants variables left free by $\rho$).
\item Disjoint sets of variables $\gamma_1,\ldots,\gamma_j \sse \calX$, and for each such $\gamma_\ell$, a bit string $\encode(\gamma_\ell)$ of Hamming weight $|\gamma_\ell|$ and length $r$.
\item A decomposition of the length-$s$ prefix $\pi' = \pi^1\pi^2 \cdots \pi^j \in \zo^s$ of $\pi$.
\end{itemize}
These objects are defined inductively starting from $\ell = 1$ up to $\ell = j$, where $j \in [s]$ is the smallest integer such that the $\eta_\ell$'s as defined below satisfy $|\eta_1 \cup\cdots \cup \eta_j| \ge s$. For $\ell \in [j]$,

\begin{itemize}[leftmargin = 0.5cm]
\item $T_\ell$ is the first term in $F$ such that $T_\ell \uhr \rho\,(\eta_1\mapsto\pi^1) \cdots (\eta_{\ell-1}\mapsto \pi^{\ell-1}) \not\equiv 0$
 and
\[ \eta_\ell = \big\{ y_a \colon x_{a,i} \text{ occurs in $T_\ell \uhr \rho\,(\eta_1\mapsto \pi^1) \cdots (\eta_{\ell-1}\mapsto \pi^{\ell-1})$ for some $i\in [w]$}\big\} \sse \calY. \]
We define $\encode(\eta_\ell) \in \zo^{|\eta_\ell|(\log r + 1)}$ as follows: for each $y_a \in \eta_\ell$, we use $\log|T_\ell|\le \log r$ bits to encode the location of $x_{a,i_1}$ in $T_\ell$, where
\[ i_1 := \min\big\{ i \in [w] \colon x_{a,i} \text{ occurs in $T_\ell \uhr \rho\,(\eta_1\mapsto \pi^1) \cdots (\eta_{\ell-1}\mapsto \pi^{\ell-1}) $}\big\}, \]
along with a single bit to  indicate whether $y_a$ is the last variable in $\eta_\ell$.

\item Let $\sigma^\ell \in \{0,1,\ast\}^{A\times [w]}$ be defined as follows: for each $y_a \in \eta_\ell$ and $i \in [w]$,
\[
\sigma^\ell_{a,i} = \left\{
\begin{array}{cl}
1 & \text{if $x_{a,i} \text{ occurs positively in } T_\ell \uhr \rho\,(\eta_1\mapsto\pi^1) \cdots (\eta_{\ell-1}\mapsto \pi^{\ell-1})$,} \\
0 & \text{if $x_{a,i} \text{ occurs negatively in } T_\ell \uhr\rho\,(\eta_1\mapsto \pi^1) \cdots (\eta_{\ell-1}\mapsto \pi^{\ell-1})$,} \\
0 & \text{if $\rho_{a,i} = \ast$ and $x_{a,i} \text{ does not occur in } T_\ell \uhr\rho\,(\eta_1\mapsto \pi^1) \cdots (\eta_{\ell-1}\mapsto \pi^{\ell-1})$.}
\end{array}
\right.
\]


(Note that if $x_{a,i}$ occurs in $T_\ell \uhr \rho\,(\eta_1\mapsto\pi^1) \cdots (\eta_{\ell-1}\mapsto \pi^{\ell-1})$, then certainly $\rho_{a,i} = \ast$.)  All remaining entries of $\sigma^\ell$ not specified above have value $\ast$.

We make a few observations that will be useful for us later. First observe that for every $y_a \in \eta_\ell$, 
\begin{enumerate}
\item[(i)] $(\rho\,(\eta_1\mapsto\pi^1) \cdots (\eta_{\ell-1}\mapsto \pi^{\ell-1}))_a \equiv \rho_a$
\end{enumerate} 
since $y_a \notin \eta_1 \cup \cdots \cup \eta_{\ell-1}$. Furthermore, writing $S_a = S_a(\tau)$ to denote $\tau_a^{-1}(\ast) = \{i\in [w]\colon \tau_{a,i} = \ast\}$ and $\rho(S_a)$  to denote the substring of $\rho_a$ with coordinates in $S_a$, we claim that for every $y_a \in \eta_\ell$, 
\begin{enumerate}
\item[(ii)] $\tau_a \in \{ \ast,1\}^{w}\setminus \{1\}^w$, 
\item[(iii)] the set $S_a$ is $k$-acceptable,
\item[(iv)]  $\rho(S_a) \in \{\ast,1\}^{S_a} \setminus \{1\}^{S_a}$ (and hence $\rho_a \in \{\ast,1\}^{w}\setminus \{1\}^w$ by (ii)),
\item[(v)] $(\rho\sigma^\ell)(S_a)  \in \{ 0,1\}^{S_a}$,
\item[(vi)] $(\rho(S_a))^{-1}(1) \sse ((\rho\sigma^\ell)(S_a))^{-1}(1)$.
\end{enumerate} 
To see this, first note that since $y_a \in \eta_\ell$ it must be the case that $\rho_{a,i} =\ast$ for at least one $i \in S_a$, and by inspecting (\ref{eq:OW-restriction}) of Definition~\ref{def:main-projection} we have that indeed (ii), (iii), and (iv) hold. Claims (v) and (vi) follow from the fact that $\sigma^{\ell}$ is defined so that $\sigma^{\ell}_{a,i} \in \zo$ iff $\rho_{a,i} = \ast$. These claims will be useful for us later in the proof of Lemma~\ref{lem:fix-other-outcomes}.

Second, we claim that
\begin{equation} T_\ell \uhr\rho\,(\eta_1\mapsto \pi^1) \cdots (\eta_{\ell-1}\mapsto \pi^{\ell-1}) \sigma^{\ell} \equiv 1. \label{eq:satisfies-term}
\end{equation}
To see this, we note that every variable that occurs in term $T_\ell \uhr \rho\,(\eta_1\mapsto\pi^1) \cdots (\eta_{\ell-1}\mapsto \pi^{\ell-1})$ is fixed by $\sigma^\ell$, and furthermore, each is fixed in the unique way so as to satisfy the term. This will be useful for us later in the proof of Proposition~\ref{prop:injection}.

\[
\gamma_\ell = \{x_{a,i}: \sigma^\ell_{a,i}=1\} \sse \calX,
\]
and let $\encode(\gamma_\ell)$  be the string $\encode(\gamma_\ell) \in \{0,1\}^r$ of Hamming weight $|\gamma_\ell|$ and length $r$  indicating the location of the elements of $\gamma_\ell$ within $T_\ell$.
\item Let $\pi^\ell$ be the length-$|\eta_\ell|$ substring of $\pi$ from index $|\eta_1 \cup\cdots\cup \eta_{\ell-1}|+ 1$ through $|\eta_1 \cup\cdots \cup \eta_\ell|$ inclusive.
\end{itemize}

After the final iteration $\ell = j$, if necessary, we trim $\eta_j$ and $\pi^j$ so that $|\eta_1 \cup\cdots \cup \eta_j| = |\pi^1\cdots \pi^j|$ is \emph{exactly} $s$,  and redefine $\sigma^j$ and $\gamma_j$ appropriately. We refer the reader to Figure~\ref{figure:PSL-figure} and its caption for a concrete example and explanation of our encoding procedure.
\begin{figure}[p!]
\begin{center}
\includegraphics[width=10cm]{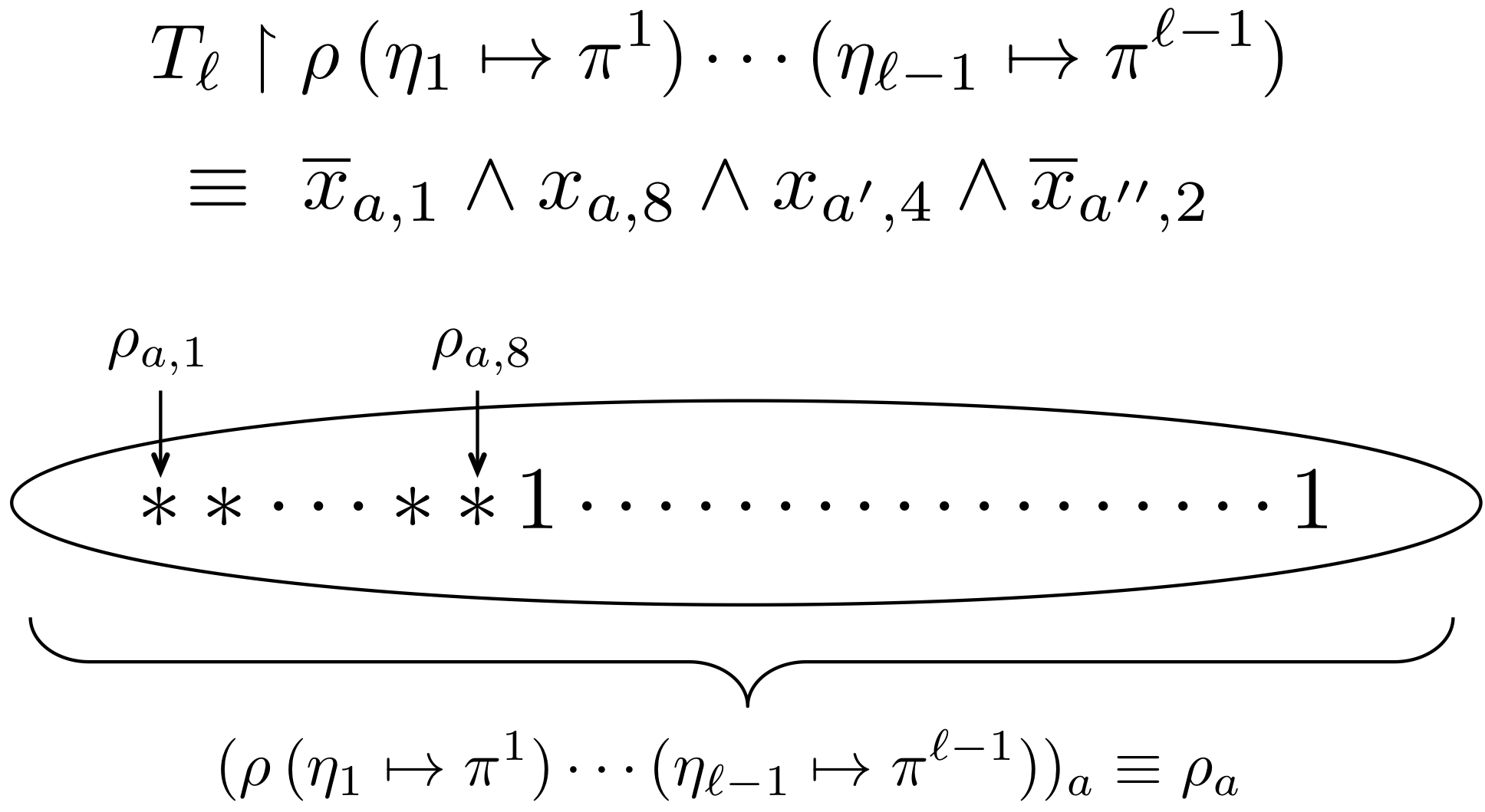}
\end{center}
\caption{Let $T_\ell$ be the first term not falsified by $\rho\,(\eta_1\mapsto\pi^1)\cdots (\eta_{\ell-1}\mapsto\pi^{\ell-1})$, and suppose it evaluates to  $\overline{x}_{a,1} \wedge x_{a,8} \wedge x_{a',4} \wedge \overline{x}_{a'',2}$.  In this example $\eta_\ell$ will be the set $\{ y_a,y_{a'}, y_{a''}\}\sse \calY$.  Focusing on variables from the $a$-th block, we first recall our observation earlier that $(\rho\,(\eta_1\mapsto\pi^1)\cdots (\eta_{\ell-1}\mapsto\pi^{\ell-1}))_a \equiv \rho_a$ since $y_a \notin \eta_1 \cup \cdots\cup \eta_{\ell-1}$ (Claim (i) in Section~\ref{sec:decode}). Furthermore, as illustrated above, we have that $\rho_a \in \{\ast,1\}^w \setminus \{1\}^w$ and $\rho_a$ refines $\tau_a \in \{\ast,1\}^{w} \setminus \{1\}^w$ (Claims (ii) and (iv) of Section~\ref{sec:decode}). \\ \hspace*{10pt} Since $x_{a,1}$ and $x_{a,8}$ occur in $T_\ell \uhr \rho\,(\eta_1\mapsto\pi^1)\cdots (\eta_{\ell-1}\mapsto\pi^{\ell-1})$ it certainly must be the case that $\rho_{a,1} = \rho_{a,8} = \ast$; there may also be other coordinates $i\in [w]$ such that $\rho_{a,i} = \ast$ and $x_{a,i}$ does not occur in $T_\ell \uhr \rho\,(\eta_1\mapsto\pi^1)\cdots (\eta_{\ell-1}\mapsto\pi^{\ell-1})$ (coordinates $2$ through $7$ in our example above).  For $i\in [w]$ such that $\rho_{a,i} = \ast$ and $x_{a,i}$ occurs in $T_\ell \uhr \rho\,(\eta_1\mapsto\pi^1)\cdots (\eta_{\ell-1}\mapsto\pi^{\ell-1})$, the restriction $\sigma^\ell$ fixes $x_{a,i}$ so as to partially satisfy $T_\ell \uhr \rho\,(\eta_1\mapsto\pi^1)\cdots (\eta_{\ell-1}\mapsto\pi^{\ell-1})$: in our example above, $\sigma^{\ell}_{a,1} = 0$ (since $x_{a,1}$ occurs negatively) whereas $\sigma^{\ell}_{a,8} = 1$ (since $x_{a,8}$ occurs positively).  The remaining variables $x_{a,2},\ldots,x_{a,7}$ are set to $0$ by $\sigma^{\ell}$, yielding a completely fixed block $(\rho\sigma^\ell)_a \in \zo^{w}$ (Claim (v) in Section~\ref{sec:decode}). Intuitively, we ``break symmetry'' and set these variables to $0$ (rather than $1$) so that the decoder will be able to ``undo'' them in $(\rho\sigma^\ell)_a$ without any auxiliary information: since $\rho_a \in \{\ast,1\}^{w}\setminus \{1\}^w$, the decoder readily infers $\rho_{a,i} = \ast$ for all $i\in [w]$ such that $(\rho\sigma^\ell)_{a,i} = 0$.  And indeed, for the set $\gamma_\ell \sse \calX$ of variables $x_{a,i}$ that are set to $1$ by $\sigma^{\ell}$, we provide the decoder with the auxiliary information $\encode(\gamma_\ell)$ so that she is able to ``undo'' them in $(\rho\sigma^\ell)_a$.}
\label{figure:PSL-figure}
\end{figure}

\subsection{Decodability}
\label{sec:decode}

Let $\eta = \eta_1 \cup \cdots \cup\eta_j$, $\encode(\eta) = \encode(\eta_1) \cdots \encode(\eta_j) \in \zo^{s(1+\log r)}$, $\sigma = \sigma^1\cdots \sigma^j$, $\gamma = \gamma_1 \cup \cdots \cup\gamma_j$, $\encode(\gamma) = \encode(\gamma_1) \cdots \encode(\gamma_j) \in \zo^{rs}$, and $\pi' = \pi^1 \cdots \pi^j \in \zo^s$. Our main result in this subsection is the following proposition:

\begin{proposition}
\label{prop:injection}
The map $\theta : \calB \to \{0,1,\ast\}^{A\times [w]} \times \{0,1\}^s \times \{0,1\}^{s(1+\log r)} \times \{0,1\}^{rs},$
\[ \theta(\rho) = (\rho\sigma,\pi',\encode(\eta),\encode(\gamma)), \]
is an injection.
\end{proposition}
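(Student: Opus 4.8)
The plan is to prove injectivity by exhibiting an explicit decoding procedure: a partial map $\mathrm{dec}$ on quadruples $(\mu,\pi',\encode(\eta),\encode(\gamma))$ for which $\mathrm{dec}(\theta(\rho)) = \rho$ for every $\rho\in\calB$; since a map admitting a left inverse is injective, this suffices. The decoder would run in stages $\ell = 1,2,\dots$, reconstructing at stage $\ell$ the objects $T_\ell$, $\eta_\ell$, $\pi^\ell$, $\gamma_\ell$ (and thereby $\sigma^\ell$) together with the values $\rho_a$ on every block $a$ with $y_a\in\eta_\ell$. The invariant, to be established by induction on $\ell$, is that at the start of stage $\ell$ the decoder holds the restriction
\[ \rho^{(\ell)} \ := \ \rho\,(\eta_1\mapsto\pi^1)\cdots(\eta_{\ell-1}\mapsto\pi^{\ell-1})\,\sigma^\ell\sigma^{\ell+1}\cdots\sigma^j, \]
whose base case $\rho^{(1)} = \rho\sigma = \mu$ is exactly the first coordinate of the input. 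The process halts once the reconstructed sets satisfy $|\eta_1\cup\cdots\cup\eta_\ell| = s$, at which point $j:=\ell$; since $\sigma$ modifies coordinates only in the $\eta$-blocks, $\rho$ agrees with $\mu$ on every block not indexed by $\eta_1\cup\cdots\cup\eta_j$, so $\rho$ is then fully recovered.

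The first task at stage $\ell$ is to recover $T_\ell$: the decoder takes it to be the first term of $F$ that is not identically $0$ under $\rho^{(\ell)}$. This is correct for two reasons. Every term $T'$ preceding $T_\ell$ in $F$ satisfies $T'\uhr\rho\,(\eta_1\mapsto\pi^1)\cdots(\eta_{\ell-1}\mapsto\pi^{\ell-1})\equiv 0$ by the encoder's choice of $T_\ell$, and since $\rho^{(\ell)}$ is a refinement of that restriction, the falsified literal of $T'$ remains falsified, so $T'\uhr\rho^{(\ell)}\equiv 0$. On the other hand $T_\ell\uhr\rho\,(\eta_1\mapsto\pi^1)\cdots(\eta_{\ell-1}\mapsto\pi^{\ell-1})\sigma^\ell\equiv 1$ by (\ref{eq:satisfies-term}), and restricting a constant function further leaves it constant, so $T_\ell\uhr\rho^{(\ell)}\equiv 1$; in particular it is not killed. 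With $T_\ell$ recovered, the decoder parses the next block of $\encode(\eta)$: each group of $\log r$ bits points to a literal position of $T_\ell$ and hence names an element $y_a$ of $\eta_\ell$, and an appended bit signals when $\eta_\ell$ is exhausted. This recovers $\eta_\ell$ and $|\eta_\ell|$, which in turn locates the next block of $\encode(\eta)$, lets the decoder read $\pi^\ell$ as the next $|\eta_\ell|$ bits of $\pi'$, and read $\encode(\gamma_\ell)$ as the $\ell$-th length-$r$ block of $\encode(\gamma)$.

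The heart of the argument is the recovery of $\rho_a$ for a block $a$ with $y_a\in\eta_\ell$. Because the $\eta$'s are pairwise disjoint, none of the maps $(\eta_{\ell'}\mapsto\pi^{\ell'})$ or restrictions $\sigma^{\ell'}$ with $\ell'\ne\ell$ touch block $a$, so $(\rho^{(\ell)})_a = (\rho\sigma^\ell)_a$. Outside $S_a = \tau_a^{-1}(\ast)$ this block equals $\tau_a$ and hence $\rho_a$; within $S_a$ it lies in $\zo^{S_a}$ by Claim~(v), while Claim~(iv) guarantees that $\rho$ itself takes only the values $\ast$ and $1$ there. Combining these with the definition of $\sigma^\ell$ --- which overwrites an $\ast$-coordinate $\rho_{a,i}$ by $1$ exactly when $x_{a,i}$ occurs positively in $T_\ell\uhr\rho\,(\eta_1\mapsto\pi^1)\cdots(\eta_{\ell-1}\mapsto\pi^{\ell-1})$, and by $0$ when $x_{a,i}$ occurs negatively there or does not occur at all --- yields, for each $i\in S_a$, the dichotomy: if $(\rho\sigma^\ell)_{a,i}=0$ then $\rho_{a,i}=\ast$; and if $(\rho\sigma^\ell)_{a,i}=1$ then $\rho_{a,i}=\ast$ when $x_{a,i}\in\gamma_\ell$ and $\rho_{a,i}=1$ otherwise (the content of Claim~(vi)). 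Since the decoder knows $(\rho\sigma^\ell)_a$ and $T_\ell$, and recovers $\gamma_\ell$ on block $a$ from $\encode(\gamma_\ell)$ and $T_\ell$, it reconstructs $\rho_a$ exactly. It then forms $\rho^{(\ell+1)}$ by applying $(\eta_\ell\mapsto\pi^\ell)$ to the $\eta_\ell$-blocks of $\rho^{(\ell)}$ just recovered --- legitimate now that $\pi^\ell$ is known --- which preserves the invariant and completes the induction.

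I expect the main obstacle to be precisely this block-recovery step, since it is here that the otherwise arbitrary-looking choices in the encoding pay off: that every free, non-occurring coordinate of an $\eta_\ell$-block is set to $0$ (so the decoder can recognize it as having been $\ast$ with no side information), and the structural facts (ii)--(vi) forcing $\rho$ to be $\{\ast,1\}$-valued and $k$-acceptable on the relevant sets $S_a$. Verifying the dichotomy rigorously requires a careful case analysis of the possible values of $(\rho\sigma^\ell)_{a,i}$ as a joint function of $\rho_{a,i}$ and of the role --- positive, negative, or absent --- of $x_{a,i}$ in $T_\ell\uhr\rho\,(\eta_1\mapsto\pi^1)\cdots(\eta_{\ell-1}\mapsto\pi^{\ell-1})$. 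The remaining ingredients --- identifying $T_\ell$, parsing the auxiliary strings, and maintaining the invariant on $\rho^{(\ell)}$ --- are a routine induction.
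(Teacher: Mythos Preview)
Your proof is correct and follows essentially the same decoding strategy as the paper's: maintain the hybrid restriction $\rho^{(\ell)}$, identify $T_\ell$ as the first non-falsified term, parse the auxiliary strings to recover $\eta_\ell,\pi^\ell,\gamma_\ell$, then undo $\sigma^\ell$ on the $\eta_\ell$-blocks via the $\{\ast,1\}$-dichotomy and advance. One small point: your appeal to (\ref{eq:satisfies-term}) to conclude $T_\ell\uhr\rho^{(\ell)}\equiv 1$ is not quite right at the final stage $\ell=j$, where trimming of $\eta_j$ may leave $\sigma^j$ only partially satisfying $T_j$; the paper isolates this in Lemma~\ref{lem:does-not-falsify}, and the weaker conclusion $T_j\uhr\rho^{(j)}\not\equiv 0$ is all you need to identify $T_j$.
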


Before proving Proposition~\ref{prop:injection}, we state a slight extension of an observation made above in the definition of $\sigma^\ell$:

\begin{lemma}
\label{lem:does-not-falsify}
For all $1\le \ell \le j-1$ we have
\[ T_\ell \uhr  \rho\,(\eta_1\mapsto \pi^1) \cdots (\eta_{\ell-1}\mapsto \pi^{\ell-1})\,\sigma^\ell \cdots \sigma^j\equiv 1, \]
and when $\ell = j$ we have $T_j \uhr  \rho\,(\eta_1\mapsto \pi^1) \cdots (\eta_{j-1}\mapsto \pi^{j-1})\,\sigma^j \not\equiv 0$.
\end{lemma}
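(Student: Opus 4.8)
The plan is to prove both claims together by a downward induction on $\ell$ (or, equivalently, a forward induction that tracks what each $\sigma^{\ell'}$ does to $T_\ell$), using the two facts recorded during the definition of $\sigma^\ell$ in Section~\ref{sec:encode}: namely~(\ref{eq:satisfies-term}), which says
\[
T_\ell \uhr \rho\,(\eta_1\mapsto\pi^1)\cdots(\eta_{\ell-1}\mapsto\pi^{\ell-1})\,\sigma^\ell \equiv 1,
\]
and the disjointness of the variable sets $\gamma_1,\dots,\gamma_j$ (hence of the supports of $\sigma^1,\dots,\sigma^j$ viewed as partial assignments). First I would observe that once a term has simplified to the constant $1$ under some restriction, applying any further restriction leaves it identically $1$; so it suffices to show that appending $\sigma^{\ell+1}\cdots\sigma^j$ after $\sigma^\ell$ does not ``un-satisfy'' $T_\ell$. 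Since $\sigma^{\ell'}$ for $\ell' > \ell$ only fixes variables in $\gamma_{\ell'}$ (to $1$) together with the ``break-symmetry'' $0$-assignments in blocks indexed by $\eta_{\ell'}$, and these variable sets are disjoint from the variables occurring in $T_\ell \uhr \rho\,(\eta_1\mapsto\pi^1)\cdots(\eta_{\ell-1}\mapsto\pi^{\ell-1})$ — because every such variable lies in a block $y_a \in \eta_\ell$, and the $\eta$'s are pairwise disjoint by construction — the composition $\sigma^{\ell+1}\cdots\sigma^j$ acts trivially on the already-satisfied term. This gives $T_\ell \uhr \rho\,(\eta_1\mapsto\pi^1)\cdots(\eta_{\ell-1}\mapsto\pi^{\ell-1})\,\sigma^\ell\cdots\sigma^j \equiv 1$ for all $\ell \le j-1$.

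For the $\ell = j$ case, the statement is weaker: we only need $T_j \uhr \rho\,(\eta_1\mapsto\pi^1)\cdots(\eta_{j-1}\mapsto\pi^{j-1})\,\sigma^j \not\equiv 0$. This is immediate from the choice of $T_j$ as the \emph{first} term in $F$ that is not falsified by $\rho\,(\eta_1\mapsto\pi^1)\cdots(\eta_{j-1}\mapsto\pi^{j-1})$, i.e.\ $T_j \uhr \rho\,(\eta_1\mapsto\pi^1)\cdots(\eta_{j-1}\mapsto\pi^{j-1}) \not\equiv 0$, combined with the fact that $\sigma^j$ is defined precisely to fix the variables still occurring in this term in the unique way that satisfies each of its literals (the first two branches in the definition of $\sigma^\ell$). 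Hence the surviving term simplifies further toward $1$, and in particular cannot become $0$; if the trimming step at the end shortened $\eta_j$ and $\sigma^j$ then $\sigma^j$ fixes only a subset of those variables, still in the term-satisfying direction, so the term remains non-falsified. (One should note the caveat that after trimming, $T_j$ need not become the constant $1$, which is exactly why the $\ell = j$ claim is stated with $\not\equiv 0$ rather than $\equiv 1$.)

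The only mildly delicate point — and the step I'd expect to need the most care — is the disjointness bookkeeping: one must be sure that a variable $x_{a,i}$ occurring in $T_\ell \uhr \rho\,(\eta_1\mapsto\pi^1)\cdots(\eta_{\ell-1}\mapsto\pi^{\ell-1})$ really does have $y_a \in \eta_\ell$ (true by definition of $\eta_\ell$), and that no later $\eta_{\ell'}$ with $\ell' > \ell$ revisits block $a$ (true because the $\eta$'s are constructed to be pairwise disjoint, and $\sigma^{\ell'}$ touches only blocks indexed by $\eta_{\ell'}$ together with variables in $\gamma_{\ell'} \subseteq \calX$ lying in those same blocks). Once that is in hand, the rest is a one-line observation about constants absorbing restrictions, and the lemma follows.
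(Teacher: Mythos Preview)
Your proof is correct and follows essentially the same approach as the paper. The one remark worth making is that your disjointness bookkeeping is unnecessary: you yourself observe at the outset that once a term has simplified to the constant $1$, \emph{any} further restriction leaves it identically $1$---so after invoking~(\ref{eq:satisfies-term}) there is literally nothing left to check for $\ell \le j-1$, and the paper's proof stops right there. Your subsequent argument that $\sigma^{\ell+1}\cdots\sigma^j$ avoids the blocks in $\eta_\ell$ is true but irrelevant, since even a restriction that \emph{did} touch those variables could not ``un-satisfy'' a constant; the $\ell=j$ case you handle just as the paper does, via the trimming caveat.
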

\begin{proof}
As we observed in the definition of $\sigma^\ell$ above (c.f.~(\ref{eq:satisfies-term})), we have that $\sigma^\ell$ is designed so that
\[ T_\ell \uhr  \rho\,(\eta_1\mapsto \pi^1) \cdots (\eta_{\ell-1}\mapsto \pi^{\ell-1})\,\sigma^\ell \equiv 1,\]
and certainly this remains true when the restriction is further extended by $\sigma^{\ell+1}\cdots \sigma^{j}$.  We do not necessarily have this property for $\ell = j$ due to our possible trimming of $\eta_j$ so that $\eta_1 \cup\cdots\cup\eta_j$ has cardinality exactly $s$; this results in a redefinition of $\sigma^j$ where some of its coordinates are set from $\{0,1\}$ back to $\ast$. However it is still the case that $\sigma^j$ partially satisfies $T_j \uhr  \rho\,(\eta_1\mapsto \pi^1) \cdots (\eta_{{j-1}}\mapsto \pi^{j-1})$, and hence $T_j \uhr  \rho\,(\eta_1\mapsto \pi^1) \cdots (\eta_{{j-1}}\mapsto \pi^{j-1})\,\sigma^j \not\equiv 0$.
\end{proof}

\begin{proof}[Proof of Proposition~\ref{prop:injection}]

We prove the proposition by describing a procedure that allows a ``decoder'' to uniquely obtain $\rho$ given $(\rho\sigma,\pi',\encode(\eta),\encode(\gamma)).$
Recall that $T_1$ is defined to be the first term in $F$ not falsified by $\rho$. By Lemma~\ref{lem:does-not-falsify}, this remains true when $\rho$ is extended by $\sigma$: that is, the first term $T_1'$ in $F$ such that $T_1' \uhr \rho\sigma \not\equiv 0$ is precisely $T_1$ itself.  Therefore, given $\rho\sigma$  the decoder is able to identify $T_1$ in $F$, and with $T_1$ in hand she is able to then use $\encode(\eta_1)$ and $\encode(\gamma_1)$ to recover $\eta_1$ and $\gamma_1$ respectively.  Next, she ``undoes'' $\sigma^1$ in $\rho\sigma = \rho\sigma^1\sigma^2\cdots \sigma^j$ and obtains $\rho \sigma^2 \cdots \sigma^j$ as follows: for every $y_a\in\eta_1$, she sets $(\rho\sigma)_{a,i}$ back to $\ast$ for all $i\in U_a$, where
\[ U_a = \{ i \in [w] \colon \text{$(\rho\sigma)_{a,i} = 0$ or $x_{a,i} \in \gamma_1$} \}. \]
To see that this indeed ``undoes'' $\sigma^1$, first recall that for every $y_a\in\eta_1$, the restriction $\sigma^1$ is defined so that $\sigma^1_{a,i} \in \zo$ iff $\rho_{a,i} = \ast$, and furthermore, $\sigma^1_{a,i} = 1$ iff $x_{a,i} \in \gamma_1$. (Recall the example in Figure~\ref{figure:PSL-figure}.)  Therefore, to obtain $\rho\sigma^2\cdots\sigma^j$ from $\rho\sigma^1\sigma^2\cdots \sigma^j$, for every $y_a \in\eta_1$ and $i\in [w]$ the decoder sets $(\rho\sigma)_{a,i}$ back to $\ast$ if either $(\rho\sigma)_{a,i} = 0$ or $x_{a,i} \in \gamma_1$.
Finally, using $\pi^1 \in \zo^{\eta_1}$ she constructs the hybrid restriction $\rho\,(\eta_1 \mapsto\pi^1)\, \sigma^2\cdots \sigma^j$.

By the same reasoning, for every $2 \le \ell \le j$ the decoder is able to iteratively recover $T_\ell,\eta_\ell,\gamma_\ell$, and $\pi^\ell$ from the hybrid restriction \[ \rho\,(\eta_1\mapsto \pi^1) \cdots (\eta_{\ell-1}\mapsto \pi^{\ell-1})\,\sigma^\ell \cdots \sigma^j. \]
 With this information she ``undoes'' $\sigma^\ell$ within $\rho\,(\eta_1\mapsto \pi^1) \cdots (\eta_{\ell-1}\mapsto \pi^{\ell-1})\,\sigma^\ell \cdots \sigma^j$, and constructs the next hybrid restriction \[ \rho\,(\eta_1\mapsto \pi^1) \cdots (\eta_\ell\mapsto \pi^{\ell})\,\sigma^{\ell+1} \cdots \sigma^j.\]
   Finally, having recovered $\rho\,(\eta_1\mapsto \pi^1) \cdots (\eta_j\mapsto \pi^{j})$ and $\eta = \eta_1 \cup \cdots \cup \eta_j$, the decoder will have all the information she needs to recover the actual restriction $\rho$: she sets $(\rho\,(\eta_1\mapsto \pi^1) \cdots (\eta_j\mapsto \pi^{j}))_{a,i}$ back to $\ast$ for every $y_a \in \eta$ and $i \in U_a$.
\end{proof}

\subsection{Proof of Proposition~\ref{prop:psl-typical-tau}}
\label{sec:weight-diff}

For all possible outcomes $\vartheta_2, \vartheta_3, \vartheta_4$ of the second, third, and fourth coordinates of the map $\theta$ defined in Proposition~\ref{prop:injection}, we define
\begin{align*}
 \calB_{\vartheta_2,\vartheta_3} &= \{\rho\in\calB \colon \theta_2(\rho) = \vartheta_2, \theta_3(\rho) = \vartheta_3 \} \sse \calB. \\
 \calB_{\vartheta_2,\vartheta_3,\vartheta_4} &= \{\rho\in\calB \colon \theta_2(\rho) = \vartheta_2, \theta_3(\rho) = \vartheta_3, \theta_4(\rho) = \vartheta_4 \} \sse \calB_{\vartheta_2,\vartheta_3}.
\end{align*}
We begin by bounding the probability that $\brho \leftarrow \calR(\tau)$ belongs to
$\calB_{\vartheta_2,\vartheta_3, \vartheta_4}$ for a fixed tuple $(\vartheta_2,\vartheta_3,\vartheta_4)$.  The following fact, giving the probability mass function of $\calR(\tau)$, will be useful for us (its proof is by inspection of Definition \ref{def:main-projection}):

\begin{fact}
\label{fact:weights}
Fix $\tau \in \{0,1,\ast\}^{A_k}$, and  write $S_a = S_a(\tau)$ to denote $\tau_a^{-1}(\ast) = \{ i \in [w_{k-1}] \colon \tau_{a,i} = \ast\}$.  Then $\Prx_{\brho\leftarrow\calR(\tau)}[\brho = \rho] = \xi(\rho)$ for all $\rho\in \{0,1,\ast\}^{A_k}$, where $\xi : \{0,1,\ast\}^{A_k} \to [0,1]$ is the probability mass function:
\[  \xi(\rho) = \mathop{\prod_{a\in A_{k-1}}}_{S_a \ne \emptyset} \zeta_a(\rho(S_a)), \]
and $\rho(S_a)$ denotes the substring of $\rho_a$ with coordinates in $S_a$, and $\zeta_a : \{0,1,\ast\}^{S_a} \to [0,1]$ is the probability mass function:
\begin{align*}
  \zeta_a(\varrho)
  &=
  \begin{cases}
       \ds
     \lambda
     &\text{if } \varrho = \{1\}^{S_a},\\
     \ds
     q_a \cdot \frac{{t_k}^{|\varrho^{-1}(\ast)|}(1-t_k)^{|\varrho^{-1}(1)|}}{1-(1-t_k)^{|S_a|}}
     &\text{if } \varrho \in \{\ast,1\}^{S_a} \setminus \{1\}^{S_a},\\
     \ds
(1-\lambda-q_a) \cdot  \frac{{t_k}^{|\varrho^{-1}(0)|}(1-t_k)^{|\varrho^{-1}(1)|}}{1-(1-t_k)^{|S_a|}}
     &\text{if } \varrho \in \{0,1\}^{S_a} \setminus \{1\}^{S_a}.
  \end{cases}
\end{align*}
\end{fact}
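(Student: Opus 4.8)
The statement is a routine computation straight from Definition~\ref{def:main-projection} --- there is no real obstacle, only bookkeeping --- so the plan is simply to unwind that definition blockwise. Throughout I work under the running assumption of Section~\ref{sec:psl} that $d-k\equiv 0\bmod 2$, so that by Table~\ref{table:circ-bullet} the symbol $\circ$ denotes $1$ and $\bullet$ denotes $0$ (the opposite parity is dual). First I would record the block structure of $\calR(\tau)$: by Definition~\ref{def:main-projection} the blocks $\brho_a$, $a\in A_{k-1}$, are mutually independent, and within a block every coordinate $\brho_{a,i}$ with $i\notin S_a$ is deterministically set to $\tau_{a,i}\in\{0,1\}$ while $\brho(S_a)$ is the only random part. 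Hence $\supp(\calR(\tau))$ consists precisely of refinements of $\tau$, for which $\Prx_{\brho\leftarrow\calR(\tau)}[\brho=\rho]=\prod_{a\in A_{k-1}}\Pr[\brho(S_a)=\rho(S_a)]$; for $\rho$ not refining $\tau$ the left-hand side is $0$, so $\xi$ is to be read as the probability mass function of $(\brho(S_a))_{a}$. Blocks with $S_a=\emptyset$ contribute a factor $1$, which is why the product in $\xi(\rho)$ ranges only over $a$ with $S_a\neq\emptyset$. It therefore suffices to show $\Pr[\brho(S_a)=\varrho]=\zeta_a(\varrho)$ for each block $a$ with $S_a\neq\emptyset$ and each $\varrho\in\{0,1,\ast\}^{S_a}$.

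For a block $a$ with $\hat\tau_a=\ast$ and $S_a$ $k$-acceptable, $\brho(S_a)$ is drawn by the three-branch rule~(\ref{eq:OW-restriction}), and the key point is that the three branches have pairwise disjoint supports: branch~1 outputs only $\{1\}^{S_a}$, branch~2 outputs only strings in $\{\ast,1\}^{S_a}\setminus\{1\}^{S_a}$, and branch~3 outputs only strings in $\{0,1\}^{S_a}\setminus\{1\}^{S_a}$ (in particular no outcome contains both a $0$ and a $\ast$). Thus $\Pr[\brho(S_a)=\varrho]$ equals the contribution of the unique branch whose support contains $\varrho$, and is $0$ when $\varrho$ lies in none of the three supports --- this last case being the implicit fourth case $\zeta_a(\varrho)=0$. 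Branch~1 contributes its selection probability $\lambda$ when $\varrho=\{1\}^{S_a}$. For branch~2, the product distribution $\{\ast_{t_k},\circ_{1-t_k}\}^{S_a}$ assigns mass $t_k^{|\varrho^{-1}(\ast)|}(1-t_k)^{|\varrho^{-1}(1)|}$ to $\varrho$ and mass $(1-t_k)^{|S_a|}$ to the single excluded string $\{\circ\}^{S_a}=\{1\}^{S_a}$, so conditioning on avoiding it divides by $1-(1-t_k)^{|S_a|}$; multiplying by the selection probability $q_a$ gives the second line of $\zeta_a$. Branch~3 is identical with $\{\bullet_{t_k},\circ_{1-t_k}\}^{S_a}=\{0_{t_k},1_{1-t_k}\}^{S_a}$ in place of $\{\ast_{t_k},\circ_{1-t_k}\}^{S_a}$ and selection probability $1-\lambda-q_a$, giving the third line. (That $q_a$ of~(\ref{eq:def-of-qa}) lies in $[0,1]$ is not needed for the identity itself.)

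Lastly I would treat the remaining blocks, those with $S_a\neq\emptyset$ but either $\hat\tau_a=\circ$ or $S_a$ not $k$-acceptable, which fall under the first case of Definition~\ref{def:main-projection}: there $\brho(S_a)\leftarrow\{\bullet_{t_k},\circ_{1-t_k}\}^{S_a}$, so $\Pr[\brho(S_a)=\varrho]=t_k^{|\varrho^{-1}(0)|}(1-t_k)^{|\varrho^{-1}(1)|}$ for $\varrho\in\{0,1\}^{S_a}$ and $0$ otherwise. I would point out that the displayed three-line formula for $\zeta_a$ is precisely the one needed where this Fact is invoked in the rest of Section~\ref{sec:weight-diff}: there it is used only to compare $\xi(\rho)$ with $\xi(\rho\sigma)$, and since $\sigma$ touches only blocks $a$ with $y_a\in\eta$ --- all of which satisfy $\hat\tau_a=\ast$ and $S_a$ $k$-acceptable by observations~(ii)--(iii) of Section~\ref{sec:encode} --- the $\zeta_a$ factors of every other block (these first-case blocks included) are identical in $\xi(\rho)$ and $\xi(\rho\sigma)$ and cancel. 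The only points that need care, and the closest thing to an ``obstacle'', are (a) checking that the three branch supports partition $\{0,1,\ast\}^{S_a}$ up to the zero-mass leftover, so each per-block probability is read off a single branch, (b) tracking the normalizer $1-(1-t_k)^{|S_a|}$ from the ``$\setminus\{\circ\}^{S_a}$'' conditioning, and (c) the $\circ/\bullet\leftrightarrow 1/0$ translation of Table~\ref{table:circ-bullet}.
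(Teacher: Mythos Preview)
Your proposal is correct and matches the paper's own approach, which is simply ``by inspection of Definition~\ref{def:main-projection}''; you have carried out that inspection in full detail. Your observation about the first-case blocks (those with $\hat\tau_a=\circ$ or $S_a$ not $k$-acceptable) not literally fitting the displayed formula for $\zeta_a$, and why this is harmless for the applications in Section~\ref{sec:weight-diff}, is a careful point the paper leaves implicit.
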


\begin{lemma}
\label{lem:fix-other-outcomes}
For all $\vartheta_2,\vartheta_3,\vartheta_4$,
\[ \Prx_{\brho\leftarrow\calR(\tau)}\big[\brho \in \calB_{\vartheta_2,\vartheta_3,\vartheta_4}\big] = \left(O\big(w^{-1/4}\big)\right)^s  \left(\frac{t_k}{1-t_k}\right)^{\| \vartheta_4 \|}, \]
where $\| \vartheta_4 \|$ denotes $|\vartheta_4^{-1}(1)|$, the Hamming weight of  $\vartheta_4$.
\end{lemma}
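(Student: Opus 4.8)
The plan is to run the standard Razborov-style weight-counting argument, now in the $\calR(\tau)$ world, using the injection $\theta$ from Proposition~\ref{prop:injection}. Fix $\vartheta_2,\vartheta_3,\vartheta_4$. Since $\theta$ is an injection and $\theta(\rho)=(\rho\sigma,\vartheta_2,\vartheta_3,\vartheta_4)$ for every $\rho\in\calB_{\vartheta_2,\vartheta_3,\vartheta_4}$, the map $\rho\mapsto\rho\sigma$ is injective on $\calB_{\vartheta_2,\vartheta_3,\vartheta_4}$; in particular the restrictions $\{\rho\sigma:\rho\in\calB_{\vartheta_2,\vartheta_3,\vartheta_4}\}$ are pairwise distinct. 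Using Fact~\ref{fact:weights} I would write $\Pr_{\brho\leftarrow\calR(\tau)}[\brho\in\calB_{\vartheta_2,\vartheta_3,\vartheta_4}]=\sum_{\rho\in\calB_{\vartheta_2,\vartheta_3,\vartheta_4}}\xi(\rho)$ and bound each summand by comparing $\xi(\rho)$ with $\xi(\rho\sigma)$. Because $\sigma=\sigma(\rho)$ only modifies coordinates inside blocks $a$ with $y_a\in\eta=\eta_1\cup\cdots\cup\eta_j$ (and only coordinates $i$ with $\rho_{a,i}=\ast$, hence $i\in S_a$), the product formula for $\xi$ gives $\xi(\rho)/\xi(\rho\sigma)=\prod_{a:\,y_a\in\eta}\zeta_a(\rho(S_a))/\zeta_a((\rho\sigma)(S_a))$, a product of exactly $|\eta|=s$ factors.

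The heart of the argument is the per-block weight-ratio computation, which I would carry out using observations (ii)--(vi) recorded in Section~\ref{sec:decode}. For a block $a$ with $y_a\in\eta$: by (ii)--(iv), $S_a$ is $k$-acceptable and $\rho(S_a)\in\{\ast,1\}^{S_a}\setminus\{1\}^{S_a}$, so $\zeta_a(\rho(S_a))=q_a\cdot t_k^{A}(1-t_k)^{|S_a|-A}/(1-(1-t_k)^{|S_a|})$ where $A=|\rho(S_a)^{-1}(\ast)|\ge 1$; by (v)--(vi), $(\rho\sigma)(S_a)\in\{0,1\}^{S_a}$ and $\rho(S_a)^{-1}(1)\subseteq((\rho\sigma)(S_a))^{-1}(1)$. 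Writing $d_a:=|\gamma\cap\{x_{a,i}:i\in[w]\}|$ for the number of $\ast$-coordinates of $\rho$ in block $a$ that $\sigma$ sets to $1$, there are two cases. If $(\rho\sigma)(S_a)\neq\{1\}^{S_a}$ it is in the third branch of $\zeta_a$, and a short cancellation gives the ratio $\frac{q_a}{1-\lambda-q_a}\left(\frac{t_k}{1-t_k}\right)^{d_a}$. If $(\rho\sigma)(S_a)=\{1\}^{S_a}$ (the degenerate case, when every free coordinate of block $a$ occurs positively in the relevant term, whence $d_a=A$) the ratio is $\frac{q_a}{\lambda}\cdot\frac{(1-t_k)^{|S_a|}}{1-(1-t_k)^{|S_a|}}\left(\frac{t_k}{1-t_k}\right)^{d_a}$. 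In either case I claim the prefactor is $O(w^{-1/4})$: for the first case $q_a/(1-\lambda-q_a)=\Theta(q)=\Theta(\sqrt{\log w/w})=O(w^{-1/4})$ by Lemma~\ref{lem:bound-on-qa} and~(\ref{eq:estimates}); for the degenerate case, $k$-acceptability gives $|S_a|=qw\pm w^{\gradual(k,d)}$ with $\gradual(k,d)<\tfrac12$, and since $pw=(1\pm o_m(1))\ln w$ and $t_k=q\pm q^{1.1}$ one gets $(1-t_k)^{|S_a|}=w^{-1+o(1)}$, so $\frac{q_a}{\lambda}(1-t_k)^{|S_a|}=\Theta\!\big(\tfrac{w^{3/4}}{\log w}\big)\cdot w^{-1+o(1)}=o(w^{-1/4})$. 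Multiplying the $s$ block-ratios, and noting $\sum_{a:\,y_a\in\eta}d_a=|\gamma|=\|\vartheta_4\|$, yields $\xi(\rho)/\xi(\rho\sigma)=(O(w^{-1/4}))^s(t_k/(1-t_k))^{\|\vartheta_4\|}$ uniformly over $\rho\in\calB_{\vartheta_2,\vartheta_3,\vartheta_4}$.

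Finally I would combine: $\Pr[\brho\in\calB_{\vartheta_2,\vartheta_3,\vartheta_4}]=\sum_{\rho}\xi(\rho)\le (O(w^{-1/4}))^s(t_k/(1-t_k))^{\|\vartheta_4\|}\sum_{\rho}\xi(\rho\sigma)\le(O(w^{-1/4}))^s(t_k/(1-t_k))^{\|\vartheta_4\|}$, where the last inequality uses that the $\rho\sigma$ are distinct and $\xi$ is a probability mass function. The main obstacle I anticipate is the weight-ratio bookkeeping — in particular dealing with the degenerate all-$1$'s block, which forces the first branch of $\zeta_a$ (value $\lambda$) rather than the third, and making the estimate $(1-t_k)^{|S_a|}=w^{-1+o(1)}$ (hence the prefactor $o(w^{-1/4})$) rigorous; this is exactly where $k$-acceptability of $S_a$ and the precise relationship $pw=(1\pm o(1))\ln w$ between the parameters are used, rather than just the crude estimates in~(\ref{eq:estimates}). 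Everything else is a routine reorganization of Razborov--Thapen's encoding/weight argument.
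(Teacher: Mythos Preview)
Your approach is essentially the same as the paper's: compute the per-block weight ratio $\zeta_a(\rho(S_a))/\zeta_a((\rho\sigma)(S_a))$ in the two cases (the paper computes the reciprocal), multiply over the $s$ blocks, note $\sum_a d_a=\|\vartheta_4\|$, and finish using the injection and $\sum\xi(\rho\sigma)\le 1$. The case split, the use of observations (ii)--(vi), and the final summation are all exactly as in the paper.

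One quantitative wrinkle: in the degenerate block (where $(\rho\sigma)(S_a)=\{1\}^{S_a}$) your estimate $(1-t_k)^{|S_a|}=w^{-1+o(1)}$ and hence a prefactor $o(w^{-1/4})$ is too coarse --- the $o(1)$ in the exponent can hide polylog losses that do not get absorbed into $O(w^{-1/4})$. The paper instead uses the exact recursion $(1-t_k)^{qw}=qt_{k-1}+\lambda$ from~(\ref{eq:def-of-tk}) together with $k$-acceptability to get the clean bound $(1-t_k)^{|S_a|}\le 2q^2$, which yields a prefactor $\Theta(q^3/\lambda)=\Theta(w^{-1/4})$ on the nose. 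This degenerate case is in fact the binding one (the non-degenerate case gives the much smaller $q_a/(1-\lambda-q_a)=\Theta(q)=\tilde O(w^{-1/2})$), so it is exactly here that the exponent $1/4$ is determined. You already flagged this step as the main obstacle and named the right ingredients; just replace the $w^{-1+o(1)}$ heuristic with the bound via~(\ref{eq:def-of-tk}) and the argument goes through.
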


\begin{proof}
Fix $\rho \in \calB_{\vartheta_2,\vartheta_3,\vartheta_4}$. The restrictions $\rho$ and $\theta_1(\rho) = \rho\sigma$ differ in exactly $s$ blocks: these are the blocks $a\in A_{k-1}$ such that $y_a \in \eta$.  Consider any such $a \in A_{k-1}$, and recall (as observed in the definition of $\sigma$) that $S_a$ is $k$-acceptable and $\rho(S_a) \in \{\ast,1\}^{S_a} \setminus \{1\}^{S_a}$ whereas $(\rho\sigma)(S_a) \in \{0,1\}^{S_a}$. Let $\Delta_a$ denote  $|(\rho\sigma)_a^{-1}(1)| - |\rho_a^{-1}(1)|$, the number of ``new 1's'' that $\sigma$ introduces into block $a$ (note that as observed earlier we have that $\Delta_a \geq 0$). By Fact~\ref{fact:weights}, we have that
\begin{equation} \frac{\zeta_a((\rho\sigma)(S_a))}{\zeta_a(\rho(S_a))} =
\left\{
\begin{array}{cl}
\ds \frac{\lambda}{q_a} \cdot \frac{1-(1-t_k)^{|S_a|}}{t_k^{\Delta_a}(1-t_k)^{|S_a|-\Delta_a}}
 & \text{if $(\rho\sigma)(S_a) = \{1\}^{S_a}$}  \vspace{5pt} \\
  \ds  \frac{1-\lambda-q_a}{q_a}  \bigg(\frac{1-t_k}{t_k}\bigg)^{\Delta_{a}}  & \text{if $(\rho\sigma)(S_a) \in \zo^{S_b} \setminus \{1^{S_a}\}$}.
\end{array}
\right.
\label{eq:weight-ratio}
\end{equation}
Since $S_a$ is $k$-acceptable, we have that $|S_a| = qw \pm w^{\gradual(k,d)}$ and therefore
\begin{align*} (1-t_k)^{|S_a|} &\le \frac{(1-t_k)^{qw}}{(1-t_k)^{w^{\gradual(k,d)}}} \\
&= \frac{qt_{k-1} + \lambda}{(1-t_k)^{w^{\gradual(k,d)}}}  \\
&\le \frac{qt_{k-1} + \lambda}{1-t_kw^{\gradual(k,d)}}\ \le\ 2q^2,
\end{align*}
where the equality is by (\ref{eq:def-of-tk}) and the final inequality uses Lemma~\ref{lemma:tk-bound}, (\ref{eq:def-of-lambda-and-q}) and (\ref{eq:estimates}).  Since $q_a \le 2q$ by Lemma~\ref{lem:bound-on-qa}, we may lower bound the quantity in the first line of (\ref{eq:weight-ratio}) by
\[ \frac{\lambda}{8q^3} \left(\frac{1-t_k}{t_k}\right)^{\Delta_a} = \Omega(w^{1/4}) \left(\frac{1-t_k}{t_k}\right)^{\Delta_a},  \]
where we have used our choice of $\lambda$ in (\ref{eq:def-of-lambda-and-q}) and the estimates (\ref{eq:estimates}).  Similarly, for the second quantity in the second line of (\ref{eq:weight-ratio}) we have the lower bound
\[
\frac{1-\lambda-q_a}{q_a}  \bigg(\frac{1-t_k}{t_k}\bigg)^{\Delta_{a}} = \Omega\left(\sqrt{\frac{w}{\log w}}\right)\left(\frac{1-t_k}{t_k}\right)^{\Delta_a}
\]
and so in both cases we may lower bound the ratio in (\ref{eq:weight-ratio}) by
\[\frac{\zeta_a((\rho\sigma)(S_a))}{\zeta_a(\rho(S_a))} = \Omega\big(w^{1/4}\big)  \left(\frac{1-t_k}{t_k}\right)^{\Delta_a}.
 \]
\ignore{
\gray{
Since $\tau$ is typical, we have that $|S_a| = qw_\star \pm w^{\gradual(k,d)}$, and so by (\ref{eq:equivalent-def-of-q}) we have the bound
\[(1-t)^{|S_a|} \le \frac{p-\lambda}{(1-t)^{w^{\gradual(k,d)}}} \le \frac{p}{1-t w^{\gradual(k,d)}}  \le 2p.
 \]
By (\ref{eq:def-of-qa}), this in turn implies that
\[ q_a \le \frac{2p - \lambda}{t} < \frac{2p}{t}, \]
and so using the estimates (\ref{eq:estimates})  we may lower bound the quantity in the first line of (\ref{eq:weight-ratio}) by

\begin{align*}
\frac{\lambda t}{8p^2} \left(\frac{1-t}{t}\right)^{\Delta_a} &= \Omega\big(w^{1/4}\big)  \left(\frac{1-t}{t}\right)^{\Delta_a}.
\end{align*}
Similarly, for the second quantity in the second line of (\ref{eq:weight-ratio}) we have the lower bound
\[
\frac{1-\lambda-q_a}{q_a}  \bigg(\frac{1-t}{t}\bigg)^{\Delta_{a}} = \Omega\left({\sqrt{w}}\right)\left(\frac{1-t}{t}\right)^{\Delta_a} 
\]
and so in both cases we may lower bound the ratio in (\ref{eq:weight-ratio}) by
\[\frac{\zeta_a((\rho\sigma)_a)}{\zeta_a(\rho_a)} = \Omega\big(w^{1/4}\big)  \left(\frac{1-t}{t}\right)^{\Delta_a}.
 \]
 }
 }

 Since $\sum_{a\colon \rho_a \ne (\rho\sigma)_a} \Delta_a = \| \vartheta_4 \|$, it follows from Fact~\ref{fact:weights} that
\begin{equation} \frac{\xi(\theta_1(\rho))}{\xi(\rho)} =  \frac{\xi(\rho\sigma)}{\xi(\rho)} = \mathop{\prod_{a\in A_{k-1}}}_{S_a \ne \emptyset} \frac{\zeta_a((\rho\sigma)(S_a))}{\zeta_a(\rho(S_a))} =  \left( \Omega\big(w^{1/4}\big)\right)^s  \left(\frac{1-t_k}{t_k}\right)^{\| \vartheta_4 \| }.\label{eq:weight-increase}
\end{equation}
Finally, summing over all $\rho \in \calB_{\vartheta_2,\vartheta_3,\vartheta_4}$  we conclude that
\begin{align*}
 \Prx_{\brho\leftarrow\calR(\tau)} \big[\brho\in\calB_{\vartheta_2,\vartheta_3,\vartheta_4}\big] = \sum_{\rho\in \calB_{\vartheta_2,\vartheta_3,\vartheta_4}} \xi(\rho) & = \left(O\big(w^{-1/4}\big)\right)^s  \left(\frac{t_k}{1-t_k}\right)^{\| \vartheta_4 \| }\sum_{\rho\in \calB_{\vartheta_2,\vartheta_3,\vartheta_4}}  \xi(\theta_1(\rho)) \\
 &=  \left(O\big(w^{-1/4}\big)\right)^s \left(\frac{t_k}{1-t_k}\right)^{\| \vartheta_4 \|}.
  \end{align*}
Here the first inequality is by (\ref{eq:weight-increase}), and the second uses the fact that $\theta$ is an injection (Proposition~\ref{prop:injection}), and hence any two distinct $\rho,\rho' \in \calB_{\vartheta_2,\vartheta_3,\vartheta_4}$ map to distinct $\theta_1(\rho), \theta_1(\rho') \in \{0,1,\ast\}^{A\times [w]}$, so $\sum_{\rho\in \calB_{\vartheta_2,\vartheta_3,\vartheta_4}}  \xi(\theta_1(\rho))$ is at most 1 since $\xi$ is a probability mass function.
\end{proof}

Proposition~\ref{prop:psl-typical-tau} follows as a straightforward consequence of Lemma~\ref{lem:fix-other-outcomes}:

\begin{proof}[Proof of Proposition~\ref{prop:psl-typical-tau}]
Summing over all $\vartheta_4 \in \zo^{rs}$ and stratifying according to Hamming weight, we have that \begin{align*}
\Prx_{\brho\leftarrow\calR(\tau)} [\brho\in\calB_{\vartheta_2,\vartheta_3}] &= \sum_{i=0}^{rs} \mathop{\sum_{\vartheta_4\in \zo^{rs}}}_{\|\vartheta_4\|=i} \Prx_{\brho\leftarrow\calR(\tau)}[\brho\in \calB_{\vartheta_2,\vartheta_3,\vartheta_4}] \\
&\le \sum_{i=0}^{rs} {rs\choose i}   \left(\frac{t_k}{1-t_k}\right)^{i}\left(O\big(w^{-1/4}\big)\right)^s\\
&= \left(1+\frac{t_k}{1-t_k}\right)^{rs} \left(O\big(w^{-1/4}\big)\right)^s = 
\left(O\Big(e^{rt_k/(1-t_k)} \cdot w^{-1/4}\Big)\right)^s.
\end{align*}
Taking a union bound over all $2^s$ possible $\vartheta_2\in \zo^s$ and $(2r)^s$ possible $\vartheta_3 \in \zo^{s(1+\log r)}$ completes the proof.
\end{proof}

\paragraph{Proof of Proposition~\ref{prop:psl-initial}.}

For Proposition~\ref{prop:psl-initial}, we first observe that Proposition~\ref{prop:injection} also holds for $\brho\leftarrow\calR_\init$ (the proof is completely identical, with $\tau$ being the trivial restriction $\{\ast\}^{n}$).
Proposition~\ref{prop:psl-initial} then follows as a consequence of Proposition~\ref{prop:injection} in a very similar manner (the calculations are in fact significantly simpler); we point out the essential differences in this section. We begin with the following analogue of Fact~\ref{fact:weights}, specifying the probability mass function of $\calR_\init$ (like Fact~\ref{fact:weights}, its proof is by inspection of Definition~\ref{def:calR-init}):
\begin{fact}
\label{fact:init-weights}
$\Prx_{\brho\leftarrow\calR_\init} [\brho = \rho] = \xi(\rho)$ for all $\rho\in \{0,1,\ast\}^{A_{d-1}\times [m]}$ (recall that $w_{d-1} = m$), where $\xi : \{0,1,\ast\}^{A_{d-1} \times [m]} \to [0,1]$ is the probability mass function:
\[  \xi(\rho) = \prod_{a\in A_{d-1}}\zeta(\rho_a), \]
and $\zeta : \{0,1,\ast\}^{m} \to [0,1]$ is the probability mass function:
\begin{align*}
  \zeta(\varrho)
  &=
  \begin{cases}
       \ds
     \lambda
     &\text{if } \varrho = \{1\}^{m},\\
     \ds
     q \cdot \frac{p}{1-p}
     &\text{if } \varrho \in \{\ast,1\}^{m} \setminus \{1\}^{m},\\
     \ds
(1-\lambda-q) \cdot  \frac{p}{1-p}
     &\text{if } \varrho \in \{0,1\}^{m} \setminus \{1\}^{m}.
  \end{cases}
\end{align*}
\end{fact}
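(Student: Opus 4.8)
The plan is to prove Fact~\ref{fact:init-weights} by a direct inspection of Definition~\ref{def:calR-init}, in two moves: first reduce to a single block using independence, then read off the per-block probability by a short case analysis. Since Definition~\ref{def:calR-init} draws the blocks $\brho_a$, $a\in A_{d-1}$, mutually independently, we have $\Prx_{\brho\leftarrow\calR_\init}[\brho=\rho]=\prod_{a\in A_{d-1}}\Prx_{\brho\leftarrow\calR_\init}[\brho_a=\rho_a]$, so it suffices to show $\Prx_{\brho\leftarrow\calR_\init}[\brho_a=\rho_a]=\zeta(\rho_a)$ for each fixed block $a$ and each $\rho_a\in\{0,1,\ast\}^m$; the claimed identity $\xi(\rho)=\prod_a\zeta(\rho_a)$ then follows by multiplying over $a$.

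For the per-block computation the key structural observation is that the three branches of (\ref{eq:initial}) have pairwise disjoint supports: the first outputs only $\{1\}^m$, the second outputs only strings containing a $\ast$ but no $0$, and the third outputs only strings containing a $0$ but no $\ast$. Hence any block pattern $\rho_a$ that is realizable at all is produced by exactly one of the three branches, and I can read off its probability directly. Concretely: if $\rho_a=\{1\}^m$ it arises only via the first branch (selected with probability $\lambda$, deterministic output), so $\Prx[\brho_a=\rho_a]=\lambda$; if $\rho_a\in\{\ast,1\}^m\setminus\{1\}^m$ it arises only via the second branch (selected with probability $q$), where conditioned on that branch the string is uniform on $\{\ast,1\}^m\setminus\{1\}^m$ and so has conditional mass $2^{-m}/(1-2^{-m})=p/(1-p)$ (using $p=2^{-m}$), giving $q\cdot p/(1-p)$; and if $\rho_a\in\{0,1\}^m\setminus\{1\}^m$ it arises only via the third branch (selected with probability $1-\lambda-q$), symmetrically with conditional mass $p/(1-p)$, giving $(1-\lambda-q)\cdot p/(1-p)$. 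These are exactly the three cases in the definition of $\zeta$.

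There is essentially no real obstacle here; the only point needing a word of care — and the ``main'' step only in the sense that it is the single thing not literally immediate — is that the three cases above do not exhaust $\{0,1,\ast\}^m$: a block pattern $\rho_a$ containing both a $0$ and a $\ast$ matches none of them. Such a $\rho_a$ lies outside the support of all three branches of (\ref{eq:initial}), hence $\Prx[\brho_a=\rho_a]=0$, which is consistent with the convention (implicit in calling $\zeta$ ``the probability mass function'') that $\zeta$ vanishes on every argument not listed among its three displayed cases. As a sanity check one verifies that $\zeta$ is indeed a probability mass function: $\lambda+(2^m-1)\cdot q\cdot\frac{p}{1-p}+(2^m-1)\cdot(1-\lambda-q)\cdot\frac{p}{1-p}=\lambda+q+(1-\lambda-q)=1$. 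This completes the sketch; the argument is precisely the analogue of — and a strict simplification of — the per-block bookkeeping already carried out for $\calR(\tau)$ in Fact~\ref{fact:weights}.
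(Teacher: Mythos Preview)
Your proof is correct and is precisely the ``inspection of Definition~\ref{def:calR-init}'' that the paper cites as the entire argument. The paper does not spell out the details, but your independence-then-case-analysis is exactly what is meant, and your handling of the unsupported block patterns (those containing both a $0$ and a $\ast$) is the right way to make the implicit convention about $\zeta$ explicit.
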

Fact~\ref{fact:init-weights} gives us the following analogue of (\ref{eq:weight-ratio}):
\[  \frac{\zeta((\rho\sigma)_a)}{\zeta(\rho_a)} =
\left\{
\begin{array}{cl}
\ds \frac{\lambda (1-p)}{qp}  & \text{if $(\rho\sigma)_a = \{1\}^{m}$}  \vspace{5pt} \\
  \ds  \frac{1-\lambda-q}{q}  & \text{if $(\rho\sigma)_a \in \zo^{m} \setminus \{1\}^{m}$},
\end{array}
\right. \]
and so by our choice of $\lambda$ in (\ref{eq:def-of-lambda-and-q}) and our estimates (\ref{eq:estimates}) this ratio is always at least $\Omega\big(w^{1/4}\big)$. (Unlike the proof of Lemma~\ref{lem:fix-other-outcomes}, our lower bound here does not depend on $\Delta_a = |(\rho\sigma)_a^{-1}(1)|-|\rho_a^{-1}(1)|$.) By the same calculations as in the proof of Lemma~\ref{lem:fix-other-outcomes}, we have the following analogue of Lemma~\ref{lem:fix-other-outcomes}:

\begin{lemma}
\label{lem:fix-other-outcomes-init}
For all $\vartheta_2,\vartheta_3,\vartheta_4$, we have that $\ds \Prx_{\brho\leftarrow\calR_\init}\big[\brho \in \calB_{\vartheta_2,\vartheta_3,\vartheta_4}\big] = \left( O\big(w^{-1/4}\big)\right)^s.$
\end{lemma}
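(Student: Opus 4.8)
The plan is to follow the proof of Lemma~\ref{lem:fix-other-outcomes} essentially verbatim, exploiting the fact that for $\calR_\init$ the per-block weight ratio is \emph{uniform} and --- unlike in the $\calR(\tau)$ case --- does not depend on the number $\Delta_a$ of ``new $1$'s'' that $\sigma$ introduces into a block. First I would fix a tuple $(\vartheta_2,\vartheta_3,\vartheta_4)$ and an arbitrary bad restriction $\rho \in \calB_{\vartheta_2,\vartheta_3,\vartheta_4}$, and recall from the encoding construction of Section~\ref{sec:encode} (instantiated with $\tau = \{\ast\}^n$, which is legitimate here) that $\rho$ and $\theta_1(\rho) = \rho\sigma$ differ in exactly the $s$ blocks $a \in A_{d-1}$ with $y_a \in \eta$, that on each such block $\rho_a \in \{\ast,1\}^m \setminus \{1\}^m$ whereas $(\rho\sigma)_a \in \{0,1\}^m$, and that $\sigma$ only sets to constants coordinates left free by $\rho$.

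Next I would compute the weight gain. By Fact~\ref{fact:init-weights} the ratio $\zeta((\rho\sigma)_a)/\zeta(\rho_a)$ on each differing block equals $\lambda(1-p)/(qp)$ if $(\rho\sigma)_a = \{1\}^m$ and $(1-\lambda-q)/q$ otherwise, and the estimates (\ref{eq:def-of-lambda-and-q}) and (\ref{eq:estimates}) show that in both cases this is $\Omega(w^{1/4})$, with no dependence on $\Delta_a$. Multiplying over the $s$ differing blocks gives $\xi(\theta_1(\rho))/\xi(\rho) = (\Omega(w^{1/4}))^s$, i.e.\ $\xi(\rho) = (O(w^{-1/4}))^s\,\xi(\theta_1(\rho))$. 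Summing this over $\rho \in \calB_{\vartheta_2,\vartheta_3,\vartheta_4}$ and invoking the injectivity of $\theta$ (Proposition~\ref{prop:injection}, whose decoding argument goes through unchanged for $\calR_\init$) --- so that $\rho \mapsto \theta_1(\rho)$ is injective on $\calB_{\vartheta_2,\vartheta_3,\vartheta_4}$ and hence $\sum_{\rho} \xi(\theta_1(\rho)) \le 1$ since $\xi$ is a probability mass function --- yields $\Prx_{\brho\leftarrow\calR_\init}[\brho \in \calB_{\vartheta_2,\vartheta_3,\vartheta_4}] = (O(w^{-1/4}))^s$, as claimed.

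I do not anticipate any real obstacle; this is the ``easy'' special case of the machinery already in place. The only points worth a sentence of verification are: (a) that the encoding/decoding argument of Sections~\ref{sec:encode}--\ref{sec:decode} specializes cleanly to $\tau = \{\ast\}^n$, in particular that every relevant block of $\rho$ lies in $\{\ast,1\}^m \setminus \{1\}^m$ so the ``break-symmetry-to-$0$'' decoding step remains valid --- this is immediate from Definition~\ref{def:calR-init}; and (b) that, in contrast with Lemma~\ref{lem:fix-other-outcomes}, no stratification of $\vartheta_4$ by Hamming weight is needed at this stage, the uniform $\Omega(w^{1/4})$ bound making the estimate $\vartheta_4$-independent (it is the subsequent $2^{rs}$-fold sum over $\vartheta_4 \in \zo^{rs}$, together with the union bound over $\vartheta_2 \in \zo^s$ and $\vartheta_3 \in \zo^{s(1+\log r)}$, that later produces the $O(r2^r w^{-1/4})$ factor of Proposition~\ref{prop:psl-initial}).
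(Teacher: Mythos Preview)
Your proposal is correct and follows essentially the same approach as the paper: compute the per-block weight ratio via Fact~\ref{fact:init-weights}, observe that it is uniformly $\Omega(w^{1/4})$ independent of $\Delta_a$, multiply over the $s$ differing blocks, and sum using the injectivity of $\theta$. The paper in fact gives even less detail than you do, simply noting that the ratio is $\Omega(w^{1/4})$ in both cases and that ``by the same calculations as in the proof of Lemma~\ref{lem:fix-other-outcomes}'' the result follows.
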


Proposition~\ref{prop:psl-initial} follows by a union bound over all $2^s$ possible $\vartheta_2\in \zo^s$, $(2r)^s$ possible $\vartheta_3 \in \zo^{s(1+\log r)}$, and $2^{rs}$ possible $\vartheta_4 \in \zo^{rs}$ (unlike in the proof of Proposition~\ref{prop:psl-typical-tau} we do not have to stratify the union bound over $\vartheta_4 \in\{0,1\}^{rs}$ according to Hamming weight).


\subsection{Approximator simplifies under random projections} \label{sec:two}

The main results of this section are Theorems~\ref{thm:bounded-bottom-fan-in} and~\ref{thm:different-alternation}.  The first of these theorems says that any depth-$d$ circuit whose size is not too large and whose bottom fan-in is significantly smaller than that of $\BalancedSipser_d$ will collapse to a shallow decision tree with high probability under the random projection $\mathbf{\Psi}$ from Definition~\ref{def:boldpsi}:

\begin{theorem}
\label{thm:bounded-bottom-fan-in}
For $2 \leq d \leq {\frac {c\sqrt{\log n}}{\log \log n}}$, let $C : \zo^{n} \to \zo$ be a depth-$d$ circuit with bottom fan-in at most ${\frac {\log n}{10(d-1)}}$ and size $S \le 2^{n^{{\frac 1 {6(d-1)}}}}$.  Then $\mathbf{\Psi}(C)$ is computed by a decision tree of depth $n^{{\frac 1 {4(d-1)}}}$ with probability $1-\exp\big(-\Omega\big(n^{\frac 1 {{6}(d-1)}}\big)\big)$.
\end{theorem}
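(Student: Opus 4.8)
The plan is to run the standard ``bottom-up'' simplification argument, applying the projection switching lemmas (Propositions~\ref{prop:psl-initial} and~\ref{prop:psl-typical-tau}) once for each of the $d-1$ random projections $\proj_{\brho^{(k)}}$ that comprise $\mathbf{\Psi}$, so that each projection reduces the depth of $C$ by one. First I would assume, without loss of generality, that $C$ is alternating and layered (this costs only a negligible increase in size). Since the projection operator acts syntactically on a circuit (Remark~\ref{rem:equiv-projection}), each $\proj_{\brho^{(k)}}$ commutes with the circuit structure, so it suffices to control the depth-two subcircuits sitting at the very bottom of whatever circuit we currently hold. Fix the target decision-tree depth $s := n^{1/(6(d-1))}$. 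For the initial projection $\brho^{(d)}\leftarrow\calR_{\init}$, the bottom depth-two subcircuits of $C$ are DNFs or CNFs of width $r \le \frac{\log n}{10(d-1)}$ (and there are at most $S$ of them), so by Proposition~\ref{prop:psl-initial} each switches to a depth-$s$ decision tree except with probability $(O(r2^r w^{-1/4}))^s$. A depth-$s$ decision tree is simultaneously a width-$s$ DNF and a width-$s$ CNF, so I would rewrite each switched subcircuit in whichever of these two forms matches the polarity of the gate directly above it and absorb it into that gate; the result is that $\proj_{\brho^{(d)}}(C)$ is a depth-$(d-1)$ alternating layered circuit over $\zo^{A_{d-1}}$ of bottom fan-in $\le s$, off a failure event of probability $\le S\cdot(O(r2^r w^{-1/4}))^s$. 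I would then iterate for $k=d-1,d-2,\dots,2$, using $\brho^{(k)}\leftarrow\calR(\hat{\brho^{(k+1)}})$ and Proposition~\ref{prop:psl-typical-tau}: crucially, that lemma holds for \emph{every} $\tau$, so I can condition on the outcome of $\brho^{(k+1)},\dots,\brho^{(d)}$ and apply it with $\tau=\hat{\brho^{(k+1)}}$, and on the event that the earlier stages succeeded the current depth-two subcircuits (now of width $\le s$) each switch to a depth-$s$ decision tree except with probability $(O(s\,e^{st_k/(1-t_k)}w^{-1/4}))^s$. After the last projection $\proj_{\brho^{(2)}}$ the remaining depth-two circuit switches to a depth-$s$ decision tree, so $\mathbf{\Psi}(C)$ is a decision tree of depth $s = n^{1/(6(d-1))}\le n^{1/(4(d-1))}$, as claimed.

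It then remains to check that the failure probabilities accumulate to the claimed bound. Using $w=\Theta(n^{1/(d-1)})$ from~(\ref{eq:nversusd}) and $t_k=\Theta(\sqrt{\log w/w})$ from~(\ref{eq:estimates}), one gets $s\,t_k = n^{1/(6(d-1))-1/(2(d-1))+o(1)}=o(1)$, so the blow-up factor $e^{st_k/(1-t_k)}$ in Proposition~\ref{prop:psl-typical-tau} is $1+o(1)$; combined with $2^r\le n^{1/(10(d-1))}$, $r=n^{o(1)}$, and $w^{-1/4}=\Theta(n^{-1/(4(d-1))})$, the base quantity in each stage ($O(r2^r w^{-1/4})$ in the first stage, $O(sw^{-1/4})$ thereafter) is $\le n^{-\Omega(1/(d-1))}$, so every per-subcircuit failure probability is $\le 2^{-\Omega((\log n/d)\,n^{1/(6(d-1))})}$. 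I would then union-bound over the bottom subcircuits present at each stage; the one point needing care is that rewriting a depth-$s$ decision tree as a width-$s$ formula can multiply the gate count by $2^s$, but after fewer than $d$ stages the circuit still has at most $S\cdot 2^{ds}=2^{O(d\,n^{1/(6(d-1))})}$ gates, and since $d\le c\sqrt{\log n}/\log\log n$ forces $d^2=o(\log n)$, this factor is $2^{o((\log n/d)\,n^{1/(6(d-1))})}$ and is swamped by the per-subcircuit bound. A final union bound over the $d-1\le \sqrt{\log n}$ stages yields a total failure probability of $\exp(-\Omega(n^{1/(6(d-1))}))$.

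I expect the only real difficulty to be bookkeeping rather than a genuine obstacle: keeping the alternation pattern straight so that a switched-and-rewritten decision tree really does merge into the gate above it (so the depth genuinely drops and the bottom fan-in stays $\le s$), and verifying that the gate-count blow-up from the decision-tree-to-formula conversions remains negligible — which is exactly what the slack $d^2=o(\log n)$ in our parameter regime buys. All of the heavy lifting has already been done in the projection switching lemmas themselves. The base case $d=2$ is immediate and degenerate: $C$ is itself a single depth-two circuit, $\mathbf{\Psi}$ is just $\proj_{\brho^{(2)}}$ with $\brho^{(2)}\leftarrow\calR_{\init}$, and a single application of Proposition~\ref{prop:psl-initial} finishes it.
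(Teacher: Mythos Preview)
Your proposal is correct and follows essentially the same bottom-up switching approach as the paper (which packages the iteration as Lemma~\ref{lem:combine} and Proposition~\ref{prop:bounded-bottom-fan-in}, taking $r=s=w^{1/5}$). The one place you work harder than necessary is the gate-count blow-up: the paper observes that only the bottom-layer gates ever multiply under the decision-tree-to-formula rewrite, so the number of depth-two subcircuits to switch at each stage is always bounded by the original size $S$, and no $2^{ds}$ factor (hence no appeal to the $d^2=o(\log n)$ slack) is needed.
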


The second theorem is quite similar; it says that under the random projection $\mathbf{\Psi}$, any depth-$d$ circuit $C$ that is not too large, regardless of its bottom fan-in, will collapse to a depth-2 circuit with bounded bottom fan-in and with top gate matching that of $C$:

\begin{theorem}
\label{thm:different-alternation} For $2\leq d \leq  {\frac {c\sqrt{\log n}}{\log \log n}}$, let $C : \zo^{n} \to \zo$ be a depth-$d$ circuit of size $S \le 2^{{\frac 1 2}n^{{\frac 1 {6(d-1)}}}}$ and unbounded bottom fan-in.
\begin{enumerate}
\item If the top gate of $C$ is an $\AND$, then $\mathbf{\Psi}(C)$ is $(1/S)$-close (with respect to the uniform distribution on $\{0,1\}^n$) to a width-$n^{{\frac 1 {4(d-1)}}}$ CNF with probability $1-\exp\big(-\Omega\big(n^{\frac 1 {{6}(d-1)}}\big)\big)$.
\item If the top gate of $C$ is an $\OR$, then $\mathbf{\Psi}(C)$ is $(1/S)$-close to a width-$n^{{\frac 1 {4(d-1)}}}$ DNF with probability  $1-\exp\big(-\Omega\big(n^{\frac 1 {{6}(d-1)}}\big)\big)$.
\end{enumerate}
\end{theorem}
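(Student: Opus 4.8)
Both theorems follow by applying the two projection switching lemmas (Propositions~\ref{prop:psl-initial} and~\ref{prop:psl-typical-tau}) iteratively, in the bottom-up style of H\aa stad's proof that $\Parity\notin\acz$. The plan is to show that each of the $d-1$ random projections $\proj_{\brho^{(k)}}$ comprising $\mathbf{\Psi}$ reduces the depth of the current version of $C$ by one with high probability, while keeping the bottom fan-in (equivalently, decision-tree depth) of the current circuit bounded by $W:=n^{1/(6(d-1))}$ throughout.

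\textbf{Bounded bottom fan-in (Theorem~\ref{thm:bounded-bottom-fan-in}).} Since $C$ is alternating and layered, its depth-$(d-1)$ and depth-$(d-2)$ gates organize into at most $S$ bottom-level depth-$2$ subcircuits, each a DNF or CNF of width $r_0\le\frac{\log n}{10(d-1)}$. First I would apply $\proj_{\brho^{(d)}}$ with $\brho^{(d)}\leftarrow\calR_\init$ and invoke Proposition~\ref{prop:psl-initial} with $s=W$ on each such subcircuit. Using $r_02^{r_0}\le(\log n)\,n^{1/(10(d-1))}$ and, from~(\ref{eq:nversusd}) and~(\ref{eq:def-of-w}), $w=\Theta(n^{1/(d-1)})$ so that $w^{-1/4}=\Theta(n^{-1/(4(d-1))})$, the failure probability for a single subcircuit is $(n^{-\Omega(1/(d-1))})^{W}$; a union bound over the $\le S\le 2^{W}$ subcircuits, together with $d\le c\sqrt{\log n}/\log\log n$ (so $\log n/(d-1)=\omega(1)$), yields total failure $\exp(-\Omega(W))$. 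On the good event, replace each collapsed subcircuit by its depth-$W$ decision tree, re-express that tree as a width-$W$ CNF (resp.\ DNF) whose top gate matches the gate two levels up, and merge; this deletes one level and leaves bottom fan-in $\le W$. Then repeat for $k=d-1,\dots,2$, invoking Proposition~\ref{prop:psl-typical-tau} with $\tau=\hat{\brho^{(k+1)}}$ (its bound holds for every $\tau$, so no typicality is needed), $r=W$, and $s=W$; here $rt_k/(1-t_k)=O(W\sqrt{\log w/w})=o(1)$ by~(\ref{eq:estimates}), so the per-subcircuit error is again $(n^{-\Omega(1/(d-1))})^{W}$ and the per-stage failure is again $\exp(-\Omega(W))$. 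After the $(d-1)$-st projection $C$ has collapsed to a single decision tree of depth $\le W\le n^{1/(4(d-1))}$, and summing failures over the $d-1$ stages preserves $\exp(-\Omega(n^{1/(6(d-1))}))$.

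\textbf{Unbounded bottom fan-in (Theorem~\ref{thm:different-alternation}).} Assume the top gate of $C$ is $\AND$ (the $\OR$ case is symmetric). The new difficulty is that the switching lemmas cannot touch the bottom level of $C$, whose fan-in is unbounded, so I would first reduce to a circuit of bounded bottom fan-in and pay for the difference through Property~3: hard-wire to the appropriate constant every bottom-level gate of fan-in exceeding a threshold $t$ (a wide $\AND$, resp.\ $\OR$, gate agrees with $0$, resp.\ $1$, on all but a $2^{-t}$ fraction of $\zo^n$), obtaining $C'$ with $\Prx_{\bX\leftarrow\calU}[C(\bX)\ne C'(\bX)]\le S\,2^{-t}$; by Proposition~\ref{prop:complete-to-uniform} the same bound holds for $\Prx_{\bY\leftarrow\calD}[\mathbf{\Psi}(C)(\bY)\ne\mathbf{\Psi}(C')(\bY)]$. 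Choosing $t$ so that $S\,2^{-t}\le 1/S$, I would then run the bottom-up iteration on $C'$, carrying out only $d-2$ switching stages (projections $\brho^{(d)},\dots,\brho^{(3)}$) so that $\mathbf{\Psi}(C')$ collapses to a depth-$2$ circuit, and applying the final projection $\proj_{\brho^{(2)}}$ without switching (the projection of a width-$W$ CNF/DNF is a width-$\le W$ CNF/DNF over $\zo^{A_1}$). Only the bottom levels are ever altered, so the resulting depth-$2$ circuit has the same top gate as $C$: a width-$\le n^{1/(4(d-1))}$ CNF when that gate is $\AND$, a DNF when it is $\OR$, and it is $(1/S)$-close to $\mathbf{\Psi}(C)$ with the stated probability.

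\textbf{Main obstacle.} I expect the delicate point to be the bottom-fan-in reduction for Theorem~\ref{thm:different-alternation}: the $1/S$ error budget forces $t=\Theta(\log S)=\Theta(n^{1/(6(d-1))})$, which is far larger than the $O(\log w)$ fan-in that Proposition~\ref{prop:psl-initial} can tolerate, so the first projection must in fact be used to \emph{shrink} the bottom fan-in (exploiting, via Remark~\ref{rem:equiv-projection}, the collapse that occurs when distinct variables of a wide term are identified under the projection) before the switching iteration proper can start; quantifying this, and more generally calibrating the choice of $s$ at each stage so that all intermediate widths stay comfortably below $w^{1/4}\approx n^{1/(4(d-1))}$ while surviving the $2^{W}$-size union bound, is where the real work lies. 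A secondary point needing care is the parity bookkeeping in the ``re-express the decision tree as a depth-$2$ formula and merge'' step, which leans on $C$ being alternating and layered.
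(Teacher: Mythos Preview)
Your outline for Theorem~\ref{thm:bounded-bottom-fan-in} is correct and matches the paper's argument (the paper takes $s=w^{1/5}$ rather than your $s=n^{1/(6(d-1))}$, but either choice works).

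For Theorem~\ref{thm:different-alternation}, your truncation step (Lemma~\ref{lem:truncate} with threshold $t=2\log S$) and the appeal to Proposition~\ref{prop:complete-to-uniform} to transfer the $1/S$ error are exactly right. However, the ``main obstacle'' you flag has a much simpler resolution than the one you propose, and your proposed fix does not actually work. You suggest that the first projection $\proj_{\brho^{(d)}}$ must \emph{shrink} the truncated bottom fan-in via the variable identification of Remark~\ref{rem:equiv-projection}. But there is no reason the $\Theta(w^{1/5})$ literals of a truncated bottom-level gate should fall into few blocks of $A_{d-1}$; in the worst case they lie in pairwise distinct blocks and the fan-in does not shrink at all under projection. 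So this mechanism cannot be relied upon.

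The actual fix is the observation you already used in your Theorem~\ref{thm:bounded-bottom-fan-in} argument but did not carry over: Proposition~\ref{prop:psl-typical-tau} tolerates bottom fan-in $r$ as large as $w^{1/5}$ (indeed any $r=o(1/t_k)=\tilde o(\sqrt{w})$), because its dependence on $r$ is through $r\,e^{rt_k/(1-t_k)}$ rather than the $r\,2^r$ of Proposition~\ref{prop:psl-initial}. So after truncating to bottom fan-in $2\log S\le w^{1/5}$ you simply apply the first projection $\proj_{\brho^{(d)}}$ \emph{without} invoking any switching lemma---it trivially does not increase depth, size, or bottom fan-in---and then switch at each of the remaining $d-2$ projections $\brho^{(d-1)},\dots,\brho^{(2)}$ using Proposition~\ref{prop:psl-typical-tau} with $r=s=w^{1/5}$. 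This is still $d-2$ depth reductions, taking $C'$ from depth $d$ down to depth $2$ with the original top gate intact. In short, you had the right count of $d-2$ switchings but placed them on the wrong projections: shift them one step later, skipping $\brho^{(d)}$ and including $\brho^{(2)}$. This is precisely the paper's proof.
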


We first prove Theorem \ref{thm:bounded-bottom-fan-in}, which deals with depth-$d$ circuits with bounded bottom fan-in.  We state the following simple lemma explicitly for convenience of later reference:
\begin{lemma}
\label{lem:combine}
Suppose that $3 \leq d \leq {\frac {c\log w}{\log \log w}}$.
For $2 \le k \le d-1$ and $\ell\in \N$, let $C : \zo^{A_{k+1}} \to\zo$ be a size-$S$ depth-$\ell$ circuit with bottom fan-in $w^{1/5}$.\ignore{ Let $\tau \in \{\bullet,\circ,\ast\}^{A_{k+1}}$ be typical.} For any $\tau \in \{\bullet,\circ,\ast\}^{A_{k+1}}$, with probability at least $1-S\cdot 4^{-w^{1/5}}\ignore{ - e^{-\Omega(w^{1/6})}}$ over $\brho\leftarrow\calR(\tau)$, we have that $\proj_{\brho}\,C$ is a depth-$(\ell-1)$ circuit with  bottom fan-in $w^{1/5}$, and has the same number of gates at distance at least two from the input variables as~$C$. 
\end{lemma}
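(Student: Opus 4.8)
The plan is to reduce the whole statement to the projection switching lemma (Proposition~\ref{prop:psl-typical-tau}). Since $C$ is alternating and layered, its gates at distance exactly two from the inputs, say $\gate_1,\dots,\gate_N$ with $N\le S$, are each the top gate of a depth-two subcircuit $F_j$ — consisting of $\gate_j$, the distance-one gates feeding it, and the input literals — of bottom fan-in at most $w^{1/5}$, with $F_j$ a DNF if $\gate_j$ is an $\OR$ and a CNF if $\gate_j$ is an $\AND$. I apply Proposition~\ref{prop:psl-typical-tau} to each $F_j$ with bottom-fan-in parameter $r=w^{1/5}$ and decision-tree-depth parameter $s=w^{1/5}$, so that $\proj_\brho\,F_j$ fails to be a depth-$w^{1/5}$ decision tree with probability at most $\bigl(O\bigl(w^{1/5}\,e^{w^{1/5}t_k/(1-t_k)}\,w^{-1/4}\bigr)\bigr)^{w^{1/5}}$. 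The hypothesis $d\le c(\log w)/\log\log w$ lets me invoke Lemma~\ref{lemma:tk-bound} and hence the estimate $t_k=\Theta(\sqrt{(\log w)/w})$ from (\ref{eq:estimates}); this makes $w^{1/5}t_k/(1-t_k)=o(1)$, so $e^{w^{1/5}t_k/(1-t_k)}=O(1)$, and the base of the exponent is $O(w^{1/5-1/4})=O(w^{-1/20})=o(1)$. For $m$ large this base is at most $1/4$, so each $F_j$ has failure probability at most $4^{-w^{1/5}}$, and a union bound over the $N\le S$ subcircuits bounds the total failure probability by $S\cdot 4^{-w^{1/5}}$.

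On the complementary event every $\proj_\brho\,F_j$ is a decision tree of depth at most $w^{1/5}$, and such a decision tree can be written both as a CNF and as a DNF of width at most $w^{1/5}$. For each $j$ I pick the representation whose top gate has the same type as the gate of $C$ two levels above $\gate_j$ (the distance-three layer). Plugging these depth-two gadgets back in place of the $F_j$'s inside $\proj_\brho\,C$ produces two consecutive layers of like-type gates directly below the distance-three layer of $C$; collapsing each such pair into a single gate yields an alternating layered circuit of depth $\ell-1$ with bottom fan-in at most $w^{1/5}$. Moreover this collapse introduces no new gate at distance $\ge 2$ from the inputs: every such gate of $\proj_\brho\,C$ descends from a distinct gate of $C$ at distance $\ge 3$, and the like-type merging only removes gates at the distance-two layer — which gives the bookkeeping on the number of gates at distance at least two asserted in the statement (in particular it does not increase, which is what the subsequent iterated applications of the lemma need).

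Since Proposition~\ref{prop:psl-typical-tau} does the real work, there is no deep obstacle here; the two points that need care are (i) checking that the parameter choice $r=s=w^{1/5}$ genuinely drives the exponent base $r\,e^{rt_k/(1-t_k)}\,w^{-1/4}$ below $1/4$ — this is exactly where the control on the $t_k$'s (hence the upper bound on $d$) and the gap between the exponents $1/5$ and $1/4$ enter — and (ii) the collapse bookkeeping, i.e.\ verifying that merging the decision-tree gadgets into the layer above keeps the circuit alternating and layered, keeps the bottom fan-in at $w^{1/5}$, and accounts correctly for the gates at distance at least two. (An entirely analogous statement for the initial projection $\brho\leftarrow\calR_\init$ follows the same way from Proposition~\ref{prop:psl-initial}, whose cleaner bound $(O(r2^r w^{-1/4}))^s$ is $o(1)$ whenever $r$ is at most a small multiple of $m$.)
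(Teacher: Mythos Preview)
Your proposal is correct and follows precisely the paper's approach: apply Proposition~\ref{prop:psl-typical-tau} with $r=s=w^{1/5}$ to each bottom depth-two subcircuit and union bound over the at most $S$ such subcircuits. The paper's proof is a single sentence to this effect; you have simply filled in the parameter verification (that $re^{rt_k/(1-t_k)}w^{-1/4}=O(w^{-1/20})\le 1/4$, using Lemma~\ref{lemma:tk-bound}) and the standard decision-tree-to-CNF/DNF collapse, both of which are left implicit there. Your remark that the gate count at distance $\ge 2$ \emph{does not increase} (rather than literally staying equal) is the operationally relevant fact for the iterated applications.
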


\begin{proof}
The lemma follows from applying Proposition~\ref{prop:psl-typical-tau} with $r = s = w^{1/5}$ and a union bound over all gates of $C$ (at most $S$ many) that are at distance 2 from the input variables.
\end{proof}

The following proposition directly implies Theorem \ref{thm:bounded-bottom-fan-in} by straightforward translation of parameters, recalling (\ref{eq:nversusd}):

\begin{proposition}
\label{prop:bounded-bottom-fan-in}
For $2 \leq d \leq {\frac {c\sqrt{\log n}}{\log \log n}}$,
let $C : \zo^{A_{d}} \to \zo$ be a depth-$d$ circuit with bottom fan-in $\frac1{5} m$ and size $S \le 2^{w^{1/5}}$.  Then $\mathbf{\Psi}(C)$ is computed by a depth-$(w^{1/5})$ decision tree with probability $1-e^{-\Omega(w^{1/{5}})}$.  \end{proposition}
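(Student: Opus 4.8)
\emph{Proof plan.} The plan is to peel off the $d-1$ random projections comprising $\mathbf{\Psi}$ (Definition~\ref{def:boldpsi}) one stage at a time, using the projection switching lemmas of Section~\ref{sec:psl} to show that each stage reduces the depth of the (projected) circuit by one with high probability, until the final projection switches a single depth-$2$ circuit to a depth-$w^{1/5}$ decision tree. We maintain the invariant that after the stage corresponding to $\proj_{\brho^{(k+1)}}$, with high probability the current circuit is an alternating, layered, depth-$k$ circuit over $\zo^{A_k}$ with bottom fan-in at most $w^{1/5}$ and at most $S$ gates at distance $\ge 2$ from the inputs (the last quantity being exactly what the union bounds below require). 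There are three types of stages: the initial projection $\proj_{\brho^{(d)}}$ with $\brho^{(d)}\leftarrow\calR_\init$, handled by Proposition~\ref{prop:psl-initial}; the intermediate projections $\proj_{\brho^{(d-1)}},\dots,\proj_{\brho^{(3)}}$, handled by Lemma~\ref{lem:combine} (which repackages Proposition~\ref{prop:psl-typical-tau}); and the final projection $\proj_{\brho^{(2)}}$, handled by applying Proposition~\ref{prop:psl-typical-tau} directly to the remaining single depth-$2$ circuit. The small cases $d=2,3$ are degenerate and get handled along the way.

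\emph{Initial stage.} We would apply Proposition~\ref{prop:psl-initial} with $r=\tfrac{1}{5}m$ (the bottom fan-in of $C$) and $s=w^{1/5}$ to each of the $\le S$ depth-$2$ subcircuits rooted at the gates of $C$ at distance $2$ from the inputs. Since $w=\Theta(m2^m)$ we have $r2^r w^{-1/4}=\Theta\!\big(m^{3/4}\,2^{-m/20}\big)=o_m(1)$, so for $m$ large the per-subcircuit failure probability $\big(O(r2^r w^{-1/4})\big)^{w^{1/5}}$ is at most $4^{-w^{1/5}}$, and a union bound over the $\le S$ subcircuits bounds the failure probability of this stage by $S\cdot 4^{-w^{1/5}}$. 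On the complementary event each such subcircuit has become a depth-$w^{1/5}$ decision tree; rewriting each tree as a width-$w^{1/5}$ CNF or DNF (matching the gate one level up) and merging it into that gate yields, by the standard manipulation preserving alternation and layeredness, a depth-$(d-1)$ circuit over $\zo^{A_{d-1}}$ with bottom fan-in $\le w^{1/5}$ and the same number of distance-$\ge 2$ gates as $C$ (hence $\le S$), establishing the invariant. When $d=2$ there is no gate above, $\proj_{\brho^{(d)}}C=\mathbf{\Psi}(C)$ is itself the depth-$w^{1/5}$ decision tree, and we are done with failure probability $\le 4^{-w^{1/5}}$.

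\emph{Intermediate and final stages.} For $k$ running from $d-1$ down to $3$, condition on the good event that all earlier stages succeeded, so the current circuit is a depth-$k$ circuit over $\zo^{A_k}$ with bottom fan-in $\le w^{1/5}$ and $\le S$ distance-$\ge 2$ gates; apply Lemma~\ref{lem:combine} with its parameters instantiated so that the projection is $\brho^{(k)}\leftarrow\calR(\hat{\brho^{(k+1)}})$ (note $\hat{\brho^{(k+1)}}\in\{\bullet,\circ,\ast\}^{A_k}$ by Definition~\ref{def:lift}). This reduces the depth to $k-1$, preserves the invariant, and fails with probability $\le S\cdot 4^{-w^{1/5}}$; here the hypothesis $d\le \frac{c\sqrt{\log n}}{\log\log n}$ implies the bound $d\le\frac{c'\log w}{\log\log w}$ required by Lemma~\ref{lem:combine}, by a routine translation using~(\ref{eq:nversusd}) and $\log w=\Theta(m)$. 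After processing $k=3$ (or, when $d=3$, already after the initial stage) the current circuit is a single depth-$2$ circuit over $\zo^{A_2}$ with bottom fan-in $\le w^{1/5}$; applying Proposition~\ref{prop:psl-typical-tau} to it with $r=s=w^{1/5}$ and $\tau=\hat{\brho^{(3)}}$, and noting $rt_k/(1-t_k)=O\big(w^{1/5}\sqrt{\log w/w}\big)=o(1)$ by~(\ref{eq:estimates}) so that the failure probability is $\big(O(w^{-1/4})\big)^{w^{1/5}}\le 4^{-w^{1/5}}$, gives that $\mathbf{\Psi}(C)=\proj_{\brho^{(2)}}(\cdots)$ is a depth-$w^{1/5}$ decision tree over $\zo^{A_1}\equiv\zo^{w_0}$.

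\emph{Putting it together and the main obstacle.} The process involves at most $d-1$ stages, each failing with probability at most $S\cdot 4^{-w^{1/5}}$, so a union bound gives total failure probability at most $(d-1)S\cdot 4^{-w^{1/5}}\le (d-1)\,2^{w^{1/5}}\cdot 4^{-w^{1/5}}=(d-1)\,2^{-w^{1/5}}=e^{-\Omega(w^{1/5})}$, using $S\le 2^{w^{1/5}}$ and $d\le \frac{c\sqrt{\log n}}{\log\log n}$. On the complementary event $\mathbf{\Psi}(C)$ is a depth-$w^{1/5}$ decision tree, as claimed (and Theorem~\ref{thm:bounded-bottom-fan-in} then follows by the parameter translation in~(\ref{eq:nversusd})). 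The genuine technical content is already contained in the projection switching lemmas of Section~\ref{sec:psl}; what remains here is organizational, and the main points requiring care are (i) carrying out the ``absorb a decision tree into the gate above'' step so that the depth-$k$ / bottom-fan-in-$w^{1/5}$ / $\le S$-distance-$\ge 2$-gates invariant genuinely propagates through all stages (including the degenerate cases $d=2,3$), and (ii) checking the two parameter estimates — $r2^rw^{-1/4}=o(1)$ for the initial stage, where the bottom fan-in $r=\tfrac{1}{5}m$ is atypically large, and the accumulated $(d-1)$-fold union bound still coming out to $1-e^{-\Omega(w^{1/5})}$.
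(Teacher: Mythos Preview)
Your proposal is correct and follows essentially the same approach as the paper's proof: apply Proposition~\ref{prop:psl-initial} for the initial stage, then Lemma~\ref{lem:combine} iteratively, with the parameter checks $r2^r w^{-1/4}=o(1)$ and $(d-1)S\cdot 4^{-w^{1/5}}=e^{-\Omega(w^{1/5})}$ all carried out as you have them. The only cosmetic difference is that you single out the final stage and apply Proposition~\ref{prop:psl-typical-tau} to the remaining depth-$2$ circuit directly, whereas the paper simply remarks that the last invocation of Lemma~\ref{lem:combine} (whose proof is Proposition~\ref{prop:psl-typical-tau} plus a union bound) already yields a depth-$w^{1/5}$ decision tree.
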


\begin{proof}
Applying Proposition~\ref{prop:psl-initial} with $r =  \frac1{5} m$ and $s = w^{1/5}$ to each of the bottom-layer gates of $C$,  we have that $\proj_{\brho^{(d)}}\,C$ is a depth-$(d-1)$ circuit with bottom fan-in $w^{1/5}$ with probability at least $1- S\cdot 4^{-w^{1/5}} \ge 1-2^{-w^{1/5}}$ over $\brho^{(d)}\leftarrow\calR_{\init}$.  If $d=2$, we observe that in fact Proposition~\ref{prop:psl-initial} gives us that $\proj_{\brho^{(d)}}\,C$ is a decision tree of the desired depth, and we are done.  If $d \geq 3$,\ignore{we observe that furthermore, by Proposition~\ref{prop:initial-typical} we have that $\hat{\brho^{(d)}}$ is typical with probability $1 - e^{-\Omega(w^{1/6})}$.  T} the claim follows by a union bound over $d-2$ applications of Lemma~\ref{lem:combine} (where we observe from the proof of Lemma~\ref{lem:combine} that in the last application of Lemma~\ref{lem:combine} we may conclude that $\mathbf{\Psi}(C)$ is in fact a decision tree of depth $w^{1/5}).$
\end{proof}

Next we turn to Theorem \ref{thm:different-alternation}.  We require the following standard lemma showing that any circuit can be ``trimmed'' to reduce its bottom fan-in while changing its value on only a few inputs:

\begin{lemma}
\label{lem:truncate}
Let $C : \zo^n \to \zo$ be a circuit and let $\eps > 0$. There exists a circuit $C' : \zo^n\to\zo$ such that
\begin{enumerate}
\item The size and depth of $C'$ are both at most that of $C$;
\item The bottom fan-in of $C'$ is at most $\log(S/\eps)$;
\item $C$ and $C'$ are $\eps$-close with respect to the uniform distribution.
\end{enumerate}
\end{lemma}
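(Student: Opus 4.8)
The plan is the standard ``bottom-gate trimming'' argument. Write $t := \log(S/\eps)$ and let $C'$ be obtained from $C$ by the following surgery on the bottom layer (the gates adjacent to the input literals, each an $\AND$ or $\OR$ of literals): every bottom gate whose fan-in exceeds $t$ is deleted and replaced by a constant --- a wide bottom $\AND$ is replaced by the constant $0$, and a wide bottom $\OR$ by the constant $1$. (If one wishes $C'$ to remain alternating and layered, one can then propagate these constants upward and collapse degenerate gates; this is not needed for the statement.)

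Items~1 and~2 will be immediate from the construction: replacing a single gate by a constant adds no new gates and shortens no root-to-leaf path, so $C'$ has size and depth at most those of $C$; and every bottom gate of $C'$ that survives has fan-in at most $t = \log(S/\eps)$ (the ones replaced by constants have no inputs at all). For item~3 I would use a union bound over the trimmed gates. Fix a bottom gate $g$ with fan-in $r > t$. If $g$ is an $\AND$ then at least $r$ distinct variables occur in it, so $\Prx_{\bX}[g(\bX) = 1] \le 2^{-r} < 2^{-t} = \eps/S$ for $\bX$ uniform on $\zo^n$; hence $g$ and its replacement (the constant $0$) disagree on at most an $\eps/S$ fraction of inputs. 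Dually, a wide bottom $\OR$ disagrees with the constant $1$ on at most an $\eps/S$ fraction of inputs. Since $C$ has at most $S$ gates, at most $S$ gates are trimmed, so by a union bound the event that \emph{some} trimmed gate differs from its replacement has probability at most $S \cdot (\eps/S) = \eps$ under $\bX$. Off this event every trimmed gate agrees with its replacement and therefore $C(\bX) = C'(\bX)$; this gives $\Prx_{\bX}[C(\bX) \ne C'(\bX)] \le \eps$, which is item~3.

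This lemma is routine and presents no genuine obstacle; the only points that require a moment's attention are choosing the direction of the constant replacement correctly (a wide $\AND$ collapses to $0$ and a wide $\OR$ to $1$, so that the substitution only discards the exponentially rare event that the gate takes its non-default value), the choice of threshold $t = \log(S/\eps)$ which makes each such rare event have probability at most $\eps/S$, and the observation that substituting constants for gates can only decrease the size and depth of the circuit.
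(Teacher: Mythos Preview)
Your proposal is correct and is exactly the paper's argument: replace each bottom $\AND$ (respectively $\OR$) of fan-in greater than $\log(S/\eps)$ by the constant $0$ (respectively $1$), note each such gate takes its minority value with probability at most $\eps/S$, and union bound over at most $S$ gates.
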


\begin{proof}  $C'$ is obtained from $C$ by replacing each bottom-level $\AND$ ($\OR$, respectively) gate whose fan-in is too large with 0 (1, respectively).   Each such gate originally takes its minority value on at most an $\eps/S$ fraction of all inputs so the lemma follows from a union bound.
\end{proof}

The following proposition directly implies Theorem \ref{thm:different-alternation} (by straightforward translation of parameters):

\begin{proposition}
\label{prop:different-alternation}
For $2 \leq d \leq  {\frac {c\sqrt{\log n}}{\log \log n}}$,
let $C : \zo^{A_d} \to \zo$ be a depth-$d$ circuit of size $S \le 2^{\frac1{2}w^{1/5}}$ and unbounded bottom fan-in.
\begin{enumerate}
\item If the top gate of $C$ is an $\AND$, then $\mathbf{\Psi}(C)$ is $(1/S)$-close to a width-$(w^{1/5})$ CNF with probability $1-e^{-\Omega(w^{1/5})}$.
\item If the top gate of $C$ is an $\OR$, then $\mathbf{\Psi}(C)$ is $(1/S)$-close to a width-$(w^{1/5})$ DNF with probability $1-e^{-\Omega(w^{1/5})}$.
\end{enumerate}
\end{proposition}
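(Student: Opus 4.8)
The plan is to mimic the bounded-bottom-fan-in argument of Theorem~\ref{thm:bounded-bottom-fan-in} (Proposition~\ref{prop:bounded-bottom-fan-in}), the only new wrinkle being that $C$ has unbounded bottom fan-in, which we first tame by truncation. First I would apply Lemma~\ref{lem:truncate} with $\eps := 2^{-w^{1/5}}$ to obtain a circuit $C'$ with the same top gate as $C$, depth at most $d$, bottom fan-in at most $\log(S/\eps) = \log S + w^{1/5} \le \tfrac32 w^{1/5}$ (using $S \le 2^{w^{1/5}/2}$), and with $C'$ being $\eps$-close to $C$ with respect to the uniform distribution on $\zo^n$. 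The choice of $\eps$ is forced by two competing needs: it must be small enough that $\eps S = e^{-\Omega(w^{1/5})}$ (for the closeness transfer in the last step), yet the resulting bottom fan-in $\log(S/\eps) = O(w^{1/5})$ must remain small enough for the projection switching lemma (Proposition~\ref{prop:psl-typical-tau}) to be useful.

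Next I would show that $\mathbf{\Psi}(C')$ is, with probability $1-e^{-\Omega(w^{1/5})}$, a width-$w^{1/5}$ CNF if the top gate of $C'$ is $\AND$ and a width-$w^{1/5}$ DNF if it is $\OR$. The one subtlety relative to Proposition~\ref{prop:bounded-bottom-fan-in} is that the initial projection $\proj_{\brho^{(d)}}$ (drawn from $\calR_{\init}$) cannot be made to perform a useful switching step, since Proposition~\ref{prop:psl-initial}'s bound $\big(O(r2^r w^{-1/4})\big)^s$ is vacuous for $r = \Theta(w^{1/5})$. So I would simply note that $\proj_{\brho^{(d)}}$ leaves the depth at most $d$ and the bottom fan-in at most $\tfrac32 w^{1/5}$ (a projection can only simplify), and then rely on the remaining $d-2$ projections $\proj_{\brho^{(d-1)}},\dots,\proj_{\brho^{(2)}}$ --- all of the $\calR(\tau)$ type --- to carry out all $d-2$ depth reductions, bringing $C'$ from depth $\le d$ down to depth $\le 2$. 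Each reduction is an application of Proposition~\ref{prop:psl-typical-tau} with $r = O(w^{1/5})$ and $s = w^{1/5}$ to every depth-$2$ subcircuit at the bottom: here $rt_k/(1-t_k) = o(1)$ and $rw^{-1/4} = o(1)$ because $t_k = \Theta(\sqrt{\log w / w})$ by~(\ref{eq:estimates}) and $r = O(w^{1/5})$, so each subcircuit fails to switch with probability $e^{-\Omega(w^{1/5})}$, and a union bound over the $\le S$ subcircuits and $\le d-2$ stages (note the stated range of $d$ lies inside the range required by Lemma~\ref{lem:combine}) costs only $e^{-\Omega(w^{1/5})}$; after the first switch the bottom fan-in is already $s = w^{1/5}$. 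In rewriting each switched decision tree as a depth-$2$ formula I would always pick the orientation ($\AND$-of-$\OR$s versus $\OR$-of-$\AND$s) that merges with the gate one level up, so the top gate is preserved; after $d-2$ reductions this yields a depth-$\le 2$ circuit whose top gate equals that of $C'$ and hence of $C$, i.e.\ the claimed width-$w^{1/5}$ CNF or DNF. (For $d=2$ there are no switching steps: $\mathbf{\Psi}(C') = \proj_{\brho^{(2)}}(C')$ is literally the projection of the CNF/DNF $C'$, and projection cannot increase clause or term width.)

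Finally I would transfer the closeness from the pair $(C,C')$ to the pair $(\mathbf{\Psi}(C),\mathbf{\Psi}(C'))$. Applying Proposition~\ref{prop:complete-to-uniform} with $f = C$ and $g = C'$ gives $\Prx_{\mathbf{\Psi},\bY}[(\mathbf{\Psi}(C))(\bY) \ne (\mathbf{\Psi}(C'))(\bY)] = \Prx_{\bX}[C(\bX)\ne C'(\bX)] \le \eps$, where $\bY$ is drawn from the relevant biased product distribution $\calD$ over $\zo^{w_0}$. By Markov's inequality over the choice of $\mathbf{\Psi}$, with probability at least $1-\eps S = 1-e^{-\Omega(w^{1/5})}$ we have $\Prx_{\bY}[(\mathbf{\Psi}(C))(\bY) \ne (\mathbf{\Psi}(C'))(\bY)] \le 1/S$, i.e.\ $\mathbf{\Psi}(C)$ is $(1/S)$-close to $\mathbf{\Psi}(C')$ with respect to $\calD$ (the notion of closeness relevant for the correlation bound of Section~\ref{sec:puttogether}). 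Intersecting this event with the good event from the previous paragraph (losing another $e^{-\Omega(w^{1/5})}$) shows that with probability $1-e^{-\Omega(w^{1/5})}$ the function $\mathbf{\Psi}(C)$ is $(1/S)$-close to a width-$w^{1/5}$ CNF (resp.\ DNF), which is the proposition.

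I expect the main obstacle to be precisely the tension governing the choice of $\eps$: the truncation must be aggressive enough for the Markov-based closeness transfer to yield failure probability $e^{-\Omega(w^{1/5})}$, yet gentle enough that the bottom fan-in $\log(S/\eps)$ stays within reach of Proposition~\ref{prop:psl-typical-tau}. Coupled with this is the bookkeeping needed to see that the ``wasted'' initial $\calR_{\init}$-projection still leaves a circuit that the remaining $d-2$ projections collapse all the way to depth $2$, and that the top gate survives the switching-and-merging so that the final depth-$2$ circuit has the orientation matching that of $C$ --- and hence, after combining with the structural results of Section~\ref{sec:sipser-survives}, opposite to that of $\mathbf{\Psi}(\BalancedSipser_d)$.
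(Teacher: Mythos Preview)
Your proposal is correct and follows the same overall strategy as the paper (truncate, then apply the projection switching lemma $d-2$ times after letting the initial $\calR_{\init}$-projection pass through harmlessly). The one substantive difference is in how the closeness is handled. The paper simply takes $\eps = 1/S$ in Lemma~\ref{lem:truncate}; then the bottom fan-in of $C'$ is $\log(S/\eps)=2\log S \le w^{1/5}$ exactly, so Lemma~\ref{lem:combine} applies verbatim with $r=s=w^{1/5}$, and no Markov step is needed: the ``$(1/S)$-close'' in the statement is taken to mean that $C$ is $(1/S)$-close to some $C'$ under the uniform distribution on $\{0,1\}^n$ with $\mathbf{\Psi}(C')$ a width-$w^{1/5}$ CNF, and the $1/S$ loss is absorbed once at the $\{0,1\}^n$ level in the proof of Theorem~\ref{thm:alternationpattern} via Proposition~\ref{prop:complete-to-uniform}.

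Your route instead takes $\eps=2^{-w^{1/5}}$, accepts a bottom fan-in of $\tfrac32 w^{1/5}$ (which still makes Proposition~\ref{prop:psl-typical-tau} nontrivial, though you would need to redo the first invocation of Lemma~\ref{lem:combine} with this slightly larger $r$), and then uses Proposition~\ref{prop:complete-to-uniform} plus Markov to push the closeness down to the $\bY$-distribution for a typical fixed $\Psi$. This buys you a literal interpretation of ``$\mathbf{\Psi}(C)$ is $(1/S)$-close to a CNF'' under $\calD$, which matches exactly how the quantity is consumed in Section~\ref{sec:puttogether}, at the cost of one extra step and a marginally larger $r$ in the first switching stage. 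Both arguments are valid; the paper's is shorter, yours is more explicit about which distribution the closeness lives in.
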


\begin{proof}
By symmetry it suffices to prove the first claim. Applying Lemma~\ref{lem:truncate} with $\eps = 1/S$, we have that $C$ is $(1/S)$-close to a circuit $C' : \zo^{A_d} \to \zo$ of size and depth at most that of $C$, and with bottom fan-in $\log(S/\eps) = 2 \log(S) \le w^{1/5}$. Certainly the size, depth, and bottom fan-in of $\proj_{\brho^{(d)}}\,C'$ is at most that of $C'$ with probability $1$ over the randomness of $\brho^{(d)} \leftarrow\calR_{\init}$ (note that unlike in the proof of Proposition~\ref{prop:bounded-bottom-fan-in}, we do not argue that the depth of $C'$ decreases by one under an $\calR_{init}$-random projection; the bottom fan-in of $C'$ is too large for us to apply Proposition~\ref{prop:psl-initial}).  If $d=2$ then this already gives the result (in fact with no failure probability).  If $d \geq 3,$ the proposition then follows by a union bound over $d-2$ applications of Proposition~\ref{lem:combine}.
\end{proof}


\section{$\BalancedSipser$ retains structure under random projections} \label{sec:combo}

Now we turn our attention to the randomly projected target $\mathbf{\Psi}(\BalancedSipser_d)$. As discussed in Section~\ref{sec:second-outline}, we would like to establish Property 2 showing that $\BalancedSipser_d$ ``retains structure'' under a $\mathbf{\Psi}$-random projection: with high probability over $\mathbf{\Psi}$, the randomly projected target $\mathbf{\Psi}(\BalancedSipser_d)$ is a depth-one formula whose bias remains very close to $1/2$ (with respect to an appropriate product distribution over $\zo^{w_0}$). This is necessarily a high-probability statement; to establish it, we must account for the failure probabilities introduced by each of the $d-1$ individual random projections $\proj_{\brho^{(k)}}$ that comprise $\mathbf{\Psi} \equiv \{ \brho^{(k)}\}_{k\in \{2,\ldots, d\}}$.\footnote{\label{footnote:sec9}As a concrete example of a failure event, consider an outcome $\rho^{(d)} \in \supp(\calR_\init) \equiv \{0,1,\ast\}^{A_{d-1}\times [m]}$ which is such that $(\rho^{(d)}_b)^{-1}(0)$ is nonempty  for all $b\in A_{d-1}$.   In this case
 \[ \proj_{\rho^{(d)}}\,\BalancedSipser_d  \equiv \proj\, (\BalancedSipser_d \uhr \brho^{(d)}) \equiv 0 \]
(recall that the bottom-level gates of $\BalancedSipser_d$ are $\AND$ gates), and our target function is set to the constant 0 already after the first $\calR_\init$-random projection.} To reason about these failure probabilities and carefully account for them, in Section \ref{sec:typical-stuff} we introduce the notion of a ``typical'' restriction and prove some useful properties about how typicality interacts with our random projections.  In
Section \ref{sec:sipser-survives} we use these properties to establish the main results of this section, that $\BalancedSipser_d$ ``retains structure'' when it is hit with the random projection $\mathbf{\Psi}.$

\subsection{Typical restrictions}
\label{sec:typical-stuff}

Recalling the $\bullet,\circ$ notation from Table~\ref{table:circ-bullet}, we begin with the following definition:

\begin{definition} \label{def:typical}
Let $\tau \in \{\bullet,\circ,\ast\}^{A_k}$ where $2\le k \le d-1$.  We say that $\tau$ is \emph{typical} if it satisfies:
\begin{enumerate}
\item For every $a\in A_{k-1}$ the set $\tau^{-1}_a(\ast) \sse [w_{k-1}]$ is $k$-acceptable, where we recall from Definition~\ref{def:acceptable} that this means
\[ |\tau^{-1}_a(\ast)| = qw \pm w^{\gradual(k,d)} \quad \text{where $\gradual(k,d) := \frac1{3} + \frac{d-k-1}{12d}$}. \] (Note  that $\frac1{3} \le \gradual(k,d) \le \frac{5}{12} < \frac1{2}$ for all $d\in \N$ and $2\le k \le d-1$.) We observe that by Definition \ref{def:lift}, this condition implies that for every $\alpha \in A_{k-2}$, we have
\begin{equation}
\hat{\tau}_\alpha \in \{ \ast, \circ \}^{w_{k-2}}.  \label{eq:old-prop-2-of-typicality}
\end{equation}
\item For every $\alpha \in A_{k-2}$,
\[ |(\hat{\tau}_\alpha)^{-1}(\ast)| \ge w_{k-2}- w^{4/5}. \]
\end{enumerate}
We note that (\ref{eq:old-prop-2-of-typicality}) and Condition (2)  together imply that
\[ \hat{\hat{\tau}}_\alpha  = \ast  \quad \text{for all $\alpha\in A_{k-2}$.} \]
\end{definition}

See Figure~\ref{figure:the-figure} on the next page for an illustration of a typical $\tau$.  The rationale behind Definition~\ref{def:typical} is that projections $\proj_{\rho}$ such that $\hat\rho$ is typical have a very limited (and well-controlled) effect on the target $\BalancedSipser_d$:  roughly speaking, these projections ``wipe out'' the bottom-level gates of the formula (reducing its depth by one), ``trim'' the fan-ins of the next-to-bottom-level gates from $w$ to approximately $qw = \tilde{\Theta}(\sqrt{w})$, but otherwise essentially preserves the rest of the structure of the formula. We give a precise description in Section~\ref{sec:sipser-survives}; see Remark~\ref{rem:typical-restrictions}.

\begin{figure}[p!]
\begin{center}
\includegraphics[width=16.5cm]{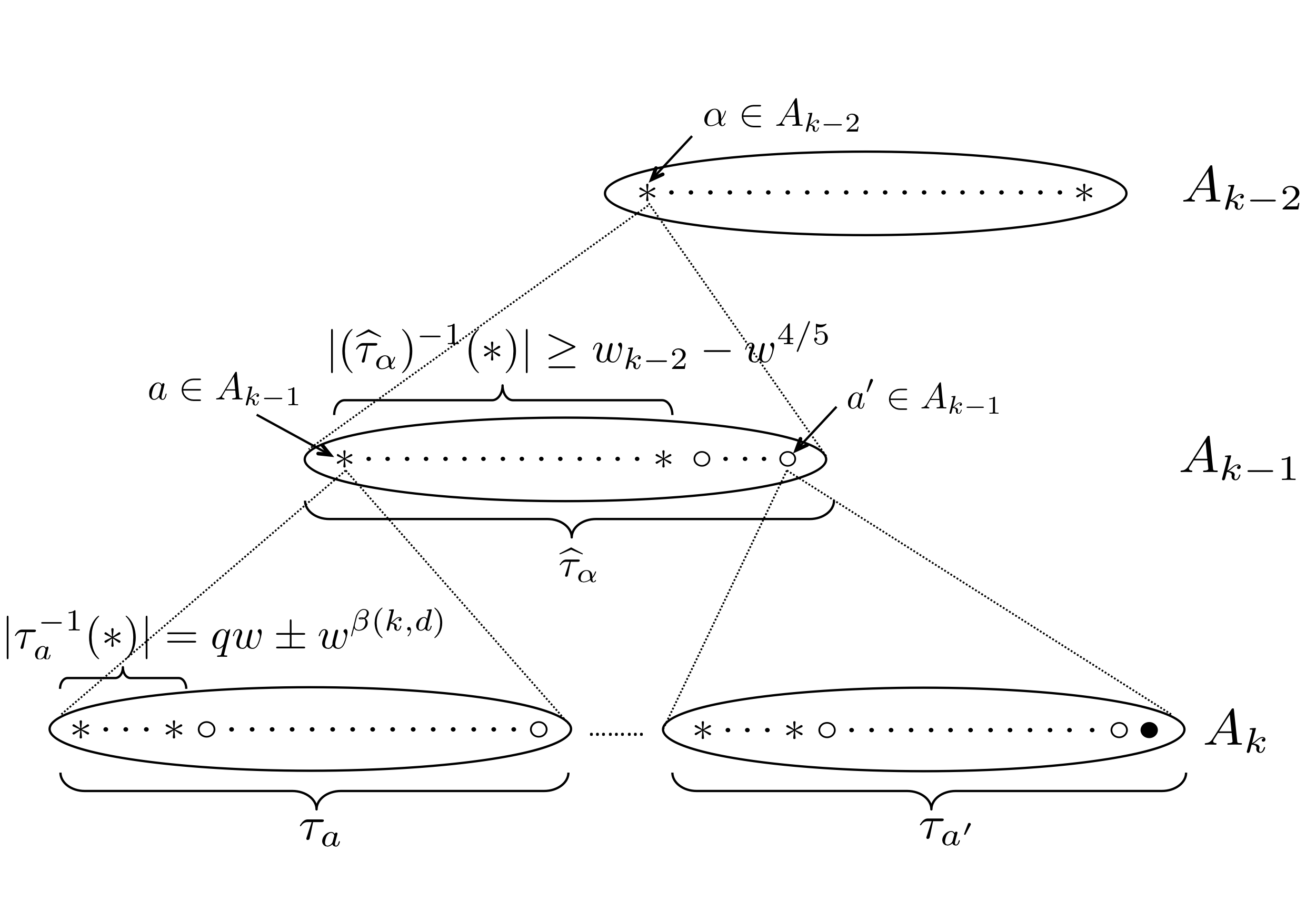}
\end{center}
\caption{The figure illustrates a typical $\tau \in \{\bullet,\circ,\ast\}^{A_{k}}$.   For $a \in A_{k-1}$,  $\tau_a$ is a block of length $w_{k-1}$,  i.e.\ a string in $\{\bullet,\circ,\ast\}^{w_{k-1}}$. We may think of the block $\tau_a$ as being located at level $k$.  By Condition (1) of Definition \ref{def:typical}, for every $a \in A_{k-1}$ we have that $|\tau_a^{-1}(\ast)|$, the number of $\ast$'€™s in $\tau_a$, is roughly $qw = \tilde{\Theta}(\sqrt{w})$.
The lift $\hat{\tau}$ of $\tau$ is a string in $\{\bullet,\circ,\ast\}^{A_{k-1}}$, and for $\alpha \in A_{k-2}$, $\hat{\tau}_\alpha$ is a block of length $w_{k-2}$. We may think of the block $\hat{\tau}_\alpha$ as being located at level $k-1$.  As stipulated by (\ref{eq:old-prop-2-of-typicality}), for every $\alpha \in A_{k-2}$, the string $\hat{\tau}_\alpha$ belongs to $\{\ast,\circ\}^{w_{k-2}}$.  By Condition (2) of Definition \ref{def:typical}, for every $\alpha \in A_{k-2}$, we have that $|(\hat{\tau}_\alpha)^{-1}(\ast)|$, the number of $\ast$'€™s in $\hat{\tau}_\alpha$, is at least $w_{k-2} - w^{4/5} = w_{k-2}(1-o(1))$.  Finally, we observe that (\ref{eq:old-prop-2-of-typicality}) and Condition~(2) of Definition \ref{def:typical} imply that $\hat{\hat\tau}_\alpha = \ast$ for every $\alpha \in A_{k-2}.$
}
\label{figure:the-figure}
\end{figure}

To prove that $\mathbf{\Psi}(\BalancedSipser_d)$ is a well-structured formula with high probability over the random choice of $\mathbf{\Psi} \equiv \{ \brho^{(k)}\}_{k\in \{2,\ldots, d\}}$, we will in fact establish the stronger statement showing that with high probability, every single one of the individual random projections $\proj_{\brho^{(k)}}$ only has a limited and well-controlled effect (in the sense described above) on the structure of $\BalancedSipser_d$. By Definition~\ref{def:typical}, this amounts to showing that the lifts $\hat{\brho^{(d)}},\ldots,\hat{\brho^{(2)}}$ associated with the $d-1$ individual projections comprising $\mathbf{\Psi}$ are \emph{all} typical with high probability.  We prove this inductively:  we first show that for $\brho^{(d)}\leftarrow \calR_\init$ its lift $\hat{\brho^{(d)}}$ is typical with high probability (Proposition~\ref{prop:initial-typical}), and then argue that if $\rho^{(k+1)}$ is typical then the lift $\hat{\brho^{(k)}}$ of $\brho^{(k)}\leftarrow\calR(\hat{\rho^{(k+1)}})$ is also typical with high probability (Proposition~\ref{prop:typical-yields-typical}).  The parameters of Definition~\ref{def:typical} are chosen carefully so that it ``bootstraps'' in the sense of Proposition~\ref{prop:typical-yields-typical}; in particular, this is the reason why we allow more and more deviation from $qw$ in Condition 1 as $k$ gets smaller (closer to the root).

Our two main results in this subsection are the following:

\begin{proposition}[Establishing initial typicality]\label{prop:initial-typical}
Suppose that $3 \leq d \leq {\frac {c\log w}{\log \log w}}$ for a sufficiently small absolute constant $c>0.$   Then
\[
\Prx_{\brho\leftarrow\calR_\init}[\hat\brho \text{~is typical}] \geq 1-e^{\tilde{\Omega}(w^{1/6})}.
\]
\end{proposition}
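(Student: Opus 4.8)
The plan is to verify the two conditions of Definition~\ref{def:typical} separately for $\hat{\brho}$ where $\brho \leftarrow \calR_\init$, and take a union bound. Recall from Definition~\ref{def:calR-init} that the blocks $\brho_a$, $a \in A_{d-1}$, are mutually independent, and that for each block, writing $\hat{\brho}_a \in \{\bullet,\circ,\ast\}$, we have $\hat{\brho}_a = \ast$ exactly when $\brho_a \in \{\ast,1\}^m \setminus \{1\}^m$ (which happens with probability $q$, by the second line of~(\ref{eq:initial})), $\hat{\brho}_a = \circ$ (i.e.\ $\bullet$ never occurs — here $d$ is even at the bottom so $\circ \leftrightarrow 1$) with probability $\lambda$, and $\hat{\brho}_a = \bullet$ (some coordinate is $0$) with probability $1-\lambda-q$. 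In particular $\hat{\brho}_a$ is \emph{never} equal to the ``$\bullet = 0$'' value that would violate~(\ref{eq:old-prop-2-of-typicality})\,---\,wait, more carefully: $\hat\brho$ lives in $\{\bullet,\circ,\ast\}^{A_{d-1}}$, which is the level-$(d-1)$ row, and Condition~(1) for $\hat\brho$ (a restriction in $\{\bullet,\circ,\ast\}^{A_{d-1}}$, i.e.\ an element that we should think of as $\tau$ with $k = d-1$) asks that for every $a \in A_{d-2}$, the set $(\hat\brho_a)^{-1}(\ast) \subseteq [w_{d-2}] = [w]$ is $(d-1)$-acceptable, i.e.\ has size $qw \pm w^{\gradual(d-1,d)}$; and Condition~(2) asks that $(\hat{\hat{\brho}}_\alpha)^{-1}(\ast) \ge w_{d-3} - w^{4/5}$ for every $\alpha \in A_{d-3}$ (with the usual edge-case adjustments when $d = 3$, where there is no level $d-3$ and only Condition~(1) applies).

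For Condition~(1): fix $a \in A_{d-2}$. The indicator random variables $\mathbf{1}[\hat{\brho}_{a,i} = \ast]$, $i \in [w]$, are i.i.d.\ Bernoulli$(q)$, so $|(\hat\brho_a)^{-1}(\ast)|$ is a Binomial$(w, q)$ with mean $qw$. Since $w^{\gradual(d-1,d)} \ge w^{1/3}$ and $qw = \Theta(\sqrt{w\log w})$ by~(\ref{eq:estimates}), a Chernoff bound (Fact~\ref{fact:chernoff}) with deviation parameter $\gamma = w^{\gradual(d-1,d)}/(qw) = \tilde\Theta(w^{1/3}/\sqrt{w}) = \tilde\Theta(w^{-1/6})$ gives failure probability $\exp(-\Omega(\gamma^2 qw)) = \exp(-\tilde\Omega(w^{2\gradual(d-1,d)}/(qw))) = \exp(-\tilde\Omega(w^{2/3}/\sqrt w)) = \exp(-\tilde\Omega(w^{1/6}))$ for this single $a$. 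For Condition~(2): fix $\alpha \in A_{d-3}$; we need $\hat{\hat\brho}_\alpha$ to have at least $w - w^{4/5}$ coordinates equal to $\ast$. Now $\hat{\hat\brho}_{\alpha,a} = \ast$ fails only if the block $\hat\brho_a$ is ``dead,'' i.e.\ $\hat\brho_a \in \{\bullet,\circ\}^w \setminus \{\ast,\circ\}$-pattern; concretely $\hat{\hat\brho}_{\alpha,a} \ne \ast$ iff $\hat\brho_a$ contains a $\bullet$ (probability $1-\lambda-q$, giving $\hat{\hat\brho}_{\alpha,a} = $ the ``bad'' value) — wait, that probability is close to $1$, so this cannot be right; the point is that for the \emph{next} lift to behave, we actually track the pattern one level up. Let me restate: Condition~(2) for $\tau = \hat\brho$ at level $k=d-1$ says $|(\widehat{\hat{\brho}}_\alpha)^{-1}(\ast)| \ge w - w^{4/5}$, and $\widehat{\hat\brho}_\alpha$ is obtained by lifting $\hat\brho$ once more, so $(\widehat{\hat\brho}_\alpha)_a = \ast$ iff $\hat\brho_a \in \{\ast,\circ\}^{?}$... since each $\hat\brho_a$ is a single symbol, $(\widehat{\hat\brho})$ is lifting a string of single symbols; $(\widehat{\hat\brho}_\alpha)_a = \circ$ if $\hat\brho_{\alpha,a'} = \bullet$ for some $a'$... the indexing collapses because $w_{d-1}$-blocks at the leaf level already got consumed. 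The clean way: each coordinate of $\widehat{\hat\brho}$ is $\ast$ unless the corresponding $w$-block of $\hat\brho$ is all-$\circ$ or contains a $\bullet$; $\hat\brho_a = \bullet$ with probability $1-\lambda-q = 1 - o(1)$, so a $w$-block of $\hat\brho$ contains a $\bullet$ with overwhelming probability, forcing $(\widehat{\hat\brho})_\alpha$-coordinate to be $\circ$ not $\ast$ — that would make Condition~(2) fail, contradiction. So I have the direction of $\circ$ vs $\ast$ backwards somewhere; the resolution is in Table~\ref{table:circ-bullet} and the parity bookkeeping, and I will simply carry out the lift computation symbol-by-symbol with the table in hand rather than guess.

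So the concrete plan is: (a) write out, using Definition~\ref{def:calR-init}, Definition~\ref{def:lift}, and Table~\ref{table:circ-bullet}, the exact per-block distribution of $\hat\brho_a$ and then of $\widehat{\hat\brho}_\alpha$, being careful about the $\bullet/\circ$ swap between adjacent levels; (b) for Condition~(1), identify the event $\{(\hat\brho_a)^{-1}(\ast)\text{ is }(d-1)\text{-acceptable}\}$ as a Binomial concentration event and bound its failure by $\exp(-\tilde\Omega(w^{1/6}))$ via Fact~\ref{fact:chernoff}; (c) for Condition~(2), identify $\{(\widehat{\hat\brho}_\alpha)^{-1}(\ast) \ge w_{d-3} - w^{4/5}\}$ as another Binomial concentration event (the relevant ``bad block'' probability will be the small one, $\lambda$ or $q$ or something of that order, once the parities are sorted out, making $w^{4/5}$ a very comfortable slack) and again bound its failure by $\exp(-\tilde\Omega(w^{1/6}))$ or better; (d) union bound over all $a \in A_{d-2}$ and all $\alpha \in A_{d-3}$ — there are at most $n$ such indices, and $\log n = \Theta((d-1)m) = \Theta(d\log w)$, so for $d \le c\log w/\log\log w$ with $c$ small enough, $n \le \exp(o(w^{1/6}))$, absorbing the union bound into the $\tilde\Omega$. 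The main obstacle is purely the careful bookkeeping in step~(a): getting the $\bullet \leftrightarrow \circ$ correspondence right across the level shift so that the two conditions translate into Binomial events with the \emph{small} bias governing the rare (bad) outcome; once that is pinned down, steps~(b)–(d) are routine Chernoff-plus-union-bound estimates.
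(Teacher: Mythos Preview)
Your proposal is correct and follows essentially the same approach as the paper: verify Condition~(1) by a Chernoff bound on a $\mathrm{Binomial}(w,q)$ variable (the paper's Lemma~\ref{lemma:condition1-initial}), verify Condition~(2) by computing that each coordinate of $\widehat{\hat\brho}_\alpha$ fails to be $\ast$ with probability at most $(1-\lambda-q)^w + (1-(1-\lambda)^w) \le e^{-qw} + \lambda w = \tilde O(w^{-1/4})$ and applying Chernoff again (the paper's Lemma~\ref{lemma:condition3-initial}), then union bound over $|A_{d-2}|+|A_{d-3}| \le 2n = w^{O(d)}$ indices using $d \le c\log w/\log\log w$. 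Your visible confusion about the $\bullet/\circ$ conventions is real but you correctly flag it as the only obstacle; once you carry out step~(a) with Table~\ref{table:circ-bullet} in hand you will find that at level $d-1$ we have $\hat\brho_a = \bullet$ (which equals $1$ there) with probability $\lambda$, $\hat\brho_a = \circ$ (which equals $0$) with probability $1-\lambda-q$, and $\hat\brho_a = \ast$ with probability $q$, which makes the ``bad'' probability for Condition~(2) small as needed.
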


\begin{proposition}[Preserving typicality]\label{prop:typical-yields-typical}
Suppose that $3 \leq d \leq {\frac {c\log w}{\log \log w}}$ for a sufficiently small absolute constant $c>0.$  Let $2 \leq k \leq d-1$ and let $\tau \in \{\bullet,\circ,\ast\}^{A_{k+1}}$ be typical.  Then
\[
\Prx_{\brho\leftarrow \calR(\tau)}[\hat\brho \text{~is typical}] \geq  1 - e^{-\Omega(w^{1/6})}.
\]
\end{proposition}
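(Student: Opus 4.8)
The plan is to verify the two conditions of Definition~\ref{def:typical} for $\hat\brho$ directly, block by block, and then take a union bound over the (at most $n$) blocks involved. The starting point is that because $\tau$ is typical, Condition~(1) of Definition~\ref{def:typical} for $\tau$ says \emph{exactly} that each $S_b := \tau_b^{-1}(\ast)$ is acceptable; hence in the draw $\brho\leftarrow\calR(\tau)$ we are always in the second case of Definition~\ref{def:main-projection} on blocks $b$ with $\hat\tau_b=\ast$, and by (\ref{eq:old-prop-2-of-typicality}) the lift $\hat\tau$ has no $\bullet$-coordinate. This gives a clean description of $\hat\brho$: independently across blocks $b$, if $\hat\tau_b=\circ$ then $\hat\brho_b=\circ$, while if $\hat\tau_b=\ast$ then $\hat\brho_b$ equals $\bullet$ with probability $\lambda$, $\ast$ with probability $q_b$, and $\circ$ otherwise, where by Lemma~\ref{lem:bound-on-qa} (equivalently, from (\ref{eq:def-of-qa}), (\ref{eq:def-of-tk}) and acceptability of $S_b$) one has $q_b = q \pm \tilde{O}\big(w^{\gradual(k+1,d)-1}\big)$. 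Since distinct blocks of $\hat\brho$ are determined by disjoint collections of the independent blocks of $\brho$, the per-block events analyzed below are independent, so the union bounds are legitimate.

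For Condition~(1) of $\hat\brho$, fix a block $a$; within it $|\hat\brho^{-1}(\ast)|$ is a sum of $N$ independent indicators of means $q_b$, where $N$ is the number of $\ast$'s of $\hat\tau$ in the corresponding block. Condition~(2) of $\tau$'s typicality gives $w-w^{4/5}\le N\le w$, so the expectation is $qw \pm \tilde{O}\big(w^{\gradual(k+1,d)}\big)$, the error absorbing both the $|N-w|\cdot q$ term ($\le\tilde{O}(w^{3/10})$) and the $N\cdot|q_b-q|$ term ($\le\tilde{O}(w^{\gradual(k+1,d)})$). Since $\gradual(k,d)=\gradual(k+1,d)+\tfrac1{12d}$ and $d\le\tfrac{c\log w}{\log\log w}$ for small $c$, we have $w^{1/(12d)}\ge(\log w)^{1/(12c)}$, which dominates these polylogarithmic factors; hence the expectation lies within $\tfrac12 w^{\gradual(k,d)}$ of $qw$, and it suffices to bound the chance of an additional $\tfrac12 w^{\gradual(k,d)}$ deviation. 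As the expectation is $\Theta(qw)=\Theta(\sqrt{w\log w})$ and $\gradual(k,d)\ge\tfrac13$, Fact~\ref{fact:chernoff} yields failure probability $e^{-\Omega(w^{1/6})}$ for this block (the exponent being governed by $2\gradual(k,d)-\tfrac12\ge\tfrac16$); a union bound over the $\le n$ blocks is harmless because $\log n = O(dm)\ll w^{1/6}$.

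For Condition~(2) of $\hat\brho$, fix a higher-level block $\alpha$ and count the coordinates $j$ for which the block $(\hat\brho)_{(\alpha,j)}$ fails to lift to $\ast$ in $\hat{\hat\brho}$, i.e.\ contains a $\bullet$ or is entirely $\circ$. Using the description above together with Condition~(2) of $\tau$'s typicality (so the corresponding block of $\hat\tau$ has $M\ge w-w^{4/5}$ coordinates equal to $\ast$): the probability of a $\bullet$ is $\le M\lambda\le w\lambda=\tilde{O}(w^{-1/4})$, and the probability of being all-$\circ$ is $\le(1-q(1-o(1)))^{M}=e^{-\Omega(\sqrt{w\log w})}$; so each such block fails to lift to $\ast$ with probability $\tilde{O}(w^{-1/4})$, independently over the $\le w$ values of $j$. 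The number of failures is thus a sum of independent indicators of mean $\tilde{O}(w^{3/4})=o(w^{4/5})$, and Fact~\ref{fact:chernoff} bounds the probability it reaches $w^{4/5}$ by $e^{-\Omega(w^{4/5})}\le e^{-\Omega(w^{1/6})}$; a union bound over the $\le n$ choices of $\alpha$ again costs nothing. (The absence of $\bullet$'s in $\hat{\hat\brho}$, noted after (\ref{eq:old-prop-2-of-typicality}), is automatic once Condition~(1) holds.) Combining the bounds for the two conditions proves the proposition.

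The step I expect to be the real obstacle is the drift bookkeeping in Condition~(1) and its compatibility with the inductive use of the proposition: each application shifts the mean away from $qw$ by $\tilde{O}(w^{\gradual(k+1,d)})$ (from $N$ dropping slightly below $w$ and $q_b$ deviating slightly from $q$), and this shift must still fit inside the next level's window $w^{\gradual(k,d)}$, which is wider by only the tiny factor $w^{1/(12d)}$. Arranging the family $\gradual(\cdot,d)$ so that it simultaneously gains enough at each level \emph{and} stays below $\tfrac12$ across all $d$ levels --- the latter being needed so the Chernoff mean $qw=\tilde{\Theta}(\sqrt w)$ genuinely dominates the window --- is precisely what forces the quantitative hypothesis $d\le c\log w/\log\log w$; once these exponents are pinned down, each individual estimate is a routine Chernoff calculation.
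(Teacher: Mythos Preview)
Your proposal is correct and follows essentially the same approach as the paper: the paper verifies Condition~(1) and Condition~(2) of typicality block by block (its Lemmas~\ref{lemma:condition1-yields} and~\ref{lemma:condition3-yields}), using precisely the description of the distribution of $\hat\brho_b$ you give (isolated in the paper as Lemma~\ref{lemma:haha}), the bound $q_b=q(1\pm 2t_{k+1}w^{\gradual(k+1,d)})$ from Lemma~\ref{lem:bound-on-qa}, and then Chernoff plus a union bound over the at most $n\le w^{O(d)}$ blocks. Your identification of the drift bookkeeping (that $w^{1/(12d)}$ must beat the polylog factors, forcing $d\le c\log w/\log\log w$) is exactly the crux the paper leans on as well.
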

\ignore{\lnote{With the new setting of $q$ I think we suffer a $\tilde\Omega$ loss in the first bound, but not the second}}

\subsubsection{Establishing initial typicality:  Proof of Proposition \ref{prop:initial-typical}} \label{sec:pf-initial-typical}

For notational brevity, throughout this subsubsection we write $\btau$ to denote $\hat{\brho} \in \{0,1,\ast\}^{A_{d-1}}$ where $\brho \leftarrow \calR_{\init}$.  We proceed to establish the two conditions of Definition \ref{def:typical}.

\begin{lemma} [Condition (1) of typicality] \label{lemma:condition1-initial}
Fix $a\in A_{d-2}$. Then
\[ \Pr\big[|\btau_a^{-1}(\ast)| = qw \pm  w^{1/3}\big] \ge 1- e^{-{\tilde{\Omega}}(w^{1/6})}. \]
\end{lemma}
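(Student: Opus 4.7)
The plan is to unwind the definitions so that $|\btau_a^{-1}(\ast)|$ becomes a binomial random variable, at which point the result follows from a standard Chernoff bound.

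First I would establish the marginal structure of $\btau = \hat\brho$. By Definition~\ref{def:calR-init}, the blocks $\brho_b$ for $b \in A_{d-1}$ are mutually independent, and the middle case $\brho_b \in \{\ast,1\}^m \setminus \{1\}^m$ occurs with probability exactly $q$. Matching the three cases of $\calR_\init$ against the three cases in Definition~\ref{def:lift}, one sees that $\hat\brho_b = \ast$ precisely when $\brho_b$ falls in this middle branch. Hence the coordinates of $\btau$ are i.i.d.\ across $b \in A_{d-1}$, with $\Pr[\btau_b = \ast] = q$ for each $b$.

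Next, fix the given $a \in A_{d-2}$ (which is meaningful because $d \geq 3$). The substring $\btau_a$ consists of $w_{d-2} = w$ of these independent coordinates, so $|\btau_a^{-1}(\ast)|$ is the sum of $w$ i.i.d.\ Bernoulli$(q)$ indicators, with mean $\mu := qw$. By the estimates in (\ref{eq:estimates}), $\mu = \Theta(\sqrt{w \log w}) = \tilde\Theta(w^{1/2})$.

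Finally I would invoke Chernoff (Fact~\ref{fact:chernoff}) with additive deviation $w^{1/3}$, corresponding to relative deviation $\gamma := w^{1/3}/\mu = \tilde\Theta(w^{-1/6})$, which is $o(1)$ for large $w$. Both tails then yield an exponent of order
\[
\gamma^2 \mu \;=\; \frac{w^{2/3}}{\mu} \;=\; \tilde\Theta(w^{1/6}),
\]
so the total deviation probability is bounded by $2\exp(-\tilde\Omega(w^{1/6}))$, as required. The only ``obstacle'' here is bookkeeping: verifying that the $q$-branch of $\calR_\init$ maps exactly onto the $\ast$-branch of the lift, and checking that under $q = \tilde\Theta(w^{-1/2})$ the Chernoff exponent evaluates to $\tilde\Omega(w^{1/6})$. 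Once both are in hand, the proof is essentially a one-line estimate.
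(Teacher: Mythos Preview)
Your proposal is correct and follows essentially the same approach as the paper: identify that $\Pr[\btau_{a,i}=\ast]=q$ independently across $i\in[w]$ (you spell out the lift-vs-$\calR_\init$ matching more explicitly than the paper does), then apply the Chernoff bound of Fact~\ref{fact:chernoff} with $\mu=qw=\tilde\Theta(w^{1/2})$ and $\gamma\mu=w^{1/3}$ to get the $\exp(-\tilde\Omega(w^{1/6}))$ tail.
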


\begin{proof}
Recalling (\ref{eq:initial}), we have that
\[ \Pr[\btau_{a,i} = \ast] = q \quad \text{independently for all $i \in [w]$}.\]
We shall apply Fact~\ref{fact:chernoff} with
\[ \bS = \bZ_1 + \cdots + \bZ_{w} \quad \text{where $\bZ_i \leftarrow \{ 0_{1-q}, 1_q\}$} \quad \text{(so $\mu = \E[\bS]$ is $qw$)}, \]
and $\gamma$ such that $\gamma\mu = w^{1/3}$.  Observe that since $\mu=qw=\Theta((w\log w)^{1/2})$, we have $\gamma=\Theta(w^{-1/6} (\log w)^{-1/2})$.  Hence by Fact~\ref{fact:chernoff}
 we have that
 \begin{equation*}
  \Pr\big[\big||\btau^{-1}_a(\ast)|- qw\big| > w^{1/3} \big]
  \le \exp\left( -\Omega\big(\gamma^2\mu\big)\right) \\
  = \exp\big(-\tilde{\Omega}\big(w^{1/6}\big)\big). \qedhere \end{equation*}
\end{proof}

The following observations may help the reader follow the next proof:   Recalling Table~\ref{table:circ-bullet}, since our $\btau$ belongs to $\{0,1,\ast\}^{A_{d-1}}$, we see that $\btau$ corresponds to the second row of the table:  the gates at depth $d-2$ are $\OR$ gates, a $\circ$-value for a coordinate of $\btau$ corresponds to 0, and a $\bullet$-value corresponds to 1.  However, since $\widehat{\btau}$, the lift of $\btau$, is one level higher than $\btau$ in the
$\BalancedSipser_d$ formula (see Figure~\ref{figure:the-figure}), $\widehat{\btau}$ corresponds to the first row of the table; so when Definition \ref{def:lift} specifies a coordinate $\widehat{\btau}_{\alpha,i}$ of $\widehat{\btau}$, a $\circ$-value for $\widehat{\btau}_{\alpha,i}$ corresponds to 1 and a $\bullet$-value corresponds to 0.

\ignore{
\begin{lemma}[Condition (2) of typicality] \label{lemma:condition2-initial}
Fix $\alpha \in A_{d-3}$. Then
\[ \Pr\big[\hat\btau_\alpha \in \{ \ast,1\}^{w_{d-3}} \big] \ge \ignore{1- w e^{-qw} = }1- e^{-{\Omega}(\sqrt{w})}. \]
\end{lemma}

\begin{proof}
Recall from Definition \ref{def:lift} that $\widehat{\btau}_{\alpha,i}=0$ iff $\btau_{\alpha,i} = \{0\}^{w_{d-2}}$
(in order for an $\OR$ to be 0, all its inputs must be 0).  In turn, each coordinate of $\btau_{\alpha,i}$ (we emphasize that $\btau_{\alpha,i}$ is a string of length  $w$) is an $\AND$ of the $w$ coordinates of some $\brho_a$ from  (\ref{eq:initial}), and hence is 0 with probability $1 - \lambda - q$. By independence  we have that
\begin{equation} \Pr[\hat\btau_{\alpha,i} = 0] = \delta := (1-\lambda-q)^{w} \le (1-q)^{w} \le e^{-qw}\label{eq:epsdef}
\end{equation}
holds independently for all $i\in [w_{d-3}]$. Therefore
\[  \Pr\big[\hat\btau_\alpha \in \{ \ast,1\}^{w_{d-3}} \big]
= (1-\delta)^{w_{d-3}}
\ge (1-e^{-qw})^{w_{d-3}}
\ge 1-e^{-\Omega(\sqrt{w\log w})}, \]
and so the lemma holds with room to spare.
\end{proof}
}

\begin{lemma}[Condition ({2}) of typicality] \label{lemma:condition3-initial}
Fix $\alpha \in A_{d-3}$.  Then
\[ \Pr\big[|(\hat\btau_\alpha)^{-1}(\ast)| <  w_{d-3}-w^{4/5} \big] \le e^{-\Omega(\sqrt{w})}.\]
\end{lemma}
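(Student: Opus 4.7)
The plan is to reduce the statement to a Chernoff bound on a sum of independent $\{0,1\}$-valued indicators, one for each $i \in [w_{d-3}]$. First I would unwind the two applications of the lift operator (Definition~\ref{def:lift}), keeping careful track of the $\bullet/\circ$ conventions of Table~\ref{table:circ-bullet}: since $\btau = \hat\brho$ lives at level $d-1$ (row~2, so $\bullet=1$, $\circ=0$) and $\hat\btau$ lives at level $d-2$ (row~1, so $\bullet=0$, $\circ=1$), translating Definition~\ref{def:lift} into $\{0,1,\ast\}$-notation yields
\[
\hat\btau_{\alpha,i} = \ast \quad\Longleftrightarrow\quad \btau_{\alpha,i} \in \{0,\ast\}^{w_{d-2}} \setminus \{0\}^{w_{d-2}},
\]
i.e.\ the length-$w_{d-2}$ block $\btau_{\alpha,i}$ contains no $1$'s but at least one $\ast$. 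Applying Definition~\ref{def:lift} one more time to $\brho \leftarrow \calR_{\init}$ together with Definition~\ref{def:calR-init} shows that the $\{\btau_b\}_{b \in A_{d-1}}$ are mutually independent with $\Pr[\btau_b = 0] = 1-\lambda-q$, $\Pr[\btau_b = \ast] = q$, and $\Pr[\btau_b = 1] = \lambda$.

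Combining these two facts (using $w_{d-2} = w$) gives
\[
\Pr[\hat\btau_{\alpha,i} \ne \ast] \;=\; 1 - (1-\lambda)^w + (1-\lambda-q)^w \;\le\; \lambda w + (1-q)^w,
\]
and plugging in $\lambda = (\log w)^{3/2}/w^{5/4}$ and the estimates in~(\ref{eq:estimates}) bounds this by $O\!\bigl((\log w)^{3/2}/w^{1/4}\bigr) + e^{-\Omega(\sqrt{w\log w})}$. Since the blocks $\btau_{\alpha,i}$ for distinct $i \in [w_{d-3}]$ are determined by disjoint groups of the independent blocks $\{\brho_b\}$, the indicators $\bZ_i := \mathbf{1}[\hat\btau_{\alpha,i}\ne\ast]$ are mutually independent. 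Their sum therefore has mean
\[
\mu \;=\; w_{d-3} \cdot \Pr[\hat\btau_{\alpha,1} \ne \ast] \;\le\; O\!\bigl((\log w)^{3/2}\, w^{3/4}\bigr),
\]
using $w_{d-3} \le w$ (which holds in both the $d=3$ case, where $w_{d-3}=w_0 = \Theta(w/m)$, and the $d\ge 4$ case, where $w_{d-3}=w$).

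Finally I would apply the upper-tail Chernoff bound (Fact~\ref{fact:chernoff}) with $(1+\gamma)\mu = w^{4/5}$; then $\gamma = \omega(1)$ and $\gamma^2\mu/(2+\gamma) = \Omega(\gamma\mu) = \Omega(w^{4/5})$, yielding
\[
\Pr\!\Big[\textstyle\sum_{i=1}^{w_{d-3}} \bZ_i \ge w^{4/5}\Big] \;\le\; \exp\!\bigl(-\Omega(w^{4/5})\bigr),
\]
which is considerably stronger than the claimed $e^{-\Omega(\sqrt{w})}$. The only step requiring genuine care is the $\bullet/\circ$ bookkeeping in the first paragraph — tracking that the $\bullet/\circ$ roles flip between $\brho$ and $\btau$, and flip back again between $\btau$ and $\hat\btau$; once the event $\{\hat\btau_{\alpha,i}\ne\ast\}$ is correctly re-expressed as a block-level event on $\btau$, the probability computation and Chernoff application are routine.
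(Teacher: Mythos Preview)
Your proposal is correct and follows essentially the same approach as the paper: compute $\Pr[\hat\btau_{\alpha,i}\ne\ast]$ by unwinding the lift, bound it by $O((\log w)^{3/2}/w^{1/4})$, and apply Chernoff to the independent indicators. The paper arrives at the same per-coordinate bound by computing $\Pr[\hat\btau_{\alpha,i}=0]$ and $\Pr[\hat\btau_{\alpha,i}=1]$ separately rather than $\Pr[\hat\btau_{\alpha,i}=\ast]$ directly, and it simply asserts the Chernoff conclusion ``with room to spare'' without your explicit $e^{-\Omega(w^{4/5})}$ quantification; your treatment of the $\bullet/\circ$ bookkeeping and of the $d=3$ versus $d\ge 4$ distinction for $w_{d-3}$ is more explicit than the paper's, but the arguments are the same in substance.
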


\begin{proof}
Recall from Definition \ref{def:lift} that $\widehat{\btau}_{\alpha,i}=0$ iff $\btau_{\alpha,i} = \{0\}^{w_{d-2}}$
(in order for an $\OR$ to be 0, all its inputs must be 0).  In turn, each coordinate of $\btau_{\alpha,i}$ (we emphasize that $\btau_{\alpha,i}$ is a string of length  $w$) is an $\AND$ of the $w$ coordinates of some $\brho_a$ from  (\ref{eq:initial}), and hence is 0 with probability $1 - \lambda - q$. By independence  we have that
\begin{equation} \Pr[\hat\btau_{\alpha,i} = 0] = \delta := (1-\lambda-q)^{w} \le (1-q)^{w} \le e^{-qw}\label{eq:epsdef}
\end{equation}
holds independently for all $i\in [w_{d-3}]$.

We next give an expression for
$ \Pr\big[\hat\btau_{\alpha,i} =1\big].$  From Definition \ref{def:lift} we have that $\widehat{\btau}_{\alpha,i}=1$ iff any of the $w$ coordinates of $\btau_{\alpha,i}$ is 1 (in order for an $\OR$ to be 1, we only need one input to be 1).  As noted above, each coordinate of $\btau_{\alpha,i}$ is an $\AND$ of the $w$ coordinates of some $\brho_a$ from (\ref{eq:initial}); this $\AND$ is 1 iff its input string is $\{1\}^{w}$, so by (\ref{eq:initial}) each coordinate of $\btau_{\alpha,i}$ is not 1 with probability $1-\lambda$.  Hence all $w$ coordinates of $\btau_{\alpha,i}$ are not 1 with probability $(1-\lambda)^{w}$, and $\widehat{\btau}_{\alpha,i}=1$ with probability $1-(1-\lambda)^{w}$.

We thus have that, independently for all $i \in [w_{d-3}],$
\begin{align*} \Pr\big[\hat\btau_{\alpha,i} \in \{ 0,1\}\big]  = \delta + (1-(1-\lambda)^{w})
\le \delta + (1-(1-\lambda w))
\le 2\lambda w = {\frac {2 (\log w)^{3/2}}{w^{1/4}}},
\end{align*}
where the last inequality holds (with room to spare) by (\ref{eq:epsdef}).
  Applying Fact~\ref{fact:chernoff}, we have that
\[ \Pr\big[ |\hat\btau^{-1}_\alpha(\{0,1\})| > w^{4/5} \big] \le e^{-\Omega(\sqrt{w})} \]
with room to spare.
\end{proof}

\begin{proof}[Proof of Proposition~\ref{prop:initial-typical}]
The proposition follows immediately from Lemmas~\ref{lemma:condition1-initial} and \ref{lemma:condition3-initial} and a union bound over all $a \in A_{d-2}$ and $\alpha \in A_{d-3}$, using the fact that $|A_{d-3}| \leq |A_{d-2}| \leq n \leq w^{O(d)}$ and the bound $d \leq {\frac {c\log w}{\log\log w}}$.\ignore{\rnote{It seems here all we need is that $d \leq c w^{1/6}/\log(w)$, I guess the stronger bound on $d$ is required elsewhere...}}
\end{proof}

\subsubsection{Preserving typicality:  Proof of Proposition \ref{prop:typical-yields-typical}}  \label{sec:pf-typical-yields-typical}

The following numerical lemma relates $q_a$ as defined in (\ref{eq:def-of-qa}) of Definition~\ref{def:main-projection} to $q$ as defined in~(\ref{eq:def-of-lambda-and-q}):

\begin{lemma}
\label{lem:bound-on-qa}
Let $2 \le k \le d-1$ and $S\sse [w_{k-1}]$ be $k$-acceptable (i.e.~$|S| = qw \pm w^{\gradual(k,d)}$), and define
\[ q' = \frac{(1-t_k)^{|S|} - \lambda}{t_{k-1}}. \]
Then $q' = q\cdot  (1 \pm 2t_k w^{\gradual(k,d)})$. (And in particular, by our bounds on $t_k$ in Lemma~\ref{lemma:tk-bound} and the definition of $\gradual(k,d)$, we have that $q' = q \pm o(q)$ for all $k$.)
\end{lemma}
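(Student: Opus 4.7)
The plan is a direct algebraic manipulation, pivoting on the defining recurrence (\ref{eq:def-of-tk}) for the sequence $\{t_k\}$, which rearranges to the identity
\[ (1-t_k)^{qw} = q\,t_{k-1} + \lambda. \]
This identity is exactly what makes the ``idealized'' value $|S| = qw$ produce $q' = q$; the lemma's task is to quantify how much $q'$ deviates when $|S|$ deviates from $qw$ by at most $w^{\gradual(k,d)}$.

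First I would write $|S| = qw + \epsilon$ with $|\epsilon| \le w^{\gradual(k,d)}$ and factor
\[ (1-t_k)^{|S|} = (1-t_k)^{qw}\cdot(1-t_k)^{\epsilon} = (q\,t_{k-1}+\lambda)\cdot(1-t_k)^{\epsilon}. \]
Next, using Lemma~\ref{lemma:tk-bound} together with the estimates (\ref{eq:estimates}), I would observe $t_k\cdot w^{\gradual(k,d)} = \tilde O(w^{\gradual(k,d)-1/2}) = o(1)$, since $\gradual(k,d) \le 5/12 < 1/2$. A one-line Taylor/Bernoulli estimate (via Fact~\ref{fact:approx}) then gives
\[ (1-t_k)^{\epsilon} = 1 + \delta,\quad \text{where}\quad |\delta| \le (1+o(1))\,t_k\,|\epsilon| \le 2\,t_k\,w^{\gradual(k,d)}. \]

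Substituting and subtracting $\lambda$ yields
\[ (1-t_k)^{|S|} - \lambda = q\,t_{k-1} + (q\,t_{k-1}+\lambda)\,\delta, \]
and dividing through by $t_{k-1}$ gives the clean formula
\[ q' = q + \Bigl(q + \tfrac{\lambda}{t_{k-1}}\Bigr)\delta. \]
Finally, I would check that $\lambda/t_{k-1} = o(q)$ using the explicit values $\lambda = (\log w)^{3/2}/w^{5/4}$ and $q, t_{k-1} = \Theta(\sqrt{\log w /w})$ from (\ref{eq:def-of-lambda-and-q}) and (\ref{eq:estimates}); indeed $\lambda/t_{k-1} = \tilde O(w^{-3/4}) \ll q$. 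Thus the bracketed factor is $q(1+o(1))$, giving $q' = q\cdot(1\pm 2t_k\,w^{\gradual(k,d)})$ as claimed.

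I do not expect any real obstacle: the proof is essentially book-keeping around the recurrence, and the only thing to be careful about is absorbing the $\lambda/t_{k-1}$ term into the leading $q$ to land the claimed constant $2$ (rather than $1+o(1)$ which would also suffice). The ``$q \pm o(q)$'' consequence mentioned parenthetically is immediate since $t_k\,w^{\gradual(k,d)} = o(1)$ uniformly in $k$.
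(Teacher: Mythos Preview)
Your proposal is correct and takes essentially the same approach as the paper: both pivot on the recurrence identity $(1-t_k)^{qw} = qt_{k-1} + \lambda$ from (\ref{eq:def-of-tk}) and then control the perturbation factor $(1-t_k)^{\pm w^{\gradual(k,d)}}$ using $t_k w^{\gradual(k,d)} = o(1)$ together with $\lambda/t_{k-1} = o(q)$. The only cosmetic difference is that you treat the upper and lower bounds simultaneously via $\epsilon = |S| - qw$, whereas the paper writes out the two directions separately; the arithmetic is the same.
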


\begin{proof}
For the lower bound, we have the following:
\begin{align*}
q' &\le   \frac{(1-t_{k})^{qw-w^{\gradual({k},d)}} - \lambda}{t_{k-1}}  \\
& = \frac{(1-t_{k})^{qw} - \lambda (1-t_{k})^{w^{\gradual(k,d)}}}{ t_{k-1} (1-t_{k})^{w^{\gradual(k,d)}}}  \\
&\le \frac{(1-t_{k})^{qw} - \lambda }{ t_{k-1} (1-t_{k})^{w^{\gradual(k,d)}}}
+ {\frac {\lambda t_k w^{\gradual(k,d)}}{ t_{k-1} (1-t_{k})^{w^{\gradual(k,d)}}}}\\
& = \frac{t_{k-1} q } { t_{k-1} (1-t_{k})^{w^{\gradual(k,d)}}} + {\frac {\lambda t_k w^{\gradual(k,d)}}{ t_{k-1} (1-t_{k})^{w^{\gradual(k,d)}}}} \qquad \qquad\qquad  \text{(by (\ref{eq:def-of-tk}))} \\
&\le \frac{q}{1-t_{k}w^{\gradual(k,d)}}
+ {\frac {1 + 3q^{0.1}}{1 - t_kw^{\gradual(k,d)}}} \cdot \lambda w^{\gradual(k,d)} \qquad \qquad \qquad \text{(by Lemma~\ref{lemma:tk-bound})}\\
&\le q \cdot (1+ 2t_{k}w^{\gradual(k,d)}),
\end{align*}
where for the last inequality we have used the fact that $q t_k =
\tilde{\Theta}(w^{-1})$ whereas $\lambda = \tilde{\Theta}(w^{-5/4}).$
For the upper bound, we have
\begin{align*}
q' &\ge \frac{(1-t_{k})^{qw+w^{\gradual(k,d)}} - \lambda}{t_{k-1}}  \\
&\ge \frac{(1-t_{k})^{qw}(1-t_{k}w^{\gradual(k,d)}) - \lambda}{t_{k-1}} \\
&\ge q\cdot (1-t_{k}w^{\gradual(k,d)}) - \frac{ \lambda}{t_{k-1}} \qquad \qquad \qquad \qquad \text{(by (\ref{eq:def-of-tk}))} \\
&\ge q \cdot (1-2t_{k}w^{\gradual(k,d)}).
\end{align*}
where the last inequality uses the definition of $\lambda$ in (\ref{eq:def-of-lambda-and-q}) and our bound on $t_{k-1}$ in Lemma~\ref{lemma:tk-bound}.
\end{proof}

Similar to the proof of Proposition~\ref{prop:initial-typical}, Proposition~\ref{prop:typical-yields-typical} follows from Lemmas~\ref{lemma:condition1-yields} \ignore{\ref{lemma:condition2-yields} }and \ref{lemma:condition3-yields} (stated and proved below) and a union bound, again using the fact that each $|A_i| \leq n$ and the bound $d \leq {\frac {c\log w}{\log \log w}}$.
Since Proposition \ref{prop:typical-yields-typical} deals with general values of $k$ which may correspond to either row of Table \ref{table:circ-bullet}, to avoid redundancy we use $\circ,\bullet$ notation in the statements and proofs of the following lemmas.

\begin{lemma} [Condition (1) of typicality] \label{lemma:condition1-yields}
For $2 \le k \le d-2$ let $\tau \in \{ \bullet,\circ,\ast\}^{A_{k+1}}$ be typical and fix $a \in A_{k-1}$. Then
\[ \Prx_{\brho\leftarrow\calR(\tau)}\big[ |(\hat\brho_a)^{-1}(\ast)| = qw \pm w^{\gradual(k,d)}  \big] \ge 1-\exp(-\tilde{\Omega}(w^{2\gradual(k,d)-\frac1{2}})) \ge 1-e^{-\Omega(w^{1/6})}.\ignore{\lnote{The tilde does not appear on the RHS since $\gradual(k,d) > 1/3$ when $k < d$.}}\]
(Recall that from Definition~\ref{def:typical} that $\gradual(k,d) = \frac1{3} + {\frac {d-k-1}{12d}}$).  \end{lemma}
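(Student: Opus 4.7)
The plan is to express $|(\hat\brho_a)^{-1}(\ast)|$ as a sum of independent Bernoulli indicators whose mean is very close to $qw$, and then apply a Chernoff bound. To do this, I would first unpack the block structure. Identifying $A_k = A_{k-1} \times [w]$ and writing $\alpha_i := (a,i) \in A_k$ for each $i \in [w]$, the block $\hat\brho_a$ consists of $(\hat\brho_{\alpha_1}, \ldots, \hat\brho_{\alpha_w})$. A direct case analysis using Definitions~\ref{def:lift} and~\ref{def:main-projection} yields the following: if $\hat\tau_{\alpha_i} \in \{\bullet,\circ\}$, then $\hat\brho_{\alpha_i}$ is forced to equal $\hat\tau_{\alpha_i}$ deterministically (since $\brho$ refines $\tau$); and if $\hat\tau_{\alpha_i} = \ast$, then condition~(1) of typicality of $\tau$ guarantees that $S_{\alpha_i}$ is $(k+1)$-acceptable, so the second bullet of Definition~\ref{def:main-projection} applies, and inspection of the three sub-cases in~(\ref{eq:OW-restriction}) reveals that $\hat\brho_{\alpha_i} = \ast$ precisely when the middle sub-case is drawn, which occurs with probability $q_{\alpha_i}$ independently across $i$.

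Thus, setting $T := (\hat\tau_a)^{-1}(\ast) \subseteq [w]$, I would have $|(\hat\brho_a)^{-1}(\ast)| = \sum_{i \in T} \bZ_i$ with independent $\bZ_i \leftarrow \{0_{1-q_{\alpha_i}}, 1_{q_{\alpha_i}}\}$. Condition~(2) of typicality of $\tau$ yields $|T| \ge w - w^{4/5}$, while Lemma~\ref{lem:bound-on-qa} gives $q_{\alpha_i} = q\,(1 \pm 2t_{k+1}w^{\gradual(k+1,d)})$. Combining these bounds with the estimates~(\ref{eq:estimates}), the mean $\mu := \E[\sum_i \bZ_i]$ satisfies
\[
|\mu - qw| \;\le\; q\,w^{4/5} \;+\; 2qw \cdot t_{k+1}\,w^{\gradual(k+1,d)} \;=\; \tilde{O}\big(w^{\gradual(k+1,d)}\big).
\]
Since $\gradual(k,d) - \gradual(k+1,d) = 1/(12d)$ and we are in the regime $d \le c\log w/\log\log w$, the factor $w^{1/(12d)}$ is super-polylogarithmic in $w$ and dominates the $\tilde{O}$ polylogs, so $|\mu - qw| \le \tfrac12 w^{\gradual(k,d)}$ for all sufficiently large $w$.

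It then remains to bound $\Pr[\,|\sum_i \bZ_i - \mu| > \tfrac12 w^{\gradual(k,d)}\,]$ via Fact~\ref{fact:chernoff}: with deviation $\gamma\mu = \tfrac12 w^{\gradual(k,d)}$ and $\mu = \Theta(qw) = \tilde\Theta(\sqrt{w})$, the Chernoff exponent $\gamma^2\mu$ evaluates to $\tilde\Theta(w^{2\gradual(k,d) - 1/2})$, yielding the stated failure probability $\exp(-\tilde\Omega(w^{2\gradual(k,d) - 1/2}))$. The second inequality in the lemma then follows because $k \le d-2$ forces $\gradual(k,d) > 1/3$ strictly, so $2\gradual(k,d) - 1/2 > 1/6$, and the same $w^{1/(12d)}$ slack used above absorbs the polylog factor to yield the $e^{-\Omega(w^{1/6})}$ bound. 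The only delicate point is the mean approximation $|\mu - qw| \le \tfrac12 w^{\gradual(k,d)}$; indeed, the very reason $\gradual(k,d)$ was defined to grow as $k$ decreases is to furnish precisely this slack at each step of the inductive typicality argument.
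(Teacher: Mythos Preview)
Your proposal is correct and follows essentially the same approach as the paper: both arguments identify $|(\hat\brho_a)^{-1}(\ast)|$ as a sum of independent Bernoulli random variables indexed by $(\hat\tau_a)^{-1}(\ast)$, use Lemma~\ref{lem:bound-on-qa} together with the two conditions of typicality to pin the mean to $qw \pm \tilde{O}(w^{\gradual(k+1,d)})$, invoke the $1/(12d)$ gap between $\gradual(k,d)$ and $\gradual(k+1,d)$ (combined with $d \le c\log w/\log\log w$) to absorb the polylog, and finish with the Chernoff bound. Your write-up is in fact slightly more explicit than the paper's about the case analysis that forces $\hat\brho_{a,i}$ to equal $\hat\tau_{a,i}$ when the latter is not $\ast$, but the substance of the two proofs is the same.
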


\begin{proof}
Since $\tau \in \{\bullet,\circ,\ast\}^{A_{k+1}}$ is typical, we have that
\begin{equation}
 \hat\tau_a \in \{ \ast,\circ\}^{w} \quad \text{and}\quad |(\hat\tau_a)^{-1}(\ast)| \ge w-w^{4/5} \label{eq:typical1-IH2}
 \end{equation}
by the second and third property of $\tau$ being typical. Furthermore, for every $i\in [w]$ such that $\hat\tau_{a,i} = \ast$, we have that
\begin{equation}
 \tau_{a,i} \in \{ \ast,\circ\}^{w} \quad \quad \text{and} \quad \quad qw - w^{\gradual(k+1,d)} \leq|(\tau_{a,i})^{-1}(\ast)| \leq  qw + w^{\gradual(k+1,d)}, \label{eq:typical1-IH1}
 \end{equation}
by the first property of $\tau$ being typical. Writing $S_{a,i}$ for $(\tau_{a,i})^{-1}(\ast)$ (a subset of $[w]$) and $S_a$ for $(\hat\tau_a)^{-1}(\ast)$ (a subset of $[w]$), it follows from the second branch of (\ref{eq:OW-restriction}) and Definition~\ref{def:lift} that every $i\in S_a$ satisfies
\[  \Prx_{\brho\leftarrow\calR(\tau)}\big[\hat\brho_{a,i} = \ast\big] = q_{a,i} = \frac{(1-t)^{|S_{a,i}|}-\lambda}{t}. \]
Since $S_{a,i}$ is $(k+1)$-acceptable, by the $k+1$ case of Lemma~\ref{lem:bound-on-qa} we have that \[ q_{a,i} = q\cdot (1\pm 2t_{k+1}w^{\gradual(k+1,d)}).\] Since $|S_a| \le w$, we have
\[  \Ex_{\brho\leftarrow\calR(\tau)}\big[ |(\hat\brho_a)^{-1}(\ast)|\big] =  \sum_{i\in S_a} q_{a,i}  \le w\cdot  q(1+2t_{k+1} w^{\gradual(k+1,d)}) \le qw + \tilde{O}(w^{\gradual(k+1,d)}), \]
where the $\tilde{O}$ comes from the fact that $wt_{k+1} q = \Theta(\log w)$ (recalling Lemma~\ref{lemma:tk-bound} we have that $t_{k+1} = q \pm o(q)$).
On the other hand, by (\ref{eq:typical1-IH2}) and similar reasoning we also have the lower bound
\[ \Ex_{\brho\leftarrow\calR(\tau)}\big[ |(\hat\brho_a)^{-1}(\ast)|\big] \ge (w-w^{4/5}) \cdot q(1-2t_{k+1} w^{\gradual(k+1,d)}) \ge qw - \tilde{O}(w^{\gradual(k+1,d)}),\]
where we have taken advantage of the fact that $w^{4/5}q=\tilde{O}(w^{0.3})=o(w^{\gradual(k+1,d)})$.
Since $w^{\gradual(k,d)} = \omega(\polylog (w) \cdot w^{\gradual(k+1,d)})$ (here is where we are using the fact that $d \le \frac{c\log w}{\log \log w}$),
%
it follows from Fact~\ref{fact:chernoff}  that
\begin{align*}
 \Prx_{\brho\leftarrow\calR(\tau)}\big[ |(\hat\brho_a)^{-1}(\ast)| \ne qw \pm w^{\gradual(k,d)}  \big] &\le \exp(-\Omega(w^{2\gradual(k,d)}/qw)\big) \\
 &\le \exp(-\tilde{\Omega}(w^{2\gradual(k,d)-\frac1{2}})).\qedhere
 \end{align*}
\ignore{  \lnote{the $\tilde{\Omega}$ above comes from the fact that $qw$ is now $\sqrt{w\log w}$ instead of $\sqrt{w}$}}
\end{proof}

\begin{lemma}
\label{lemma:haha}
Fix $2 \le k \le d-2$ and let $\tau \in \{\bullet,\circ,\ast\}^{A_{k+1}}$ be typical.  For each $a\in A_{k-1}$ we write $S_a = S_a(\tau)$ to denote $(\hat\tau_a)^{-1}(\ast)$ (note that this is a subset of $[w]$).  Then for $\brho\leftarrow \calR(\tau)$,  we have that $\hat\brho_a$ (which is a string in $\{\bullet,\circ,\ast\}^{w}$) satisfies:
\[  \left\{
\begin{array}{ll}
 \hat\brho_a = \{\circ\}^{w} & \text{with probability~$\prod_{i\in S_a} (1-\lambda-q_{a,i})$}  \\
 (\hat\brho_a)^{-1}(\bullet) \ne \emptyset & \text{with probability~$1-(1-\lambda)^{|S_a|}$}  \\
 \hat\brho_a \in \{ \circ,\ast\}^{w} \setminus \{\circ\}^{w} & \text{otherwise,}
\end{array}
\right.
\]
 independently for all $a \in A_{k-1}$. (Recall that $\hat\tau_a \in \{\ast,\circ\}^{w} \setminus \{\circ\}^{w}$ for all $a\in A_{k-1}$ since $\tau$ is typical.) This implies that
\[ \hat{\hat\brho}_a = \left\{
\begin{array}{ll}
 \bullet & \text{with probability~$\prod_{i\in S_a} (1-\lambda-q_{a,i})$}  \\
\circ& \text{with probability~$1-(1-\lambda)^{|S_a|}$}  \\
\ast & \text{otherwise}
\end{array}
\right.
\]
 independently for all $a \in A_{k-1}$. (Recall that $\hat{\hat\tau}_a = \ast$ for all $a\in A_{k-1}$ since $\tau$ is typical.) \end{lemma}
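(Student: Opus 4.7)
The plan is to reduce the whole claim to a per-coordinate marginal computation that is essentially just unpacking Definitions~\ref{def:main-projection} and~\ref{def:lift}, and then to assemble the three global events for $\hat\brho_a$ by independence. The first step is to exploit the typicality of $\tau$ to classify the coordinates $i \in [w]$ of the block $\hat\tau_a$. Since $\tau$ is typical at level $k{+}1$, we have $\hat\tau_a \in \{\ast,\circ\}^w$ (so no coordinate is $\bullet$), and for every $i \in S_a = (\hat\tau_a)^{-1}(\ast)$ the block $\tau_{(a,i)}$ lies in $\{\ast,\circ\}^w \setminus \{\circ\}^w$ with $S_{(a,i)} = \tau_{(a,i)}^{-1}(\ast)$ being $(k{+}1)$-acceptable. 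This is exactly the ``second case'' of Definition~\ref{def:main-projection}, whereas for $i \notin S_a$ the block $\tau_{(a,i)}$ already contains some $\bullet$, which $\brho$ must preserve (refinement), so Definition~\ref{def:lift} forces $\hat\brho_{a,i} = \circ$ with probability one.

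Next I would compute the marginal on $\hat\brho_{a,i}$ for $i \in S_a$ by applying Definition~\ref{def:lift} to each of the three outcomes in (\ref{eq:OW-restriction}). In the first branch of (\ref{eq:OW-restriction}) the whole block $\brho_{(a,i)}$ becomes all-$\circ$, which yields $\hat\brho_{a,i}=\bullet$; this happens with probability $\lambda$. In the second branch $\brho_{(a,i)}$ is a mixture of $\ast$ and $\circ$ with at least one $\ast$ and no $\bullet$, so $\hat\brho_{a,i}=\ast$; this happens with probability $q_{a,i}$. In the third branch the draw is explicitly conditioned to contain at least one $\bullet$, so $\hat\brho_{a,i}=\circ$; this happens with probability $1-\lambda-q_{a,i}$. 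Together with the deterministic contribution of the coordinates $i \notin S_a$, this gives a complete description of the distribution of $\hat\brho_{a,i}$ for every $i$.

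Then I assemble the distribution of $\hat\brho_a \in \{\bullet,\circ,\ast\}^w$. The event $\hat\brho_a = \{\circ\}^w$ requires $\hat\brho_{a,i} = \circ$ for every $i \in S_a$, which by coordinate-wise independence (inherited from the independence across blocks of $A_k$ in Definition~\ref{def:main-projection}) occurs with probability $\prod_{i \in S_a}(1-\lambda-q_{a,i})$. The event $(\hat\brho_a)^{-1}(\bullet)\neq\emptyset$ is the union over $i \in S_a$ of $\{\hat\brho_{a,i}=\bullet\}$, and since these are disjoint the probability equals $1-\prod_{i\in S_a}(1-\lambda) = 1-(1-\lambda)^{|S_a|}$. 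The complementary ``otherwise'' event is exactly $\{\hat\brho_a \in \{\ast,\circ\}^w \setminus \{\circ\}^w\}$, matching the lemma. Independence across $a \in A_{k-1}$ is immediate because the blocks $\brho_{(a,i)}$ for distinct $a$ involve disjoint index sets in $A_k$ and are thus mutually independent under $\calR(\tau)$.

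Finally, the statement about $\hat{\hat\brho}_a$ is a direct consequence of applying Definition~\ref{def:lift} one more level up to $\hat\brho$: the three cases $\hat{\hat\brho}_a=\bullet$, $\hat{\hat\brho}_a=\circ$, $\hat{\hat\brho}_a=\ast$ correspond respectively to $\hat\brho_a=\{\circ\}^w$, $(\hat\brho_a)^{-1}(\bullet)\neq\emptyset$, and $\hat\brho_a \in \{\ast,\circ\}^w\setminus\{\circ\}^w$, whose probabilities we just computed. The whole argument is essentially a definition-chase; the only place one has to be mildly careful is keeping track of which level is which and noting that the ``first case'' branch of Definition~\ref{def:main-projection} never arises here (it is excluded by typicality), since that is the one structural input that drives the simple closed form of the marginals.
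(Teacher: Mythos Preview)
Your proposal is correct and follows essentially the same approach as the paper: compute the per-coordinate marginals $\hat\brho_{a,i}\in\{\bullet,\ast,\circ\}$ with probabilities $\lambda,\,q_{a,i},\,1-\lambda-q_{a,i}$ from (\ref{eq:OW-restriction}) and Definition~\ref{def:lift}, then assemble by independence. One small slip: in the sentence computing $\Pr[(\hat\brho_a)^{-1}(\bullet)\neq\emptyset]$ you write ``since these are disjoint,'' but the events $\{\hat\brho_{a,i}=\bullet\}$ are \emph{independent}, not disjoint---the formula $1-\prod_{i\in S_a}(1-\lambda)$ you give is the one for independent events, so the conclusion is fine once the word is fixed.
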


\begin{proof}
The value of $\hat\brho_{a,i}$ is independent across all $a \in A_{k-1}$ and $i\in [w]$ such that $\hat\tau_{a,i} = \ast$. Fix such a $a\in A_{k-1}$ and $i \in [w]$, and recall that \[ \tau_{a,i} \in \{\ast,\circ\}^{w} \setminus \{\circ\}^{w}. \]
By (\ref{eq:OW-restriction}) and Definition~\ref{def:lift} (the definition of the lift operator), we have that
\[
\hat\brho_{a,i} = \left\{
\begin{array}{ll}
 \bullet & \text{with probability~$\lambda$}  \\
\ast& \text{with probability~$q_{a,i}$}  \\
\circ & \text{otherwise, with probability~$1-\lambda-q_{a,i}$.}
\end{array}
\right.
\]
The lemma then follows by independence.
\end{proof}

\begin{remark}
\label{rem:haha}
 If $\tau \in \{\bullet,\circ,\ast\}^{A_{k+1}}$ is typical then
 (recall that $S_a=(\hat\tau_a)^{-1}(\ast)$ is a subset of $[w]$ and $S_{a,i} = (\tau_{a,i})^{-1}(\ast)$ is a subset of $[w]$) we have
 \[ |S_a| \ge w-w^{4/5} \quad \quad \text{and} \quad \quad qw - w^{\gradual(k+1,d)} \leq |S_{a,i}| \leq qw + w^{\gradual(k+1,d)} \text{ for all $i\in S_a$}. \]
 Therefore we have the estimates
 \begin{align*}
 \Pr\Big[\hat{\hat\brho}_a = \bullet\Big] = \prod_{i\in S_a} (1-\lambda - q_{a,i}) \le (1-q_{a,i})^{w-w^{4/5}}
 \le \left(1-\lfrac{q}{2}\right)^{w-w^{4/5}}
 \le e^{-qw/4} = e^{-\Omega(\sqrt{w\log w})},
 \end{align*}
 where we have used Lemma~\ref{lem:bound-on-qa} for the second inequality,
 \ignore{
 \gray{
 \begin{align*} \Pr\Big[\hat{\hat\brho}_a = \bullet\Big] = \prod_{i\in S_a} (1-\lambda - q_{a,i}) &\le (1-q_{a,i})^{w-w^{4/5}}    \\
 &\le \left( 1- \frac{(1-t)^{qw + w^{\gradual(k+1,d)}}-\lambda}{t} \right)^{w-w^{4/5}} \\
 &= \left( 1- q (1-t)^{w^{\gradual(k+1,d)}}   + \frac{\lambda}{t} \right)^{w-w^{4/5}}  \qquad \text{(by (\ref{eq:def-of-q})}) \\
 &\le \left( 1- q \cdot \big(1-t w^{\gradual(k+1,d)}\big)   + \frac{\lambda}{t} \right)^{w-w^{4/5}}  \\
 &\le \left( 1- q \cdot \big(1-2t w^{\gradual(k+1,d)}\big)  \right)^{w-w^{4/5}}
 \text{(using (\ref{eq:def-of-q}) and (\ref{eq:def-of-lambda-and-t}))}\\
  &= \left( 1- q + 2q t w^{\gradual(k+1,d)}  \right)^{w-w^{4/5}}  \\
 &\le \left( 1- \lfrac{q}{2}  \right)^{w-w^{4/5}}   \le e^{-qw/4} = e^{-\Omega(\sqrt{w})},
 \end{align*}
 }}and
 \[  \Pr\Big[\hat{\hat\brho}_a = \circ\Big] = 1-(1-\lambda)^{|S_a|}
 \le 1-(1-\lambda)^{w}
 \le 1-(1-\lambda w)
 =  \lambda w.  \]
\end{remark}

\begin{lemma} [Condition ({2}) of typicality] \label{lemma:condition3-yields}
For $2\le k \le d-2$  let $\tau \in \{ \bullet,\circ,\ast\}^{A_{k+1}}$ be typical and fix $\alpha \in A_{k-2}$. Then
\[ \Prx_{\brho\leftarrow\calR(\tau)}\big[ \big|\big(\hat{\hat\brho}_\alpha\big)^{-1}(\ast)\big| \geq  w_{k-2} - w^{4/5} \big] = 1- e^{-\Omega(\sqrt{w})}.  \]
\end{lemma}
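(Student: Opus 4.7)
The plan is to closely parallel the proof of Lemma~\ref{lemma:condition3-initial}. I will use two facts already established: (i) Lemma~\ref{lemma:haha} asserts that the values $\hat{\hat{\brho}}_a$ are mutually independent across $a \in A_{k-1}$, and (ii) Remark~\ref{rem:haha} gives a tail bound on each marginal. Since $A_{k-1} = A_{k-2} \times [w_{k-2}]$, the $w_{k-2}$ coordinates of $\hat{\hat{\brho}}_\alpha$ are precisely $\hat{\hat{\brho}}_{(\alpha,j)}$ for $j \in [w_{k-2}]$, and by (i) these form an independent sequence.

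Next, I will substitute the bound from (ii): each coordinate is non-$\ast$ with probability at most
\[ e^{-\Omega(\sqrt{w \log w})} + \lambda w \le 2\lambda w = \frac{2(\log w)^{3/2}}{w^{1/4}}, \]
using the value of $\lambda$ from~(\ref{eq:def-of-lambda-and-q}). I will note that $w_{k-2} \le w$ in all cases, since $w_0 = \Theta(2^m) = o(w)$ while $w_1 = \cdots = w_{d-2} = w$; hence the expected number of non-$\ast$ coordinates of $\hat{\hat{\brho}}_\alpha$ is at most $2\lambda w \cdot w_{k-2} \le 2\lambda w^2 = 2(\log w)^{3/2} w^{3/4}$, which is comfortably $o(w^{4/5})$.

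To conclude, I will apply a direct union bound over subsets of $[w_{k-2}]$ of size $w^{4/5}$: the number of non-$\ast$ coordinates is stochastically dominated by $\Bin(w_{k-2}, 2\lambda w)$, so
\[ \Pr\big[\big|(\hat{\hat{\brho}}_\alpha)^{-1}(\{\bullet,\circ\})\big| > w^{4/5}\big] \le \binom{w_{k-2}}{w^{4/5}} (2\lambda w)^{w^{4/5}} \le \big(2e\lambda w^{6/5}\big)^{w^{4/5}} = \big(\tilde{O}(w^{-1/20})\big)^{w^{4/5}}, \]
which is much smaller than $e^{-\Omega(\sqrt{w})}$. I do not anticipate a substantive obstacle: the probabilistic work has been absorbed into Lemma~\ref{lemma:haha} and Remark~\ref{rem:haha}, so what remains is the bookkeeping to identify the independent marginals of $\hat{\hat{\brho}}_\alpha$ together with this routine Chernoff/union-bound calculation.
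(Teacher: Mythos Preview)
Your proposal is correct and follows essentially the same approach as the paper: both invoke Lemma~\ref{lemma:haha} for independence and Remark~\ref{rem:haha} for the marginal bound $\Pr[\hat{\hat\brho}_a \in \{\bullet,\circ\}] \le O((\log w)^{3/2}/w^{1/4})$, then apply a routine tail bound. The only cosmetic difference is that the paper finishes with the Chernoff bound (Fact~\ref{fact:chernoff}) while you use a direct union bound over size-$w^{4/5}$ subsets; both yield the claimed $e^{-\Omega(\sqrt{w})}$ with room to spare.
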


\begin{proof}
By Lemma~\ref{lemma:haha} and the two estimates of Remark~\ref{rem:haha}, each coordinate of
$(\hat{\hat\brho})_\alpha$ is independently in $\{\bullet,\circ\}$ with probability at most
$e^{-\Omega(\sqrt{w})} + \lambda w = O\big({\frac {(\log w)^{3/2}}{w^{1/4}}}\big).$  Hence the expected size of $\big|\big(\hat{\hat\brho}_\alpha\big)^{-1}(\{\bullet,\circ\})\big|$ is $\tilde{O}(w^{3/4})$, and we may apply Fact~\ref{fact:chernoff} to get that
\[ \Prx_{\brho\leftarrow\calR(\tau)}\big[ \big|\big(\hat{\hat\brho}_\alpha\big)^{-1}(\{\bullet,\circ\})\big| > w^{4/5} \big]  \le e^{-\Omega(\sqrt{w})} \]
with room to spare.
\end{proof}


\subsection{$\BalancedSipser$ survives random projections}
\label{sec:sipser-survives}

In this subsection we prove the main results of Section~\ref{sec:combo}; these are two results which show, in different ways, that the $\BalancedSipser_d$ function ``retains structure'' after being hit with the random projection $\mathbf{\Psi}$.  The first of these results, Proposition \ref{prop:what-happens-to-target}, gives a useful characterization of $\mathbf{\Psi}(\BalancedSipser_d)$ by showing that it is distributed identically to a (suitably randomly restricted) \emph{depth-one} formula.  The second of these results, Proposition \ref{prop:boundexp}, shows that this randomly restricted depth-one formula is very close to perfectly balanced in expectation.  Our later arguments will use both these types of structure.

\subsubsection{$\BalancedSipser_d$ reduces under $\mathbf{\Psi}$ to a random restriction of $\BalancedSipser^{(1)}_d$}

Recalling the definitions of the depth-$k$ $\BalancedSipser_d^{(k)}$ formulas from Definition~\ref{def:truncate-sipser}, we begin with the following observation regarding the effect of projections on the $\BalancedSipser^{(k)}_d$ formulas:

\begin{fact}
\label{fact:project-sipser}
For $2\le k \le d$ we have that
\[ \proj\,\BalancedSipser_d^{(k)} \equiv \BalancedSipser_d^{(k-1)}. \]
\end{fact}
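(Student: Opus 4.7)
\medskip

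\noindent\textbf{Proof plan.} The statement is a purely structural identity, independent of all the random-projection machinery developed earlier, and the plan is a short unwinding of Definitions~\ref{def:projection} and~\ref{def:truncate-sipser}. First I would set up the indexing: since $A_k = A_{k-1} \times [w_{k-1}]$, the formal variables $\{y_{(a,i)} : (a,i) \in A_k\}$ of $\BalancedSipser_d^{(k)}$ partition naturally into $|A_{k-1}|$ blocks of length $w_{k-1}$, where the $a$-th block consists of exactly the $w_{k-1}$ variables that feed into the depth-$(k-1)$ gate of $\BalancedSipser_d$ at address $a \in A_{k-1}$ (this is just the observation that $\BalancedSipser_d$ is depth-regular with fan-in $w_{k-1}$ at depth $k-1$).

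Next I would apply Definition~\ref{def:projection}: the resulting function lives on $\{0,1\}^{A_{k-1}}$, and for every $a \in A_{k-1}$ and every $i \in [w_{k-1}]$ the old variable $y_{(a,i)}$ is identified with the single new variable $y_a$. The one content-bearing step is then the observation, read off directly from the circuit representation as in Remark~\ref{rem:equiv-projection}, that each depth-$(k-1)$ gate of $\BalancedSipser_d^{(k)}$ becomes an $\AND$ or $\OR$ of fan-in $w_{k-1}$ all of whose inputs are the \emph{same} literal $y_a$, and hence collapses to $y_a$ itself (using $y_a \wedge \cdots \wedge y_a = y_a$ and dually $y_a \vee \cdots \vee y_a = y_a$).

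Having simplified the bottom layer in this way, what remains is the depth-$(k-1)$ formula obtained from $\BalancedSipser_d$ by discarding all gates at depths $k$ through $d-1$ and labelling each depth-$(k-1)$ gate at address $a \in A_{k-1}$ with a fresh variable $y_a$. By Definition~\ref{def:truncate-sipser} this is exactly $\BalancedSipser_d^{(k-1)}$, and the identity $\proj\,\BalancedSipser_d^{(k)} \equiv \BalancedSipser_d^{(k-1)}$ follows. The only thing to double-check along the way is that the $\AND$/$\OR$ alternation pattern of the two formulas match after the collapse, which is immediate since in both cases the (new) bottom layer is the one at depth $k-2$ in the original $\BalancedSipser_d$, unaltered by the projection.

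There is no real obstacle here: indeed this clean one-layer ``peel-off'' behaviour under $\proj$ is precisely the reason the truncated formulas $\BalancedSipser_d^{(k)}$ are defined as in Definition~\ref{def:truncate-sipser} with blocks of size $w_{k-1}$ at the bottom. The fact will be used in combination with the typicality analysis of Section~\ref{sec:typical-stuff} to track, layer by layer, how $\BalancedSipser_d$ evolves under the whole $\mathbf{\Psi}$-projection; each application of an individual $\proj_{\brho^{(k)}}$ will be handled by first isolating a ``clean'' projection step governed by this fact, and then accounting separately for the restriction-like effect encoded by $\brho^{(k)}$ on the trimmed bottom layer.
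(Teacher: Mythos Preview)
Your proposal is correct and matches the paper's own argument essentially verbatim: the paper also unwinds Definitions~\ref{def:projection} and~\ref{def:truncate-sipser} via Remark~\ref{rem:equiv-projection}, noting that each bottom-layer gate becomes $\AND(y_a,\ldots,y_a) \equiv \OR(y_a,\ldots,y_a) \equiv y_a$ and hence the bottom layer peels off to leave exactly $\BalancedSipser_d^{(k-1)}$.
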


In words, Fact~\ref{fact:project-sipser} says that the projection operator ``wipes out'' the bottom-layer gates of $\BalancedSipser_d^{(k)}$, reducing its depth by exactly one. Fact~\ref{fact:project-sipser} is a straightforward consequence of the definitions of projections and the $\BalancedSipser^{(k)}_d$ formulas (Definitions~\ref{def:projection} and~\ref{def:truncate-sipser} respectively), but is perhaps most easily seen to be true via the equivalently view of projections described in Remark~\ref{rem:equiv-projection}: for every bottom-layer gate $a\in A_k$ of $\BalancedSipser^{(k)}_d$, the projection operator simply replaces every one of its $w_{k-1}$ formal input variables $x_{a,1},\ldots,x_{a,w_{k-1}}$ with the same fresh formal variable $y_a$. Since $\AND(y_a,\ldots,y_a) \equiv \OR(y_a,\ldots,y_a) \equiv y_a$,  the gate simplifies to the single variable~$y_a$.  (Indeed, we defined our projection operators precisely so that they sync up with $\BalancedSipser^{(k)}_d$ this way.)

The same reasoning, along with the definition of lifts (see Definition~\ref{def:lift} and the discussion after), yields the following extension of Fact~\ref{fact:project-sipser}:

\begin{fact}
\label{fact:lift}
For $2\le k \le d$ and $\rho \in \{0,1,\ast\}^{A_k}$ we have
\[ \proj_\rho\, \BalancedSipser^{(k)}_d \equiv \BalancedSipser^{(k-1)}_d \uhr \hat\rho.
\]
\end{fact}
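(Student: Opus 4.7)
The plan is to verify the identity on each depth-$(k{-}1)$ gate of $\BalancedSipser^{(k)}_d$ separately, since both sides are formulas of the same shape once we look one level above the leaves, and then extend by composition with the unaffected upper layers. The argument is essentially definitional, following the intuition described right after Definition~\ref{def:lift}: the lift $\hat\rho$ records precisely the values that the depth-$(k{-}1)$ gates of $\BalancedSipser_d$ take when their inputs are set according to $\rho$, so restricting $\BalancedSipser^{(k-1)}_d$ by $\hat\rho$ should coincide with replacing the depth-$(k{-}1)$ gates of $\BalancedSipser^{(k)}_d$ by their projected values.

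Concretely, I would fix an arbitrary $\alpha\in A_{k-1}$ and consider the depth-$(k{-}1)$ gate $\gate_\alpha$ of $\BalancedSipser^{(k)}_d$, whose inputs are the formal variables $y_{\alpha,i}$ for $i\in[w_{k-1}]$ indexed by $A_k$. Applying the projection operator $\proj_\rho$ in the sense of Remark~\ref{rem:equiv-projection}: each $y_{\alpha,i}$ with $\rho_{\alpha,i}=\ast$ becomes a single fresh variable $z_\alpha$ (indexed by $A_{k-1}$), and each $y_{\alpha,i}$ with $\rho_{\alpha,i}\in\{0,1\}$ becomes that constant. Because $\gate_\alpha$ is either an $\AND$ or $\OR$ (depending on the parity of $d-k$), this yields a gate that is idempotent in $z_\alpha$, so $\gate_\alpha$ simplifies as follows; I would treat the $\AND$ case in detail (so $d-k$ is even, and by Table~\ref{table:circ-bullet}, for $\rho$ the symbol $\bullet$ represents $0$ while for $\hat\rho$ it represents $1$):
\begin{itemize}
\item If some $\rho_{\alpha,i}=\bullet=0$, then $\gate_\alpha$ collapses to the constant $0$. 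By Definition~\ref{def:lift}, $\hat\rho_\alpha=\circ$, and because $\circ$ in $\hat\rho$ represents $0$, the restriction $\BalancedSipser^{(k-1)}_d\uhr\hat\rho$ sets $z_\alpha:=0$.
\item If $\rho_\alpha=\{\circ\}^{w_{k-1}}=\{1\}^{w_{k-1}}$, then $\gate_\alpha$ collapses to the constant $1$. By Definition~\ref{def:lift}, $\hat\rho_\alpha=\bullet$, which in $\hat\rho$ represents $1$, so $\BalancedSipser^{(k-1)}_d\uhr\hat\rho$ sets $z_\alpha:=1$.
\item Otherwise $\rho_\alpha\in\{\ast,\circ\}^{w_{k-1}}\setminus\{\circ\}^{w_{k-1}}$, so after projection $\gate_\alpha$ is an $\AND$ of some copies of $z_\alpha$ and some $1$'s, which simplifies to $z_\alpha$ itself. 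By Definition~\ref{def:lift}, $\hat\rho_\alpha=\ast$, so $\BalancedSipser^{(k-1)}_d\uhr\hat\rho$ leaves $z_\alpha$ free.
\end{itemize}

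The $\OR$ case (when $d-k$ is odd) is completely symmetric with the roles of $0$ and $1$ swapped, which is exactly what the $\bullet/\circ$ convention in Table~\ref{table:circ-bullet} is designed to absorb. Since every depth-$(k{-}1)$ gate of $\BalancedSipser^{(k)}_d$ reduces under $\proj_\rho$ to precisely the value (or live variable) that $\BalancedSipser^{(k-1)}_d\uhr\hat\rho$ assigns to the corresponding leaf, and since the gates at depths $0,1,\ldots,k-2$ are identical in both $\BalancedSipser^{(k)}_d$ and $\BalancedSipser^{(k-1)}_d$ and are untouched by either operation, the two functions agree on all inputs. There is no real obstacle here; the entire content of the proof is bookkeeping with the $\bullet/\circ$ convention, and Fact~\ref{fact:project-sipser} is recovered as the special case $\rho=\{\ast\}^{A_k}$ (where every $\hat\rho_\alpha=\ast$ and every gate collapses to a single fresh variable).
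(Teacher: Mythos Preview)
Your proposal is correct and is exactly the definitional bookkeeping that the paper omits: the paper does not give an explicit proof of this fact, instead asserting that ``the same reasoning [as Fact~\ref{fact:project-sipser}], along with the definition of lifts (see Definition~\ref{def:lift} and the discussion after), yields'' it. Your gate-by-gate case analysis with the $\bullet/\circ$ convention is precisely what that sentence is pointing to, spelled out in full.
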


\begin{remark}
\label{rem:typical-restrictions}
With Fact~\ref{fact:lift} in hand we now revisit our definition of typical restrictions (recall Definition~\ref{def:typical} and the discussion thereafter). Recall that the high-level rationale behind this definition is that for $\rho$ such that $\hat\rho$ is typical, the projection $\proj_\rho$ has a ``very limited and well-controlled effect'' on the target $\BalancedSipser_d$. We now make this statement more precise (the reader may find it helpful to refer to the illustration in Figure~\ref{figure:the-figure}).

Fix $\rho \in \supp(\calR_\init)$ such that $\hat\rho$ is typical. By Fact~\ref{fact:lift}, we have that
\[ \proj_\rho\,\BalancedSipser_d \equiv \BalancedSipser_d^{(d-1)} \uhr \hat\rho.\]
Since $\hat\rho$ is typical,
\begin{itemize}
\item The first condition of Definition~\ref{def:typical} implies that $|(\hat\rho_a)^{-1}(\ast)| = \Theta(qw) = \tilde{\Theta}(\sqrt{w})$ for all $a\in A_{d-2}$.  Each such $a\in A_{d-2}$ is the address of an $\OR$ gate, and so if $(\hat\rho_a)^{-1}(1) \ne \emptyset$ the gate is satisfied and evaluates to $1$, and otherwise if $\hat\rho_a \in \{\ast,0\}^w$ the value of the gate remains undetermined (i.e.~it ``evaluates to $\ast$'') and its fan-in becomes $|(\hat\rho_a)^{-1}(\ast)| = \tilde{\Theta}(\sqrt{w})$.
\item The second condition of Definition~\ref{def:typical} tells us that between the two possibilities above, the latter is far more common: for every $\alpha \in A_{d-3}$ specifying a block of $w_{d-3}$ many $\OR$ gates, at most $w^{4/5}$ of these gates evaluate to $1$ and the remaining (vast majority) are undetermined.  Equivalently, \emph{all} the $\AND$ gates at level $d-3$ remain undetermined, and they all have fan-in at least $w_{d-3} - w^{4/5} = w_{d-3}\, (1-o(1))$.
\end{itemize}
The same description holds for $\proj_{\rho^{(k)}}$ and $\BalancedSipser_d^{(k)}$. For $\hat{\rho^{(k)}}$'s that are typical the projection operator $\proj_{\rho^{(k)}}$:
\begin{itemize}
\item ``wipes out'' the bottom-level (level-$k$) gates of $\BalancedSipser^{(k)}_d$,
\item ``trims'' the fan-ins of the level-$(k-1)$ gates from $w$ to $\tilde{\Theta}(\sqrt{w})$,
\item keeps the fan-ins of all level-$(k-2)$ gates at least $w_{k-2} - w^{4/5} = w_{k-2}\,(1-o(1))$.
\end{itemize}
Note in particular that the entire structure of the formula from levels $0$ through $k-3$ is identical to that of $\BalancedSipser_d^{(k)}$, and so $\proj_{\rho^{(k)}}\,\BalancedSipser_d^{(k)}$ ``contains a perfect copy of'' $\BalancedSipser_d^{(k-3)}$.
\end{remark}

Repeated applications of Fact~\ref{fact:lift} gives us the following proposition.  (The  proposition is intuitively very useful since, it tells us that in order to understand the effect of the random projection $\mathbf{\Psi}$ on the (relatively complicated) $\BalancedSipser_d$ function, it suffices to analyze the effect of the random restriction $\widehat{\brho^{(2)}}$ on the (much simpler) $\BalancedSipser^{(1)}_d$ function; we will apply it in the final proof of each of our main lower bounds.)

\begin{proposition}
\label{prop:what-happens-to-target}
Consider $\BalancedSipser_d : \zo^{n} \to\zo$. Then
\[
\mathbf{\Psi}(\BalancedSipser_d) \equiv \BalancedSipser_d^{(1)} \uhr \hat{\brho^{(2)}}.
\]
\end{proposition}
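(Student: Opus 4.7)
The plan is to prove the following claim by reverse induction on $k$ from $d$ down to $2$:
\begin{equation}
\proj_{\brho^{(k)}}\,\proj_{\brho^{(k+1)}}\cdots \proj_{\brho^{(d)}}\, \BalancedSipser_d \equiv \BalancedSipser_d^{(k-1)} \uhr \hat{\brho^{(k)}}.
\label{eq:claim-to-prove}
\end{equation}
Once this is established, setting $k=2$ and recalling Definition~\ref{def:boldpsi} gives the proposition immediately.

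The base case $k = d$ is just Fact~\ref{fact:lift} applied to $\BalancedSipser_d \equiv \BalancedSipser_d^{(d)}$ with the restriction $\brho^{(d)}$, which yields $\proj_{\brho^{(d)}}\,\BalancedSipser_d^{(d)} \equiv \BalancedSipser_d^{(d-1)} \uhr \hat{\brho^{(d)}}$. For the inductive step, suppose (\ref{eq:claim-to-prove}) holds for $k+1$; then applying $\proj_{\brho^{(k)}}$ to both sides and using the identity $\proj_\rho\, g \equiv \proj\,(g\uhr\rho)$ from Definition~\ref{def:projection}, we need to analyze
\[
\proj_{\brho^{(k)}}\bigl(\BalancedSipser_d^{(k)} \uhr \hat{\brho^{(k+1)}}\bigr) \equiv \proj\bigl((\BalancedSipser_d^{(k)} \uhr \hat{\brho^{(k+1)}}) \uhr \brho^{(k)}\bigr).
\]
The key observation is that by Definition~\ref{def:main-projection}, every $\brho^{(k)} \leftarrow \calR(\hat{\brho^{(k+1)}})$ is a refinement of $\hat{\brho^{(k+1)}}$, so the composition of restrictions $\hat{\brho^{(k+1)}}\,\brho^{(k)}$ equals $\brho^{(k)}$ itself. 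Hence the expression above simplifies to $\proj\,(\BalancedSipser_d^{(k)} \uhr \brho^{(k)}) \equiv \proj_{\brho^{(k)}}\,\BalancedSipser_d^{(k)}$, and a second application of Fact~\ref{fact:lift} (now with restriction $\brho^{(k)}$) yields $\BalancedSipser_d^{(k-1)} \uhr \hat{\brho^{(k)}}$, completing the induction.

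There is no real obstacle here beyond keeping track of which space each function lives over; the entire argument is a two-line unwinding of the definitions combined with iterated application of Fact~\ref{fact:lift}. The only ``content'' is the refinement property of the subsequent projection distributions $\calR(\cdot)$, which lets us collapse the composition $(\cdot \uhr \hat{\brho^{(k+1)}}) \uhr \brho^{(k)}$ down to a single restriction by $\brho^{(k)}$ at each stage.
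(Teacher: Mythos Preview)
Your proof is correct and is essentially identical to the paper's own argument: both use Fact~\ref{fact:lift} for the base case, then at each subsequent stage exploit that $\brho^{(k)}$ refines $\hat{\brho^{(k+1)}}$ to collapse $(\cdot \uhr \hat{\brho^{(k+1)}}) \uhr \brho^{(k)}$ to $\cdot \uhr \brho^{(k)}$ before applying Fact~\ref{fact:lift} again. The only cosmetic difference is that you package the iteration as an explicit reverse induction on $k$, whereas the paper states the base case and inductive step as two displayed equations and then appeals to ``repeated application.''
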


\begin{proof}
By Fact~\ref{fact:lift} we have that
\begin{equation} \proj_{\rho^{(d)}}\, \BalancedSipser_d \equiv \BalancedSipser^{(d-1)}_{d} \uhr \hat{\rho^{(d)}} \label{eq:ccc}
\end{equation}
for all $\rho^{(d)} \in \supp(\calR_\init) \equiv \{0,1,\ast\}^{n}$.  Furthermore for
$\rho^{(k+1)} \in \{0,1,\ast\}^{A_{k+1}}$ and $\rho^{(k)} \in \supp(\calR(\hat{\rho^{(k+1)}})) \sse \{0,1,\ast\}^{A_{k}}$ we have
\begin{eqnarray} \proj_{\rho^{(k)}}\, \Big(\BalancedSipser_d^{(k)} \uhr \hat{\rho^{(k+1)}}\Big) &\equiv& \proj\,\Big(\big(\BalancedSipser_d^{(k)} \uhr \hat{\rho^{(k+1)}}  \big) \uhr \rho^{(k)}\Big)  \nonumber  \\
&\equiv& \proj\,\Big(\BalancedSipser_d^{(k)} \uhr  \rho^{(k)}\Big) \nonumber \\
&\equiv& \BalancedSipser^{(k-1)}_d \uhr \hat{\rho^{(k)}}, \label{eq:ddd}
\end{eqnarray}
where the first equivalence is by the definition of $\rho$-projection (Definition~\ref{def:projection}), the second is by the fact that $\calR(\hat{\rho^{(k+1)}})$ is supported on refinements of $\hat{\rho^{(k+1)}}$ (and in particular, $\rho^{(k)}$ refines $\hat{\rho^{(k+1)}}$), and the last  is Fact~\ref{fact:lift}.  The proposition follows from (\ref{eq:ccc}), repeated application of (\ref{eq:ddd}), and the definition of $\mathbf{\Psi}$ (Definition~\ref{def:boldpsi}).
\end{proof}

\subsubsection{$\BalancedSipser_d$ remains unbiased after random projection by $\mathbf{\Psi}$}

Recall that $\BalancedSipser^{(1)}_d$ denotes the function computed by the top gate of $\BalancedSipser_d$, and in particular, $\BalancedSipser^{(1)}_d$ is a $w_0$-way $\OR$ if $d$ is even, and a $w_0$-way $\AND$ if $d$ is odd (c.f.~Definition~\ref{def:truncate-sipser}).  In this subsubsection we will assume that $d$ is even; the argument for odd values of $d$ follows via a symmetric argument.

To obtain our ultimate results we will need a lower bound on the bias of $\mathbf{\Psi}(\BalancedSipser_d)$ under $\bY$ (or equivalently, by the preceding proposition, on the bias of $\BalancedSipser_d^{(1)} \uhr \hat{\brho^{(2)}}$ where $\brho^{(2)}$ is distributed as described in Definition~\ref{def:boldpsi}). The following lemma will help us establish such a lower bound:

\begin{lemma} \label{lemma:E-min-lower-bound}
Let $\tau \in \{0,1,\ast\}^{A_2}$ be typical.  Then for $\brho\leftarrow \calR(\tau)$ and $\bY \leftarrow \{ 0_{1-t_1},1_{t_1}\}^{w_0}$ we have
\[ \Ex_\brho\Big[ \bias(\BalancedSipser^{(1)}_d \uhr \hat\brho, \bY)\Big] \ge \frac1{2} -  \tilde{O}(w^{-1/12}).
\]
\end{lemma}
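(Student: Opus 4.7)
}
Since $d$ is even, $\BalancedSipser^{(1)}_d$ is a $w_0$-way $\OR$ in variables $(y_i)_{i\in[w_0]}$, and under the conventions of Table~\ref{table:circ-bullet} the symbols appearing in $\hat\brho\in\{\bullet,\circ,\ast\}^{A_1}$ correspond to $\bullet=1,\circ=0$. The first step is to describe $\hat\brho$ explicitly using typicality of $\tau$. By~(\ref{eq:old-prop-2-of-typicality}) and Condition~(2) of Definition~\ref{def:typical}, $\hat\tau\in\{\circ,\ast\}^{w_0}$ with $|S|\ge w_0-w^{4/5}$, where $S:=\hat\tau^{-1}(\ast)\subseteq[w_0]$. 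On positions $i\notin S$ the block $\tau_i$ already contains a $\bullet$, so the same is true of $\brho_i$, forcing $\hat\brho_i=\circ=0$; these coordinates simply falsify those inputs of the top $\OR$ and can be ignored. On positions $i\in S$, Condition~(1) of typicality guarantees that $S_a:=\tau_i^{-1}(\ast)$ is $2$-acceptable, so Definition~\ref{def:main-projection} yields that, independently across $i\in S$,
\[
\hat\brho_i=1\text{ w.p.\ }\lambda,\qquad \hat\brho_i=\ast\text{ w.p.\ }q_{a_i},\qquad \hat\brho_i=0\text{ w.p.\ }1-\lambda-q_{a_i},
\]
with $q_{a_i}=q(1\pm2t_2 w^{\beta(2,d)})$ by Lemma~\ref{lem:bound-on-qa}.

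Next I reduce the bias computation to a question about a single random variable. Let $\bK:=|\hat\brho^{-1}(\ast)|=|\hat\brho^{-1}(\ast)\cap S|$. If any $\hat\brho_i=1$ (call this the ``killing event'' $\mathcal{K}$) then $\BalancedSipser_d^{(1)}\uhr\hat\brho\equiv 1$ and the bias is $0$; otherwise the restricted formula is an $\OR$ of $\bK$ fresh variables, so under $\bY\leftarrow\{0_{1-t_1},1_{t_1}\}^{w_0}$ its ``0-probability'' is exactly $(1-t_1)^{\bK}$ and
\[
\tfrac12-\bias\bigl(\BalancedSipser^{(1)}_d\uhr\hat\brho,\bY\bigr)=\bigl|(1-t_1)^{\bK}-\tfrac12\bigr|.
\]
Thus it suffices to upper-bound $\Pr[\mathcal K]$ and $\E\bigl[\,|(1-t_1)^{\bK}-\tfrac12|\cdot\mathbf{1}_{\neg\mathcal K}\,\bigr]$. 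The first is at most $1-(1-\lambda)^{|S|}\le\lambda w_0$, and a short computation using (\ref{eq:def-of-lambda-and-q}) and (\ref{eq:nversusd}) gives $\lambda w_0=\tilde O(w^{-1/4})=o(w^{-1/12})$, so this contributes negligibly to the desired bound.

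For the second term I use the defining property of $w_0$, namely equation~(\ref{eq:upper lower}), which says $(1-t_1)^{qw_0}=\tfrac12(1-O(2^{-m}))$. Writing $\mu:=\E[\bK]=\sum_{i\in S}q_{a_i}$, a first-order Taylor expansion of $(1-t_1)^k$ in $k$ gives
\[
\bigl|(1-t_1)^{\bK}-\tfrac12\bigr|\ \le\ \bigl|(1-t_1)^{\mu}-\tfrac12\bigr|\ +\ O\bigl(t_1|\bK-\mu|\bigr).
\]
For the first summand, combining Lemma~\ref{lem:bound-on-qa} (giving $q_{a_i}=q(1\pm\tilde O(w^{-1/12}))$ since $t_2\le q(1+o(1))$ and $\beta(2,d)<5/12$) with $|S|\ge w_0-w^{4/5}$ yields $|\mu-qw_0|\le\tilde O(qw_0\cdot w^{-1/12})=\tilde O(w^{5/12})$; multiplying by $t_1=\Theta(q)$ and combining with (\ref{eq:upper lower}) gives $|(1-t_1)^{\mu}-\tfrac12|\le\tilde O(w^{-1/12})$. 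For the second summand I bound $\E|\bK-\mu|\le\sqrt{\Var(\bK)}\le\sqrt\mu=O(2^{m/4})$ via a Chernoff/variance estimate, so $t_1\cdot\E|\bK-\mu|=O(q\cdot 2^{m/4})=O(2^{-m/4})$, which is much smaller than $w^{-1/12}$. Putting the two pieces together gives the claimed bound.

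The genuinely delicate step is the first summand above: because $(1-t_1)^k$ is exponentially sensitive to $k$, and the target value $\tfrac12$ is locked in by the non-trivial definition~(\ref{eq:def-of-w0}) of $w_0$, I must show that $\mu$ deviates from $qw_0$ by no more than $\tilde O(1/t_1)\cdot w^{-1/12}=\tilde O(w^{5/12})$. This is exactly the slack afforded by Lemma~\ref{lem:bound-on-qa} and the typicality exponent $\beta(2,d)$, and tracking these constants through is the main bookkeeping obstacle; everything else is a concentration or Taylor-expansion estimate whose parameters have been engineered precisely to make this bound come out at $\tilde O(w^{-1/12})$.
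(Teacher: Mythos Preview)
Your proposal is correct and mirrors the paper's argument in its essential structure: both first isolate the ``killing event'' $\{\exists i:\hat\brho_i=1\}$ and bound its probability by $\lambda w_0=\tilde O(w^{-1/4})$, and both then reduce the remaining analysis to showing that $(1-t_1)^{|\hat\brho^{-1}(\ast)|}$ is close to $\tfrac12$, with the dominant $\tilde O(w^{-1/12})$ error coming from $t_1$ times the $\tilde O(w^{5/12})$ deviation of the star count from $qw_0$.

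The one technical difference is in how you control the fluctuation of $\bK=|\hat\brho^{-1}(\ast)|$. The paper applies a Chernoff bound (paralleling the proof of Lemma~\ref{lemma:condition1-yields}) to conclude that $\bK=qw_0\pm w^{\gradual(1,d)}$ with probability $1-e^{-\Omega(w^{1/6})}$, and then bounds $(1-t_1)^{\bK}$ deterministically on that good event. You instead linearize $(1-t_1)^k$ around $k=\mu=\E[\bK]$ and bound $\E|\bK-\mu|\le\sqrt{\Var(\bK)}\le\sqrt\mu$. Your route is slightly more elementary (only a second-moment bound, no Chernoff) and gives a sharper fluctuation term $t_1\sqrt\mu=O(2^{-m/4})\ll w^{-1/12}$; the paper's route is more modular in that it reuses the machinery already built for typicality. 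Either way the bottleneck is the systematic shift $|\mu-qw_0|=\tilde O(w^{5/12})$ coming from Lemma~\ref{lem:bound-on-qa}, which you correctly identify as the delicate step.
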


\begin{proof}

By our assumption that $d$ is even we may write $\OR_{w_0}$ in place of $\BalancedSipser^{(1)}_d$.  Since $\tau$ is typical, we have by Conditions (2) and (3) of Definition~\ref{def:typical} that
\[ \hat\tau \in \{0,\ast\}^{w_0} \quad \text{and} \quad |(\hat\tau)^{-1}(\ast)| \ge w_0 - w^{4/5}. \]
Furthermore, by (\ref{eq:OW-restriction}) of Definition~\ref{def:main-projection} and Definition~\ref{def:lift} (the definition of the lift operator), we have that
\begin{equation}
\hat\brho_{i} = \left\{
\begin{array}{ll}
 1 & \text{with probability~$\lambda$}  \\
\ast& \text{with probability~$q_{i}$}  \\
0 & \text{otherwise, with probability~$1-\lambda-q_{i}$}
\end{array}
\right. \label{eq:final-projection}
\end{equation}
independently for all $i\in (\hat\tau)^{-1}(\ast) \sse [w_0]$, where
\[ q_i = \frac{(1-t_{{2}})^{|S_i|}  -\lambda}{t_{{1}}} \quad
\]
and $S_i = S_i(\tau) = \tau^{-1}_i(\ast) = \{j \in [w_1]: \tau_{i,j}=\ast\}$ satisfies
$ |S_i| = qw \pm w^{\gradual(2,d)}.$
By a calculation very similar to the one that was employed in the proof of Lemma \ref{lemma:condition1-yields}, we have that
\begin{equation} \Pr\big[ |(\hat\brho)^{-1}(\ast)| = qw_0 \pm w^{\gradual(1,d)}\big] \ge 1-e^{-\Omega(w^{1/6})}. \label{eq:number-of-stars}
\end{equation}
Furthermore, (\ref{eq:final-projection}) also implies that
\begin{equation} \Pr[ \hat\brho \in \{0,\ast\}^{w_0}] = (1-\lambda)^{|\hat{\tau}^{-1}(\ast)|} \ge (1-\lambda)^{w_0} \ge 1-\lambda w_0 = 1-{\tilde{O}}(w^{-1/4}).   \label{eq:not-killed}
\end{equation}
Fix any $\rho \in \supp(\calR(\tau))$ that satisfies the events of both (\ref{eq:number-of-stars}) and (\ref{eq:not-killed}). Writing $S(\hat\rho) \sse [w_0]$ to denote the set $(\hat\rho)^{-1}(\ast)$, we have the bounds
\begin{align*} \Prx_{\bY}[(\OR_{w_0} \uhr \hat\rho)(\bY) = 0 ] &= (1-t_1)^{|S(\hat{\rho})|}  \\
&\ge (1-t_1)^{qw_0 + w^{\gradual(1,d)}} \\
&\ge \left(\frac1{2} - \Theta\left(\frac{\log w}{w}\right)\right) (1-t_1)^{w^{\gradual(1,d)}} \\
&\ge \left(\frac1{2} - \Theta\left(\frac{\log w}{w}\right)\right) (1-t_1w^{\gradual(1,d)}), \\
&\ge {\frac 1 2} - \tilde{O}(w^{-1/12}),
\end{align*}
where the second inequality crucially uses the definition (\ref{eq:def-of-w0}) of $w_0$
and its corollary (\ref{eq:upper lower}).
Similarly,
\begin{align*} \Prx_{\bY}[(\OR_{w_0} \uhr \hat\rho)(\bY) = 0 ] &= (1-t_1)^{|S(\hat{{\rho}})|}  \\
&\le (1-t_1)^{qw_0 - w^{\gradual(1,d)}} \\
&\le \frac1{2}\cdot  (1-t_1)^{-w^{\gradual(1,d)}} \\
&\le {\frac 1 2} + \tilde{O}(w^{-1/12}),
\end{align*}
which establishes the lemma.
\end{proof}

Now we are ready to lower bound the expected bias of $\mathbf{\Psi}(\BalancedSipser_d)$ (or equivalently,  of $\BalancedSipser_d^{(1)} \uhr \hat{\brho^{(2)}}$) under $\bY$:\ignore{\rnote{I am somehow unsure of where we should be making the `` $\mathbf{\Psi}(\BalancedSipser_d)$ to  $\BalancedSipser_d^{(1)} \uhr \hat{\brho^{(2)}}$'' switch.  We could have the statement of this claim be about $\mathbf{\Psi}(\BalancedSipser_d)$ rather than $\BalancedSipser_d^{(1)} \uhr \hat{\brho^{(2)}}$  -- right now I am not clear on the pros and cons of doing this.  Let's discuss this at some point?}}

\begin{proposition} \label{prop:boundexp}
For $\mathbf{\Psi}$ as defined in Definition \ref{def:boldpsi},
\[ \mathbf{\Psi}(f) \equiv \proj_{\brho^{(2)}}\,\proj_{\brho^{(3)}}\cdots \proj_{\brho^{(d-1)}}\,\proj_{\brho^{(d)}}\, f \]
where   $\brho^{(d)}\leftarrow\calR_{\init}$ and $\brho^{(k)}\leftarrow\calR(\hat{\brho^{(k+1)}})$ for all $2 \le k\le d-1$, and for $\bY\leftarrow \{ 0_{1-t_1},1_{t_1}\}^{w_0}$, we have that
\[
\Ex_{\mathbf{\Psi}}\bigg[ \bias(\BalancedSipser_d^{(1)} \uhr \hat{\brho^{(2)}}, \bY)\bigg] \geq {\frac 1 2} - \tilde{O}(w^{-1/12}).\]
\end{proposition}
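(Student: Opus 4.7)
The plan is to reduce to Lemma~\ref{lemma:E-min-lower-bound} by conditioning on the outcomes of the projections $\brho^{(d)}, \brho^{(d-1)}, \ldots, \brho^{(3)}$ that occur before the final projection $\brho^{(2)}$, which is the only projection that touches $\BalancedSipser_d^{(1)}$ directly (via its lift $\hat{\brho^{(2)}}$). Since $\brho^{(2)} \leftarrow \calR(\hat{\brho^{(3)}})$, applying Lemma~\ref{lemma:E-min-lower-bound} with $\tau = \hat{\brho^{(3)}}$ requires $\hat{\brho^{(3)}}$ to be typical, so the key step is to show this holds with high probability.

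First I will define the ``good event''
\[
E := \big\{\hat{\brho^{(k)}} \text{ is typical for every } k \in \{3,4,\ldots,d\}\big\}.
\]
(For $d=2$ we take $E$ to be the entire sample space; see below.) Iteratively applying Proposition~\ref{prop:initial-typical} to bound $\Pr[\hat{\brho^{(d)}} \text{ typical}]$ and Proposition~\ref{prop:typical-yields-typical} conditionally $d-3$ times to propagate typicality from level $k+1$ down to level $k$, together with a union bound over the $d-2$ steps, yields $\Pr[E] \ge 1 - d \cdot e^{-\tilde\Omega(w^{1/6})} \ge 1 - e^{-\tilde\Omega(w^{1/6})}$, where the last inequality uses the hypothesis $d \le c\sqrt{\log n}/\log\log n$.

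Next I will use the tower property together with Lemma~\ref{lemma:E-min-lower-bound}. Conditioning on any fixed outcome $(\rho^{(d)}, \ldots, \rho^{(3)})$ such that $\hat{\rho^{(3)}}$ is typical (i.e.~conditioning on $E$ through these values), the conditional distribution of $\brho^{(2)}$ is exactly $\calR(\hat{\rho^{(3)}})$. Lemma~\ref{lemma:E-min-lower-bound} applied to $\tau := \hat{\rho^{(3)}}$ therefore gives
\[
\Ex\Big[\bias\big(\BalancedSipser_d^{(1)} \uhr \hat{\brho^{(2)}},\, \bY\big) \,\Big|\, \brho^{(d)}=\rho^{(d)},\ldots,\brho^{(3)}=\rho^{(3)}\Big] \;\ge\; \tfrac{1}{2} - \tilde O(w^{-1/12}).
\]
Integrating over the conditioning random variables restricted to $E$, and bounding the contribution from $\overline{E}$ trivially from below by $0$ (since bias is nonnegative), gives
\[
\Ex_{\mathbf{\Psi}}\Big[\bias(\BalancedSipser_d^{(1)} \uhr \hat{\brho^{(2)}}, \bY)\Big] \;\ge\; \Pr[E]\cdot\left(\tfrac12 - \tilde O(w^{-1/12})\right) \;\ge\; \tfrac12 - \tilde O(w^{-1/12}),
\]
where the final inequality absorbs $\Pr[\overline E] \le e^{-\tilde\Omega(w^{1/6})}$ into the $\tilde O(w^{-1/12})$ error.

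The main (minor) obstacle is the boundary case $d=2$, where there is no $\brho^{(3)}$ to condition on: the only projection is $\brho^{(2)} \leftarrow \calR_\init$, and Lemma~\ref{lemma:E-min-lower-bound} as stated applies to $\calR(\tau)$ rather than $\calR_\init$. To handle this, I would prove an analogue of Lemma~\ref{lemma:E-min-lower-bound} for $\brho \leftarrow \calR_\init$ by mirroring its proof: use (\ref{eq:initial}) in place of (\ref{eq:OW-restriction}) to conclude that each coordinate $\hat{\brho}_i$ independently equals $\ast$ with probability $q$ and $1$ with probability $\lambda$; apply a Chernoff bound as in Lemma~\ref{lemma:condition1-initial} to establish $|\hat\brho^{-1}(\ast)| = q w_0 \pm w^{\gradual(1,d)}$ with probability $1 - e^{-\tilde\Omega(w^{1/6})}$; and use $(1-\lambda)^{w_0} = 1 - \tilde O(w^{-1/4})$ to control the probability that $\hat\brho \in \{0,\ast\}^{w_0}$. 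On the intersection of these events the very same calculation as in Lemma~\ref{lemma:E-min-lower-bound}, crucially using (\ref{eq:upper lower}), gives $\Prx_{\bY}[(\OR_{w_0}\uhr\hat\brho)(\bY)=0] = \tfrac12 \pm \tilde O(w^{-1/12})$, completing the $d=2$ case.
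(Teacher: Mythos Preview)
Your proposal is correct and follows essentially the same approach as the paper's proof: define the event that $\hat{\brho^{(d)}},\ldots,\hat{\brho^{(3)}}$ are all typical, bound its probability via Proposition~\ref{prop:initial-typical} and successive applications of Proposition~\ref{prop:typical-yields-typical}, then apply Lemma~\ref{lemma:E-min-lower-bound} conditionally on a typical $\hat{\rho^{(3)}}$. Your treatment is slightly more careful in that you explicitly address the boundary case $d=2$ (where there is no $\brho^{(3)}$ and Lemma~\ref{lemma:E-min-lower-bound} as stated does not directly apply), which the paper's proof leaves implicit since its typicality propositions are stated for $d\ge 3$.
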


\begin{proof}
By Proposition~\ref{prop:initial-typical} and $d-3$ successive applications of Proposition~\ref{prop:typical-yields-typical}, we have that
\[ \Pr\big[\widehat{\brho^{(d)}},\dots,\widehat{\brho^{(3)}}\text{ are all typical}\big] \geq 1 -  d \cdot e^{-{\tilde{\Omega}}(w^{1/6})}.\]  For every typical $\widehat{\rho^{(3)}} \in \{0,1,\ast\}^{A_2}$,
Lemma \ref{lemma:E-min-lower-bound} gives that
\[ \Ex_{\brho^{(2)}\leftarrow\calR(\hat{\rho^{(3)}})}\Big[  \bias(\BalancedSipser^{(1)}_d \uhr \widehat{\brho^{(2)}},\bY) \Big] \ge \frac1{2} -  \tilde{O}(w^{-1/12}),
\]
which together with the preceding inequality gives the proposition.
\end{proof}

\begin{remark}
\label{rem:sipser-is-balanced}
We note that combining Proposition~\ref{prop:what-happens-to-target} and Proposition~\ref{prop:boundexp},
for $\bY\leftarrow \{ 0_{1-t_1},1_{t_1}\}^{w_0}$
we have that
\[
\Ex_{\mathbf{\Psi}}\big[\bias(\mathbf{\Psi}(\BalancedSipser_d),\bY)\big] \geq {\frac 1 2} - \tilde{O}(w^{-1/12}),\]
which we may rewrite as
\[
\Pr[(\mathbf{\Psi}(\BalancedSipser_d))(\bY)=0] =
\Ex_{\mathbf{\Psi}}\Big[\Prx_{\bY} (\mathbf{\Psi}(\BalancedSipser_d))(\bY)=0]\Big]
= \frac1{2} \pm \tilde{O}(w^{-1/12}).
\]
Applying Proposition~\ref{prop:complete-to-uniform}, we get that for $\bX \leftarrow \{0_{1/2},1_{1/2}\}^n$ we have
\[
\Pr[\BalancedSipser_d(\bX)=1] = \frac1{2} \pm \tilde{O}(w^{-1/12}). 
\]
verifying (\ref{eq:sipser-is-balanced}) in Section~\ref{sec:sipser}:  the $\BalancedSipser_d$ function is indeed (essentially) balanced.
\end{remark}

\section{Proofs of main theorems} \label{sec:puttogether}

Recall that $\BalancedSipser^{(1)}_d$ denotes the function computed by the top gate of $\BalancedSipser_d$, and in particular, $\BalancedSipser^{(1)}_d$ is a $w_0$-way $\OR$ if $d$ is even, and a $w_0$-way $\AND$ if $d$ is odd (c.f.~Definition~\ref{def:truncate-sipser}).  Throughout this section we will assume that $d$ is even; the argument for odd values of $d$ follows via a symmetric argument. For conciseness we will sometimes write $\OR_{w_0}$ in place of $\BalancedSipser_d^{(1)}$ in the arguments below; we stress that these are the same function.

\subsection{``Bottoming out'' the argument}

As we will see in the proofs of Theorems~\ref{thm:smallbottomfanin} and~\ref{thm:alternationpattern}, the machinery we have developed enables us to relate the correlation between $\BalancedSipser_d$ and the circuits $C$ against which we are proving lower bounds, to the correlation between $\BalancedSipser_d^{(1)} \uhr \widehat{\brho^{(2)}}$ (obtained by hitting $\BalancedSipser_d$ with the random projection $\mathbf{\Psi}$) and bounded-width CNFs (that are similarly obtained by hitting $C$ with $\mathbf{\Psi}$).  To finish the argument, we need to bound the correlation between $\BalancedSipser_d^{(1)} \uhr \tau$ (for suitable restrictions $\tau$) and such CNFs.   The following proposition, which is a slight extension of Lemma 4.1 of~\cite{OW07}, enables us to do this, by relating the correlation between  $\BalancedSipser_d^{(1)} \uhr \tau$ and such CNFs to the bias of $\BalancedSipser_d^{(1)} \uhr \tau$.

\begin{proposition}
\label{prop:base-case}
Let $F: \zo^{w_0} \to \zo$ be a width-$r$ CNF and $\tau \in \{0,\ast\}^{w_0} \setminus \{0\}^{w_0}$.    Then for $\bY \leftarrow \{0_{1-{t_1}},1_{t_1}\}^{w_0}$,
\[ \Pr [(\OR_{w_0} \uhr \tau)(\bY) \ne F(\bY)]\big] \ge \bias(\OR_{w_0} \uhr \tau,\bY) - r{t_1}. \]
\end{proposition}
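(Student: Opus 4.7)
Write $S := \tau^{-1}(\ast) \ne \emptyset$ and $g := \OR_{w_0}\uhr\tau$, so $g(\bY) = \bigvee_{i\in S} \bY_i$ and $\mu := \Pr[g(\bY)=0] = (1-t_1)^{|S|}$, giving $\bias := \bias(g,\bY) = \min(\mu,1-\mu)$. I will split on the value $F(\mathbf{0})$, where $\mathbf{0}$ denotes the all-zeros input in $\zo^{w_0}$.

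\textbf{Case 1: $F(\mathbf 0)=0$.} Some clause $C$ of $F$ is falsified at $\mathbf 0$; since negative literals evaluate to $1$ on $\mathbf 0$, $C$ must be a disjunction of at most $r$ unnegated variables. Hence $\Pr[F=0] \ge \Pr[C=0] = (1-t_1)^{|C|} \ge 1-rt_1$, and the chain
\[ \Pr[F\ne g]\ge \Pr[F=0,\,g=1] \ge \Pr[F=0]-\Pr[g=0] \ge 1-\mu-rt_1 \ge \bias - rt_1 \]
finishes the case, where the last inequality uses $1-\mu \ge \bias$.

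\textbf{Case 2: $F(\mathbf 0)=1$.} Let $F_0 : \zo^{\bar{S}}\to\zo$ denote $F$ restricted by $\bY|_S = \mathbf{0}$; note $F_0(\mathbf 0) = F(\mathbf 0) = 1$. For each $z\in\zo^{\bar{S}}$ with $F_0(z)=0$, fix a clause $C^*_z$ of $F$ falsified by $(\mathbf 0,z)$. Such a clause cannot contain a negative $S$-literal (which would evaluate to $1$ on $\mathbf 0$), so its $S$-literals form a set $P_z \subseteq S$ of unnegated variables with $|P_z|\le r$. For every $\beta\in\zo^S$ with $\beta\ne\mathbf 0$ and $\beta|_{P_z} = \mathbf 0$, the clause $C^*_z$ remains falsified at $(\beta,z)$ (its positive $S$-literals still vanish and its $\bar{S}$-part stays $0$ at $z$), so $F(\beta,z)=0$ while $g(\beta,z)=1$---a disagreement. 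The probability of such $\beta$ is $(1-t_1)^{|P_z|}-\mu \ge (1-t_1)^r - \mu$, and summing over the disjoint events indexed by $z$ with $F_0(z)=0$ gives
\[ \Pr[F=0,\,g=1] \ge \Pr[F_0=0]\cdot\bigl((1-t_1)^r-\mu\bigr). \]
Combined with the disjoint contribution $\Pr[F=1,\,g=0] = \mu\cdot\Pr[F_0=1]$, this yields
\[ \Pr[F\ne g] \ge \mu\cdot\Pr[F_0=1] + \bigl((1-t_1)^r-\mu\bigr)\cdot\Pr[F_0=0], \]
a convex combination of $\mu$ and $(1-t_1)^r-\mu\ge 1-rt_1-\mu$, hence at least $\min(\mu,\,1-\mu-rt_1)\ge \bias - rt_1$ in both regimes $\mu\le\tfrac12$ and $\mu>\tfrac12$.

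\textbf{Main obstacle.} The technical heart is the witness-clause construction in Case 2: identifying $C^*_z$ for each falsifying $z$ and exploiting the width bound $|P_z|\le r$ to show that at most $r$ coordinates of $\beta$ must be pinned to $0$ in order to preserve the disagreement. This is precisely how the algebraic width constraint on $F$ translates into the $rt_1$ slack in the target inequality; the remainder of the argument is an elementary probability calculation.
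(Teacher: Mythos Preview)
Your proof is correct. The approach differs from the paper's in one structural respect: the paper begins with an averaging step, replacing $F$ by a restriction $F' = F\uhr\rho$ (for some fixed $\rho \in \{0,1\}^{[w_0]\setminus S}$) that depends only on the variables in $S$ and satisfies $\Pr[\OR_S \ne F] \ge \Pr[\OR_S \ne F']$. After this reduction the paper's two cases become very short: if $F'$ has a monotone clause $T$ then $\Pr[F'=1]\le \Pr[T=1]\le r t_1$, giving $\Pr[\OR_S\ne F']\ge \Pr[\OR_S=1]-rt_1$; otherwise $F'(0^S)=1\ne \OR_S(0^S)$, giving $\Pr[\OR_S\ne F']\ge \Pr[\OR_S=0]$. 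You skip the reduction and instead handle the $\bar S$-variables explicitly, which forces the witness-clause construction in your Case~2 (the convex-combination bound over the conditional on $\bY_{\bar S}$). Both routes are valid; the paper's averaging trick buys a cleaner Case~2, while your argument avoids the reduction step at the cost of more bookkeeping. The case splits themselves are essentially dual (your ``$F(\mathbf 0)=0$'' is exactly ``$F$ has a monotone clause'', matching the paper's Case~2 for $F'$).
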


\begin{proof}
Writing $S = S(\tau) \sse [w_0]$ to denote the set $\tau^{-1}(\ast)$, we have that $\OR_{w_0} \uhr \tau$ computes the $|S|$-way $\OR$ of variables with indices in $S$ (note that $S\ne \emptyset$ since $\tau \in \{0,\ast\}^{w_0}\setminus \{0\}^{w_0}$); for notational brevity we will write $\OR_S$ instead of $\OR_{w_0} \uhr \tau$.

We begin with the claim that  there exists a CNF $F' : \zo^{w_0}\to \zo$ of size and width at most that of $F$, depending only on the variables in $S$, such that
\begin{equation}  \Pr[\OR_S(\bY) \ne F(\bY)] \ge \Pr[\OR_S(\bY) \ne F'(\bY)]. \label{eq:relevant-vars-only}
\end{equation}
This holds because
\begin{align*}
\Pr[\OR_S(\bY) \ne F(\bY)] &= \Ex_{\brho\leftarrow \{ 0_{1-{t_1}},1_{t_1}\}^{[w_0]\setminus S}} \Big[ \Pr[(\OR_S\uhr \brho)(\bY) \ne (F \uhr \brho)(\bY)] \Big] \\
&=  \Ex_{\brho\leftarrow \{ 0_{1-{t_1}},1_{t_1}\}^{[w_0]\setminus S}} \Big[ \Pr[\OR_S(\bY) \ne (F \uhr \brho)(\bY)] \Big],
\end{align*}
and so certainly there exists $\rho \in \zo^{[w_0]\setminus S}$ such that $F' := F\uhr \rho$ satisfies (\ref{eq:relevant-vars-only}).
Next, writing $\{ y_i \}_{i\in S}$ to denote the formal variables that both $\OR_S$ and $F'$ depend on, we consider two possible cases:
\begin{enumerate}
\item For every clause $T$ in $F'$ there exists $i\in S$ such that $\overline{y}_i$ occurs in $T$. In this case we note that $F'(0^S) = 1$ (whereas $\OR_S(0^S) = 0$), and so
\[ \Pr[\OR_S(\bY) \ne F'(\bY)] \ge \Pr[\text{$\bY_i = 0$ for all $i\in S$}] =  \Pr[\OR_S(\bY) = 0]. \]
\item Otherwise, there must exist a monotone clause $T$ in $F'$ (one containing only positive occurrences of variables) since $F'$ depends only on the variables in $S$.  In this case, since each unnegated literal is true with probability ${t_1}$ (recall that $\bY \leftarrow \{ 0_{1-{t_1}},1_{t_1}\}^{w_0}$) and $T$ has width at most $r$, by a union bound we have that
\[ \Pr[F'(\bY) = 1] \le \Pr[T(\bY) = 1] \le r{t_1}, \]
and so
\[ \Pr[\OR_S(\bY) \ne F'(\bY)] \ge \Pr[\OR_S(\bY) = 1] - \Pr[F'(\bY) = 1] \ge \Pr[\OR_S(\bY) = 1] - r{t_1}. \]
\end{enumerate}
Together, theses two cases give us the lower bound
\begin{align*} \Pr[\OR_S(\bY) \ne F'(\bY)] &\ge \min \big\{ \Pr[\OR_S(\bY) = 1], \Pr[\OR_S(\bY) = 0] - r{t_1} \big\} \\
&\ge \min\big\{ \Pr[\OR_S(\bY) = 1], \Pr[\OR_S(\bY) = 0]\big\} - r{t_1},
\end{align*}
which along with (\ref{eq:relevant-vars-only}) completes the proof.
\end{proof}

\subsection{Approximators with small bottom fan-in}

The pieces are in place to prove the first of our two main theorems, showing that $\BalancedSipser_d$ cannot be approximated by depth-$d$ size-$S$ circuits with bounded bottom fan-in:

\begin{reptheorem}{thm:smallbottomfanin}
\ignore{For infinitely many $n$, f}For $2 \leq d \leq {\frac {c\sqrt{\log n}}{\log \log n}}$, the $n$-variable $\BalancedSipser_d$ function has the following property:  Let $C : \zo^n \to \zo$ be any depth-$d$ circuit of size $S = 2^{n^{{\frac 1 {6(d-1)}}}}$ and bottom fan-in ${\frac {\log n}{10(d-1)}}$.  Then for a uniform random input $\bX \leftarrow \{ 0_{1/2}, 1_{1/2}\}^{n}$, we have
\[ \Pr[\BalancedSipser_d(\bX) \ne C(\bX)] \ge \frac1{2}  - {\frac 1 {n^{\Omega(1/d)}}}.\]
\end{reptheorem}

\begin{proof}
Let $\bY \leftarrow \{ 0_{1-t_1}, 1_{t_1}\}^{w_0}$. We successively apply Proposition~\ref{prop:complete-to-uniform} and Proposition~\ref{prop:what-happens-to-target} to obtain
\begin{align*}
 \Pr[\BalancedSipser_d(\bX) \ne C(\bX)]  &= \Ex_{\mathbf{\Psi}}\bigg[\Prx_{\bY}[(\mathbf{\Psi}(\BalancedSipser_d))(\bY) \ne (\mathbf{\Psi}(C))(\bY)] \bigg] \\
 &= \Ex_{\mathbf{\Psi}}\bigg[\Prx_{\bY}[(\OR_{w_0}\uhr \hat{\brho^{(2)}})(\bY) \ne (\mathbf{\Psi}(C))(\bY)] \bigg]
 \end{align*}
(for the second equality, recall that $\BalancedSipser^{(1)}_d$ is simply $\OR_{w_0}$, by our assumption from the start of the section that $d$ is even).
For every possible outcome $\Psi$ of $\mathbf{\Psi}$ (corresponding to successive outcomes of $\rho^{(d)}$ for $\brho^{(d)}$, $\dots,$ $\rho^{(2)}$ for $\brho^{(2)}$) and every $r\in \N$, we have the bound
\begin{align*} &\Prx_{\bY}[(\OR_{w_0}\uhr \hat{\rho^{(2)}})(\bY) \ne ({\Psi}(C))(\bY)] \\
& \ge
\Prx_{\bY}[(\OR_{w_0}\uhr \hat{\rho^{(2)}})(\bY) \ne ({\Psi}(C))(\bY) \mid \text{$\Psi(C)$ is a depth-$r$ DT} ]  - \ind[\text{$\Psi(C)$ is not a depth-$r$ DT} ] \\
& \ge \bias( \OR_{w_0} \uhr \hat{\rho^{(2)}}, \bY) - rt_1 - \ind[\text{$\Psi(C)$ is not a depth-$r$ DT} ],
\end{align*}
where the final inequality is by Proposition~\ref{prop:base-case} along with the fact that every depth-$r$ DT can be expressed as either a width-$r$ CNF or a width-$r$ DNF.  Setting $r = n^{{\frac 1 {4(d-1)}}}$ and taking expectation with respect to $\mathbf{\Psi}$, we conclude that
\begin{align*}  \Ex_{\mathbf{\Psi}}\bigg[\Prx_{\bY}[(\OR_{w_0}\uhr \hat{\brho^{(2)}})(\bY) \ne ({\Psi}(C))(\bY)]\bigg]
&\ge \Ex_{\mathbf{\Psi}}\bigg[ \bias(\OR_{w_0} \uhr \hat{\brho^{(2)}},\bY) \bigg] - rt_1 - \Prx_{\mathbf{\Psi}}[\text{$\mathbf{\Psi}(C)$ is not a depth-$r$ DT} ] \\
& \ge
\frac1{2} - \tilde{O}(w^{-1/12})  -  rt_1 - \exp\left(-\Omega(n^{\frac 1 {6(d-1)}})\right)\\
& \ge {\frac 1 2} - {\frac 1 {n^{\Omega(1/d)}}},
\end{align*}
where the second-to-last inequality  uses both Proposition~\ref{prop:boundexp} and Theorem \ref{thm:bounded-bottom-fan-in}, and the last claim follows by simple substitution, recalling the values of $r,t_1$ and $w$ in terms of $n$ and $d$.
\end{proof}

\subsection{Approximators with the opposite alternation pattern}

Our second main theorem states that $\BalancedSipser_d$ cannot be approximated by depth-$d$ size-$S$ circuits with the opposite alternation pattern to $\BalancedSipser_d$:

\begin{reptheorem}{thm:alternationpattern}
\ignore{For infinitely many $n$, f}For $2 \leq d \leq  {\frac {c\sqrt{\log n}}{\log \log n}}$, the $n$-variable $\BalancedSipser_d$ function has the following property:  Let $C : \zo^n \to \zo$ be any depth-$d$ circuit of size $S = 2^{n^{{\frac 1 {6(d-1)}}}}$ and the opposite alternation pattern to $\BalancedSipser_d,$ (i.e.~its top-level gate is $\OR$ if $\BalancedSipser_d$'s is $\AND$ and vice versa).
Then for a uniform random input $\bX \leftarrow \{ 0_{1/2}, 1_{1/2}\}^{n}$, we have
\[ \Pr[\BalancedSipser_d(\bX) \ne C(\bX)] \ge \frac1{2}  - {\frac 1 {n^{\Omega(1/d)}}}.\]
\end{reptheorem}

\begin{proof}
By our assumption that $d$ is even, we have that the top gate of $\BalancedSipser_d$ is a $w_0$-way $\OR$, whereas the top gate of $C$ is an $\AND$.  Let $\bY \leftarrow \{ 0_{1-{t_1}}, 1_{t_1}\}^{w_0}$.  As in the proof of Theorem~\ref{thm:smallbottomfanin}, we successively apply Proposition~\ref{prop:complete-to-uniform} and Proposition~\ref{prop:what-happens-to-target} to obtain
\begin{align*}
 \Pr[\BalancedSipser_d(\bX) \ne C(\bX)]  &= \Ex_{\mathbf{\Psi}}\bigg[\Prx_{\bY}[(\mathbf{\Psi}(\BalancedSipser_d))(\bY) \ne (\mathbf{\Psi}(C))(\bY)] \bigg] \\
 &= \Ex_{\mathbf{\Psi}}\bigg[\Prx_{\bY}[(\OR_{w_0}\uhr \hat{\brho^{(2)}})(\bY) \ne (\mathbf{\Psi}(C))(\bY)] \bigg].
 \end{align*}
For every possible outcome $\Psi = \rho^{(d)},\dots,\rho^{(2)}$ of $\mathbf{\Psi}$ and every $r\in \N$ we have the bound
\begin{align*} &\Prx_{\bY}[(\OR_{w_0}  \uhr \hat{\rho^{(2)}})(\bY) \ne ({\Psi}(C))(\bY)] \\ & \ge
\Prx_{\bY}[(\OR_{w_0}\uhr \hat{\rho^{(2)}})(\bY) \ne ({\Psi}(C))(\bY) \mid \text{$\Psi(C)$ is $(1/S)$-close to a width-$r$ CNF} ] \\
& \quad \quad\  - \ind[\text{$\Psi(C)$ is not $(1/S)$-close to a width-$r$ CNF} ] \\
&\ge \bias( \OR_{w_0} \uhr \hat{\rho^{(2)}}, \bY) - r{t_1} - (1/S)
 - \ind[\text{$\Psi(C)$ is not $(1/S)$-close to a width-$r$ CNF} ],
\end{align*}
where the final inequality is by Proposition~\ref{prop:base-case}.  As in the proof of Theorem~\ref{thm:smallbottomfanin}, setting $r = {n^{{\frac 1 {4(d-1)}}}}$ and taking expectation with respect to $\mathbf{\Psi}$, we conclude that
\begin{align*} & \Ex_{\mathbf{\Psi}}\bigg[\Prx_{\bY}[(\OR_{w_0} \uhr \hat{\brho^{(2)}})(\bY) \ne ({\Psi}(C))(\bY)]\bigg] \\
& \ge \Ex_{\mathbf{\Psi}}\bigg[\bias(\OR_{w_0} \uhr \hat{\brho^{(2)}},\bY) ]\big\}\bigg]
  - (1/S) - rt_1 - \Prx_{\mathbf{\Psi}}[\text{$\mathbf{\Psi}(C)$ is not $(1/S)$-close to a width-$r$ CNF} ] \\
& \ge \frac1{2} -\tilde{O}(w^{-1/12}) - (1/S) - rt_1 - \exp \left(-\Omega(n^{{\frac 1 {6(d-1)}}})\right)\\
& \ge {\frac 1 2} - {\frac 1 {n^{\Omega(1/d)}}},
\end{align*}
where the second-to-last inequality  uses both Proposition~\ref{prop:boundexp} and Theorem \ref{thm:different-alternation}, and the last claim follows by simple substitution, recalling the values of $r,{t_1},w$ and $S$ in terms of $n$ and $d.$
\end{proof}


\bibliography{allrefs}{}

\newcommand{\etalchar}[1]{$^{#1}$}
\begin{thebibliography}{HMP{\etalchar{+}}93}

\bibitem[Aar]{Zoo}
Scott Aaronson.
\newblock {The Complexity Zoo}.
\newblock Available at
  \url{http://cse.unl.edu/~cbourke/latex/ComplexityZoo.pdf}.

\bibitem[Aar10a]{Aaronson10b}
Scott Aaronson.
\newblock A counterexample to the generalized {L}inial-{N}isan conjecture.
\newblock {\em Electronic Colloquium on Computational Complexity}, 17:109,
  2010.

\bibitem[Aar10b]{Aaronson10}
Scott Aaronson.
\newblock {{\sf BQP}} and the polynomial hierarchy.
\newblock In {\em Proceedings of the 42nd {ACM} Symposium on Theory of
  Computing}, pages 141--150, 2010.

\bibitem[AB09]{AB09}
Sanjeev Arora and Boaz Barak.
\newblock {\em Computational Complexity: a modern approach}.
\newblock Cambridge University Press, 2009.

\bibitem[Ajt83]{ajtai1983}
Mikl{\'o}s Ajtai.
\newblock {$\Sigma_1^1$}-formulae on finite structures.
\newblock {\em Annals of {P}ure and {A}pplied {L}ogic}, 24(1):1--48, 1983.

\bibitem[Ajt94]{Ajt94}
Mikl\'os Ajtai.
\newblock The independence of the modulo $p$ counting principles.
\newblock In {\em Proceedings of the 26th Annual ACM Symposium on Theory of
  Computing}, pages 402--411, 1994.

\bibitem[AWY15]{AWY15}
Amir Abboud, Ryan Williams, and Huacheng Yu.
\newblock More applications of the polynomial method to algorithm design.
\newblock In {\em {Proceedings of the 26th Annual ACM-SIAM Symposium on
  Discrete Algorithms}}, 2015.

\bibitem[Bab87]{babai1987}
L{\'a}szl{\'o} Babai.
\newblock Random oracles separate {{\sf{PSPACE}}} from the polynomial-time
  hierarchy.
\newblock {\em Information Processing Letters}, 26(1):51--53, 1987.

\bibitem[Baz09]{Baz09}
Louay Bazzi.
\newblock Polylogarithmic independence can fool {DNF} formulas.
\newblock {\em SIAM Journal on Computing}, 38(6):2220--2272, 2009.

\bibitem[Bea94]{Beame:94}
Paul Beame.
\newblock A switching lemma primer.
\newblock Technical Report UW-CSE-95-07-01, University of Washington, 1994.

\bibitem[BG81]{bengil81}
Charles Bennett and John Gill.
\newblock Relative to a random oracle {$A$, ${\sf P}^A \not= {\sf NP}^A \not=
  {\sf coNP}^A$} with probability 1.
\newblock {\em SIAM Journal on Computing}, 10(1):96--113, 1981.

\bibitem[BGS75]{BGS75}
Theodore Baker, John Gill, and Robert Solovay.
\newblock Relativizations of the {{\sf{P}}}=?{{\sf{NP}}} question.
\newblock {\em SIAM Journal on computing}, 4(4):431--442, 1975.

\bibitem[BIS12]{BeameIS12}
Paul Beame, Russell Impagliazzo, and Srikanth Srinivasan.
\newblock Approximating {$\mathsf{AC^0}$} by small height decision trees and a
  deterministic algorithm for {$\#\mathsf{AC^0}$-$\mathsf{SAT}$}.
\newblock In {\em {Proceedings of the 27th Conference on Computational
  Complexity}}, pages 117--125, 2012.

\bibitem[BKS99]{BKS:99}
Itai Benjamini, Gil Kalai, and Oded Schramm.
\newblock Noise sensitivity of {B}oolean functions and applications to
  percolation.
\newblock {\em Inst. Hautes \'{E}tudes Sci. Publ. Math.}, 90:5--43, 1999.

\bibitem[Boo94]{book1994}
Ronald Book.
\newblock On collapsing the polynomial-time hierarchy.
\newblock {\em Information Processing Letters}, 52(5):235--237, 1994.

\bibitem[Bop97]{Boppana97}
Ravi Boppana.
\newblock The average sensitivity of bounded-depth circuits.
\newblock {\em Information Processing Letters}, 63(5):257--261, 1997.

\bibitem[Bra10]{Bra10}
Mark Braverman.
\newblock {Polylogarithmic independence fools ${\sf AC}^0$ circuits}.
\newblock {\em Journal of the ACM}, 57(5):28, 2010.

\bibitem[BS79]{BS79}
Theodore Baker and Alan Selman.
\newblock A second step toward the polynomial hierarchy.
\newblock {\em Theoretical Computer Science}, 8(2):177--187, 1979.

\bibitem[BT96]{BshoutyTamon:96}
Nader Bshouty and Christino Tamon.
\newblock On the {F}ourier spectrum of monotone functions.
\newblock {\em Journal of the ACM}, 43(4):747--770, 1996.

\bibitem[Cai86]{Cai86}
Jin{-}Yi Cai.
\newblock With probability one, a random oracle separates {{\sf {PSPACE}}} from
  the polynomial-time hierarchy.
\newblock In {\em Proceedings of the 18th Annual {ACM} Symposium on Theory of
  Computing}, pages 21--29, 1986.

\bibitem[CCH98]{CCH98}
Liming Cai, Jianer Chen, and Johan H{\aa}stad.
\newblock Circuit bottom fan-in and computational power.
\newblock {\em SIAM Journal on Computing}, 27(2):341--355, 1998.

\bibitem[DK00]{DK00}
Ding-Zhu Du and Ker-I Ko.
\newblock {\em Theory of Computational Complexity}.
\newblock John Wiley $\&$ Sons, Inc., 2000.

\bibitem[For99]{Fortnow99}
Lance Fortnow.
\newblock Relativized worlds with an infinite hierarchy.
\newblock {\em Information Processing Letters}, 69(6):309--313, 1999.

\bibitem[FSS81]{FSS81}
Merrick Furst, James Saxe, and Michael Sipser.
\newblock Parity, circuits, and the polynomial-time hierarchy.
\newblock In {\em Proceedings of the 22nd IEEE Annual Symposium on Foundations
  of Computer Science}, pages 260--270, 1981.

\bibitem[GW13]{GW13}
Oded Goldreich and Avi Wigderson.
\newblock On the size of depth-three {B}oolean circuits for computing
  multilinear functions.
\newblock {\em Electronic Colloquium on Computational Complexity}, 2013.

\bibitem[H{\aa}s86a]{Hastad86}
Johan H{\aa}stad.
\newblock Almost optimal lower bounds for small depth circuits.
\newblock In {\em {Proceedings of the 18th Annual {ACM} Symposium on Theory of
  Computing}}, pages 6--20, 1986.

\bibitem[H{\aa}s86b]{Hastad:86}
Johan H{\aa}stad.
\newblock {\em Computational Limitations for Small Depth Circuits}.
\newblock MIT Press, Cambridge, MA, 1986.

\bibitem[H{\aa}s89]{Hastad:89}
Johan H{\aa}stad.
\newblock {\em Almost optimal lower bounds for small depth circuits}, pages
  143--170.
\newblock Advances in Computing Research, Vol. 5. JAI Press, 1989.

\bibitem[H{\aa}s14]{Has14}
Johan H{\aa}stad.
\newblock On the correlation of parity and small-depth circuits.
\newblock {\em {SIAM} Journal on Computing}, 43(5):1699--1708, 2014.

\bibitem[Hat14]{Hatami14}
Hamed Hatami.
\newblock {Scribe notes for the course {\it COMP760: Harmonic Analysis of
  Boolean Functions}}, 2014.
\newblock Available at
  \url{http://cs.mcgill.ca/~hatami/comp760-2014/lectures.pdf}.

\bibitem[Hem94]{Hem94}
Lane Hemaspaandra.
\newblock Complexity theory column 5: the not-ready-for-prime-time conjectures.
\newblock {\em {ACM} {SIGACT} News}, 25(2):5--10, 1994.

\bibitem[HMP{\etalchar{+}}93]{HMP+:87}
Andr\'as Hajnal, Wolfgang Maass, Pavel Pudl\'ak, M\'ari\'o Szegedy, and
  Gy\"orgy Tur\'an.
\newblock Threshold circuits of bounded depth.
\newblock {\em Journal of Computer and System Sciences}, 46:129--154, 1993.

\bibitem[HO02]{HO02}
Lane Hemaspaandra and Mitsunori Ogihara.
\newblock {\em The Complexity Theory Companion}.
\newblock Springer, 2002.

\bibitem[HRZ95]{HRZ95}
Lane Hemaspaandra, Ajit Ramachandran, and Marius Zimand.
\newblock Worlds to die for.
\newblock {\em ACM SIGACT News}, 26(4):5--15, 1995.

\bibitem[IMP12]{IMP12}
Russell Impagliazzo, William Matthews, and Ramamohan Paturi.
\newblock A satisfiability algorithm for $\mathsf{AC^0}$.
\newblock In {\em Proceedings of the 23rd Annual ACM-SIAM Symposium on Discrete
  Algorithms}, pages 961--972, 2012.

\bibitem[IS01]{IS01}
Russell Impagliazzo and Nathan Segerlind.
\newblock Counting axioms do not polynomially simulate counting gates.
\newblock In {\em Proceedings of the 42nd IEEE Symposium on Foundations of
  Computer Science}, pages 200--209, 2001.

\bibitem[Joh86]{Johnson86}
David Johnson.
\newblock The {{\sf NP}}-completeness column: An ongoing guide.
\newblock {\em Journal of {A}lgorithms}, 7(2):289--305, 1986.

\bibitem[Juk12]{Jukna:12}
Stasys Jukna.
\newblock {\em Boolean Function Complexity}.
\newblock Springer, 2012.

\bibitem[Kal00]{Kal00}
Gil Kalai.
\newblock Combinatorics with a geometric flavor: some examples, 2000.
\newblock GAFA Special Volume 10, Birkhauser Verlag, Basel, 2000.

\bibitem[Kal10]{KalaiBlog10}
Gil Kalai.
\newblock {Noise Stability and Threshold Circuits}.
\newblock {Blog post at {\it Combinatorics and more}}, 2010.
\newblock
  \url{https://gilkalai.wordpress.com/2010/02/10/noise-stability-and-threshold-circuits}.

\bibitem[Kal12]{Kalaioverflow12}
Gil Kalai.
\newblock {Answer to the question: {\it Are all functions whose Fourier weight
  is concentrated on the small sized sets computed by $\mathsf{AC}^0$
  circuits?}}
\newblock {Theoretical Computer Science StackExchange}, 2012.
\newblock
  \url{http://cstheory.stackexchange.com/questions/12769/are-all-the-functions-whose-fourier-weight-is-concentrated-on-the-small}
  \url{-sized-se}.

\bibitem[KPPY84]{KPNY84}
Maria Klawe, Wolfgang Paul, Nicholas Pippenger, and Mihalis Yannakakis.
\newblock On monotone formulae with restricted depth.
\newblock In {\em {Proceedings of the 16th Annual ACM Symposium on Theory of
  Computing}}, pages 480--487, 1984.

\bibitem[KPW95]{KPW94}
Jan Kraj{\'\i}{\v{c}}ek, Pavel Pudl{\'a}k, and Alan Woods.
\newblock An exponential lower bound to the size of bounded depth frege proofs
  of the pigeonhole principle.
\newblock {\em Random Structures \& Algorithms}, 7(1):15--39, 1995.

\bibitem[KS05]{KalaiSafra:05}
Gil Kalai and Shmuel Safra.
\newblock Threshold phenomena and influence.
\newblock In {\em {Computational Complexity and Statistical Physics}}, pages
  25--60. Oxford University Press, 2005.

\bibitem[LMN93]{linmannis93}
Nathan Linial, Yishay Mansour, and Noam Nisan.
\newblock Constant depth circuits, fourier transform, and learnability.
\newblock {\em Journal of the ACM}, 40(3):607--620, 1993.

\bibitem[Man95]{Mansour:95}
Yishay Mansour.
\newblock {An $O(n^{\log \log n})$ learning algorithm for {DNF} under the
  uniform distribution}.
\newblock {\em Journal of Computer and System Sciences}, 50:543--550, 1995.

\bibitem[Nis91]{Nisan91}
Noam Nisan.
\newblock Pseudorandom bits for constant depth circuits.
\newblock {\em Combinatorica}, 11(1):63--70, 1991.

\bibitem[O'D07]{ODonnell07open}
Ryan O'Donnell.
\newblock {Lecture 29: Open Problems}.
\newblock {Scribe notes for the course {\it CMU 18-859S: Analysis of Boolean
  Functions}}, 2007.
\newblock Available at \url{http://www.cs.cmu.edu/~odonnell/boolean-analysis}.

\bibitem[OW07]{OW07}
Ryan O'Donnell and Karl Wimmer.
\newblock Approximation by {DNF:} examples and counterexamples.
\newblock In {\em 34th International Colloquium on Automata, Languages and
  Programming}, pages 195--206, 2007.

\bibitem[PBI93]{PBI93}
Toniann Pitassi, Paul Beame, and Russell Impagliazzo.
\newblock Exponential lower bounds for the pigeonhole principle.
\newblock {\em Computational complexity}, 3(2):97--140, 1993.

\bibitem[Raz87]{Raz87}
Alexander Razborov.
\newblock Lower bounds on the size of bounded depth circuits over a complete
  basis with logical addition.
\newblock {\em {Mathematical Notes of the Academy of Sciences of the USSR}},
  41(4):333--338, 1987.

\bibitem[Raz95]{Raz95}
Alexander Razborov.
\newblock Bounded arithmetic and lower bounds in {B}oolean complexity.
\newblock In {\em Feasible Mathematics II}, pages 344--386. Springer, 1995.

\bibitem[Raz09]{Raz09}
Alexander Razborov.
\newblock A simple proof of {Bazzi's} theorem.
\newblock {\em ACM Transactions on Computation Theory}, 1(1):3, 2009.

\bibitem[SBI04]{SBI04}
Nathan Segerlind, Sam Buss, and Russell Impagliazzo.
\newblock A switching lemma for small restrictions and lower bounds for
  {$k$-{DNF}} resolution.
\newblock {\em SIAM Journal on Computing}, 33(5):1171--1200, 2004.

\bibitem[Sip83]{sipser1983}
Michael Sipser.
\newblock Borel sets and circuit complexity.
\newblock In {\em Proceedings of the 15th Annual ACM Symposium on Theory of
  Computing}, pages 61--69, 1983.

\bibitem[Smo87]{Smol87}
Roman Smolensky.
\newblock Algebraic methods in the theory of lower bounds for boolean circuit
  complexity.
\newblock In {\em {Proceedings of the 19th Annual ACM Symposium on Theory of
  Computing}}, pages 77--82, 1987.

\bibitem[ST95]{ST95}
David Shmoys and \'Eva Tardos.
\newblock Computational {C}omplexity.
\newblock In {\em {H}andbook of {C}ombinatorics ({Ronald Graham, Martin
  Gr\"otschel, and L\'aszlo Lov\'asz}, eds.)}, volume~2. North-Holland, 1995.

\bibitem[Sub61]{Sub61}
Bella Subbotovskaya.
\newblock Realizations of linear functions by formulas using $\vee$, \&,
  $\overline{\ }$.
\newblock {\em Doklady Akademii Nauk SSSR}, 136(3):553--555, 1961.

\bibitem[Tar89]{Tar89}
G{\'a}bor Tardos.
\newblock Query complexity, or why is it difficult to separate {$\mathsf{NP}^A
  \cap \mathsf{coNP}^A$} from $\mathsf{P}^{A}$ by random oracles {$A$}?
\newblock {\em Combinatorica}, 9(4):385--392, 1989.

\bibitem[Tha09]{Tha09}
Neil Thapen.
\newblock Notes on switching lemmas, 2009.
\newblock Available at \url{http://users.math.cas.cz/~thapen/switching.pdf}.

\bibitem[Val83]{Val83}
Leslie Valiant.
\newblock Exponential lower bounds for restricted monotone circuits.
\newblock In {\em Proceedings of the 15th Annual ACM Symposium on Theory of
  Computing}, pages 110--117, 1983.

\bibitem[Vio13]{viola13}
Emanuele Viola.
\newblock Challenges in computational lower bounds.
\newblock {\em Electronic Colloquium on Computational Complexity}, 2013.

\bibitem[VW97]{VollmerWagner97}
Heribert Vollmer and Klaus Wagner.
\newblock {\em Measure One Results in Computational Complexity Theory}, pages
  285--312.
\newblock Advances in Algorithms, Languages, and Complexity. Springer, 1997.

\bibitem[Wil14a]{Wil14}
Ryan Williams.
\newblock Faster all-pairs shortest paths via circuit complexity.
\newblock In {\em {Proceedings of the 46th Annual ACM Symposium on Theory of
  Computing}}, pages 664--673, 2014.

\bibitem[Wil14b]{Williams14}
Ryan Williams.
\newblock The polynomial method in circuit complexity applied to algorithm
  design (invited survey).
\newblock In {\em Proceedings of the 34th Foundations of Software Technology
  and Theoretical Computer Science Conference}, 2014.

\bibitem[Yao85]{yao1985}
Andrew Yao.
\newblock Separating the polynomial-time hierarchy by oracles.
\newblock In {\em Proceedings of the 26th Annual Symposium on Foundations of
  Computer Science}, pages 1--10, 1985.

\end{thebibliography}
\bibliographystyle{alpha}


\appendix

\section{Proof of Lemma~\ref{lemma:tk-bound}}
\label{ap:tk-bound}

\begin{replemma}{lemma:tk-bound}
There is a universal constant $c>0$ such that for $2 \leq d \leq {\frac {cm}{\log m}}$, we have that $t_k = q \pm q^{1.1}$ for all $k \in [d-1]$.
\end{replemma}

\begin{proof}
We shall establish the following bound, for $k=d-1,\dots,1$, by downward induction on $k$:
\begin{equation} \label{eq:tkq-close-to-p}
|t_k q - p| \leq (2m)^{d-1-k} \lambda.
\end{equation}
Lemma~\ref{lemma:tk-bound} follows directly from (\ref{eq:tkq-close-to-p}), using (\ref{eq:def-of-lambda-and-q}), (\ref{eq:def-of-w}) and the fact that $p = \Theta({\frac {\log w} w}).$

The base case $k=d-1$ of (\ref{eq:tkq-close-to-p}) holds with equality since (\ref{eq:def-of-tk}) gives us that $|t_{d-1}q-p| = \lambda.$
For the inductive step suppose that (\ref{eq:tkq-close-to-p}) holds for some value $k=\ell+1.$ By
(\ref{eq:def-of-tk}) we have that $t_\ell q = (1-t_{\ell+1})^{qw}-\lambda$, so our goal is to put upper and lower bounds on $(1-t_{\ell+1})^{qw}-\lambda$ that are close to $p$.  For the upper bound, we have
\begin{align*}
(1-t_{\ell+1})^{qw}  - \lambda &= \left( (1-t_{\ell+1})^{{\frac 1 {t_{\ell+1}}}}\right)^{qwt_{\ell+1}} - \lambda&\\
&\leq \exp \left(-q w t_{\ell+1}\right) - \lambda & \text{(by Fact~\ref{fact:approx})} \\
&\leq \exp \left(-w \left(p - (2m)^{d-\ell-2} \lambda\right)\right) - \lambda & \text{(by the inductive hypothesis)} \\
&\leq \exp \left(-\left({\frac {m 2^m}{\log e}} -1 \right) \cdot \left(2^{-m} - (2m)^{d-\ell-2} \lambda\right)\right) - \lambda\ & \text{(by (\ref{eq:def-of-w}))} \\
&= 2^{-m} \cdot \exp\left( 2^{-m} + \left({\frac {m2^m}{\log e}}-1\right) \cdot (2m)^{d-\ell-2} \lambda \right) - \lambda&\\
&\leq p \cdot \left(1 + 2^{-m+1} + {\frac {m2^{m+1}}{\log e}} \cdot (2m)^{d-\ell-2}\lambda\right) - \lambda & \text{(by Fact~\ref{fact:approx})} \\
&\leq p + 2^{-2m+1} + {\frac {2m}{\log e}} (2m)^{d-\ell-2} \lambda - \lambda&\\
&\leq p + (2m)^{d-\ell-1}\lambda, &
\end{align*}
where in the last inequality we have used the fact that $\lambda = \tilde{\Theta}(2^{-5m/4}).$

For the lower bound we proceed similarly:

\begin{align*}
(1-t_{\ell+1})^{qw} - \lambda &= \left( (1-t_{\ell+1})^{{\frac 1 {t_{\ell+1}}}}\right)^{qwt_{\ell+1}} - \lambda &\\
&\geq \exp \left(-q w t_{\ell+1}\right) \cdot (1-t_{\ell+1})^{qwt_{\ell+1}} - \lambda & \text{(by Fact~\ref{fact:approx})} \\
&\geq \exp \left(-w \left(p + (2m)^{d-\ell-2} \lambda\right)\right) \cdot (1 - qw(t_{\ell+1})^2) - \lambda & \text{(by the i.h. \& Fact \ref{fact:totally-standard})} \\
&\geq \exp \left(-{\frac {m 2^m}{\log e}} \cdot \left(2^{-m} + (2m)^{d-\ell-2} \lambda\right)\right)  \cdot (1 - qw(t_{\ell+1})^2) - \lambda  &\\
&= 2^{-m} \cdot \exp\left( - {\frac {m2^m}{\log e}} \cdot (2m)^{d-\ell-2} \lambda \right) \cdot (1 - qw(t_{\ell+1})^2)- \lambda &\\
&\geq 2^{-m} \cdot \left(1 - {\frac {m2^m}{\log e}} \cdot (2m)^{d-\ell-2} \lambda - qw(t_{\ell+1})^2\right) -\lambda & \text{(using Fact \ref{fact:totally-standard})} \\
&\geq 2^{-m} \cdot \left(1 - {\frac {m2^m}{\log e}} \cdot (2m)^{d-\ell-2} \lambda - {\frac w q} \cdot \left(p + (2m)^{d-\ell-2}\lambda\right)^2 \right) -\lambda & \text{(by the i.h.)} \\
&\geq 2^{-m} \cdot \left(1 - {\frac {m2^m}{\log e}} \cdot (2m)^{d-\ell-2} \lambda - {\frac {4wp^2} q}\right) -\lambda & \text{(by the bound on $d$)} \\
&= p - {\frac {m}{\log e}} \cdot (2m)^{d-\ell-2} \lambda - {\frac {4wp^3} q} - \lambda &\\
&\geq p - (2m)^{d-\ell-1} \lambda. & \qedhere
\end{align*}
\end{proof}

\end{document}